\documentclass[11pt,oneside]{article}

\usepackage{epsfig,lscape}
\usepackage{amssymb,amsfonts,amsmath}
\usepackage{rotating}

\usepackage{amsthm}

\usepackage{color}
\usepackage[dvipsnames]{xcolor}
\usepackage{multirow}

\usepackage{float}
\usepackage[hypertexnames=false]{hyperref}

\usepackage{ulem}

\def\me{\mathrm e}

\def\dif{\mathrm d}

\def\N{\mathrm{N}}

\def\T{ {\mathrm{\scriptscriptstyle T}} }

\newcommand\mytext[1]{\text{\tiny{#1}}}

\def\over1{\overline{1}}

\newtheorem{pro}{Proposition}

\theoremstyle{definition}

\theoremstyle{definition}

\usepackage[medium,compact]{titlesec}

\titlespacing*{\section} {0pt}{1.5ex}{1ex}
\titlespacing*{\subsection} {0pt}{1.5ex}{1ex}
\titlespacing*{\subsubsection} {0pt}{1ex}{1ex}

\allowdisplaybreaks 

\begin{document}

\setlength{\abovedisplayskip}{3.3pt}
\setlength{\belowdisplayskip}{3.3pt}


\renewcommand*{\thefootnote}{\fnsymbol{footnote}}

\newcommand{\wx}{\textcolor{orange}}

\newcommand{\zt}{\textcolor{blue}}

\begin{titlepage}

\begin{center}
{\Large Re-examining and calibrating weighted survival analysis for causal inference}

\vspace{.1in} 
Wenfu Xu\footnotemark[1], Yi Zhang\footnotemark[2], Tobias Gerhard\footnotemark[3], Zhiqiang Tan\footnotemark[2]\\
\footnotetext[1]{College of Economics, China Jiliang University}
\footnotetext[2]{Department of Statistics, Rutgers University}
\footnotetext[3]{Institute for Health, Health Care Policy, and Aging Research, Rutgers University}

\vspace{.1in}
\today
\end{center}

\paragraph{Abstract.}

Causal inference with time-to-event outcomes is fundamental in various scientific studies.
In a static setup with fitted propensity scores, weighted Kaplan–Meier estimation for survival probabilities and weighted Breslow–Peto estimation
for hazard ratios have been widely used, but their statistical properties have been overlooked or studied only to a limited extent.
We re-examine the weighted Kaplan–Meier method
by formally linking it with the general framework of augmented inverse probability weighted estimation including both point and variance estimation.
Furthermore, to address limitations of existing weighted methods for survival analysis, we develop new methods and associated theory
through calibrated estimation in both low-dimensional and high-dimensional settings.
We present a simulation study and an empirical application on the effectiveness of adjunctive psychotropic treatments for patients with schizophrenia.
The calibrated methods yield coverage proportions closer to target ones in the simulation study,
and produce shorter confidence intervals in both simulation and empirical studies.

\paragraph{Key words and phrases.}
Discrete-time survival analysis; Augmented inverse probability weighting; Breslow–Peto estimation; Calibrated estimation; Double robustness; Kaplan–Meier estimation; Variance estimation.

\end{titlepage}

\renewcommand*{\thefootnote}{\arabic{footnote}}

\section{Introduction} \label{sec:intro}

Causal inference for time-to-event outcomes is of interest in various scientific fields including medical and social sciences.
In observational studies, a primary objective is to assess the causal effect of a treatment on the distribution of survival times.
Commonly studied causal estimands include treatment-specific survival probabilities and hazard probabilities,
as well as contrasts such as survival probability differences and hazard ratios.
Under standard identification assumptions, including no unmeasured confounding and non-informative censoring, these estimands can be identified from observed data (Robins \& Rotnitzky 1992; Hern\'{a}n \& Robins 2020). The main challenge beyond identification lies in
constructing estimators that are robust to model misspecification and admit valid large-sample inference (e.g., confidence intervals).

Consider the static setup where a sample of observations are obtained on a binary treatment $A$, right-censored survival time $Y$ and non-censoring indicator $\Delta$,
and a vector of baseline covariates $X$, but no additional time-varying covariates are measured.
For this setup, there seems to be two strands of causal inference literatures which are arguably disconnected.
On one hand,
both weighted Kaplan–Meier (KM) for survival probabilities and weighted Breslow--Peto estimation\footnote{\label{ft:wBP}This method is a weighted version of Breslow's (1974) and Peto's (1972) modification of Cox's (1972) maximum partial likelihood estimator
to handle ties in survival times. See Section~\ref{sec:existing-wBP} for more details.} for hazard ratios as in Cox's proportional hazards models
have been proposed through general ideas of inverse probability weighted (IPW) estimation (Cole \& Hern\'{a}n 2004; Joffe et al. 2004; Xie \& Liu 2005).
The inverse weights are fitted values from a propensity score (PS) model for the probability of $A$ given $X$.
These methods have been widely used in practice, available from the popular R package \texttt{survival} (Therneau 2020),
but their statistical properties have been overlooked or studied only to a limited extent.
On the other hand,
semi-parametric theory and augmented IPW estimation have been developed in more general, dynamic setups with both baseline and time-varying covariates
(Robins \& Rotnitzky 1992; Hubbard et al. 2000; Bai et al. 2013).
While such general theory and methods are applicable to the static setup,
limited work has been done to specifically study the application and implications,
or to carefully connect the general theory with weighted Kaplan–Meier and Breslow--Peto estimation in practical use.

The purpose of our work is twofold. First, we re-examine the weighted Kaplan–Meier method
by formally linking it with the general framework of augmented IPW estimation including both point and variance estimation.
Second, to address limitations of existing weighted methods for survival analysis, we develop new methods and associated theory
through calibrated (CAL) estimation (Tan 2020; Ghosh \& Tan 2022).
For simplicity, we focus on the discrete-time setting, thereby avoiding technical issues in asymptotic theory related to continuous time.
Our main contributions are summarized as follows.
See Table \ref{tb:notation} for the notation of various estimators. \vspace{-.1in}
\begin{itemize}\addtolength{\itemsep}{-.1in}
\item We show that the weighted Kaplan–Meier estimator $\hat S_{1k,\mytext{wKM}}$, can be recast as an augmented IPW estimator
provided that 
(a) the non-censoring probabilities and survival probabilities are each modeled as constant with respect to the covariate vector $X$, leading to working models termed CCP and CSP respectively, and 
(b) the working-model parameters are estimated in a particular manner which differs from maximum likelihood (ML) estimation but conforms to calibrated estimation (Proposition \ref{pro:weighted-KM}).
Then the weighted Kaplan–Meier estimator is doubly robust, i.e., being consistent
if either the PS and CCP models are correct or the CSP model is correct,
and hence the consistency is \textit{not} ensured by only correct specification of PS model.
Moreover, we show that the model-robust variance estimator in R package \texttt{survival} algebraically coincides with the standard,
sample-variance based estimator from augmented IPW theory
under a mild normalization condition, and is nonparametrically consistent provided that
sampling variation is \textit{ignored} in the estimated propensity score (Proposition \ref{pro:weighted-KM-Vr}).
To our knowledge, all these results have not been appreciated in the literature.

\item We generalize the standard augmented IPW estimator by introducing an outcome regression (OR) model
for the conditional survival probability, separately from the CSP model, and then propose two calibrated augmented IPW estimators,
$\hat S_{1k,\mytext{CAL}}$ and $\hat S_{1k,\mytext{CAL,lin}}$, for the treatment-specific (marginal) survival probability.
While both estimators preserve pointwise double robustness, $\hat S_{1k,\mytext{CAL}}$ admits a consistent variance estimator as long as
the PS model is correctly specified and $\hat S_{1k,\mytext{CAL,lin}}$ admits a nonparametrically consistent variance estimator even when
the PS model is misspecified (Propositions \ref{pro:AIPW-CAL} and \ref{pro:AIPW-CAL-lin}).
Hence the method based on $\hat S_{1k,\mytext{CAL}}$ enables valid inference under correct PS and CCP models,
whereas that based on $\hat S_{1k,\mytext{CAL,lin}}$ enables doubly robust inference under either correct PS and CCP models or correct CSP model.
Furthermore, we formulate a hazard-ratio parameter that is targeted by
the weighted Brewlow--Peto estimator $\hat\theta_{\mytext{wBP}}$, and
propose two calibrated augmented IPW estimators, $\hat \theta_{\mytext{CAL}}$ and $\hat \theta_{\mytext{CAl,lin}}$ by incorporating risk-set OR models.
Both estimators achieve pointwise double robustness and lead to valid inference under correct PS and CCP models (Proposition \ref{pro:CAL-theta}),
and $\hat \theta_{\mytext{CAl,lin}}$ also enables valid inference under correct CSP and risk-set OR models.

\item We extend calibrated augmented IPW estimation to high-dimensional settings, where the number of covariate terms in PS and OR models
is close to or exceeds the sample size.
For survival probabilities, the regularized calibrated estimator $\hat S_{1k,\mytext{RCAL}}$
involves directly replacing unregularized estimation by Lasso regularized estimation when fitting PS and OR models.
In contrast, a direct extension for hazard ratio estimation requires Lasso regularized estimation at each time period,
which is computationally costly when cross validation is used for selecting Lasso tuning parameters.
To overcome this difficulty, we develop a novel regularized calibrated estimator, $\hat \theta_{\mytext{RCa}}$,
by exploiting a linearization technique to combine OR-based augmentation terms across time periods and then apply regularized estimation with OR models.
The resulting method is computationally tractable while achieving similar theoretical properties as the calibrated estimator in low-dimensional settings (Proposition \ref{pro:AIPW}).

\item We conduct a simulation study to evaluate the proposed methods and compare with existing ones in low-dimensional and high-dimensional settings.
We also present an empirical application to re-analyze the data used in Stroup et al. (2019) for
studying the effectiveness of adjunctive psychotropic treatments for patients with schizophrenia.
The calibrated methods yield coverage proportions closer to target ones in the simulation study,
and produce shorter confidence intervals (i.e., smaller SEs) in both simulation and empirical studies.
\end{itemize} \vspace{-.2in}

\begin{table} [H]
\caption{Notation of estimators studied} \label{tb:notation} \vspace{-.1in}
\begin{center}
\small
\renewcommand{\arraystretch}{0.9}
\resizebox{0.87\textwidth}{!}{\begin{tabular}{lll}
\hline
 Estimator of $\gamma$ for PS
 & \hspace{1in}Survival probability & Hazard ratio \\
  \hline
 $\begin{array}{l}
\text{General} \\
\hat{\gamma} \\
\\
\end{array}$
&
$\begin{array}{ll}
\hat{S}_{1k, \mytext{wKM}} & =  \underset{\text{(AIPW)}}{\hat{S}_{1k}(\hat{\gamma}, \hat{\rho}_{1, 1:k, \mytext{CAL}}, \hat{\eta}_{1, 1:k, \mytext{CAL}})} = \underset{\text{(ICE)}}{\hat{S}_{1k}(\hat{\eta}_{1, 1:k, \mytext{CAL}})} \\
 \hat{S}_{1k, \mytext{wKM}}  & = \underset{\text{(IPW)}}{\hat{S}_{1k}(\hat{\gamma}, \hat{\rho}_{1, 1:k, \mytext{CAL}})}~\text{if}~\sum_{i=1}^n A_i\hat{\pi}_i^{-1} = n
\end{array}$
&
$\begin{array}{l}
\hat{\theta}_{\mytext{wBP}}~(\text{Eq}~\ref{eq:weighted-BP})
\end{array} $ \\
\cline{2-3}
$\begin{array}{l}
\text{Calibrated} \\
\hat{\gamma}_{1, \mytext{CAL}} \\
\\
\end{array}$
&
$\begin{array}{ll}
\hat{S}_{1k, \mytext{CAL}} & \overset{\text{def}}{=} \hat{S}_{1k}(\hat{\gamma}_{1, \mytext{CAL}}, \hat{\rho}_{1, 1:k, \mytext{CAL}}, \hat{\eta}_{1, 1:k, \mytext{CAL}}, \hat{\alpha}_{1k, \mytext{CAL}}) \\
 \hat{S}_{1k, \mytext{CAL, lin}} & \overset{\text{def}}{=} \hat{S}_{1k}(\hat{\gamma}_{1, \mytext{CAL}}, \hat{\rho}_{1, 1:k, \mytext{CAL}}, \hat{\eta}_{1, 1:k, \mytext{CAL}}, \hat{\alpha}_{1k, \mytext{CAL, lin}}) \\
& = \hat{S}_{1k, \mytext{wKM}}
\end{array}$
&
$\begin{array}{l}
 \hat{\theta}_{\mytext{CAL}}~(\text{Eq}~\ref{eq:CAL-theta}) \\
\hat{\theta}_{\mytext{CAL, lin}} = \hat{\theta}_{\mytext{wBP}}
\end{array}$ \\
\cline{2-3}
$\begin{array}{l}
\text{Regularized} \\
\text{calibrated} \\
\hat{\gamma}_{1, \mytext{RCAL}}
\end{array}$
&
$\begin{array}{l}
 \hat{S}_{1k, \mytext{RCAL}} \overset{\text{def}}{=} \hat{S}_{1k}(\hat{\gamma}_{1, \mytext{RCAL}}, \hat{\rho}_{1, 1:k, \mytext{CAL}}, \hat{\eta}_{1, 1:k, \mytext{CAL}}, \hat{\alpha}_{1k, \mytext{RCAL}})~(\text{Eq}~\ref{eq:AIPW-CAL}) \\
 \hat{S}_{1k, \mytext{RCAL, lin}} \overset{\text{def}}{=} \hat{S}_{1k}(\hat{\gamma}_{1, \mytext{RCAL}}, \hat{\rho}_{1, 1:k, \mytext{CAL}}, \hat{\eta}_{1, 1:k, \mytext{CAL}}, \hat{\alpha}_{1k, \mytext{RCAL, lin}}) \\
\hat{S}_{1k, \mytext{RCw}} = \hat{S}_{1k, \mytext{wKM}}(\hat{\gamma}_{1, \mytext{RCAL}})~(\text{Eq}~\ref{eq:weighted-KM})
\end{array}$
&
$\begin{array}{l}
 \hat{\theta}_{\mytext{RCAL}} \\
\hat{\theta}_{\mytext{RCAL, lin}} \\
\hat{\theta}_{\mytext{RCw}} = \hat{\theta}_{\mytext{wBP}}(\hat{\gamma}_{1,\mytext{RCAL}})~(\text{Eq}~\ref{eq:RCAL-IPW-theta}) \\
\hat{\theta}_{\mytext{RCa}}~(\text{Eq}~\ref{eq:RCAL-AIPW-theta})
\end{array}$ \\
\hline
\end{tabular}}
\end{center}
\end{table}  \vspace{-.3in}
The remainder of the paper is organized as follows. Section~\ref{sec:setup} introduces the setup and identification conditions. Section~\ref{sec:existing} reviews existing methods for estimating survival probabilities and hazard ratios.
Section~\ref{sec:re-exam} reviews standard augmented IPW estimation and then re-examines weighted KM estimation.
Section~\ref{sec:CAL-AIPW} introduces calibrated estimation for survival probabilities and for hazard ratios.
Section~\ref{sec:high-dimension} extends the calibrated methods to high-dimensional settings. Section~\ref{sec:simulation} presents a simulation study, and Section~\ref{sec:empirical} provides a real-data application.

\section{Setup} \label{sec:setup}

Suppose that the observed data are $\{(Y_i, \Delta_i, A_i, X_i): i=1,\ldots,n\}$ from $n$ individuals,
where $Y_i = \min(U_i,C_i)$, $\Delta_i = 1\{U_i \le C_i\}$, $U_i$ is an (uncensored) event time such as death time, $C_i$ is a censoring time,
$A_i$ is a binary treatment and $X_i$ is a (baseline) covariate vector.
Assume that $\{(U_i,C_i, A_i,X_i): i=1,\ldots,n\}$ are independent and identically distributed copies of $(U,C,A,X)$, and hence
$\{(Y_i,\Delta_i,A_i,X_i): i=1,\ldots,n\}$ are independent and identically distributed copies of $(Y,\Delta,A,X)$ with $Y = \min(U,C)$ and $\Delta = 1\{U \le C\}$.

For causal inference, let $(U^{(0)},  U^{(1)})$ be the potential survival times and $(C^{(0)}, C^{(1)})$ be the potential censoring times.
By consistency, assume that $(U,C)$ is either $(U^{(0)}, C^{(0)})$ or  $(U^{(1)}, C^{(1)})$, depending on $A=0$ or 1.
Throughout, we focus on discrete-time data and
assume that there are discrete values, $0 = u_0 < u_1 < \cdots < u_K < u_{K+1}$, such that  $C^{(a)} \in \{u_0, u_1, \ldots, u_K\}$ and $U^{(a)} \in \{u_1, \ldots, u_K, u_{K+1}\}$ for $a=0,1$.
By this assumption, $P(U^{(a)} > u_0)=1$.
It is commonly of interest to estimate the treatment-specific survival probabilities $S_{ak} = P ( U^{(a)} >u_k )$,
hazard probabilities $q_{ak} = P ( U^{(a)} = u_k  | U^{(a)} \ge  u_k )$,
or other population quantities derived from these probabilities.
Causal effects can be defined as some
contrasts of these population quantities between treatment $a=0$ and 1, for example, the survival probability differences
$S_{1k} - S_{0k}$ or the hazard probability ratios $q_{1k} / q_{0k}$.

To ensure point identification, we assume throughout that
no unmeasured confounding (NUC) and non-informative censoring (NIC) hold:
\begin{align}
&  A \perp U^{(a)} | X, \quad a=0,1,  \label{eq:no-confounding} \\
& U^{(a)} \perp C^{(a)} | A=a, X, \quad a=0,1,  \label{eq:non-informative}
\end{align}
i.e., $ A $ and $ U^{(a)}$ are independent given $X$, and $U^{(a)}$ and $C^{(a)}$ are independent given $A=a$ and $X$ for $a=0,1$.
Under Assumptions~\eqref{eq:no-confounding}--\eqref{eq:non-informative},
the conditional survival probabilities $ P ( U^{(a)} > u_k | U^{(a)} \ge u_k, X)$ and hence also
$ P ( U^{(a)} > u_k |X)$ and $P( U^{(a)}>u_k)$, can be identified from the observed-data distribution for $k=1,\ldots,K$
(see footnote~\ref{ft:identification}).

\section{Existing weighted survival analysis} \label{sec:existing}

We describe existing methods of weighted survival analysis. The weights are defined from the fitted values in
a regression model $\pi(X;\gamma)$ for the propensity score $\pi^*(X) = P(A=1|X)$ (Rosenbaum \& Rubin 1983).
For an estimator $\hat\gamma$ of $\gamma$, the fitted propensity score is $\pi(X;\hat\gamma)$.
For example, consider logistic regression
\begin{align}
 \pi(X;\gamma) = \frac{ \exp ( \gamma^\T f(X) )} { 1+ \exp ( \gamma^\T f(X) )}, \label{eq:PS-model}
\end{align}
where $f(X)$ is a vector of known covariate functions (including an intercept) such as main effects and interactions, and
$\gamma=(\gamma_0,\gamma_1,\ldots,\gamma_p)^\T$ is a vector of unknown coefficients including intercept $\gamma_0$.
In general, let $\hat\gamma$ be an estimator of $\gamma$ which is consistent for the true value $\gamma^*$ if the model is correctly specified.
As a specific choice, the maximum likelihood (ML) estimator of $\gamma$, denoted as $\hat\gamma_{\mytext{ML}}$, is a solution to
\begin{align*}
 0 = \frac{1}{n} \sum_{i=1}^n \{ A_i - \pi(X_i; \gamma)\} f(X_i).
\end{align*}
The fitted propensity score is then $\pi (X; \hat\gamma_{\mytext{ML}})$.

\subsection{Weighted survival probabilities} \label{sec:weighted-KM}

Weighted Kaplan--Meier estimation has been proposed by incorporating inverse propensity-score weighting with Kaplan--Meier (KM) estimation (Kaplan \& Meier, 1958)
to estimate survival probabilities $S_{ak}$ (Cole \& Hernan 2004; Xie \& Liu 2005).
For estimating $S_{1k}$, $k=1,\ldots,K$, the hazard probabilities, $q_{1j}$, $j=1,\ldots,k$, are estimated as
\begin{align}
 \hat q_{1j} = \frac{\sum_{i \in J_j } A_i \hat w_{1i} }{ \sum_{i\in I_j } A_i \hat w_{1i} } ,  \label{eq:q-hat}
\end{align}
where $\hat w_{1i} = \pi (X_i; \hat\gamma)^{-1}$,
$I_j = \{1\le i\le n:  Y_i \ge u_j \}$ and $J_j = \{ 1\le i \le n: Y_i=u_j, \Delta_i=1\}$,
which represent, respectively, the risk set and event set at time $u_j$ (hence $I_j\setminus J_j$ represents the survival set at $u_j$).
The weighted KM estimator of $S_{1k}$ is
\begin{align}
\hat S_{1k,\mytext{wKM}} = \prod_{j=1}^k  ( 1- \hat q_{1j} ) . \label{eq:weighted-KM}
\end{align}
The dependency of $\hat q_{1j}$ and $\hat S_{1k,\mytext{wKM}}$ on $\hat\gamma$ is suppressed in the notation.
See Section~\ref{sec:weighted-KM-re} for a careful discussion of consistency of $\hat S_{1k,\mytext{wKM}}$.
For inference (e.g., confidence intervals), several variance estimators are available.
The variance estimator derived in Xie \& Liu (2005) is
\begin{align}
\hat V_{\mytext{b}} ( \hat S_{1k,\mytext{wKM}} ) =  \hat S_{1k,\mytext{wKM}} ^2 \sum_{j=1}^k
\frac{\sum_{i \in I_j } A_i \hat w_{1i}^2 }{ (\sum_{i\in I_j } A_i \hat w_{1i})^2 } \frac{\hat q_{1j} }{1- \hat q_{1j}} ,  \label{eq:weighted-KM-Vb}
\end{align}
which will be called a model-based variance estimator as indicated by Proposition~\ref{pro:weighted-KM-Vb} later.
The \texttt{survfit()} function 
in R package \texttt{survival} (Therneau 2020) provides two variance estimators for weighted KM estimation.
By our understanding and numerical evaluation, the model-based variance estimator in \texttt{survival}  is
\begin{align*}
\hat V_{\mytext{b0}} ( \hat S_{1k,\mytext{wKM}} ) =  \hat S_{1k,\mytext{wKM}} ^2 \sum_{j=1}^k
\frac{1}{ \sum_{i\in I_j } A_i \hat w_{1i} } \frac{\hat q_{1j} }{1- \hat q_{1j}} ,
\end{align*}
which is appropriate only for frequency weighted estimation, not inverse probability weighted (IPW) estimation (Therneau \& Grambsch 2000).
The model-robust variance estimator is
\begin{align}
\hat V_{\mytext{r}} ( \hat S_{1k,\mytext{wKM}} ) = \frac{\hat S_{1k,\mytext{wKM}} ^2 }{n^2}
\sum_{i=1}^n \hat\varphi_{1ki,\mytext{wKM}} ^2 , \label{eq:weighted-KM-Vr}
\end{align}
where
\begin{align*}
 \hat \varphi_{1ki,\mytext{wKM}} = n \sum_{j=1}^k \frac{ - A_i \hat w_{1i} } { \sum_{\tau \in I_j\setminus J_j } A_\tau \hat w_{1\tau} }
 1\{i\in I_j\} ( 1\{i\in J_j\} - \hat q_{1j})
\end{align*}
The additional factor of $n$ in $\hat \varphi_{1ki,\mytext{wKM}}$ is introduced to facilitate comparison with augmented IPW estimation in Section~\ref{sec:AIPW}.
For completeness, we directly derive the model-robust variance estimator through Taylor expansions in Supplement Section \ref{sec:technical-detail},
\textit{provided} that $\hat\gamma$ (hence each weight $\hat w_{1i}$) is treated as data-independent.
See Section~\ref{sec:weighted-KM-re} for a further discussion of consistency of $\hat V_{\mytext{r}} ( \hat S_{1k,\mytext{wKM}} )$.

For estimating the treatment-0 survival probabilities $S_{0k}$, $k=1,\ldots,K$, the hazard probabilities, $q_{0j}$, $j=1,\ldots,k$, are estimated as
\begin{align*}
 \hat q_{0j} = \frac{\sum_{i \in J_j } (1-A_i) \hat w_{0i} }{ \sum_{i\in I_j } (1-A_i) \hat w_{0i} } ,
\end{align*}
where $\hat w_{0i} = (1-\pi (X_i; \hat\gamma))^{-1} $.
The weighted KM estimator of $S_{0k}$ is
$\hat S_{0k,\mytext{wKM}} = \prod_{j=1}^k  ( 1- \hat q_{0j} )$.
Variance estimators can be obtained similarly as above.

\subsection{Weighted hazard ratios} \label{sec:existing-wBP}

Inverse propensity-score weighting has also been used in fitting Cox's (1972) proportional hazards models
to estimate hazard ratios between potential survival outcomes (Cole \& Hernan 2004; Joffe et al.~2004).
A popular, simple method is to fit a proportional hazards model with the treatment variable as the only regression term.
For example, the \texttt{coxph()} function in R package \texttt{survival} implements
weighted proportional hazard regression, including
Breslow's (1974) and Peto's (1972) modification of Cox's maximum partial likelihood estimator with option \texttt{ties="breslow"}.
In the discrete-time setting, the weighted Breslow--Peto (BP) estimator $\hat\theta_{\mytext{wBP}}$ is defined as a solution to
\begin{align*}
 0 = \sum_{k=1}^K \sum_{i \in J_k} \hat w_i
 \left( A_i - \frac{\sum_{\tau\in I_k} \hat w_\tau A_\tau \me^{\theta A_\tau}} {\sum_{\tau\in I_k} \hat w_\tau \me^{\theta A_\tau} } \right),
\end{align*}
where $\hat w_i = A_i \hat w_{1i} + (1-A_i) \hat w_{0i}$, i.e., $\hat w_i = \hat w_{1i}$ if $A_i=1$ or $\hat w_{0i}$ if $A_i=0$.
To help understanding, the preceding estimating equation can be rewritten as
\begin{align*}
 0 = \sum_{k=1}^K \left\{ \hat W_{1k} \hat q_{1k} - \frac{\hat W_{1k} \me^\theta}
 { \hat W_{1k} \me^\theta + \hat W_{0k} } (\hat W_{1k} \hat q_{1k}+ \hat W_{0k}\hat q_{0k})  \right\}
\end{align*}
or equivalently
\begin{align}
 0 = \sum_{k=1}^K  \frac{ \hat W_{1k} \hat W_{0k} }
 { \hat W_{1k} \me^\theta + \hat W_{0k} } ( \hat q_{1k} -  \hat q_{0k} \me^\theta )  ,  \label{eq:weighted-BP}
\end{align}
where $\hat W_{1k} = n^{-1} \sum_{i\in I_k} A_i \hat w_{1i}$, $\hat W_{0k} = n^{-1} \sum_{i\in I_k} (1-A_i) \hat w_{0i}$,
and $\hat q_{1k}$ and $\hat q_{0k}$ are defined in Section~\ref{sec:weighted-KM}.\footnote{\label{ft:invariant-q}
While $\hat q_{1k}$ and $\hat q_{0k}$ are invariant to rescaling of $\hat w_{1i}$'s and $\hat w_{0i}$'s,
equation \eqref{eq:weighted-BP} is affected by rescaling of $\hat w_{1i}$'s and $\hat w_{0i}$'s.
For example, if $\hat w_{1i} \equiv (n_1/n)^{-1}$ and $\hat w_{0i} \equiv ((n-n_1)/n)^{-1}$, with $n_1=\sum_{i=1}^n A_i$,
then resetting $\hat w_{1i} \equiv 1$ and $\hat w_{0i} \equiv 1$ in \eqref{eq:weighted-BP} may give a different solution for $\hat\theta_{\mytext{wBP}}$.}
From the expression \eqref{eq:weighted-BP}, $\hat\theta_{\mytext{wBP}}$ is expected to be a consistent estimator of $\theta$ in the following model of
proportional hazard probabilities (PHP):
\begin{align}
 q_{1k} = \me^\theta q_{0k} , \quad k=1,\ldots,K.   \label{eq:model-hazard-prob}
\end{align}
where $q_{ak} = P ( U^{(a)} = u_k  | U^{(a)} \ge  u_k )$ is a hazard probability.
If model \eqref{eq:model-hazard-prob} is misspecified, then $\hat\theta_{\mytext{wBP}}$ serves as a scalar summary of
log hazard probability ratios, $\log(q_{1k} / q_{0k})$ at individual times $u_k$, $k=1,\ldots,K$.
For completeness, we mention that a model of proportional hazard odds can also be considered:
$ q_{1k}/(1-q_{1k}) = \me^\vartheta q_{0k}/(1-q_{0k})$ , $k=1,\ldots,K$.
See Tan (2022, 2023) for a discussion of discrete-time survival models and associated estimation.

For inference, the R function \texttt{coxph()} from R package \texttt{survival} also provides two variance estimators for $\hat\theta_{\mytext{wBP}}$.
The model-based variance estimator is $\hat V_{\mytext{b0}} ( \hat\theta_{\mytext{wBP}})=
n^{-1} \hat H (\hat\theta_{\mytext{wBP}})^{-1}$, where
\begin{align*}
\hat H (\theta) = \sum_{k=1}^K \frac{ \hat W_{1k} \hat W_{0k} \me^\theta} { ( \hat W_{1k} \me^\theta + \hat W_{0k} )^2 }
 ( \hat W_{1k} \hat q_{1k}+ \hat W_{0k}\hat q_{0k} ) .
\end{align*}
We caution against the use of this variance estimator for two reasons.
First, $\hat V_{\mytext{b0}} ( \hat\theta_{\mytext{wBP}})$, like $\hat V_{\mytext{b0}} ( \hat S_{1k,\mytext{wKM}} )$, is not appropriate for IPW estimation.
Second, even for frequency weighted estimation,
$\hat V_{\mytext{b0}} ( \hat\theta_{\mytext{wBP}})$ tends to be asymptotically biased and conservative in the discrete-time setting (i.e., allowing tied event times)
as shown in Tan (2022, 2023).
The model-robust variance estimator is
\begin{align}
\hat V_{\mytext{r}} ( \hat\theta_{\mytext{wBP}})=
\hat H (\theta)^{-1} \hat G (\theta)
\hat H (\theta)^{-1} \Big|_{\hat\theta_{\mytext{wBP}}} ,  \label{eq:weighted-BP-Vr}
\end{align}
where
 $\hat G(\theta ) = n^{-2} \sum_{i=1}^n \{\sum_{k=1}^K \hat \varphi_{ki,\mytext{wBP}} (\theta) \}^2 $ with
\begin{align*}
 & \hat \varphi_{ki,\mytext{wBP}} (\theta) =
  \frac{ \hat W_{0k} } { \hat W_{1k} \me^\theta + \hat W_{0k} } A_i ( \hat w_{1i} 1\{i \in J_k\} - \hat q_{1k} 1\{i \in I_k\} ) \\
  & \quad - \frac{ \hat W_{1k} \me^\theta  } { \hat W_{1k} \me^\theta + \hat W_{0k} } (1-A_i) ( \hat w_{0i} 1\{i \in J_k\} - \hat q_{0k} 1\{i \in I_k\} ) \\
  & \quad + \frac{\hat q_{1k} -  \hat q_{0k} \me^\theta }{(\hat W_{1k} \me^\theta + \hat W_{0k})^2 }
 \left\{ \hat W_{0k}^2 ( A_i \hat w_{1i} 1\{i\in I_k\}- \hat W_{1k}) + \me^\theta \hat W_{1k}^2 ( (1-A_i) \hat w_{0i} 1\{i\in I_k\}- \hat W_{0k}) \right\} .
\end{align*}
This variance estimator is similar to an unweighted version in Tan (2022), Proposition 6, and can be verified through Taylor expansions, \textit{provided} that
$\hat\gamma$ is treated as data-independent.

\section{Re-examining weighted survival analysis} \label{sec:re-exam}

\subsection{Review: Augmented IPW estimation} \label{sec:AIPW}

We describe augmented IPW estimation for the treatment-1 survival probability $S_{1k}$ in the setup of Section~\ref{sec:setup}, adapted from
previous works (Robins \& Rotnitzky 1992; Hubbard et al.~2000; Bai et al.~2013).
The method involves estimating two distinct sets of unknown covariate functions. The first set includes the treatment propensity score
$ \pi^*(X) = P( A=1 |X)$ and the non-censoring probabilities $\pi^*_{1j}(X)$, $j=1,\ldots,k$, defined as
\begin{align*}
\pi^*_{1j} (X) & = P ( Y \ge u_j | Y \ge u_{j-1}, (Y,\Delta)\not=(u_{j-1},1), X, A=1 ) \\
& = P ( \text{not censored at $u_{j-1}$ } | \text{ in the risk set and event-free at $u_{j-1}$}, X, A=1 ),
\end{align*}
where the risk set at $u_j$ is $\{ Y \ge u_j\}$.
The second set consists of the conditional survival probabilities $m^*_{1j}(X)$, $j=1,\ldots,k$, defined as
\begin{align*}
 m^*_{1j} (X) &= P( (Y,\Delta)\not=(u_j,1) | Y \ge u_j, X, A=1) \\
& = P ( \text{event-free at $u_j$ } | \text{ in the risk set at $u_j$}, X, A=1 ).
\end{align*}
Under Assumptions~\eqref{eq:no-confounding}--\eqref{eq:non-informative},
identification is achieved such that
$\pi^*_{1j} (X) =P ( C^{(1)} \ge u_j | C^{(1)} \ge u_{j-1}, X, A=1)$, the conditional survival function of $C^{(1)}$ given $(X,A=1)$,
and $m^*_{1j} (X) =  P ( U^{(1)} > u_j | U^{(1)} \ge u_j, X)$, the conditional survival function of $U^{(1)}$ given $X$.
\enlargethispage{\baselineskip}
\footnote{\label{ft:identification}
For identification,
the second identity can be shown as follows (and similarly the first identity can be shown): $  m^*_{1j} (X) =  P( (Y,\Delta)\not=(u_j,1) | Y \ge u_j, X, A=1)
 = P( U^{(1)} > u_j | U^{(1)} \ge u_j, C^{(1)} \ge u_j, X, A=1)
 = P( U^{(1)} > u_j | U^{(1)} \ge u_j, X, A=1) = P( U^{(1)} > u_j | U^{(1)} \ge u_j, X)$,
where the third equality follows from NIC Assumption~\eqref{eq:non-informative} and the fourth equality follows from NUC Assumption~\eqref{eq:no-confounding}.}
In general, these covariate functions need to be estimated by postulating and fitting some (parametric) regression models,
denoted as $\pi(X; \gamma)$ for $\pi^*(X)$, $\pi_{1j} (X; \rho_{1j}) $ for $\pi^*_{1j} (X)$, and $m_{1j} (X; \eta_{1j})$ for $m^*_{1j} (X)$
for $j=1,\ldots,k$, where $\gamma$, $\rho_{1j}$, and $\eta_{1j}$ are parameter vectors.
In general, let $\hat\gamma$, $\{\hat\rho_{1j}: j=1,\ldots,k\}$, and $\{ \hat\eta_{1j}: j=1,\ldots,k\}$ be any estimators which are consistent for
their true values $\gamma^*$, $\{\rho^*_{1j}: j=1,\ldots,k\}$, and $\{\eta^*_{1j}: j=1,\ldots,k\}$ if the associated models are correctly specified
with $ \pi^*(X) = \pi(X;\gamma^*)$, $\pi^*_{1j}(X) = \pi_{1j} (X; \rho^*_{1j})$, and  $m^*_{1j}(X) = m_{1j} (X; \eta^*_{1j})$ for $j=1,\ldots,k$.
Then the fitted values $\pi(X;\hat\gamma)$, $\pi_{1j}(X;\hat\rho_{1j})$, and $m_{1j}(X;\hat\eta_{1j})$
are consistent for the unknown functions $\pi^*(X)$, $\pi^*_{1j} (X)$, and $m^*_{1j} (X)$ respectively.

For the treatment propensity score $\pi^*(X)$, logistic regression models are commonly used as described in Section~\ref{sec:existing}.
For the non-censoring probabilities $\{ \pi^*_{1j}(X): j=1,\ldots,k\}$
or the conditional survival probabilities $\{ m^*_{1j}(X): j=1,\ldots,k\}$,
regression models are usually specified with homogeneity (or dimension-reduction) assumptions on
some components of $\rho_{1j}$ or $\eta_{1j}$ over $j=1,\ldots,k$.
In particular, proportional hazards models for $\{ \pi^*_{1j}(X): j=1,\ldots,k\}$ or $\{ m^*_{1j}(X): j=1,\ldots,k\}$
assumes that the coefficients associated with non-intercept covariate terms are constant over $j$ in the hazard probabilities or odds.
See Tan (2023) for relevant regression models and inferences in the discrete-time setting.

Given any estimators $\hat\gamma$, $\hat\rho_{1,1:k}=\{\hat\rho_{1j}: j=1,\ldots,k\}$, and $\hat\eta_{1,1:k}=\{ \hat\eta_{1j}: j=1,\ldots,k\}$,
the augmented IPW estimator of $S_{1k}$ is
\begin{align}\label{eq:AIPW}
\hat S_{1k} (\hat\gamma, \hat\rho_{1,1:k}, \hat\eta_{1,1:k}) = \frac{1}{n} \sum_{i=1}^n \varphi_{1ki}(\hat\gamma, \hat\rho_{1,1:k}, \hat\eta_{1,1:k})
\end{align}
with
\begin{align} \label{eq:phi-ps}
\begin{split}
 & \quad \varphi_{1ki}(\hat\gamma, \hat\rho_{1,1:k}, \hat\eta_{1,1:k}) \\
 &= \frac{A_i }{\hat\pi_i } \frac{\overline R_{ki} }{\hat{\overline \pi}_{1ki} } 1\{ U^{(1)}_i > u_k\}
  - \sum_{\ell=0}^{k-1} \frac{A_i }{\hat\pi_i } \frac{\overline R_{\ell i} }{\hat{\overline \pi}_{1\ell i} }  \left( \frac{R_{\ell+1, i} }{\hat\pi_{1,\ell+1, i} }-1 \right) \hat{\underline m}_{1,\ell+1,i}  1\{ U^{(1)}_i > u_\ell \} \\
  &\qquad\qquad  - \left( \frac{A_i }{\hat\pi_i } -1 \right) \hat{\underline m}_{11i} ,
\end{split}
\end{align}
where the following notation is used:
$R_{\ell i} = 1\{ Y_i \ge u_\ell\}$, $\hat\pi_i = \pi(X_i; \hat\gamma)$, $\hat\pi_{1\ell i} = \pi_{1\ell} (X_i; \hat\rho_{1\ell}) $,
$ \hat m_{1\ell i} = m_{1\ell} (X_i; \hat\eta_{1\ell})$,
$\overline R_{\ell i} = \prod_{j=1}^\ell R_{ji} $, $ \hat{\overline \pi}_{1\ell i} = \prod_{j=1}^\ell \hat\pi_{1ji} $,
$\hat{\underline m}_{1\ell i} = \prod_{j=\ell}^k \hat m_{1ji} $
for $1 \le \ell \le k$, and
$\overline R_{0i} \equiv 1 $ and $\overline \pi_{10i} \equiv 1$.
(The dependency of $\hat{\underline m}_{1\ell i} $ on $k$ is suppressed in the notation.)
The expression \eqref{eq:phi-ps} uses the potential survival time $U_i^{(1)}$ only through the product $A_i R_{1\ell i} 1\{ U_i^{(1)} > u_\ell \}$,
which can be evaluated as $1 \{ Y_i \ge u_\ell, (Y_i,\Delta_i)\not=(u_\ell,1), A_i=1\}$ from the observed data.
Equivalently, the augmented IPW estimator can be expressed as \eqref{eq:AIPW} with
\begin{align} \label{eq:phi-or}
\begin{split}
&\quad \varphi_{1ki}(\hat\gamma, \hat\rho_{1,1:k}, \hat\eta_{1,1:k}) \\
&= \sum_{\ell=0}^{k-1} \frac{A_i }{\hat\pi_i } \frac{\overline R_{\ell+1, i} }{\hat{\overline \pi}_{1,\ell+1, i} } \hat{\underline m}_{1,\ell+2,i}
1\{ U^{(1)}_i > u_\ell \} ( 1\{U^{(1)}_i > u_{\ell+1}\}- \hat m_{1,\ell+1,i} )
  + \hat{\underline m}_{11i} ,
\end{split}
\end{align}
where $ \underline m_{1,k+1,i} \equiv 1$.
The augmented IPW estimator depends on fitted values from the two sets of regression models mentioned earlier.
For illustration, the explicit expressions for the special case of $k = 2$ are provided in Supplement Section~\ref{sec:aipw-k2}.
Retaining only the first term in the expression \eqref{eq:phi-ps} leads to the IPW estimator
\begin{align}
 \hat S_{1k}(\hat\gamma, \hat\rho_{1,1:k})
 = \frac{1}{n} \sum_{i=1}^n  \frac{A_i }{\hat\pi_i } \frac{\overline R_{ki} }{\hat{\overline \pi}_{1ki} } 1\{ U^{(1)}_i > u_k\} , \label{eq:IPW}
\end{align}
depending only on the fitted propensity scores $\hat\pi_i$ and non-censoring probabilities $\{\hat\pi_{1\ell i}: \ell=1,\ldots,k\}$.
Retaining only the last term in the expression \eqref{eq:phi-or} leads to the iterated-conditional-expectation (ICE) estimator
\begin{align}
 \hat S_{1k}( \hat\eta_{1,1:k})
 = \frac{1}{n} \sum_{i=1}^n   \hat{\underline m}_{11i}
  = \frac{1}{n} \sum_{i=1}^n  \hat m_{11i}\cdots\hat m_{1ki} ,  \label{eq:ICE}
\end{align}
depending only on the fitted conditional survival probabilities $\{\hat m_{1\ell i}: \ell=1,\ldots,k\}$.
The estimator $\hat S_{1k}(\hat\gamma, \hat\rho_{1,1:k})$ or $\hat S_{1k}( \hat\eta_{1,1:k})$ is consistent for $S_{1k}$
only if the associated models are correctly specified, $\pi(X;\gamma)$ and $\{\pi_{1j}(X; \rho_{1j}) :1\le j\le k\}$ in the first case or
$\{m_{1j}(X; \eta_{1j}) :1\le j\le k\}$ in the second case.
However, the augmented IPW estimator is doubly roust, i.e., being consistent for $S_{1k}$ if either
$\pi(X;\gamma)$ and $\{\pi_{1j}(X; \rho_{1j}) :1\le j\le k\}$ are correctly specified or
$\{m_{1j}(X; \eta_{1j}) :1\le j\le k\}$ are correctly specified.

Variance estimation for the augmented IPW estimator requires additional consideration
to properly account for sampling variations associated with the estimators $(\hat\gamma, \hat\rho_{1,1:k}, \hat\eta_{1,1:k})$.
By the current theory of augmented IPW estimation (e.g., Tsiatis 2006), there are three distinct scenarios. 
\begin{itemize}\addtolength{\itemsep}{-.1in}
\item
If the two sets of models, first $\pi(X;\gamma)$ and $\{\pi_{1j}(X; \rho_{1j}) :1\le j\le k\}$ and second
$\{m_{1j}(X; \eta_{1j}) :1\le j\le k\}$, are correctly specified, then
the asymptotic variance of $\hat S_{1k} (\hat\gamma, \hat\rho_{1,1:k}, \hat\eta_{1,1:k}) $
can be consistently estimated by
\begin{align}
 \hat V_{\mytext{r}}( \hat S_{1k} (\hat\gamma, \hat\rho_{1,1:k}, \hat\eta_{1,1:k}) )
 = \frac{1}{n^2} \sum_{i=1}^n \left\{ \varphi_{1ki}(\hat\gamma, \hat\rho_{1,1:k}, \hat\eta_{1,1:k}) -
 \hat S_{1k} (\hat\gamma, \hat\rho_{1,1:k}, \hat\eta_{1,1:k}) \right\}^2 ,  \label{eq:AIPW-var}
\end{align}
without being affected by the sampling variations from $(\hat\gamma, \hat\rho_{1,1:k}, \hat\eta_{1,1:k})$
(in other words, $(\hat\gamma, \hat\rho_{1,1:k}, \hat\eta_{1,1:k})$ can be treated as if they were data-independent).

\item If the models $\pi(X;\gamma)$ and $\{\pi_{1j}(X; \rho_{1j}) :1\le j\le k\}$ are correctly specified \textit{and} if
$(\hat\gamma, \hat\rho_{1,1:k})$ are ML estimators, then the variance estimator
(\ref{eq:AIPW-var}) is conservative, i.e., no smaller than the asymptotic variance of $\hat S_{1k} (\hat\gamma, \hat\rho_{1,1:k}, \hat\eta_{1,1:k}) $.

\item If the models $\{m_{1j}(X; \eta_{1j}) :1\le j\le k\}$, are correctly specified, then the variance estimator (\ref{eq:AIPW-var})
may be inconsistent in either direction.
\end{itemize} 
In practice, bootstrap or related methods can be used, assuming that the usual justification holds for such methods.

\subsection{Re-examining weighted KM estimation} \label{sec:weighted-KM-re}

With the postulated non-censoring and conditional survival probabilities
$\pi_{1j} (X; \rho_{1j})$ and $ m_{1j} (X;\eta_{1j})$ allowed to vary in $X$,
the augmented IPW estimator is apparently more complicated than the weighted KM estimator \eqref{eq:weighted-KM}.
There is, however, an interesting connection. We show that the augmented IPW estimator \textit{algebraically} reduces to the weighted KM estimator
if $\pi_{1j} (X; \rho_{1j})$ and $ m_{1j} (X;\eta_{1j})$, $j=1,\ldots,k$, are assumed to be constant in $X$
\textit{and} if the estimators $\hat\rho_{1j}$ and $\hat \eta_{1j}$, $j=1,\ldots,k$, are defined in a particular manner.

\begin{pro} \label{pro:weighted-KM}
Consider the following Constant non-Censoring Probability (CCP) model for $\pi^*_{1j}(X)$ and Constant Survival Probability (CSP) model
for $m^*_{1j}(X)$: 
\begin{align}
 & \pi_{1j}(X; \rho_{1j} ) = \rho_{1j}, \quad j=1,\ldots,k, \label{eq:const-pi}  \\
 & m_{1j} (X; \eta_{1j}) = \eta_{1j},  \quad j=1,\ldots,k, \label{eq:const-m}
\end{align}
where $\{\rho_{1j}: j=1,\ldots,k\}$ and $\{ \eta_{1j}: j=1,\ldots,k\}$ are unknown parameters (independent of $X$).
For any estimator $\hat\gamma$,
define $\hat\rho_{1,1:k,\mytext{CAL}}=\{\hat \rho_{1j,\mytext{CAL}}: j=1,\ldots,k\}$ and $\hat\eta_{1,1:k,\mytext{CAL}}=\{ \hat \eta_{1j,\mytext{CAL}}: j=1,\ldots,k\}$,
called calibrated estimators, as the solutions to the equations,
\begin{align}
& 0= \frac{1}{n} \sum_{i=1}^n  \frac{A_i }{\hat\pi_i } \overline R_{\ell i}
\left( \frac{R_{\ell+1, i} }{\rho_{1,\ell+1} }-1 \right) 1\{ U^{(1)}_i > u_\ell \} , \quad \ell=0,\ldots,k-1,  \label{eq:CAL-rho} \\
& 0 = \frac{1}{n} \sum_{i=1}^n  \frac{A_i }{\hat\pi_i } \overline R_{\ell+1, i}
1\{ U^{(1)}_i > u_\ell \} ( 1\{U^{(1)}_i > u_{\ell+1}\} -  \eta_{1,\ell+1} ), \quad \ell=0,\ldots,k-1,  \label{eq:CAL-eta}
\end{align}
where $\hat\pi_i = \pi (X_i;\hat\gamma)$ as before.
The dependency of $(\hat\rho_{1,1:k,\mytext{CAL}}, \hat\eta_{1,1:k,\mytext{CAL}})$ on $\hat\gamma$ is suppressed in the notation.
The augmented IPW estimator not only coincides with the ICE and weighted KM estimators,
\begin{align*}
& \quad \hat S_{1k} (\hat\gamma, \hat\rho_{1,1:k,\mytext{CAL}}, \hat\eta_{1,1:k,\mytext{CAL}})
= \hat S_{1k} (\hat\eta_{1,1:k,\mytext{CAL}})
= \hat S_{1k,\mytext{wKM}},
\end{align*}
but also, if the normalization $\sum_{i=1}^n A_i \hat\pi_i^{-1}=n$ is satisfied, coincides with the IPW estimator,
\begin{align*}
& \quad \hat S_{1k} (\hat\gamma, \hat\rho_{1,1:k,\mytext{CAL}}, \hat\eta_{1,1:k,\mytext{CAL}})
= \hat S_{1k} (\hat\gamma, \hat\rho_{1,1:k,\mytext{CAL}} ) ,
\end{align*}
where $\hat S_{1k} (\hat\gamma, \hat\rho_{1,1:k}, \hat\eta_{1,1:k})$, $\hat S_{1k} (\hat\gamma, \hat\rho_{1,1:k} )$ and $\hat S_{1k} (\hat\eta_{1,1:k})$ are defined in \eqref{eq:AIPW}, \eqref{eq:IPW} and \eqref{eq:ICE} respectively.
\end{pro}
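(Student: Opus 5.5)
The plan is to solve the calibration equations explicitly and then substitute the solutions into the two representations \eqref{eq:phi-ps} and \eqref{eq:phi-or} of the augmented IPW estimator.

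\emph{Closed forms.} Because $\hat\pi_{1,\ell+1,i}=\rho_{1,\ell+1}$ and $\hat m_{1,\ell+1,i}=\eta_{1,\ell+1}$ do not depend on $i$, each of \eqref{eq:CAL-rho} and \eqref{eq:CAL-eta} is linear in a single unknown. With $\hat w_{1i}=\hat\pi_i^{-1}$, I will translate the indicators into risk and event sets. The product $A_i\overline R_{\ell i}1\{U_i^{(1)}>u_\ell\}$ is the indicator that $i$ is treated, in the risk set at $u_\ell$, and event-free there; that is, $i\in I_\ell\setminus J_\ell$ (for $\ell=0$ this is all treated units). Multiplying by $R_{\ell+1,i}$ gives $i\in I_{\ell+1}$ with $A_i=1$. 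It follows that
\[
\hat\rho_{1,\ell+1,\mytext{CAL}}=\frac{\sum_{i\in I_{\ell+1}}A_i\hat w_{1i}}{\sum_{i\in I_\ell\setminus J_\ell}A_i\hat w_{1i}}, \qquad
\hat\eta_{1,\ell+1,\mytext{CAL}}=1-\frac{\sum_{i\in J_{\ell+1}}A_i\hat w_{1i}}{\sum_{i\in I_{\ell+1}}A_i\hat w_{1i}}=1-\hat q_{1,\ell+1}.
\]
The ICE identity is then immediate: $\hat S_{1k}(\hat\eta_{1,1:k,\mytext{CAL}})=\prod_{j=1}^k(1-\hat q_{1j})=\hat S_{1k,\mytext{wKM}}$.

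\emph{AIPW equals ICE.} I will use form \eqref{eq:phi-or}. Since $\hat{\overline\pi}_{1,\ell+1,i}$ and $\hat{\underline m}_{1,\ell+2,i}$ are constant in $i$, the $\ell$-th augmentation term averages to
\[
\frac{\hat{\underline m}_{1,\ell+2}}{\hat{\overline\pi}_{1,\ell+1}}\cdot\frac1n\sum_i\frac{A_i}{\hat\pi_i}\overline R_{\ell+1,i}1\{U^{(1)}_i>u_\ell\}\bigl(1\{U^{(1)}_i>u_{\ell+1}\}-\hat\eta_{1,\ell+1}\bigr).
\]
This vanishes exactly by \eqref{eq:CAL-eta}. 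Only $n^{-1}\sum_i\hat{\underline m}_{11i}$ remains, so the AIPW estimator equals the ICE estimator.

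\emph{AIPW equals IPW under normalization.} Here I switch to form \eqref{eq:phi-ps}. In the sum over $\ell$, the $\ell$-th term has the constant factor $\hat{\underline m}_{1,\ell+1}/\hat{\overline\pi}_{1\ell}$ times the left side of \eqref{eq:CAL-rho}, so each vanishes. The last term averages to $-\hat{\underline m}_{11}\,n^{-1}\sum_i(A_i/\hat\pi_i-1)$, which is zero exactly when $\sum_iA_i\hat\pi_i^{-1}=n$. What remains is the IPW estimator \eqref{eq:IPW}.

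\emph{Main check.} The step needing care is confirming that \eqref{eq:phi-ps} and \eqref{eq:phi-or} are algebraically identical for arbitrary fitted values, including the $\ell=0$ conventions. The paper asserts this equivalence, and I would verify it by a telescoping rearrangement using $\overline R_{\ell+1}=\overline R_\ell R_{\ell+1}$ and $\hat{\underline m}_{1,\ell+1}=\hat m_{1,\ell+1}\hat{\underline m}_{1,\ell+2}$. Given that identity, the argument is purely algebraic. No direct product computation is needed to obtain the IPW form, since IPW is shown equal to AIPW, which was already shown equal to ICE and hence to the weighted KM estimator.
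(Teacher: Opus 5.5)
Your proposal is correct and follows the paper's proof essentially step for step: the augmentation terms in \eqref{eq:phi-or} vanish by \eqref{eq:CAL-eta}, giving AIPW $=$ ICE. The closed form $\hat\eta_{1,\ell+1,\mathrm{CAL}}=1-\hat q_{1,\ell+1}$ gives ICE $=$ weighted KM, and the terms in \eqref{eq:phi-ps} vanish by \eqref{eq:CAL-rho} together with $\sum_i(A_i/\hat\pi_i-1)=0$, giving AIPW $=$ IPW. The paper additionally checks IPW $=$ weighted KM directly from the product of the closed-form $\hat\rho_{1,\ell+1,\mathrm{CAL}}$'s, but it does so only for completeness, and as you note this is redundant given the chain of equalities.
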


As a consequence of Proposition~\ref{pro:weighted-KM}, the weighted KM estimator $\hat S_{1k,\mytext{wKM}}$ is
doubly robust, i.e., being consistent if in addition to Assumptions~\eqref{eq:no-confounding}--\eqref{eq:non-informative},
either PS model $\pi(X;\gamma)$ and CCP model \eqref{eq:const-pi} are correctly specified
or CSP model \eqref{eq:const-m} is correctly specified.
Therefore, the consistency of $\hat S_{1k,\mytext{wKM}}$ in general depends on
correct specification of either CCP model \eqref{eq:const-pi} or CSP model \eqref{eq:const-m},
in addition to that of PS model in the former case.\footnote{\label{ft:consistency-wKM}
Intuitively, this result can be explained by the fact that the hazard probabilities are estimated in \eqref{eq:q-hat} as constants $\hat q_{1j}$,
\textit{independently} of covariates, for weighted KM estimation.}
This result seems surprising and previously under-appreciated in two ways:
first $\hat S_{1k,\mytext{wKM}}$ is in general inconsistent if PS model is correct but CCP model is misspecified,
and second $\hat S_{1k,\mytext{wKM}}$ is consistent if CSP model is correctly even when PS model is misspecified.
The consistency result of Xie \& Liu (2005) was formally obtained in a setting that can be seen to implicitly
assume that PS model $\pi(X;\gamma)$ and CSP model \eqref{eq:const-m} are correctly specified.
For further understanding, see Supplement Section~\ref{sec:comparison-KM} for a comparison of weighted and unweighted KM estimation
in their consistency properties.

The augmented IPW representation for the weighted KM estimator is also helpful for studying variance estimation.
As mentioned in Section~\ref{sec:AIPW}, consistent variance estimation may be analytically difficult unless
\textit{all} the associated regression models are correctly specified. Nevertheless, with the particular models \eqref{eq:const-pi} and \eqref{eq:const-m}
and estimators $\{\hat \rho_{1j,\mytext{CAL}}: j=1,\ldots,k\}$ and $\{ \hat \eta_{1j,\mytext{CAL}}: j=1,\ldots,k\}$,
the variance estimator \eqref{eq:AIPW-var} coincides with the model-robust variance estimator $\hat V_{\mytext{r}} ( \hat S_{1k,\mytext{wKM}} )$ in \eqref{eq:weighted-KM-Vr}
under the condition $\sum_{i=1}^n A_i\hat{\pi}_i = n$. Moreover, $\hat V_{\mytext{r}} ( \hat S_{1k,\mytext{wKM}} )$
and, if $E( \sum_{i=1}^n A_i \hat\pi_i^{-1}) =n$,
the variance estimator \eqref{eq:AIPW-var} is nonparametrically consistent even when both
CCP and CSP models \eqref{eq:const-pi} and \eqref{eq:const-m} are misspecified,
\textit{provided} that $\hat\gamma$ is treated as data-independent.
With model misspecification, the point estimator $\hat S_{1k} (\hat\gamma, \hat\rho_{1,1:k,\mytext{CAL}}, \hat\eta_{1,1:k,\mytext{CAL}})$
converges in probability to a population quantity which may differ from $S_{1k}$.

\begin{pro} \label{pro:weighted-KM-Vr}
Suppose that $\hat\gamma$ were data-independent. Then the asymptotic variance of $\hat S_{1k,\mytext{wKM}}$
or equivalently $\hat S_{1k} (\hat\gamma, \hat\rho_{1,1:k,\mytext{CAL}}, \hat\eta_{1,1:k,\mytext{CAL}})$
can be consistently estimated by $\hat V_{\mytext{r}} ( \hat S_{1k,\mytext{wKM}} )$
or, if $E( \sum\limits_{i=1}^n A_i \hat\pi_i^{-1}) \allowbreak =n$, by $\hat V_{\mytext{r}}( \hat S_{1k} (\hat\gamma, \hat\rho_{1,1:k,\mytext{CAL}}, \hat\eta_{1,1:k,\mytext{CAL}}) )$
as defined in \eqref{eq:AIPW-var}.
Moreover, if $\sum_{i=1}^n A_i \hat\pi_i^{-1}=n$, then
 $\hat V_{\mytext{r}}( \hat S_{1k} (\hat\gamma, \hat\rho_{1,1:k,\mytext{CAL}}, \hat\eta_{1,1:k,\mytext{CAL}}) )$ algebraically coincides with
$\hat V_{\mytext{r}} ( \hat S_{1k,\mytext{wKM}} )$. 
\end{pro}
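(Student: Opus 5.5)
Throughout we take $\hat\gamma$ as fixed (data-independent), so that $\hat w_{1i}=\hat\pi_i^{-1}$ are known constants. By Proposition~\ref{pro:weighted-KM}, $\hat S_{1k,\mytext{wKM}} = \hat S_{1k}(\hat\gamma,\hat\rho_{1,1:k,\mytext{CAL}},\hat\eta_{1,1:k,\mytext{CAL}})$. The plan has three steps: first the algebraic identity, then consistency of $\hat V_{\mytext{r}}(\hat S_{1k,\mytext{wKM}})$ by the delta method, and finally consistency of \eqref{eq:AIPW-var} via an influence-function argument that exploits the calibration equations.

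\emph{Step 1 (algebraic coincidence).} I first solve \eqref{eq:CAL-rho}--\eqref{eq:CAL-eta} in closed form. From \eqref{eq:CAL-eta}, $\hat\eta_{1j,\mytext{CAL}} = 1-\hat q_{1j}$ with $\hat q_{1j}$ as in \eqref{eq:q-hat}, since $A_i\overline R_{ji}1\{U_i^{(1)}>u_{j-1}\}=A_i1\{i\in I_j\}$. From \eqref{eq:CAL-rho}, $\hat\rho_{1,\ell+1}$ is the ratio of $\sum_{i\in I_{\ell+1}}A_i\hat w_{1i}$ to $\sum_{i\in I_\ell\setminus J_\ell}A_i\hat w_{1i}$. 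Telescoping these ratios and using $\sum_{i\in I_\ell\setminus J_\ell}=\hat\eta_{1\ell}\sum_{i\in I_\ell}$ gives
\[
\hat{\overline\pi}_{1,\ell+1} = \frac{n^{-1}\sum_{i\in I_{\ell+1}}A_i\hat w_{1i}}{\prod_{j\le \ell}\hat\eta_{1j}}\cdot \frac{n}{\sum_{i=1}^nA_i\hat w_{1i}} .
\]
The last factor equals $1$ under the normalization $\sum_i A_i\hat\pi_i^{-1}=n$. I then substitute into the representation \eqref{eq:phi-or} with constant $\hat m$. The $\ell$-th summand becomes
\[
\hat S_{1k,\mytext{wKM}}\, \frac{n A_i\hat w_{1i}1\{i\in I_{\ell+1}\}}{\sum_{\tau\in I_{\ell+1}}A_\tau\hat w_{1\tau}}\,\frac{(1-1\{i\in J_{\ell+1}\}) - (1-\hat q_{1,\ell+1})}{\hat\eta_{1,\ell+1}} .
\]
Since $\hat\eta_{1,\ell+1}\sum_{I_{\ell+1}}=\sum_{I_{\ell+1}\setminus J_{\ell+1}}$, this is exactly $\hat S_{1k,\mytext{wKM}}$ times the $(j=\ell+1)$-th term of $\hat\varphi_{1ki,\mytext{wKM}}/n$. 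The last term of \eqref{eq:phi-or} is $\hat{\underline m}_{11i}=\hat S_{1k,\mytext{wKM}}$. Hence $\varphi_{1ki}-\hat S_{1k} = \hat S_{1k,\mytext{wKM}}\hat\varphi_{1ki,\mytext{wKM}}/n$, and squaring and summing gives \eqref{eq:weighted-KM-Vr}.

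\emph{Step 2 (consistency of $\hat V_{\mytext{r}}(\hat S_{1k,\mytext{wKM}})$).} With fixed weights, $\log\hat S_{1k,\mytext{wKM}}=\sum_j\log(1-\hat q_{1j})$ is a smooth function of the sample means $n^{-1}\sum_{I_j}A_i\hat w_{1i}$ and $n^{-1}\sum_{J_j}A_i\hat w_{1i}$. The delta method then gives the influence function $\bar S_{1k}\bar\varphi_{1ki}$, where $\bar\varphi_{1ki}$ is the population analogue of $\hat\varphi_{1ki,\mytext{wKM}}/n$ (ratios replaced by expectations). This influence function has mean zero regardless of any model being correct. Replacing population quantities by sample versions and invoking the law of large numbers yields consistency. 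This is the Taylor expansion deferred to the Supplement.

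\emph{Step 3 (consistency of \eqref{eq:AIPW-var} when $E\sum_iA_i\hat\pi_i^{-1}=n$).} I view $\hat S_{1k}$ as $n^{-1}\sum_i\varphi_{1ki}(\hat\gamma,\rho,\eta)$ evaluated at the M-estimators $(\hat\rho,\hat\eta)$, with probability limits $(\bar\rho,\bar\eta)$. The key observation is that the calibration equations make the estimator first-order insensitive to $(\rho,\eta)$. Differentiating \eqref{eq:phi-ps} in $\eta_{1,\ell+1}$ produces terms proportional to the left side of \eqref{eq:CAL-rho}, and differentiating \eqref{eq:phi-or} in $\rho_{1,\ell+1}$ produces terms proportional to \eqref{eq:CAL-eta}. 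These derivatives therefore vanish in expectation at $(\bar\rho,\bar\eta)$ even under misspecification. Consequently the influence function is $\varphi_{1ki}(\hat\gamma,\bar\rho,\bar\eta)-E\varphi_{1ki}$, and its sample variance is \eqref{eq:AIPW-var}. The normalization condition is needed so that the centering $\hat S_{1k}$ consistently estimates $E\varphi_{1ki}$, given that the term $-(A_i/\hat\pi_i-1)\hat{\underline m}_{11i}$ has mean zero only when $E(A\hat\pi^{-1})=1$. Consistent with this, by Step 1 the two estimators differ exactly by the rescaling factor in $\hat{\overline\pi}_{1,\ell+1}$.

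The main obstacle is Step 3: verifying cleanly that the cross-derivatives match the calibration residuals for every $\ell$, given the products $\hat{\underline m}$ and $\hat{\overline\pi}$. The fallback is to express everything through the closed forms of Step 1 and compare directly with the delta-method influence function of Step 2.
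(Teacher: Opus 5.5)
Your Steps 1 and 2 follow the paper's proof: the closed forms $\hat\eta_{1j,\mytext{CAL}}=1-\hat q_{1j}$ and the telescoped $\hat{\overline\pi}_{1,\ell+1}$, term-by-term matching in $\ell$, and the delta method for $\log\hat S_{1k,\mytext{wKM}}$. There is one bookkeeping slip. Your displayed $\ell$-th summand already carries the factor $n$, so it equals $\hat S_{1k,\mytext{wKM}}$ times the $(j=\ell+1)$-th term of $\hat\varphi_{1ki,\mytext{wKM}}$ itself. The identity is therefore $\varphi_{1ki}-\hat S_{1k}=\hat S_{1k,\mytext{wKM}}\hat\varphi_{1ki,\mytext{wKM}}$, as in the paper; with your extra $1/n$, squaring and summing would not reproduce \eqref{eq:weighted-KM-Vr}.

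The real gap is in Step 3, where you misplace the role of the condition $E(A_i\hat\pi_i^{-1})=1$. Differentiating \eqref{eq:phi-ps} in $\eta_{1j}$ does not produce only multiples of the residuals in \eqref{eq:CAL-rho}. The last term $-(A_i/\hat\pi_i-1)\hat{\underline m}_{11i}$ also contributes
\[
-\Big(\prod_{j'\neq j}\eta_{1j'}\Big)\,\frac1n\sum_{i=1}^n\Big(\frac{A_i}{\hat\pi_i}-1\Big),
\]
which is not a calibration residual and has zero mean only if $E(A_i\hat\pi_i^{-1})=1$. So your claim that the derivatives vanish in expectation ``even under misspecification'' is false without the hypothesis, and this term is exactly where the paper uses it.

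The reason you give instead does not hold. You say the centering $\hat S_{1k}$ must consistently estimate $E\varphi_{1ki}$, but $\hat S_{1k}=n^{-1}\sum_i\varphi_{1ki}(\hat\gamma,\hat\rho,\hat\eta)$ is the sample mean by construction. Whether the augmentation term has mean zero only affects whether the limit equals $S_{1k}$, which is irrelevant for variance estimation under misspecification. Without the hypothesis, the variability of $\hat\eta_{1,1:k,\mytext{CAL}}$ enters the influence function and \eqref{eq:AIPW-var} fails.

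The cleanest repair is your own closing remark, promoted to the main argument. Redo Step 1 without imposing the normalization. The last term of \eqref{eq:phi-or} is still $\prod_j\hat\eta_{1j,\mytext{CAL}}=\hat S_{1k,\mytext{wKM}}$, and $1/\hat{\overline\pi}_{1,\ell+1}$ carries $\sum_iA_i\hat\pi_i^{-1}$ in place of $n$. This gives the exact identity
\[
\varphi_{1ki}(\hat\gamma,\hat\rho_{1,1:k,\mytext{CAL}},\hat\eta_{1,1:k,\mytext{CAL}})-\hat S_{1k,\mytext{wKM}} = c_n\,\hat S_{1k,\mytext{wKM}}\,\hat\varphi_{1ki,\mytext{wKM}},\qquad c_n=\frac1n\sum_{i=1}^n \frac{A_i}{\hat\pi_i},
\]
so \eqref{eq:AIPW-var} equals $c_n^2\,\hat V_{\mytext{r}}(\hat S_{1k,\mytext{wKM}})$. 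With $\hat\gamma$ data-independent, $c_n\to E(A_i\hat\pi_i^{-1})=1$ in probability, and consistency follows from your Step 2 with no derivative calculations.

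This route also shows the condition cannot be dropped: otherwise the estimator is off by the factor $\{E(A_i\hat\pi_i^{-1})\}^2$. The paper's Taylor-expansion route is less direct here. Its advantage is that it exhibits the calibration mechanism (vanishing expected derivatives), which carries over to Propositions~\ref{pro:AIPW-CAL} and \ref{pro:AIPW-CAL-lin}, where no closed form is available.
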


By Proposition~\ref{pro:weighted-KM-Vr}, in spite of model misspecification,
the sampling variations from $\{\hat \rho_{1j,\mytext{CAL}}: j=1,\ldots,k\}$ and $\{ \hat \eta_{1j,\mytext{CAL}}: j=1,\ldots,k\}$
do not affect variance estimation for $\hat S_{1k} (\hat\gamma, \hat\rho_{1,1:k,\mytext{CAL}}, \hat\eta_{1,1:k,\mytext{CAL}})$, while the sampling variation from $\hat\gamma$ were ignored.
As seen from the proof, the estimating equations \eqref{eq:CAL-rho} and \eqref{eq:CAL-eta} for $(\hat\rho_{1,1:k,\mytext{CAL}}, \hat\eta_{1,1:k,\mytext{CAL}})$
are precisely constructed to achieve this desirable property via calibrated estimation (Tan 2020; Ghosh \& Tan 2022).

Proposition~\ref{pro:weighted-KM-Vr}, however, leaves open the question of variance estimation while accounting for data-dependency of $\hat\gamma$.
In fact,
the usual theory guaranteeing the conservativeness of the variance estimator \eqref{eq:AIPW-var} under correctly specified models $\pi(X;\gamma)$ and $\{\pi_{1j}(X;\rho_{1j}): 1\le j\le k\}$ fitted by maximum likelihood (e.g., Tsiatis 2006) does not in general apply to the weighted KM estimator. Specifically, the ML estimators of  $\{\rho_{1j}: j=1,\ldots,k\}$ and $\{ \eta_{1j}: j=1,\ldots,k\}$
are defined as the solutions to
\begin{align}
& 0= \frac{1}{n} \sum_{i=1}^n A_i \overline R_{\ell i}
\left(  R_{\ell+1, i} - \rho_{1,\ell+1} \right) 1\{ U^{(1)}_i > u_\ell \} , \quad \ell=0,\ldots,k-1,  \label{eq:ML-rho} \\
& 0 = \frac{1}{n} \sum_{i=1}^n  A_i \overline R_{\ell+1, i}
1\{ U^{(1)}_i > u_\ell \} ( 1\{U^{(1)}_i > u_{\ell+1}-  \eta_{1,\ell+1} ), \quad \ell=0,\ldots,k-1,  \label{eq:ML-eta}
\end{align}
which are independent of $\hat\gamma$ because models \eqref{eq:const-pi} and \eqref{eq:const-m} are concerned about \textit{nested} conditional probabilities.
Hence the calibrated estimators
$\{\hat \rho_{1j,\mytext{CAL}}: j=1,\ldots,k\}$ and $\{ \hat \eta_{1j,\mytext{CAL}}: j=1,\ldots,k\}$
differ from the ML estimators except in the special case where the postulated propensity score $\pi(X;\gamma)$ is constant in $X$.

For completeness, we also point out that if CSP model \eqref{eq:const-m} is correctly specified but
PS model $\pi(X;\gamma)$ and CCP model \eqref{eq:const-pi} may be misspecified, then the model-based variance estimator
 $\hat V_{\mytext{b}} ( \hat S_{1k,\mytext{wKM}} )$ in \eqref{eq:weighted-KM-Vb} is consistent,
subject to a technical condition on $\hat\gamma$.
As discussed in Supplement Section~\ref{sec:comparison-KM}, if CSP model \eqref{eq:const-m} is correctly specified,
then both the unweighted and weighted KM estimators are consistent for $S_{1k}$
and hence, in this case, weighted KM estimation is not advantageous.

\begin{pro} \label{pro:weighted-KM-Vb}
If CSP model \eqref{eq:const-m} is correctly specified but PS model $\pi(X;\gamma)$ and CCP model \eqref{eq:const-pi} may be misspecified,
then the asymptotic variance of $\hat S_{1k,\mytext{wKM}}$
or equivalently $\hat S_{1k} (\hat\gamma, \hat\rho_{1,1:k,\mytext{CAL}}, \hat\eta_{1,1:k,\mytext{CAL}})$
can be consistently estimated by $\hat V_{\mytext{b}} ( \hat S_{1k,\mytext{wKM}} )$,
provided that $\hat\gamma$ admits a limit value $\breve\gamma$ satisfying $E( A \pi(X;\breve\gamma)^{-1}) =1$.
\end{pro}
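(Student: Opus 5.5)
We need to show that when the CSP model holds, $\hat V_{\mytext{b}}$ consistently estimates the asymptotic variance of $\hat S_{1k,\mytext{wKM}}$, with $\hat\gamma$ data-dependent but converging to $\breve\gamma$.

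The plan is to work directly with the product form $\hat S_{1k,\mytext{wKM}}=\prod_{j=1}^k(1-\hat q_{1j})$ and apply the delta method to the vector $(\hat q_{11},\ldots,\hat q_{1k})$. Under the CSP model \eqref{eq:const-m}, the conditional hazard $P(U^{(1)}=u_j\mid U^{(1)}\ge u_j,X)=1-\eta^*_{1j}=:q^*_{1j}$ does not depend on $X$. By NIC and NUC, $P(i\in J_j\mid i\in I_j,X_i,A_i=1)=q^*_{1j}$ as well, which equals $q_{1j}$. For any weight function $w(X)$, write
\begin{align*}
\hat q_{1j}-q_{1j}=\frac{n^{-1}\sum_i A_i \hat w_{1i}1\{i\in I_j\}(1\{i\in J_j\}-q_{1j})}{n^{-1}\sum_i A_i\hat w_{1i}1\{i\in I_j\}} .
\end{align*}
The numerator is a sum of terms $\xi_{ij}(\gamma)$, each with conditional mean zero given $(X_i,A_i,1\{i\in I_j\})$, for every fixed $\gamma$.

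First I handle the data dependence of $\hat\gamma$. Expanding the numerator at $\breve\gamma$, the derivative term is $n^{-1}\sum_i \partial_\gamma \{A_i w_1(X_i;\gamma)\}1\{i\in I_j\}(1\{i\in J_j\}-q_{1j})$. This has mean zero by the same conditioning argument, so it is $o_p(1)$. Multiplied by $\hat\gamma-\breve\gamma=O_p(n^{-1/2})$, its contribution is negligible. Hence the sampling variation of $\hat\gamma$ drops out, and $\hat\gamma$ may be replaced by $\breve\gamma$ throughout; this is the key use of CSP, which makes the weighted score unbiased for arbitrary weights. I would state a standard regularity condition, such as $\sqrt n$-consistency of $\hat\gamma$ and a dominated derivative, to make this step rigorous. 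The step I expect to be the main obstacle is making this derivative term rigorous uniformly near $\breve\gamma$ for general estimators $\hat\gamma$; I would handle it with a mean-value expansion and a uniform law of large numbers over a shrinking neighbourhood.

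Next I compute the covariances. Fix $\gamma=\breve\gamma$. The sequence $\xi_{i1},\ldots,\xi_{ik}$ forms martingale differences with respect to the discrete-time filtration generated by the risk-set history. Hence $E(\xi_{ij}\xi_{ij'})=0$ for $j\neq j'$, and
\begin{align*}
E\xi_{ij}^2=q_{1j}(1-q_{1j})\,E\{A w_1(X)^2 1\{Y\ge u_j\}\}.
\end{align*}
The denominator converges to $W_{1j}=E\{Aw_1(X)1\{Y\ge u_j\}\}$. The delta method then gives the asymptotic variance
\begin{align*}
n^{-1}S_{1k}^2\sum_{j=1}^k \frac{E\{Aw_1^21\{Y\ge u_j\}\}}{W_{1j}^2}\,\frac{q_{1j}}{1-q_{1j}} ,
\end{align*}
using $\partial\log S_{1k}/\partial q_{1j}=-1/(1-q_{1j})$.

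Finally, $\hat V_{\mytext{b}}$ in \eqref{eq:weighted-KM-Vb} is exactly the plug-in version of this expression. Its numerator and denominator sums carry matching factors of $n$, which cancel the $1/n$. Consistency then follows from the law of large numbers and continuity, with $\hat\gamma\to\breve\gamma$. The condition $E(A\pi(X;\breve\gamma)^{-1})=1$ is not needed for the martingale argument, since $\hat q_{1j}$ is scale invariant. I would check its role by noting that the AIPW representation of Proposition~\ref{pro:weighted-KM} involves the normalization. Here I would verify that the limit of the weighted-KM asymptotic variance matches the variance of the AIPW form only under that normalization, and invoke it where the equivalence "or equivalently $\hat S_{1k}(\hat\gamma,\hat\rho,\hat\eta)$" is asserted.
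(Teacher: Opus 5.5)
Your argument is correct, and it reaches the result by a different route from the paper.

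\textbf{The paper's route.} The paper does not expand the ratio form of $\hat q_{1j}$. Instead it uses the AIPW representation of Proposition~\ref{pro:weighted-KM} and Taylor-expands $\hat S_{1k}(\hat\gamma,\hat\rho_{1,1:k,\mytext{CAL}},\hat\eta_{1,1:k,\mytext{CAL}})$ jointly in $(\gamma,\rho_{1,1:k},\eta_{1,1:k})$ about their limit values.
\begin{itemize}
\item The $\rho$- and $\eta$-derivative terms vanish by the population calibration equations \eqref{eq:CAL-rho-pop}--\eqref{eq:CAL-eta-pop} together with $E(A\pi(X;\breve\gamma)^{-1})=1$. The normalization is needed because the $\eta_{1j}$-derivative contains $-\prod_{j'\neq j}\eta_{1j'}\, n^{-1}\sum_i (A_i/\hat\pi_i-1)$.
\item The $\gamma$-derivative vanishes under CSP via the form \eqref{eq:phi-or}. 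This is the same conditional-mean-zero fact you use.
\item The variance is computed from the summands of \eqref{eq:phi-or}. At the limit values these equal $-\breve{\underline\eta}_{1,\ell+2,\mytext{CAL}}/\breve{\overline\rho}_{1,\ell+1,\mytext{CAL}}$ times your $\xi_{i,\ell+1}(\breve\gamma)$, so they are uncorrelated across $\ell$ for the same martingale reason.
\item The result is matched to $\hat V_{\mytext{b}}$ through \eqref{eq:prf-eta-expression} and \eqref{eq:prf2-weighted-KM-Vr}.
\end{itemize}

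\textbf{What each route buys.} Your route is more elementary because it never introduces $(\rho,\eta)$. Since $\hat q_{1j}$, $\hat S_{1k,\mytext{wKM}}$ and $\hat V_{\mytext{b}}$ are all invariant to rescaling the weights, it also shows that the hypothesis $E(A\pi(X;\breve\gamma)^{-1})=1$ is superfluous. The paper's route keeps the proof uniform with its calibration framework; the same expansion drives Propositions~\ref{pro:weighted-KM-Vr} and \ref{pro:AIPW-CAL}. However, its leading term $\hat S_{1k}(\breve\gamma,\breve\rho_{1,1:k,\mytext{CAL}},\breve\eta_{1,1:k,\mytext{CAL}})$ is not scale-invariant in the weights, which is the only reason the normalization appears there.

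\textbf{One correction to your final paragraph.} You do not need the normalization for the ``or equivalently'' clause either. Proposition~\ref{pro:weighted-KM} gives $\hat S_{1k}(\hat\gamma,\hat\rho_{1,1:k,\mytext{CAL}},\hat\eta_{1,1:k,\mytext{CAL}})=\hat S_{1k,\mytext{wKM}}$ algebraically for every $\hat\gamma$. The normalization enters only the identity with the IPW estimator. Your proof therefore covers both estimators at once, and the planned check can be dropped.
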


Proposition~\ref{pro:weighted-KM-Vb} is more general than Xie \& Liu (2005), Proposition 2, where
both PS model $\pi(X;\gamma)$ and CSP model \eqref{eq:const-m} are assumed to be correctly specified.
If model $\pi(X;\gamma)$ is correctly specified, the technical condition  $E( A \pi(X;\breve\gamma)^{-1}) =1$ is satisfied by any consistent estimator $\hat\gamma$,
for example, the ML estimator. In this case, Proposition~\ref{pro:weighted-KM-Vb} agrees with Xie \& Liu (2005).
However, if PS model $\pi(X;\gamma)$ is misspecified as allowed in Proposition~\ref{pro:weighted-KM-Vb}, then whether $E( A \pi(X;\breve\gamma)^{-1}) =1$ is satisfied
depends on the actual definition of $\hat\gamma$. In particular, this condition is satisfied by the calibrated estimator of $\hat\gamma$
defined later in \eqref{eq:CAL-gamma}, but it may not be satisfied by
the ML estimator of $\gamma$ except in the special case where the postulated propensity score $\pi(X;\gamma)$ is constant in $X$.

\section{Calibrating weighted survival analysis} \label{sec:CAL-AIPW}

\subsection{Calibrated estimation for survival probabilities} \label{sec:CAL-surv-prob}

From Section~\ref{sec:weighted-KM-re}, consistent variance estimation remains to be developed for the weighted KM estimator $\hat S_{1k,\mytext{wKM}}$
while accounting for the data-dependency of $\hat\gamma$
if PS model $\pi(X;\gamma)$ and CCP model $\{\pi_{1j}(X; \rho_{1j}) :1\le j\le k\}$
in \eqref{eq:const-pi} are correctly specified but CSP model $\{m_{1j}(X; \eta_{1j}) :1\le j\le k\}$ in \eqref{eq:const-m} may be misspecified,
which is the scenario where weighted KM estimation is advantageous over the unweighted version.
The usual theory of augmented IPW estimation indicating conservative variance estimation does \textit{not} apply to $\hat S_{1k,\mytext{wKM}}$
because the associated estimators for $\{\rho_{1j}: j=1,\ldots,k\}$ are not ML estimators, even when $\hat\gamma$ is an ML estimator.
To tackle this issue, we propose a modification of $\hat S_{1k,\mytext{wKM}}$
to achieve consistent variance estimation analytically
accounting for estimation of $\gamma$ and $\{(\rho_{1j}, \eta_{1j}): j=1,\ldots,k\}$
in the aforementioned scenario through calibrated estimation (Tan 2020; Ghosh \& Tan 2022).
In fact, a direct application of calibrated estimation to the augmented IPW estimator
$\hat S_{1k} (\hat\gamma, \hat\rho_{1,1:k}, \hat\eta_{1,1:k})$ involves a technical difficulty in
our setting with non-constant $\pi(X;\gamma)$ but
constant $\pi_{1j}(X;\rho_{1j})$ and $m_{1j}(X;\eta_{1j})$ as in CCP and CSP model
\eqref{eq:const-pi} and \eqref{eq:const-m}. See Supplement Section~\ref{sec:original-CAL} for details.

First, we generalize the augmented IPW estimator $\hat S_{1k} (\hat\gamma, \hat\rho_{1,1:k}, \hat\eta_{1,1:k})$ in \eqref{eq:AIPW} as follows,
where the non-censoring probability model, $\pi_{1j}(X; \rho_{1j})$,
and survival probability model, $m_{1j}(X; \eta_{1j})$, may not necessarily be the ``constant'' models \eqref{eq:const-pi} and \eqref{eq:const-m}.
Consider an outcome regression (OR) model, denoted as $\mu_{1k}(X;\alpha_{1k})$, for $\mu^*_{1k} (X) = P ( U^{(1)} > u_k | X)$, for example:
\begin{align}
 \mu_{1k} (X; \alpha_{1k}) =  \frac{ \exp (\alpha_{1k}^\T f(X)) }{ 1+ \exp(\alpha_{1k}^\T f(X)) }, \label{eq:OR-model}
\end{align}
where $f(X)$ is the same vector of covariate functions as in propensity score model \eqref{eq:PS-model} and $\alpha_{1k}$ is a parameter vector.
By definition, $\mu^*_{1k}(X)$ is related to the nested conditional probabilities as
$ \mu^*_{1k} (X) = \prod_{j=1}^k m^*_{1j}(X)$.
However, $\mu_{1k} (X; \alpha_{1k})$, as a working regression model which may be misspecified,
is not required to satisfy the same relationship with the regression models $\{ m_{1j}(X;\eta_{1j}): 1\le j \le k\}$,
$ \mu_{1k} (X; \alpha_{1k}) = \prod_{j=1}^k m_{1j} (X; \eta_{1j})$.
Given any estimators $\hat\gamma$, $\hat\rho_{1,1:k}=\{\hat\rho_{1j}: j=1,\ldots,k\}$, $\hat\eta_{1,1:k}=\{ \hat\eta_{1j}: j=1,\ldots,k\}$,
and $\hat\alpha_{1k}$, a new augmented IPW estimator of $S_{1k}$ is
\begin{align}\label{eq:AIPW-new}
\hat S_{1k} (\hat\gamma, \hat\rho_{1,1:k}, \hat\eta_{1,1:k}, \hat\alpha_{1k}) =
\frac{1}{n} \sum_{i=1}^n \varphi_{1ki}(\hat\gamma, \hat\rho_{1,1:k}, \hat\eta_{1,1:k}, \hat\alpha_{1k})
\end{align}
with
\begin{align} \label{eq:phi-ps-new}
\begin{split}
 & \quad \varphi_{1ki}(\hat\gamma, \hat\rho_{1,1:k}, \hat\eta_{1,1:k}, \hat\alpha_{1k}) \\
 &= \frac{A_i }{\hat\pi_i } \frac{\overline R_{ki} }{\hat{\overline \pi}_{1ki} } 1\{ U^{(1)}_i > u_k\}
  - \sum_{\ell=0}^{k-1} \frac{A_i }{\hat\pi_i } \frac{\overline R_{\ell i} }{\hat{\overline \pi}_{1\ell i} }  \left( \frac{R_{\ell+1, i} }{\hat\pi_{1,\ell+1, i} }-1 \right) \hat{\underline m}_{1,\ell+1,i}  1\{ U^{(1)}_i > u_\ell \} \\
  &\qquad\qquad  - \left( \frac{A_i }{\hat\pi_i } -1 \right) \hat\mu_{1ki} ,
\end{split}
\end{align}
where $\hat \mu_{1ki} = \mu_{1k}(X_i; \hat\alpha_{1k})$, in addition to the notation used in \eqref{eq:phi-ps}.
%
%
It is helpful to rewrite
$ \varphi_{1ki} (\hat\gamma, \hat\rho_{1,1:k}, \hat\eta_{1,1:k}, \hat\alpha_{1k})
 = \frac{A_i}{\hat\pi_i} \hat U^{(1)}_{ki} (\hat\rho_{1,1:k}, \hat\eta_{1,1:k}) - \left( \frac{A_i }{\hat\pi_i } -1 \right) \hat\mu_{1ki}
$,
where
\begin{align*}
 \hat U^{(1)}_{ki} (\hat\rho_{1,1:k}, \hat\eta_{1,1:k}) =
 \frac{\overline R_{ki} }{\hat{\overline \pi}_{1ki} } 1\{ U^{(1)}_i > u_k\}
  - \sum_{\ell=0}^{k-1} \frac{\overline R_{\ell i} }{\hat{\overline \pi}_{1\ell i} }  \left( \frac{R_{\ell+1, i} }{\hat\pi_{1,\ell+1,i} }-1 \right) \hat{\underline m}_{1,\ell+1}  1\{ U^{(1)}_i > u_\ell \},
\end{align*}
which can be evaluated from the observed data and treated as an \textit{imputed} value of $1\{U^{(1)}_i > u_k\}$ for each individual $i$ with $A_i=1$.
Then the new augmented IPW estimator of $S_{1k}$ takes a cross-sectional form of augmented IPW estimation with imputed survival outcomes:
\begin{align}
 & \hat S_{1k} (\hat\gamma, \hat\rho_{1,1:k}, \hat\eta_{1,1:k}, \hat\alpha_{1k}) =
\frac{1}{n} \sum_{i=1}^n \left\{
\frac{A_i}{\hat\pi_i} \hat U^{(1)}_{ki} (\hat\rho_{1,1:k}, \hat\eta_{1,1:k}) - \left( \frac{A_i }{\hat\pi_i } -1 \right) \hat\mu_{1ki} \right\}.  \label{eq:AIPW-cross-section}
\end{align}
If $\hat\mu_{1ki} = \hat{\underline m}_{11i}$, then \eqref{eq:AIPW-new} and \eqref{eq:phi-ps-new}
reduce to \eqref{eq:AIPW} and \eqref{eq:phi-ps}.
However, $\hat\mu_{1ki}$ in \eqref{eq:AIPW-new} is allowed to differ from $\hat{\underline m}_{11i}$.
Such a disentanglement of $\hat\mu_{1ki}$ from the \textit{implied} $\hat{\underline m}_{11i}$
is useful when the regression models $\{ m_{1j}(X;\eta_{1j}): 1\le j \le k\}$ may be misspecified.

In place of $\hat S_{1k,\mytext{wKM}}$,
our calibrated augmented IPW estimator of $S_{1k}$ is
\begin{align}
  \hat S_{1k,\mytext{CAL}} =\hat S_{1k} (\hat\gamma_{1,\mytext{CAL}}, \hat\rho_{1,1:k,\mytext{CAL}}, \hat\eta_{1,1:k,\mytext{CAL}}, \hat\alpha_{1k,\mytext{CAL}}) ,
  \label{eq:AIPW-CAL}
\end{align}
where $\hat\rho_{1,1:k,\mytext{CAL}}=\{\hat \rho_{1j,\mytext{CAL}}: j=1,\ldots,k\}$ and $\hat\eta_{1,1:k,\mytext{CAL}}=\{ \hat \eta_{1j,\mytext{CAL}}: j=1,\ldots,k\}$,
associated with CCP and CSP models \eqref{eq:const-pi} and \eqref{eq:const-m}, are defined as in Proposition~\ref{pro:weighted-KM}
with $\hat\gamma= \hat\gamma_{1,\mytext{CAL}}$.
The dependency of $(\hat\rho_{1,1:k,\mytext{CAL}}, \hat\eta_{1,1:k,\mytext{CAL}}, \hat\alpha_{1k,\mytext{CAL}})$ on $\hat\gamma_{1,\mytext{CAL}}$ is suppressed in the notation.
The estimator $\hat\gamma_{1,\mytext{CAL}}$ is a calibrated estimator of $\gamma$ in model \eqref{eq:PS-model}, defined as a solution to
the estimating equation
\begin{align}
 0 = \frac{1}{n} \sum_{i=1}^n \left\{ \frac{A_i}{\pi(X_i;\gamma)} -1 \right\} f(X_i) .  \label{eq:CAL-gamma}
\end{align}
Equivalently, $\hat\gamma_{1,\mytext{CAL}}$ can also be determined as a minimizer of the convex loss function (Tan 2020)
\begin{align}
\ell_{\mytext{CAL}}(\gamma) = \frac{1}{n}\sum_{i=1}^n \{A_i\me^{-f^\T(X_i)\gamma} + (1 - A_i)f^\T(X_i)\gamma\}. \label{eq:loss-CAL-gamma}
\end{align}
The estimator $\hat\alpha_{1k,\mytext{CAL}}$ is a calibrated estimator of $\alpha_{1k}$ in model \eqref{eq:OR-model}, defined as a solution to
\begin{align}
 0 = \frac{1}{n} \sum_{i=1}^n A_i \frac{1-\hat\pi_{1i,\mytext{CAL}}} {\hat\pi_{1i,\mytext{CAL}} }
  \left\{ \hat U^{(1)}_{ki,\mytext{CAL}} - \mu_{1k} (X_i;\alpha_{1k}) \right\} f(X_i),  \label{eq:CAL-alpha}
\end{align}
where $\hat\pi_{1i,\mytext{CAL}} = \pi(X_i; \hat\gamma_{1,\mytext{CAL}})$ and
$\hat U^{(1)}_{ki,\mytext{CAL}}= \hat U^{(1)}_{ki} (\hat\rho_{1,1:k,\mytext{CAL}}, \hat\eta_{1,1:k,\mytext{CAL}})$.
Similarly to $\hat{\gamma}_{1, \mytext{CAL}}$, $\hat{\alpha}_{1k,\mytext{CAL}}$ can also be equivalently determined as a minimizer of the convex loss function
\begin{align}
\ell_{\mytext{CAL}}(\alpha_{1k}) = \frac{1}{n}\sum_{i=1}^n A_i\frac{1 - \hat{\pi}_{1i, \mytext{CAL}}}{\hat{\pi}_{1i, \mytext{CAL}}}[ -\hat{U}^{(1)}_{ki, \mytext{CAL}}f^\T(X_i)\alpha_{1k} + \log\{1 + \me^{f^\T(X_i)\alpha_{1k}}\} ]. \label{eq:loss-CAL-alpha}
\end{align}
The estimator $\hat\alpha_{1k,\mytext{CAL}}$ is a weighted ML estimator for logistic regression with response
$ \hat U^{(1)}_{ki,\mytext{CAL}}$, covariate terms $f(X_i)$, and weight $(1-\hat\pi_{1i,\mytext{CAL}})/\hat\pi_{1i,\mytext{CAL}} $
in the treated group $\{A_i=1\}$.
Hence the estimator $\hat S_{1k,\mytext{CAL}}$ can be seen as an application of the method in Tan (2020)
with PS model \eqref{eq:PS-model}
and OR model \eqref{eq:OR-model} \textit{after} the imputed response $ \hat U^{(1)}_{ki,\mytext{CAL}}$ is obtained.

Similarly to $S_{1k,\mytext{wKM}}$,
the point estimator $\hat S_{1k,\mytext{CAL}}$ can be shown to be doubly robust for $S_{1k}$, i.e., being consistent in two distinct scenarios:
either PS and CCP models \eqref{eq:PS-model} and \eqref{eq:const-pi} are correctly specified,
or CSP and OR models \eqref{eq:const-m} and \eqref{eq:OR-model} are correctly specified.
In our specific setting, CSP model \eqref{eq:const-m} implies OR model \eqref{eq:OR-model},
which allows $\mu^*_{1k}(X)$ to be constant in $X$ because $f(\cdot)$  is assumed to include an intercept.
Hence the second scenario is just that CSP model \eqref{eq:const-m} is correctly specified.
As discussed in Supplement Section~\ref{sec:comparison-KM},
the consistency in the first scenario (although not the second) represents a real advantage
over unweighted KM estimation.

Compared with weighted KM estimation, a benefit of $\hat S_{1k,\mytext{CAL}}$ is that in the first scenario above, consistent variance estimation
 can be obtained for $\hat S_{1k,\mytext{CAL}}$ such that the sampling variations from estimation of
the nuisance parameters $(\gamma, \rho_{1,1:k}, \eta_{1,1:k}, \alpha_{1k})$ can be safely ignored,
as indicated by the asymptotic expansion \eqref{eq:CAL-expan}.\footnote{\label{ft:advtange-cal-over-wkm}
An additional advantage of $\hat S_{1k,\mytext{CAL}}$ over $\hat S_{1k,\mytext{wKM}}$ is
that $\hat S_{1k,\mytext{CAL}}$ tends to achieve smaller variance than $\hat S_{1k,\mytext{wKM}}$,
because $\hat S_{1k,\mytext{CAL}}$ involves a flexible augmentation term in \eqref{eq:AIPW-cross-section}
with $\mu_{1k}(X;\alpha_{1k})$ non-constant in $X$,
whereas the augmentation term for $\hat S_{1k,\mytext{wKM}}$ is restricted with constant $\mu_{1k}(X;\alpha_{1k}) = \prod_{j=1}^k \eta_{1j} $
as implied from CCP model \eqref{eq:const-m}.
}
The consistency of the variance estimator \eqref{eq:AIPW-CAL-Vr} does not require that \textit{all} relevant regression models be correctly specified,
as in the usual theory for the consistency of the variance estimator \eqref{eq:AIPW-var} (e.g., Tsiatis 2006).

\begin{pro} \label{pro:AIPW-CAL}
Suppose that PS model \eqref{eq:PS-model} is correctly specified, but CCP, CSP or OR model \eqref{eq:const-pi}, \eqref{eq:const-m}
or \eqref{eq:OR-model} may be misspecified. Then under standard regularity conditions as $n\to\infty$ and the dimension of $f$, denoted as $\dim(f)$, and $K$ are fixed,
 $\hat S_{1k,\mytext{CAL}}$ satisfies the asymptotic expansion
\begin{align}
 \hat S_{1k,\mytext{CAL}} &= \hat S_{1k} (\breve\tau_{1,\mytext{CAL}} )
 + o_p (n^{-1/2}) = \frac{1}{n} \sum_{i=1}^n \varphi_{1ki} ( \breve\tau_{1,\mytext{CAL}} ) + o_p (n^{-1/2}), \label{eq:CAL-expan}
\end{align}
where $\hat S_{1k} (\tau)$ and $\varphi_{1ki}(\tau)$ are defined as
$\hat S_{1k} (\hat\tau) $ and $\varphi_{1ki}(\hat\tau)$ in \eqref{eq:AIPW-new} and \eqref{eq:phi-ps-new}
with $\hat\tau = (\hat\gamma , \hat\rho_{1,1:k}, \hat\eta_{1,1:k}, \allowbreak \hat\alpha_{1k})$ replaced by $\tau=(\gamma, \rho_{1,1:k}, \eta_{1,1:k}, \alpha_{1k})$,
and $\breve\tau_{1,\mytext{CAL}} = (\breve\gamma_{1,\mytext{CAL}}, \breve\rho_{1,1:k,\mytext{CAL}},$ $ \breve\eta_{1,1:k,\mytext{CAL}}, \breve\alpha_{1k,\mytext{CAL}})$
are the limit values of $\hat\tau_{1,\mytext{CAL}}=(\hat\gamma_{1,\mytext{CAL}}, \hat\rho_{1,1:k,\mytext{CAL}}, \hat\eta_{1,1:k,\mytext{CAL}}, \hat\alpha_{1k,\mytext{CAL}})$ (see the proof in Supplement Section~\ref{sec:technical-detail}  for details).
Moreover, the asymptotic variance of $\hat S_{1k,\mytext{CAL}}$ can be consistently estimated by
\begin{align}
 \hat V_{\mytext{r}}( \hat S_{1k,\mytext{CAL}} )
 = \frac{1}{n^2} \sum_{i=1}^n \left\{ \varphi_{1ki}(\hat\tau_{1,\mytext{CAL}}) -\hat S_{1k,\mytext{CAL}} \right\}^2 .  \label{eq:AIPW-CAL-Vr}
\end{align}
\end{pro}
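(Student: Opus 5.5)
The plan is a first-order Taylor expansion of $\hat S_{1k}(\tau)=n^{-1}\sum_i\varphi_{1ki}(\tau)$ around $\breve\tau_{1,\mytext{CAL}}$. I will show that, once the calibration equations are imposed, every first-order derivative term either vanishes or is negligible.

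First, I establish the limits. Under standard regularity (compact parameter spaces, unique solutions of the population versions of \eqref{eq:CAL-gamma}, \eqref{eq:CAL-rho}, \eqref{eq:CAL-eta}, \eqref{eq:CAL-alpha}), Z-estimation theory gives $\hat\tau_{1,\mytext{CAL}}-\breve\tau_{1,\mytext{CAL}}=O_p(n^{-1/2})$. Because the PS model is correct, $\breve\gamma_{1,\mytext{CAL}}=\gamma^*$, since $E\{A/\pi^*(X)-1\}f(X)=0$. Write $\hat S_{1k}(\hat\tau)-\hat S_{1k}(\breve\tau)=\nabla_\tau\hat S_{1k}(\breve\tau)^\T(\hat\tau-\breve\tau)+O_p(n^{-1})$. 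It then suffices to show that $\nabla_\tau\hat S_{1k}(\breve\tau)$ converges to zero, or at least that each block of it contributes only $o_p(n^{-1/2})$.

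Next, I treat each block of the gradient separately.

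\emph{The $\alpha_{1k}$ block.} We have $\partial_\alpha\hat S_{1k}=-n^{-1}\sum_i(A_i/\pi_i-1)\partial_\alpha\mu_{1ki}$. With logistic $\mu$, $\partial_\alpha\mu=\mu(1-\mu)f(X)$. At $\gamma^*$ its expectation is zero because $E(A/\pi^*-1\mid X)=0$, whether or not the OR model is correct.

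\emph{The $\eta_{1j}$ and $\rho_{1j}$ blocks.} Differentiating \eqref{eq:phi-ps-new} with respect to $\eta_{1j}$ gives weighted averages of the residuals $(R_{\ell+1,i}/\rho_{1,\ell+1}-1)$ multiplied by constants, since $\rho$ and $\eta$ do not depend on $X$. These averages are exactly the left-hand sides of \eqref{eq:CAL-rho}, up to factors $1/\overline\rho$, so they are zero at the fitted values. Differentiating with respect to $\rho_{1,\ell+1}$ produces terms in the residuals $1\{U^{(1)}>u_{\ell+1}\}-\eta_{1,\ell+1}$ weighted by $A_i\overline R_{\ell+1,i}/\hat\pi_i$, and these match \eqref{eq:CAL-eta}. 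This algebra is the same as in the proof of Proposition~\ref{pro:weighted-KM-Vr}, which showed that sampling variation in $(\hat\rho,\hat\eta)$ can be ignored.

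One caveat applies to the $\rho$ derivatives. The quantity $\hat U^{(1)}_{ki}$ also enters only multiplied by $A_i/\hat\pi_i$, so there is no extra weighting by $\mu$ and the $\rho$ and $\eta$ derivatives involve $\hat\pi_i$ only through $A_i/\hat\pi_i$. The calibration equations hold at $\hat\gamma$ and not at $\breve\gamma$, so I will expand jointly. The cross terms are products of $O_p(n^{-1/2})$ quantities and hence negligible.

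\emph{The $\gamma$ block.} This is the main obstacle. We have $\partial_\gamma\hat S_{1k}=-n^{-1}\sum_i A_i\{(1-\pi_i)/\pi_i\}\{\hat U^{(1)}_{ki}-\mu_{1ki}\}f(X_i)$, using $\partial_\gamma\pi^{-1}=-\pi^{-1}(1-\pi)f$ for logistic $\pi$. This is exactly the estimating function \eqref{eq:CAL-alpha} defining $\hat\alpha_{1k,\mytext{CAL}}$, so its population limit vanishes at $\breve\alpha_{1k,\mytext{CAL}}$. However, $\hat U^{(1)}_{ki}$ itself depends on $\rho$ and $\eta$. My plan is to carry out the expansion sequentially: first substitute $\hat\gamma$, then invoke \eqref{eq:CAL-alpha}, which is evaluated at the estimated $(\hat\rho,\hat\eta)$ and $\hat\gamma$, to kill the $\gamma$-gradient up to a second-order remainder. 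Care is needed here because $\hat U$ contains the indicator $1\{U^{(1)}>u_\ell\}$, which is observable only when $A=1$; this is handled by the $A_i$ factor.

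Finally, for the variance, the expansion \eqref{eq:CAL-expan} gives asymptotic variance $\mathrm{var}\{\varphi_{1k}(\breve\tau)\}/n$. Consistency of \eqref{eq:AIPW-CAL-Vr} then follows from the law of large numbers together with continuity of $\varphi_{1ki}$ in $\tau$ and the convergence $\hat\tau\to\breve\tau$, via a standard uniform law over a neighbourhood.
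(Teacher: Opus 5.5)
Your proposal is correct and follows essentially the paper's route: a joint first-order expansion in $(\gamma,\rho_{1,1:k},\eta_{1,1:k},\alpha_{1k})$ in which the expected gradient at $\breve\tau_{1,\mytext{CAL}}$ vanishes block by block. The $\rho$-block is killed by \eqref{eq:CAL-eta}, the $\eta$-block by \eqref{eq:CAL-rho}, the $\gamma$-block by \eqref{eq:CAL-alpha} via the logistic identity $\partial_\gamma\pi/\pi^2=(1-\pi)f/\pi$, and the $\alpha$-block by correctness of the PS model. Your ``sequential'' treatment of the $\gamma$-block is unnecessary: the population version of \eqref{eq:CAL-alpha} already holds at the joint limit $(\breve\gamma,\breve\rho,\breve\eta,\breve\alpha)$, so $E\{\partial_\gamma \hat S_{1k}(\tau)\}|_{\breve\tau_{1,\mytext{CAL}}}=0$ directly, exactly as in the paper.
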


By Proposition~\ref{pro:AIPW-CAL},
if PS and CCP models \eqref{eq:PS-model} and \eqref{eq:const-pi} are correctly specified, then $\hat S_{1k,\mytext{CAL}}$ is consistent for $S_{1k}$, and
an asymptotic $(1-c)$-confidence interval for $S_{1k}$ is
$ \hat S_{1k,\mytext{CAL}} \pm z_{c/2} \sqrt{\hat V_{\mytext{r}}( \hat S_{1k,\mytext{CAL}} ) }$, where
$z_{c/2}$ is the $(1-c/2)$-quantile of $\N(0,1)$.
Correct specification of both PS and CCP models \eqref{eq:PS-model} and \eqref{eq:const-pi} ensures the consistency of $\hat S_{1k,\mytext{CAL}}$ for $S_{1k}$,
whereas correct specification of PS model \eqref{eq:PS-model} alone ensures the consistency of $\hat V_{\mytext{r}}( \hat S_{1k,\mytext{CAL}} )$
for the asymptotic variance of $\hat S_{1k,\mytext{CAL}}$.

The asymptotic expansion \eqref{eq:CAL-expan} is achieved mainly because
the estimating equations \eqref{eq:CAL-rho}, \eqref{eq:CAL-eta}, \eqref{eq:CAL-gamma}, and \eqref{eq:CAL-alpha} for
$(\hat\rho_{1,1:k,\mytext{CAL}}, \hat\eta_{1,1:k,\mytext{CAL}}, \hat\gamma_{1,\mytext{CAL}}, \hat\alpha_{1k,\mytext{CAL}})$
are carefully chosen such that if PS model \eqref{eq:PS-model} is correctly specified, then the limit values $\breve\gamma_{1,\mytext{CAL}}=\gamma^*$,
$\breve\rho_{1,1:k,\mytext{CAL}}$, $\breve\eta_{1,1:k,\mytext{CAL}}$, and $\breve\alpha_{1k,\mytext{CAL}}$
satisfy the following calibration equations:
\begin{align*}
& 0 = E \left\{\frac{\partial }{\partial \tau} \hat S_{1k} (\tau) \right\}
 \Big|_{\breve\tau_{1,\mytext{CAL}} } ,
\end{align*}
where $\tau= (\gamma, \rho_{1,1:k}, \eta_{1,1:k },\alpha_{1k})$.
A similar asymptotic expansion as \eqref{eq:CAL-expan} also holds and underlies Proposition~\ref{pro:weighted-KM-Vr}
for the weighted KM estimator $\hat S_{1k,\mytext{wKM}}$
or equivalently the augmented IPW estimator $\hat S_{1k} (\hat\gamma, \hat\rho_{1,1:k,\mytext{CAL}}, \hat\eta_{1,1:k,\mytext{CAL}})$
except that only calibration equations for $(\rho_{1,1:k}, \eta_{1,1:k })$ are involved, with non-random $\hat\gamma$.

There are interesting implications when the logistic OR model \eqref{eq:OR-model} for $\mu^*_{1k}(X)$ is replaced by
a linear OR model:
\begin{align}
 \mu_{1k} (X; \alpha_{1k}) = \alpha_{1k}^\T f(X) . \label{eq:OR-model-lin}
\end{align}
Our estimator, $\hat S_{1k,\mytext{CAL,lin}}$, is defined similarly as $\hat S_{1k,\mytext{CAL}}$ in \eqref{eq:AIPW-CAL}.
The calibrated estimator $\hat\alpha_{1k,\mytext{CAL}}$, defined as a solution to \eqref{eq:CAL-alpha}
or a minimizer of
\begin{align}
\ell_{\mytext{CAL}}(\alpha_{1k}) = \frac{1}{n}\sum_{i=1}^n A_i\frac{1 - \hat{\pi}_{1i, \mytext{CAL}}}{\hat{\pi}_{1i, \mytext{CAL}}}[ -\hat{U}^{(1)}_{ki, \mytext{CAL}}f^\T(X_i)\alpha_{1k} + \frac{1}{2}\{f^\T(X_i)\alpha_{1k}\}^2 ], \label{eq:loss-CAL-alpha-lin}
\end{align}
becomes a weighted least-squares estimator for linear regression with response
$ \hat U^{(1)}_{ki,\mytext{CAL}}$, covariate terms $f(X_i)$, and weight $(1-\hat\pi_{1i,\mytext{CAL}})/\hat\pi_{1i,\mytext{CAL}} $
in the treated group $\{A_i=1\}$.
The point estimator $\hat S_{1k,\mytext{CAL,lin}}$ with linear OR model \eqref{eq:OR-model-lin} can be shown to be doubly robust
for $S_{1k}$, i.e., being consistent if either PS and CCP models \eqref{eq:PS-model} and \eqref{eq:const-pi} are correctly specified,
or CSP and OR models \eqref{eq:const-m} and \eqref{eq:OR-model-lin} (or just CSP model \eqref{eq:const-m}) are correctly specified.
As discussed earlier, the consistency in the first scenario is of more interest than in the second scenario.
Moreover, Proposition~\ref{pro:AIPW-CAL} can be extended as follows for variance estimation.
Hence in spite of the lack of range restriction to $[0,1]$, model \eqref{eq:OR-model-lin}
can be used, similarly to model \eqref{eq:OR-model}, as a working regression model to facilitate consistent variance estimation.

\begin{pro} \label{pro:AIPW-CAL-lin}
Suppose that PS, CCP, CSP, or linear OR model  \eqref{eq:PS-model}, \eqref{eq:const-pi}, \eqref{eq:const-m},
or \eqref{eq:OR-model-lin} may be misspecified. Then the conclusions in Proposition~\ref{pro:AIPW-CAL}
remain valid for the estimator
\begin{align*}
\hat S_{1k,\mytext{CAL,lin}} =\hat S_{1k} (\hat\gamma_{1,\mytext{CAL}}, \hat\rho_{1,1:k,\mytext{CAL}}, \hat\eta_{1,1:k,\mytext{CAL}}, \hat\alpha_{1k,\mytext{CAL,lin}}),
\end{align*}
i.e., defined as $\hat S_{1k,\mytext{CAL}}$ in \eqref{eq:AIPW-CAL}
with $\hat\alpha_{1k,\mytext{CAL}}$ replaced by $\hat\alpha_{1k,\mytext{CAL,lin}}$. 
\end{pro}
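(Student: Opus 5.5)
The plan is to establish the asymptotic expansion \eqref{eq:CAL-expan} for $\hat S_{1k,\mytext{CAL,lin}}$ with no assumption on any working model. The mechanism is the one behind Proposition~\ref{pro:weighted-KM-Vr}. I will show that, with the linear OR model \eqref{eq:OR-model-lin} and the logistic PS model \eqref{eq:PS-model}, the gradient of $\tau\mapsto \hat S_{1k}(\tau)$ with respect to every nuisance component vanishes \emph{exactly in the sample} at $\hat\tau_{1,\mytext{CAL}}$. The cancellations then come from the estimating equations themselves and not from correct specification, so the limit calibration equations $0=E\{\partial \hat S_{1k}(\tau)/\partial\tau\}|_{\breve\tau_{1,\mytext{CAL}}}$ hold whatever the truth is. Consistency of \eqref{eq:AIPW-CAL-Vr} then follows as in Proposition~\ref{pro:AIPW-CAL}.

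The gradient will be computed block by block using the cross-sectional form \eqref{eq:AIPW-cross-section}.
\begin{itemize}
\item For $\alpha_{1k}$: the estimator is linear in $\alpha_{1k}$, with $\partial \hat S_{1k}/\partial\alpha_{1k} = -n^{-1}\sum_i (A_i/\hat\pi_i-1) f(X_i)$. This is exactly zero at $\hat\gamma_{1,\mytext{CAL}}$ by \eqref{eq:CAL-gamma}.
\item For $\gamma$: since $\partial \pi/\partial\gamma=\pi(1-\pi)f$, the derivative is $-n^{-1}\sum_i A_i\{(1-\hat\pi_i)/\hat\pi_i\}\{\hat U^{(1)}_{ki}-\alpha_{1k}^\T f(X_i)\}f(X_i)$. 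This vanishes by \eqref{eq:CAL-alpha} with the linear $\mu_{1k}$.
\item For $\eta_{1,1:k}$: under CSP and CCP all $\hat{\underline m}_{1,\ell+1}$ and $\hat{\overline\pi}_{1\ell}$ are constants. Differentiating the representation \eqref{eq:phi-ps-new} in $\eta_{1j}$ therefore gives a linear combination, with constant coefficients, of the sums $n^{-1}\sum_i (A_i/\hat\pi_i)\overline R_{\ell i}(R_{\ell+1,i}/\rho_{1,\ell+1}-1)1\{U^{(1)}_i>u_\ell\}$. Each such sum is zero by \eqref{eq:CAL-rho}.
\item For $\rho_{1,1:k}$: I will first rewrite $\hat U^{(1)}_{ki}$ in the OR-type form analogous to \eqref{eq:phi-or}, namely $\sum_{\ell}\overline R_{\ell+1,i}\hat{\overline\pi}_{1,\ell+1,i}^{-1}\hat{\underline m}_{1,\ell+2,i}1\{U^{(1)}_i>u_\ell\}(1\{U^{(1)}_i>u_{\ell+1}\}-\hat m_{1,\ell+1,i})+\hat{\underline m}_{11i}$. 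This is the same algebraic identity as between \eqref{eq:phi-ps} and \eqref{eq:phi-or}, applied to the $\hat U$ part alone. The $\rho$-derivative then only hits the constant factors $\hat{\overline\pi}_{1,\ell+1}^{-1}$, and it multiplies sums that vanish by \eqref{eq:CAL-eta}.
\end{itemize}

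With all sample first derivatives equal to zero at $\hat\tau_{1,\mytext{CAL}}$, their limits vanish at $\breve\tau_{1,\mytext{CAL}}$. The limits exist under standard regularity conditions via uniform laws of large numbers and strict convexity of \eqref{eq:loss-CAL-gamma} and \eqref{eq:loss-CAL-alpha-lin}, given fixed $\dim(f)$ and $K$. A second-order Taylor expansion of $\hat S_{1k}(\hat\tau)$ about $\breve\tau$ then gives
$$\hat S_{1k}(\hat\tau)-\hat S_{1k}(\breve\tau)=\Big\{\tfrac{\partial}{\partial\tau}\hat S_{1k}(\breve\tau)\Big\}^\T(\hat\tau-\breve\tau)+O_p(\|\hat\tau-\breve\tau\|^2).$$
The gradient factor is $O_p(n^{-1/2})$ because its mean is zero, and $\hat\tau-\breve\tau=O_p(n^{-1/2})$ by M-estimation theory. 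Hence the remainder is $o_p(n^{-1/2})$, which yields \eqref{eq:CAL-expan}.

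Variance consistency then follows routinely. The expansion shows the asymptotic variance is $\mathrm{var}\{\varphi_{1k}(\breve\tau_{1,\mytext{CAL}})\}$. Moreover, $n^{-1}\sum_i\{\varphi_{1ki}(\hat\tau)-\hat S_{1k,\mytext{CAL,lin}}\}^2$ converges to it by continuity of $\varphi$ in $\tau$ and the law of large numbers.

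The main obstacle is the $\rho$ and $\eta$ blocks: one has to check carefully that the nested products produce only constant coefficients multiplying exactly the calibration sums, including the boundary terms $\ell=0$ and $\ell=k-1$. I will handle this by a short induction on $k$, checking the $k=2$ case first against Supplement Section~\ref{sec:aipw-k2}.
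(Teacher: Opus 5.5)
Your proposal is correct and follows the paper's proof. You expand $\hat S_{1k}(\tau)$ about the limit values, and each gradient block is annihilated by one of the calibration equations. The only change from Proposition~\ref{pro:AIPW-CAL} is that $\partial\mu_{1k}/\partial\alpha_{1k}=f(X)$ under the linear OR model, so the $\alpha_{1k}$-block vanishes by \eqref{eq:CAL-gamma} without a correctly specified PS model. Your check that the gradient vanishes exactly in the sample at $\hat\tau_{1,\mytext{CAL}}$, rather than only in expectation at $\breve\tau_{1,\mytext{CAL}}$ as the paper phrases it, is the same algebra and is sound.
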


We provide two additional remarks about Proposition~\ref{pro:AIPW-CAL-lin}. First, by equation \eqref{eq:CAL-gamma} for $\hat\gamma_{1,\mytext{CAL}}$
and the linearity of $\mu_{1k}(\cdot;\alpha_{1k})$ in $\alpha_{1k}$, it is shown in Supplement Section~\ref{sec:technical-detail} that
\begin{align}
 \hat S_{1k,\mytext{CAL,lin}} =\hat S_{1k} (\hat\gamma_{1,\mytext{CAL}}, \hat\rho_{1,1:k,\mytext{CAL}}, \hat\eta_{1,1:k,\mytext{CAL}})
 = \hat S_{1k,\mytext{wKM}} (\hat\gamma_{1,\mytext{CAL}}),  \label{eq:AIPW-reduction}
\end{align}
where $\hat S_{1k} (\hat\gamma_{1,\mytext{CAL}}, \hat\rho_{1,1:k,\mytext{CAL}}, \hat\eta_{1,1:k,\mytext{CAL}})$
and $\hat S_{1k,\mytext{wKM}} (\hat\gamma_{1,\mytext{CAL}})$
are $\hat S_{1k} (\hat\gamma, \hat\rho_{1,1:k,\mytext{CAL}}, \hat\eta_{1,1:k,\mytext{CAL}})$
and $\hat S_{1k,\mytext{wKM}}$ in Section~\ref{sec:weighted-KM-re} with $\hat\gamma= \hat\gamma_{1,\mytext{CAL}}$.
Incidentally, a benefit from this reduction is by the expression \eqref{eq:weighted-KM} for $\hat S_{1k,\mytext{wKM}}$,
the estimator $\hat S_{1k,\mytext{CAL,lin}}$ is necessarily non-increasing in $k$.
Such a monotonicity may not hold for $\hat S_{1k,\mytext{CAL}}$ in finite samples when logistic OR model \eqref{eq:OR-model} is used.
By Proposition~\ref{pro:AIPW-CAL-lin},  the variance estimator \eqref{eq:AIPW-CAL-Vr} is
consistent for the asymptotic variance of $\hat S_{1k,\mytext{wKM}} (\hat\gamma_{1,\mytext{CAL}})$
and therefore serves as a correction to $\hat V_{\mytext{r}}( \hat S_{1k,\mytext{wKM}} )$ with $\hat\gamma= \hat\gamma_{1,\mytext{CAL}}$ in \eqref{eq:weighted-KM-Vr},
in further accounting for sampling variation from $\hat\gamma$.

Second, while Proposition~\ref{pro:AIPW-CAL} assumes that PS model \eqref{eq:PS-model} is correctly specified,
Proposition~\ref{pro:AIPW-CAL-lin} shows that the variance estimator from \eqref{eq:AIPW-CAL-Vr}
is nonparametrically consistent for the asymptotic variance of $\hat S_{1k,\mytext{CAL,lin}}$, even if all associated models are misspecified
and $\hat S_{1k,\mytext{CAL,lin}}$ may be inconsistent for $S_{1k}$,
converging to a population quantity different from $S_{1k}$.
Moreover, the assumption of either PS and CCP models \eqref{eq:PS-model} and \eqref{eq:const-pi} being correctly specified
or CSP and OR models \eqref{eq:const-m} and \eqref{eq:OR-model-lin} (or just CSP model \eqref{eq:const-m}) being correctly specified
ensures that $\hat S_{1k,\mytext{CAL,lin}}$ is consistent for $S_{1k}$.
Hence the point estimator $\hat S_{1k,\mytext{CAL,lin}}$ with  the variance estimator from \eqref{eq:AIPW-CAL-Vr}
can be said to enable doubly robust inference about $S_{1k}$, even though
CSP model \eqref{eq:const-m} may often be misspecified in practice.
See Tan (2020), Remark 9, for a related discussion.

To conclude this subsection, a calibrated estimator, $\hat S_{0k,\mytext{CAL}}$, can be defined for treatment-0 survival probabilities $S_{0k}$,
similarly to $\hat S_{1k,\mytext{CAL}}$ for $S_{1k}$, by replacing $A_i$ and $\pi(\cdot;\gamma)$ with $1-A_i$ and $1-\pi(\cdot;\gamma)$.
In particular, a treatment-0 calibrated estimator of $\gamma$, denoted as $\hat\gamma_{0,\mytext{CAL}}$,
is defined as a solution to the estimating equation
\begin{align}
 0 = \frac{1}{n} \sum_{i=1}^n \left\{ \frac{1-A_i}{1-\pi(X;\gamma)} -1 \right\} f(X_i) ,  \label{eq:CAL-gamma-tr0}
\end{align}
or a minimizer of the convex loss function
\begin{align*}
\ell_{\mytext{CAL}}(\gamma) = \frac{1}{n}\sum_{i=1}^n \{(1 - A_i)\me^{f^\T(X_i)\gamma} - A_if^\T(X_i)\gamma\}.
\end{align*}
If PS  model \eqref{eq:PS-model} is correctly specified, then $\hat\gamma_{1,\mytext{CAL}}$ and $\hat\gamma_{0,\mytext{CAL}}$
are expected to be consistent for the same true value $\gamma^*$. Otherwise, the two estimators may have different asymptotic limits.
In practice, comparison of the estimators $\hat\gamma_{1,\mytext{CAL}}$ and $\hat\gamma_{0,\mytext{CAL}}$ and the corresponding fitted values
$ \pi(X_i; \hat\gamma_{1,\mytext{CAL}})$ and $ \pi(X_i; \hat\gamma_{0,\mytext{CAL}})$ can be used for model diagnosis.
See Chan, Yam and Zhang (2016, Section 2.3), Vermeulen and Vansteelandt (2015, Section 5.1), and Tan (2020, Section 3.5) for related discussions.

\subsection{Calibrated estimation for hazard ratios} \label{sec:CAL-hazard}

The hazard probabilities $q_{ak}$ are related to the survival probabilities as
\begin{align}
  q_{ak} = \frac{ S_{a,k-1} - S_{ak} } {S_{a,k-1} }, \quad k=1,\ldots,K, \;a=0,1. \label{eq:q-S-relation}
\end{align}
Hence inference about individual hazard ratios $q_{1k}/q_{0k}$ can be deduced, using the delta method, from the calibrated estimators of $S_{ak}$
and associated variance estimators.
In this subsection, we mainly discuss calibrated estimation for the parameter $\breve\theta_{\mytext{wBP}}$
defined as a solution to the population version of \eqref{eq:weighted-BP},
\begin{align}
 0 = \sum_{k=1}^K \frac{ W_{1k} W_{0k} }
 { W_{1k} \me^\theta + W_{0k} } ( q_{1k} - q_{0k} \me^\theta )  ,  \label{eq:weighted-BP-pop}
\end{align}
where $W_{ak} = E [ P^{-1}(A=a|X) 1\{A=a, Y \ge u_k\} ]
  = E\{ P( Y \ge u_k | A=a,X )\}$ is the asymptotic limit of $\hat W_{ak}$ if model $\pi(\cdot;\gamma)$ is correctly specified.\footnote{\label{ft:onW}
If  $A \perp (U^{(a)}, C^{(a)}) |X$, which is more restrictive than
Assumptions \eqref{eq:no-confounding} and \eqref{eq:non-informative} combined, then $A \perp Y^{(a)} |X$ and
$W_{ak}$ reduces to $P ( Y^{(a)} \ge u_k)$, where $Y^{(a)} = \min ( U^{(a)}, C^{(a)})$.
Nevertheless, $W_{ak}$ is a well-defined population quantity under Assumptions \eqref{eq:no-confounding} and \eqref{eq:non-informative} alone,
which allows 
$A \not\perp Y^{(a)} |X$.
The equality of the two expressions for $W_{ak}$ can be verified by the law of iterated expectations (Tan 2006, Theorem 7).}
If the treatment-1 and treatment-0 hazard probabilities are proportional as in PHP model \eqref{eq:model-hazard-prob},
then the common value $\theta^*$ of the log hazard probability ratios, $\log(q_{1k}/ q_{0k})$, is identified by $\breve\theta_{\mytext{wBP}}$. Otherwise,
$\breve\theta_{\mytext{wBP}}$ represents a scalar summary of the heterogeneous log hazard probability ratios.

The existing estimator $\hat\theta_{\mytext{wBP}}$ is obtained by solving \eqref{eq:weighted-BP-pop}
after estimating $W_{ak}$ by $\hat W_{ak}$ and $q_{ak}$ by $\hat q_{ak}$. The estimator $\hat q_{ak}$ is derived from the weighted KM estimator
as prescribed by \eqref{eq:q-S-relation}, i.e.,
$ \hat q_{ak} = ( \hat S_{a,k-1,\mytext{wKM}} - \hat S_{ak,\mytext{wKM}} ) / \hat S_{a,k-1,\mytext{wKM}} $.
From this relationship and the discussion after Proposition~\ref{pro:weighted-KM}, there are two distinct scenarios.
First, if PS and CCP models \eqref{eq:PS-model} and \eqref{eq:const-pi}, in both treatment-1 and treatment-0 versions, are correctly specified, then
$\hat S_{ak,\mytext{wKM}} $ and $\hat W_{ak}$ are consistent for $S_{ak}$ and $W_{ak}$, and hence
$\hat\theta_{\mytext{wBP}}$ is consistent for $\breve\theta_{\mytext{wBP}}$.
Second, if CSP model \eqref{eq:const-m}, in both treatment-1 and treatment-0 versions,
is correctly specified but models \eqref{eq:PS-model} and \eqref{eq:const-pi} may be misspecified, then
$\hat S_{ak,\mytext{wKM}}$ is also consistent for $S_{ak}$ while $\hat W_{ak}$ may be inconsistent for $W_{ak}$
and $\hat\theta_{\mytext{wBP}}$ may be inconsistent for $\breve\theta_{\mytext{wBP}}$.
Nevertheless, in either scenario, if further PHP model \eqref{eq:model-hazard-prob} is valid, then
the true value $\theta^*$ satisfies equation \eqref{eq:weighted-BP-pop} and
$\hat\theta_{\mytext{wBP}}$ is consistent for $\theta^*$. 
In the second scenario, from the discussion in Supplement Section~\ref{sec:comparison-KM},
similar properties are also achieved by the unweighted version of $\hat\theta_{\mytext{wBP}}$.
Hence it is of main interest to consider the first scenario.

The model-robust variance estimator $\hat V_{\mytext{r}} (\hat\theta_{\mytext{wBP}})$ in \eqref{eq:weighted-BP-Vr}
is nonparametrically consistent for the asymptotic variance of $\hat\theta_{\mytext{wBP}}$,
\textit{provided} that the data-dependency of $\hat\gamma$ were removed.
There are two sources of such data-dependency,
from the difference $(\hat q_{1k} - \me^\theta\hat q_{0k})$
or from the pre-factor $\frac{ \hat W_{1k} \hat W_{0k} }{ \hat W_{1k} \me^\theta + \hat W_{0k} }$, for each $k$ in \eqref{eq:weighted-BP}.
To tackle this limitation, we exploit calibrated estimation for $\{(W_{1k}, W_{0k}): k=1,\ldots,K\}$,
in addition to that for $\{(S_{1k}, S_{0k}): k=1,\ldots,K\}$,
and develop both a consistent estimator of $\breve\theta_{\mytext{wBP}}$ and a consistent variance estimator
in the aforementioned scenario where models \eqref{eq:PS-model} and \eqref{eq:const-pi} are correctly specified.

Instead of the IPW estimator $\hat W_{ak}$, we employ an augmented IPW estimator for risk-set weight $W_{ak}$.
For $a=0,1$, consider a risk-set OR model, denoted as $\nu_{ak}(X;\beta_{ak})$, for $\nu^*_{ak}(X) = P( Y \ge u_k | A=1,X )$, for example:
\begin{align}
 \nu_{ak} (X; \beta_{ak}) =  \frac{ \exp (\beta_{ak}^\T f(X)) }{ 1+ \exp(\beta_{ak}^\T f(X)) }, \label{eq:OR-model-W}
\end{align}
where $f(X)$ is the same as in PS model \eqref{eq:PS-model} and $\beta_{ak}$ is a parameter vector.
Given any estimators $\hat\gamma$ and $\hat\beta_{1k}$, an augmented IPW estimator of $W_{1k}$ is
\begin{align}
 \hat W_{1k} ( \hat\gamma,\hat\beta_{1k} ) = \frac{1}{n} \sum_{i=1}^n \psi_{1ki} ( \hat\gamma,\hat\beta_{1k} ), \label{eq:AIPW-W}
\end{align}
with
\begin{align*}
 \psi_{1ki} ( \hat\gamma,\hat\beta_{1k}) = \frac{A_i}{\hat\pi_i} 1\{Y_i \ge u_k\} - \left( \frac{A_i}{\hat\pi_i}-1 \right) \hat\nu_{1ki},
\end{align*}
where $\hat \nu_{1ki} = \nu_{1k}(X_i; \hat\beta_{1k})$, in addition to the notation used in \eqref{eq:phi-ps}.
Our calibrated augmented IPW estimator of $W_{1k}$ is
\begin{align}
  \hat W_{1k,\mytext{CAL}} =\hat W_{1k} (\hat\gamma_{1,\mytext{CAL}}, \hat\beta_{1k,\mytext{CAL}}) ,
  \label{eq:AIPW-CAL-W}
\end{align}
where $\hat\gamma_{1,\mytext{CAL}}$ is defined in \eqref{eq:CAL-gamma} and
 $\hat\beta_{1k,\mytext{CAL}}$ is defined as a solution to
\begin{align}
 0 = \frac{1}{n} \sum_{i=1}^n A_i \frac{1-\hat\pi_{1i,\mytext{CAL}}} {\hat\pi_{1i,\mytext{CAL}} }
  \left[ 1\{Y_i\ge u_k\} - \nu_{1k} (X_i;\beta_{1k}) \right] f(X_i),  \label{eq:CAL-beta}
\end{align}
with $\hat\pi_{1i,\mytext{CAL}} = \pi(X_i; \hat\gamma_{1,\mytext{CAL}})$ as in \eqref{eq:CAL-alpha}.
Similarly to $\hat{\gamma}_{1, \mytext{CAL}}$ and $\hat{\alpha}_{1k, \mytext{CAL}}$,  $\hat\beta_{1k,\mytext{CAL}}$ can also be equivalently determined as a minimizer of the convex loss function
\begin{align}
\ell_{\mytext{CAL}}(\beta_{1k}) = \frac{1}{n}\sum_{i=1}^n A_i\frac{1 - \hat{\pi}_{1i, \mytext{CAL}}}{\hat{\pi}_{1i, \mytext{CAL}}}[1\{Y_i \geq u_k\}f^\T(X_i)\beta_{1k} - \log\{1 + \me^{f^\T(X_i)\beta_{1k}}\} ]. \label{eq:loss-CAL-beta}
\end{align}
The estimator $\hat W_{1k,\mytext{CAL}}$, similar to $\hat S_{1k,\mytext{CAL}}$,
is an application of the method in Tan (2020) with PS model \eqref{eq:PS-model}
and OR model \eqref{eq:OR-model-W} for the binary outcome $1\{Y_i \ge u_k\}$.
A technical relaxation is that $A$ and $Y^{(a)}$ may be confounded given $X$, and hence
$W_{ak}$ may differ from $P ( Y^{(a)} \ge u_k)$ (See footnote~\ref{ft:identification}).
Our calibrated augmented IPW estimator, $\hat W_{0k,\mytext{CAL}}$, can be similarly defined for $W_{0k}$
by replacing $A_i$ and $\pi(\cdot;\gamma)$ with $1-A_i$ and $1-\pi(\cdot;\gamma)$.

Our calibrated estimator, denoted as $\hat\theta_{\mytext{CAL}}$, for $\breve\theta_{\mytext{wBP}}$ is defined as a solution to
\begin{align}
 0 = \sum_{k=1}^K \left\{  \frac{ \hat W_{1k,\mytext{CAL}} \hat W_{0k,\mytext{CAL}} }
 { \hat W_{1k,\mytext{CAL}} \me^\theta + \hat W_{0k,\mytext{CAL}} } ( \hat q_{1k,\mytext{CAL}} - \hat q_{0k,\mytext{CAL}} \me^\theta ) \right\} ,
 \label{eq:CAL-theta}
\end{align}
with $\hat q_{ak,\mytext{CAL}} = (\hat S_{a,k-1,\mytext{CAL}} - \hat S_{ak,\mytext{CAL}} )/\hat S_{a,k-1,\mytext{CAL}} $ by \eqref{eq:q-S-relation},
where $\hat S_{ak,\mytext{CAL}}$ and $\hat W_{ak,\mytext{CAL}}$ are calibrated augmented IPW estimators,
for example, in \eqref{eq:AIPW-CAL} and \eqref{eq:AIPW-CAL-W} for $a=1$.
The point estimator $\hat\theta_{\mytext{CAL}}$ can be shown to be doubly robust for $\breve\theta_{\mytext{wBP}}$, i.e., being consistent
in two scenarios: either PS and CCP models \eqref{eq:PS-model} and \eqref{eq:const-pi}
or CSP and OR models \eqref{eq:const-m}, \eqref{eq:OR-model}, and \eqref{eq:OR-model-W}
(or just CSP and OR models \eqref{eq:const-m} and \eqref{eq:OR-model-W})
are correctly specified, including both treatment-1 and treatment-0 versions.
Moreover, as an improvement over $\hat\theta_{\mytext{wBP}}$ in the first scenario,
if model \eqref{eq:PS-model} alone is correctly specified, then
consistent variance estimation can be obtained for $\hat\theta_{\mytext{CAL}}$ by using the asymptotic expansions (or influence functions)
for $\hat S_{ak,\mytext{CAL}}$ and $\hat W_{ak,\mytext{CAL}}$. Let
\begin{align*}
 & \hat \varphi_{ki} (\theta) =
  \frac{ \hat W_{0k,\mytext{CAL}} \hat W_{1k,\mytext{CAL}}} { \hat W_{1k,\mytext{CAL}} \me^\theta + \hat W_{0k,\mytext{CAL}} }
  \left(\frac{ -\hat\varphi_{1ki} + (1-\hat q_{1k,\mytext{CAL}}) \hat\varphi_{1,k-1,i} }{ \hat S_{1,k-1,\mytext{CAL}} } \right) \\
  & \quad - \frac{ \hat W_{0k,\mytext{CAL}} \hat W_{1k,\mytext{CAL}} \me^\theta  } { \hat W_{1k,\mytext{CAL}} \me^\theta + \hat W_{0k,\mytext{CAL}} }
  \left( \frac{ - \hat\varphi_{0ki} + (1-\hat q_{0k,\mytext{CAL}}) \hat\varphi_{0,k-1,i} }{ \hat S_{0,k-1,\mytext{CAL}} } \right)  \\
  & \quad + \frac{\hat q_{1k,\mytext{CAL}} -  \hat q_{0k,\mytext{CAL}} \me^\theta }{(\hat W_{1k,\mytext{CAL}} \me^\theta + \hat W_{0k,\mytext{CAL}})^2 }
 \left\{ \hat W_{0k,\mytext{CAL}}^2 (\hat\psi_{1ki} - \hat W_{1k,\mytext{CAL}} ) +
 \me^\theta \hat W_{1k,\mytext{CAL}}^2 (\hat\psi_{0ki} - \hat W_{0k,\mytext{CAL}} ) \right\} ,
\end{align*}
where $\hat\varphi_{1ki} = \varphi_{1ki}(\hat\gamma_{1,\mytext{CAL}}, \hat\rho_{1,1:k,\mytext{CAL}}, \hat\eta_{1,1:k,\mytext{CAL}}, \hat\alpha_{1k,\mytext{CAL}})$
as in Proposition~\ref{pro:AIPW-CAL},
$\hat\psi_{1ki} =  \psi_{1ki} ( \hat\gamma_{1,\mytext{CAL}},$ $\hat\beta_{1k,\mytext{CAL}}) $,
and $\hat\varphi_{0ki}$ and $\hat \psi_{0ki}$ are similarly defined.

\begin{pro} \label{pro:CAL-theta}
Suppose that PS model \eqref{eq:PS-model} is correctly specified.
Then under standard regularity conditions as $n\to\infty$ and $\dim(f)$ and $K$ are fixed,
$\hat\theta_{\mytext{CAL}}$ satisfies the asymptotic expansion
\begin{align}
 \hat\theta_{\mytext{CAL}} - \breve\theta_{\mytext{CAL}}
 = H_{\mytext{CAL}} (\theta)^{-1}\;  \sum_{k=1}^K \breve\varphi_{ki} (\theta)  \Big|_{\breve\theta_{\mytext{CAL}}} + o_p(n^{-1/2}), \label{eq:CAL-theta-expan}
\end{align}
where $\breve\theta_{\mytext{CAL}}$ is the limit value defined as the solution of \eqref{eq:weighted-BP-pop} with
$W_{ak}$ and $q_{ak}$ replaced by the limit values of $\hat W_{ak,\mytext{CAL}}$ and $\hat q_{ak,\mytext{CAL}}$,
$\breve\varphi_{ki} (\theta)$ is the limit version of $\hat\varphi_{ki} (\theta)$,
and $H_{\mytext{CAL}} (\theta)$ is the limit version of $\hat H_{\mytext{CAL}} (\theta)$ below  (see the proof in Supplement Section~\ref{sec:technical-detail} for details).
Moreover, the asymptotic variance of $\hat\theta_{\mytext{CAL}}$ can be consistently estimated by
\begin{align}
\hat V_{\mytext{r}} ( \hat\theta_{\mytext{CAL}})=
\hat H_{\mytext{CAL}} (\theta)^{-1} \hat G_{\mytext{CAL}} (\theta )
\hat H_{\mytext{CAL}} (\theta)^{-1} \Big|_{\breve\theta_{\mytext{CAL}}} ,  \label{eq:CAL-theta-Vr}
\end{align}
where $\hat G_{\mytext{CAL}} (\theta ) = n^{-2} \sum_{i=1}^n \{\sum_{k=1}^K \hat \varphi_{ki} (\theta) \}^2 $ and
\begin{align*}
\hat H_{\mytext{CAL}} (\theta) = \sum_{k=1}^K \frac{\hat W_{1k,\mytext{CAL}} \hat W_{0k,\mytext{CAL}} \me^\theta}
 { ( \hat W_{1k,\mytext{CAL}} \me^\theta + \hat W_{0k,\mytext{CAL}} )^2 }
 ( \hat W_{1k,\mytext{CAL}} \hat q_{1k,\mytext{CAL}}+ \hat W_{0k,\mytext{CAL}}\hat q_{0k,\mytext{CAL}} ).
\end{align*}
\end{pro}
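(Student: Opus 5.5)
The plan is to treat $\hat\theta_{\mytext{CAL}}$ as a Z-estimator whose estimating function depends on the finite collection of plug-in quantities $\{(\hat S_{ak,\mytext{CAL}}, \hat W_{ak,\mytext{CAL}}): a=0,1,\ k=0,\ldots,K\}$. Each of these is already known to be asymptotically linear with influence function computable from the fitted values, so the rest is a delta-method argument. Write the equation \eqref{eq:CAL-theta} as $0=\hat\Psi(\theta)=\sum_{k=1}^K g_k(\theta;\hat W_{1k},\hat W_{0k},\hat q_{1k},\hat q_{0k})$, with the smooth map $g_k(\theta;W_1,W_0,q_1,q_0)=W_1W_0(q_1-q_0\me^\theta)/(W_1\me^\theta+W_0)$, and let $\Psi(\theta)$ be the same map evaluated at the limits $\breve W_{ak}$ and $\breve q_{ak}$. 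The steps, in order, are as follows.

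\emph{Step 1: asymptotic linearity of the inputs.} Under the correct PS model, Proposition~\ref{pro:AIPW-CAL} gives $\hat S_{ak,\mytext{CAL}}=n^{-1}\sum_i\varphi_{aki}(\breve\tau_{a,\mytext{CAL}})+o_p(n^{-1/2})$ for each $k$ and $a=0,1$; the treatment-0 version follows by the symmetric argument. For $\hat W_{1k,\mytext{CAL}}$, I would repeat the argument of Proposition~\ref{pro:AIPW-CAL} in the simpler cross-sectional form of Tan (2020). The equation \eqref{eq:CAL-gamma} makes $E\{\partial_\beta \hat W_{1k}\}$ vanish at the limit, because $\partial_\beta\psi_{1k}=-(A/\pi-1)\partial_\beta\nu_{1k}$ and $\partial_\beta\nu_{1k}=\nu_{1k}(1-\nu_{1k})f$ — this is exactly the derivative's structure, not just a multiple of $f$. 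The equation \eqref{eq:CAL-beta} makes $E\{\partial_\gamma\hat W_{1k}\}$ vanish: under logistic PS, $\partial_\gamma (A/\pi)=-A\,\frac{1-\pi}{\pi}f$, so this derivative equals $-E[A\frac{1-\pi}{\pi}\{1\{Y\ge u_k\}-\nu_{1k}\}f]$, which is zero by \eqref{eq:CAL-beta}. Combining these with the usual Taylor expansion of $(\hat\gamma,\hat\beta)$ around $(\gamma^*,\breve\beta_{1k})$ gives $\hat W_{1k,\mytext{CAL}}=n^{-1}\sum_i\psi_{1ki}(\gamma^*,\breve\beta_{1k})+o_p(n^{-1/2})$, and similarly for $a=0$.

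\emph{Step 2: the hazard plug-ins.} From \eqref{eq:q-S-relation} and the delta method applied to the ratio, $\hat q_{1k}-\breve q_{1k}$ has influence $\{-(\varphi_{1ki}-\breve S_{1k})+(1-\breve q_{1k})(\varphi_{1,k-1,i}-\breve S_{1,k-1})\}/\breve S_{1,k-1}$. This matches the bracketed terms in $\hat\varphi_{ki}$, with centering constants canceling in the sum because their mean is zero.

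\emph{Step 3: expanding the estimating function.} Compute the partial derivatives of $g_k$ in $(q_1,q_0)$, which give the first two lines of $\hat\varphi_{ki}$, and in $(W_1,W_0)$. Here I would use $\partial_{W_1}[W_1W_0/(W_1\me^\theta+W_0)]=W_0^2/(W_1\me^\theta+W_0)^2$ and the symmetric identity for $W_0$, which give the third line. Adding over $k$ yields $\hat\Psi(\theta)-\Psi(\theta)=n^{-1}\sum_i\sum_k\breve\varphi_{ki}(\theta)+o_p(n^{-1/2})$, uniformly on a compact neighborhood of $\breve\theta_{\mytext{CAL}}$; the uniformity holds because $g_k$ is smooth and only finitely many inputs are involved. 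Note that $-\partial_\theta\Psi=H_{\mytext{CAL}}$. Consistency of $\hat\theta_{\mytext{CAL}}$ follows from uniform convergence of $\hat\Psi$, together with monotonicity of $\Psi$ in $\theta$ (each summand is decreasing in $\theta$) and positivity of $H_{\mytext{CAL}}$. A one-term Taylor expansion of $\hat\Psi(\hat\theta_{\mytext{CAL}})=0$ around $\breve\theta_{\mytext{CAL}}$ then gives \eqref{eq:CAL-theta-expan}. The sign convention must be chosen so that the displayed expansion carries the factor $n^{-1}\sum_i$.

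\emph{Step 4: variance consistency.} $\hat H_{\mytext{CAL}}$ and $\hat G_{\mytext{CAL}}$ are continuous functions of sample averages and consistent fitted nuisance values, so the law of large numbers and continuity give $\hat H_{\mytext{CAL}}\to H_{\mytext{CAL}}$ and $n\hat G_{\mytext{CAL}}\to \mathrm{Var}(\sum_k\breve\varphi_k)$. Evaluating at $\hat\theta_{\mytext{CAL}}$ instead of the limit is harmless by continuity.

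I expect the main obstacle to be Step 1 for $\hat S_{0k,\mytext{CAL}}$ and $\hat W_{0k,\mytext{CAL}}$: they use $\hat\gamma_{0,\mytext{CAL}}$, whose calibration derivative structure under $\pi=\mathrm{expit}$ is $\partial_\gamma\{(1-A)/(1-\pi)\}=(1-A)\frac{\pi}{1-\pi}f$. I would confirm that the treatment-0 versions of \eqref{eq:CAL-alpha} and \eqref{eq:CAL-beta} are meant to use the weight $\pi/(1-\pi)$, which makes the vanishing-derivative argument go through. I would also check that the joint expansion holds with a common $o_p(n^{-1/2})$ remainder across all $k$, which is immediate since $K$ is fixed.
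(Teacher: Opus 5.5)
Your route is the paper's: treat $\hat\theta_{\mytext{CAL}}$ as the root of a smooth function of the finitely many plug-ins $(\hat S_{ak,\mytext{CAL}},\hat W_{ak,\mytext{CAL}})$, linearize the $q$-parts and the $W$-prefactors separately, and finish with a one-term Taylor expansion in $\theta$. Your partial derivatives $W_0^2/(W_1\me^\theta+W_0)^2$ and $\me^\theta W_1^2/(W_1\me^\theta+W_0)^2$, and the identity $-\partial_\theta\Psi=H_{\mytext{CAL}}$, are exactly what the paper uses. There are two small differences. You re-derive the expansion of $\hat W_{ak,\mytext{CAL}}$, whereas the paper cites Tan (2020). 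You also make consistency explicit through the monotonicity of each summand in $\theta$, instead of appealing to generic M-estimation theory.

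One justification in Step 1 is wrong, however. Equation \eqref{eq:CAL-gamma} does \emph{not} make $E\{\partial_\beta \hat W_{1k}\}$ vanish for the logistic risk-set model \eqref{eq:OR-model-W}. That derivative equals $-E[(A/\breve\pi-1)\breve\nu_{1k}(1-\breve\nu_{1k})f(X)]$. Balance on $f(X)$ gives no control over the weighted functions $\nu(1-\nu)f(X)$, which are not in the linear span of $f$.

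What actually makes it zero is the proposition's hypothesis. With the PS model correct, $\breve\gamma_{1,\mytext{CAL}}=\gamma^*$ and $E\{A/\pi(X;\gamma^*)-1\mid X\}=0$, so the expectation vanishes for any function of $X$. This is the same argument as \eqref{eq:derv-alpha} for $\hat S_{1k,\mytext{CAL}}$, and the treatment-0 analogue uses $E\{(1-A)/(1-\pi(X;\gamma^*))-1\mid X\}=0$.

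The point is not cosmetic. It is precisely where correct PS specification enters the $W$-part, and your stated reason would wrongly suggest the expansion survives PS misspecification. Equation \eqref{eq:CAL-gamma} alone suffices only for the linear model \eqref{eq:OR-model-W-lin}, where $\partial_\beta\nu=f$. That is why the paper states the nonparametric extension only for $\hat\theta_{\mytext{CAL,lin}}$.

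A minor point on Step 2: the centering constants cancel identically because $\breve S_{1k}=(1-\breve q_{1k})\breve S_{1,k-1}$, not merely in mean. With these corrections the argument goes through.
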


\vspace{.05in}
By Proposition~\ref{pro:CAL-theta},
if PS and CCP models \eqref{eq:PS-model} and \eqref{eq:const-pi}, in both treatment-1 and treatment-0 versions, are correctly specified, then
$\hat\theta_{\mytext{CAL}}$ is consistent for $\breve\theta_{\mytext{wBP}}$ (i.e., $\breve\theta_{\mytext{CAL}}=\breve\theta_{\mytext{wBP}})$, and
an asymptotic $(1-c)$-confidence interval for $\breve\theta_{\mytext{wBP}}$ is
$ \hat \theta_{\mytext{CAL}} \pm z_{c/2} \sqrt{\hat V_{\mytext{r}}( \hat \theta_{\mytext{CAL}} ) }$.
The confidence interval is also of asymptotic level $1-c$ for the true value $\theta^*$ if PHP model \eqref{eq:model-hazard-prob} is valid
about homogeneous hazard probability ratios.
In particular, if models \eqref{eq:PS-model} and \eqref{eq:const-pi} are correctly specified, then
the Wald rejection region, $ |\hat \theta_{\mytext{CAL}}| \ge z_{c/2}  \sqrt{\hat V_{\mytext{r}}( \hat \theta_{\mytext{CAL}} ) } $,
achieves an asymptotic level $c$ for testing the null hypothesis $\theta^*=0$, i.e., the distributions of $U^{(1)}$ and $U^{(0)}$ are the same.
This test is more desirable than the weighted log-rank test in Xie \& Liu (2005),
which relies on CSP model \eqref{eq:const-m}, in both treatment-1 and treatment-0 versions, being correctly specified.
In the latter scenario, the unweighted log-rank test is also valid similarly to unweighted KM estimation as discussed in
Supplement Section~\ref{sec:comparison-KM}.

Similarly to the replacement of logistic regression \eqref{eq:OR-model}
by linear regression \eqref{eq:OR-model-lin} in calibrated estimation of $S_{1k}$,
the logistic OR model \eqref{eq:OR-model-W} can also be replaced by a linear OR model in calibrated estimation of $W_{1k}$:
\begin{align}
 \nu_{ak} (X; \beta_{ak}) = \beta_{ak}^\T f(X) . \label{eq:OR-model-W-lin}
\end{align}
The corresponding estimator, for example $\hat W_{1k,\mytext{CAL,lin}}$, is defined similarly as $\hat W_{1k,\mytext{CAL}}$ in \eqref{eq:AIPW-CAL-W},
where $\hat\beta_{1k,\mytext{CAL}}$, defined as a solution to \eqref{eq:CAL-beta}
or a minimizer of
\begin{align}
\ell_{\mytext{CAL}}(\beta_{1k}) = \frac{1}{n}\sum_{i=1}^n A_i\frac{1 - \hat{\pi}_{1i, \mytext{CAL}}}{\hat{\pi}_{1i, \mytext{CAL}}}
 [1\{Y_i \geq u_k\}- f^\T(X_i)\beta_{1k}]^2/2 ,  \label{eq:loss-CAL-beta-lin}
\end{align}
becomes a weighted least-squares estimator.
Our calibrated estimator, $\hat\theta_{\mytext{CAL,lin}}$,
is defined as a solution to \eqref{eq:CAL-theta}, with $\hat S_{ak,\mytext{CAL}}$ and $\hat W_{ak,\mytext{CAL}}$
replaced by $\hat S_{ak,\mytext{CAL,lin}}$ and $\hat W_{ak,\mytext{CAL,lin}}$.
In addition to $\hat S_{ak,\mytext{CAL,lin}} $ reducing to $\hat S_{ak,\mytext{wKM}} (\hat\gamma_{a,\mytext{CAL}})$ as in \eqref{eq:AIPW-reduction},
the estimator $\hat W_{ak,\mytext{CAL,lin}}$ also reduces to $\hat W_{ak} (\hat\gamma_{a,\mytext{CAL}})$, i.e.,
the IPW estimator $\hat W_{ak} $ with $\hat\gamma=\hat\gamma_{a,\mytext{CAL}}$.
Hence $\hat\theta_{\mytext{CAL,lin}}$ reduces to $\hat\theta_{\mytext{wBP}}$ based on $\hat\gamma_{a,\mytext{CAL}}$, i.e.,
\begin{align}
 \hat\theta_{\mytext{CAL,lin}} = \hat\theta_{\mytext{wBP}} (\hat\gamma_{a,\mytext{CAL}}), \label{eq:theta-reduction}
\end{align}
where $\hat\theta_{\mytext{wBP}}(\hat\gamma_{a,\mytext{CAL}})$ is $\hat\theta_{\mytext{wBP}}$ in Section~\ref{sec:existing-wBP} with $\hat w_{ai}$ defined from
$\hat\gamma=\hat\gamma_{a,\mytext{CAL}}$ for $a=0,1$.
Moreover, Proposition~\ref{pro:CAL-theta} can be extended such that the variance estimator from \eqref{eq:CAL-theta-Vr}
is nonparametrically consistent for the asymptotic variance of $\hat\theta_{\mytext{CAL,lin}}$
even if PS model \eqref{eq:PS-model} is misspecified.
For this reason, the point estimator $\hat \theta_{\mytext{CAL,lin}}$ with the variance estimator from \eqref{eq:CAL-theta-Vr}
enables doubly robust inference about $\breve\theta_{\mytext{wBP}}$
if either PS and CCP models \eqref{eq:PS-model} and \eqref{eq:const-pi} or
CSP and OR models \eqref{eq:const-m}, \eqref{eq:OR-model-lin}, and \eqref{eq:OR-model-W-lin}
(or just CSP and OR models \eqref{eq:const-m} and \eqref{eq:OR-model-W-lin})
are correctly specified.

\section{High-dimensional estimation} \label{sec:high-dimension}

We propose regularized calibrated estimation for survival probabilities $S_{ak}$ and
the hazard-ratio parameter $\breve{\theta}_{\mytext{wBP}}$ in high-dimensional settings, where the dimension of $f(X)$ is close to or exceeds the sample size.
In particular, to be computationally efficient, we develop a novel regularized calibrated estimator of $\breve{\theta}_{\mytext{wBP}}$ through a ``linearization'' technique
instead of directly incorporating regularization at each time period.

\textbf{Estimation of survival probabilities.}\;
Given PS, CCP, CSP, and OR models \eqref{eq:PS-model}, \eqref{eq:const-pi}, \eqref{eq:const-m} and \eqref{eq:OR-model},
our regularized calibrated estimator for $S_{ak}$ is defined as $\hat{S}_{1k, \mytext{RCAL}} = \hat{S}_{1k}(\hat{\gamma}_{1, \mytext{RCAL}}, \hat{\rho}_{1, 1:k, \mytext{RCAL}},$ $ \hat{\eta}_{1, 1:k, \mytext{RCAL}}, \hat{\alpha}_{1k, \mytext{RCAL}})$, where $(\hat{\gamma}_{1, \mytext{RCAL}}, \hat{\alpha}_{1k, \mytext{RCAL}})$ are regularized versions of the calibrated estimators $(\hat{\gamma}_{1, \mytext{CAL}}, \hat{\alpha}_{1k, \mytext{CAL}})$, and $(\hat{\rho}_{1j, \mytext{RCAL}}, \hat{\eta}_{1j, \mytext{RCAL}})$ are solutions of \eqref{eq:CAL-rho} and \eqref{eq:CAL-eta} with $\hat{\pi}_i = \pi(X_i; \hat{\gamma}_{1, \mytext{RCAL}})$ for $j = 1, \ldots, k$.
The estimator $\hat{\gamma}_{1, \mytext{RCAL}}$ is a regularized calibrated estimator of $\gamma$ (Tan 2020), defined as a minimizer of the penalized loss function $\ell_{\mytext{CAL}}(\gamma) + \lambda \lVert \gamma_{1:p} \rVert_1$, where $\ell_{\mytext{CAL}}(\gamma)$ is defined in \eqref{eq:loss-CAL-gamma}, $\lambda \geq 0$ is a tuning parameter, $\lVert \cdot \rVert_1$ denotes the $L_1$ norm, and $\gamma_{1:p} = (\gamma_1, \ldots, \gamma_p)^\T$ excludes the intercept $\gamma_0$.
Similarly, $\hat{\alpha}_{1k, \mytext{RCAL}}$ is defined as a minimizer of the penalized loss function $\ell_{\mytext{CAL}}(\alpha_{1k}) + \lambda_k \lVert \alpha_{1k, 1:p} \rVert_1$, where $\ell_{\mytext{CAL}}(\alpha_{1k})$ is defined in \eqref{eq:loss-CAL-alpha}, but with $\hat{\pi}_{1i, \mytext{CAL}}$ and  $\hat{U}_{ki, \mytext{CAL}}^{(1)}$ replaced by $\hat{\pi}_{1i, \mytext{RCAL}} = \pi(X_i; \hat{\gamma}_{1, \mytext{RCAL}})$ and $\hat{U}_{ki, \mytext{RCAL}}^{(1)} = \hat{U}_{ki}^{(1)}(\hat{\rho}_{1, 1:k, \mytext{RCAL}}, \hat{\eta}_{1, 1:k, \mytext{RCAL}})$.
With linear OR model \eqref{eq:OR-model-lin}, the regularized version of $\hat{S}_{1k, \mytext{CAL, lin}}$, denoted as $\hat{S}_{1k, \mytext{RCAL, lin}}$, is defined in the same way as $\hat{S}_{1k, \mytext{RCAL}}$, except that the loss function for estimating $\hat{\alpha}_{1k, \mytext{RCAL}}$ is \eqref{eq:loss-CAL-alpha-lin} instead of  \eqref{eq:loss-CAL-alpha}. Under suitable regularity and sparsity conditions
in the high-dimensional setting, it can be shown as in Tan (2020) that
$\hat{S}_{1k, \mytext{RCAL}}$ and $\hat{S}_{1k, \mytext{RCAL, lin}}$ satisfy similar properties as in Propositions \ref{pro:AIPW-CAL} and \ref{pro:AIPW-CAL-lin}
for their non-regularized counterparts $\hat{S}_{1k, \mytext{CAL}}$ and $\hat{S}_{1k, \mytext{CAL, lin}}$ respectively. For treatment-0, a regularized calibrated estimator, $\hat{S}_{0k, \mytext{RCAL}}$, can be defined for $S_{0k}$, similarly to $\hat{S}_{1k, \mytext{RCAL}}$, by replacing $A_i$ and $\pi(\cdot; \gamma)$ with $1 - A_i$ and $1 - \pi(\cdot; \gamma)$.

\textbf{Estimation of hazard-ratio parameter (first method).}\; We discuss two methods for regularized estimation of $\breve{\theta}_{\mytext{wBP}}$.
The first method incorporates regularization at each time period and
proceeds similarly as in Section~\ref{sec:CAL-hazard} for obtaining $\hat{\theta}_{\mytext{CAL}}$.
Specifically, we first compute the regularized calibrated augmented IPW estimator for $W_{ak}$,  denoted as $\hat{W}_{ak, \mytext{RCAL}}$, and then define the regularized calibrated estimator $\hat{\theta}_{\mytext{RCAL}}$ for $\breve{\theta}_{\mytext{wBP}}$ as a solution to \eqref{eq:CAL-theta}, replacing $\hat{S}_{ak, \mytext{CAL}}$ and $\hat{W}_{ak, \mytext{CAL}}$ with $\hat{S}_{ak, \mytext{RCAL}}$ and $\hat{W}_{ak, \mytext{RCAL}}$.
Given PS model \eqref{eq:PS-model} and logistic OR model \eqref{eq:OR-model-W}, $\hat{W}_{1k, \mytext{RCAL}}$ is defined as $\hat{W}_{1k, \mytext{RCAL}} = \hat{W}_{1k}(\hat{\gamma}_{1, \mytext{RCAL}}, \hat{\beta}_{1k, \mytext{RCAL}})$, where $\hat{\beta}_{1k, \mytext{RCAL}}$ is the minimizer of the penalized loss function $\ell_{\mytext{CAL}}(\beta_{1k}) + \lambda_k \lVert \beta_{1k, 1:p} \rVert_1$, with $\ell_{\mytext{CAL}}(\beta_{1k})$ defined in \eqref{eq:loss-CAL-beta}.
A regularized calibrated estimator, $\hat{W}_{0k, \mytext{RCAL}}$, can be defined for $W_{0k}$, similarly to $\hat{W}_{1k, \mytext{RCAL}}$, by replacing $A_i$ and $\pi(\cdot; \gamma)$ with $1 - A_i$ and $1 - \pi(\cdot; \gamma)$.
With linear OR model \eqref{eq:OR-model-W-lin}, the regularized version of $\hat{W}_{1k, \mytext{CAL, lin}}$, denoted as $\hat{W}_{1k, \mytext{RCAL, lin}}$, is defined in the same way as $\hat{W}_{1k, \mytext{RCAL}}$, except that the loss function for estimating $\hat{\beta}_{1k, \mytext{RCAL}}$ is \eqref{eq:loss-CAL-beta-lin} instead of  \eqref{eq:loss-CAL-beta}.
For treatment-0, $\hat{W}_{0k, \mytext{CAL, lin}}$ is similarly defined.
The regularized calibrated estimator $\hat{\theta}_{\mytext{RCAL, lin}}$ is defined as a solution to \eqref{eq:CAL-theta}, with $\hat{S}_{ak, \mytext{CAL}}$ and $\hat{W}_{ak, \mytext{CAL}}$ replaced by $\hat{S}_{ak, \mytext{RCAL, lin}}$ and $\hat{W}_{ak, \mytext{RCAL, lin}}$.
Then Proposition~\ref{pro:CAL-theta} and related discussion can be extended to $\hat{\theta}_{\mytext{RCAL}}$  and  $\hat{\theta}_{\mytext{RCAL, lin}}$
under suitable regularity and sparsity conditions in the high-dimensional setting.

The first method is time-consuming for two reasons. First, it requires fitting $4K$ distinct OR models
\eqref{eq:OR-model} and \eqref{eq:OR-model-W}, $\mu_{ak}(X;\alpha_{ak})$ and $\nu_{ak}(X;\beta_{ak})$ for $a=0,1$ and $k=1,\ldots,K$.
Second,  the method estimates each coefficient vector $\alpha_{ak}$ or $\beta_{ak}$ separately, which is challenging in high-dimensional settings due to the need for selecting tuning parameters (e.g., through cross-validation).
The second issue can be addressed by adopting group Lasso
(Yuan \& Lin, 2006) to estimate $(\alpha_{ak}: k = 1, \ldots, K)$ or $(\beta_{ak}: k = 1, \ldots, K)$ simultaneously for $a=0,1$.
For example, for treatment-1, a group regularized calibrated estimators $\hat{\alpha}_{1, \mytext{RCALg}} = (\hat{\alpha}_{1k, \mytext{RCALg}}: k =1, \ldots, K)$ for $(\alpha_{1k}: k = 1, \ldots, K)$ is jointly as a minimizer to $\sum_{k=1}^K \ell_{\mytext{CAL}}(\alpha_{1k}) + \lambda \sum_{j=1}^p \lVert \alpha_{1\cdot j} \rVert_2$,
where $\alpha_{1\cdot j} = (\alpha_{1k, j} : k = 1, \ldots, K)^\T$ consists of $K$ coefficients associated with the covariate term $f_j(X)$  across $k$.
However, the problem of estimating $4K$ coefficient vectors remains computationally expensive and statistically undesirable.

\textbf{Estimation of hazard-ratio parameter (second method).}\;
To address the computational difficulty of the first method,
we develop a novel regularized calibrated estimator for $\breve{\theta}_{\mytext{wBP}}$
by combining OR-based augmentation terms across time periods before incorporating regularization estimation with OR models.

First, consider the weighted-only regularized calibrated estimators for $S_{1k}$ and $W_{1k}$, defined as follows:
\begin{align}
\hat{S}_{1k, \mytext{RCw}} = \frac{1}{n}\sum_{i=1}^n \frac{A_i}{\hat{\pi}_{1i,\mytext{RCAL}}}\hat{U}_{ki,\mytext{RCAL}}^{(1)}, \quad
\hat{W}_{1k, \mytext{RCw}} = \frac{1}{n}\sum_{i=1}^n \frac{A_i}{\hat{\pi}_{1i, \mytext{RCAL}}}1\{Y_i \geq u_k\} ,
\label{eq:IPW-RCAL}
\end{align}
by using the inverse PS weighted terms, without the OR-based augmentation terms, in $\hat{S}_{1k, \mytext{RCAL}}$ and $\hat{W}_{1k, \mytext{RCAL}}$.
The weighted-only regularized calibrated estimators $\hat{S}_{0k, \mytext{RCw}}$ and $\hat{W}_{0k, \mytext{RCw}}$ for $S_{0k}$ and $W_{0k}$
are defined similarly to $\hat{S}_{1k, \mytext{RCw}}$ and $\hat{W}_{1k, \mytext{RCw}}$
by replacing $A_i$ and $\pi(\cdot; \gamma)$ with $1 - A_i$ and $1 - \pi(\cdot; \gamma)$.
By substituting $(\hat{S}_{ak, \mytext{RCw}}, \hat{W}_{ak, \mytext{RCw}})$ for $(\hat{S}_{ak, \mytext{CAL}}, \hat{W}_{ak, \mytext{CAL}})$
in \eqref{eq:CAL-theta}, consider the weighted-only regularized calibrated estimator, $\hat\theta_{\mytext{RCw}}$, defined as a solution to
\begin{align}
 0 = \sum_{k=1}^K \left\{  \frac{ \hat W_{1k,\mytext{RCw}} \hat W_{0k,\mytext{RCw}} }
 { \hat W_{1k,\mytext{RCw}} \me^\theta + \hat W_{0k,\mytext{RCw}} } ( \hat q_{1k,\mytext{RCw}} - \hat q_{0k,\mytext{RCw}} \me^\theta ) \right\} ,
  \label{eq:RCAL-IPW-theta}
\end{align}
with $\hat q_{ak,\mytext{RCw}} = (\hat S_{a,k-1,\mytext{RCw}} - \hat S_{ak,\mytext{RCw}} )/\hat S_{a,k-1,\mytext{RCw}} $.
To derive another augmented estimator based on $\hat\theta_{\mytext{RCw}}$,
we use a ``linearization'' of the estimating function in \eqref{eq:RCAL-IPW-theta}, which are nonlinear functionals of
the weighted-only estimators $(\hat{S}_{ak, \mytext{RCw}}, \hat{W}_{ak, \mytext{RCw}})$, $k=1,\ldots,K, a=0,1$.

In Supplement Section~\ref{sec:technical-detail}, we show that the following asymptotic expansion holds:
 \begin{align}
& \sum_{k=1}^K \frac{\hat{W}_{1k, \mytext{RCw}}\hat{W}_{0k, \mytext{RCw}}}{\hat{W}_{1k, \mytext{RCw}}\me^{\theta} + \hat{W}_{0k, \mytext{RCw}}}(\hat{q}_{1k, \mytext{RCw}} - \hat{q}_{0k, \mytext{RCw}}\me^{\theta})  \nonumber \\
& = \frac{1}{n}\sum_{i=1}^n \frac{A_i}{\hat{\pi}_{1i, \mytext{RCAL}}}B_{1i}(\theta, \breve{S}_{1, \mytext{RCw}}, \breve{W}_{\mytext{RCw}}) - \frac{1}{n}\sum_{i=1}^n \frac{1 - A_i}{1 - \hat{\pi}_{0i, \mytext{RCAL}}}B_{0i}(\theta, \breve{S}_{0, \mytext{RCw}}, \breve{W}_{\mytext{RCw}}) + o_P(n^{-1/2}), \label{eq:expand}
\end{align}
where
 $\breve q_{ak,\mytext{RCw}} = (\breve S_{a,k-1,\mytext{RCw}} - \breve S_{ak,\mytext{RCw}} )/\breve S_{a,k-1,\mytext{RCw}} $,
 $\breve{S}_{ak, \mytext{RCw}}$ and $\breve{W}_{ak, \mytext{RCw}}$ are the limit values of $\hat{S}_{ak, \mytext{RCw}}$ and $\hat{W}_{ak, \mytext{RCw}}$,
 $\breve{S}_{a, \mytext{RCw}} = \{\breve{S}_{ak, \mytext{RCw}}: k = 0, 1, \ldots, K\}$ ($\breve{S}_{a0, \mytext{RCw}} \equiv 1$), $\breve{W}_{\mytext{RCw}} = \{\breve{W}_{ak, \mytext{RCw}}: k = 1, \ldots, K, a = 0, 1\}$, and
\begin{align*}
B_{1i}(\theta, \breve{S}_{1, \mytext{RCw}}, \breve{W}_{\mytext{RCw}})  = & \sum_{k=1}^K \frac{\breve{W}_{1k, \mytext{RCw}}\breve{W}_{0k, \mytext{RCw}}}{( \breve{W}_{1k, \mytext{RCw}}\me^{\theta} + \breve{W}_{0k, \mytext{RCw}} )\breve{S}_{1, k-1, \mytext{RCw}} }\{ (1 - \breve{q}_{1k, \mytext{RCw}})\hat{U}_{k-1, i}^{(1)} - \hat{U}_{ki}^{(1)}\}   \\
& + \sum_{k=1}^K \frac{\breve{q}_{1k, \mytext{RCw}} - \breve{q}_{0k, \mytext{RCw}}\me^{\theta}}{(\breve{W}_{1k, \mytext{RCw}}\me^{\theta} + \breve{W}_{0k, \mytext{RCw}})^2}\breve{W}_{0k, \mytext{RCw}}^21\{ Y_i \geq u_k\}, \\
B_{0i}(\theta, \breve{S}_{0, \mytext{RCw}}, \breve{W}_{\mytext{RCw}})  = & \me^{\theta}\sum_{k=1}^K \frac{\breve{W}_{1k, \mytext{RCw}}\breve{W}_{0k, \mytext{RCw}}}{( \breve{W}_{1k, \mytext{RCw}}\me^{\theta} + \breve{W}_{0k, \mytext{RCw}} )\breve{S}_{0, k-1, \mytext{RCw}} }\{ (1 - \breve{q}_{0k, \mytext{RCw}})\hat{U}_{k-1, i}^{(0)} - \hat{U}_{ki}^{(0)}\}   \\
& - \me^{\theta}\sum_{k=1}^K \frac{\breve{q}_{1k, \mytext{RCw}} - \breve{q}_{0k, \mytext{RCw}}\me^{\theta}}{(\breve{W}_{1k, \mytext{RCw}}\me^{\theta} + \breve{W}_{0k, \mytext{RCw}})^2}\breve{W}_{1k, \mytext{RCw}}^21\{ Y_i \geq u_k\}.
\end{align*}
A key point is that, by direct calculation using the definitions \eqref{eq:IPW-RCAL},
estimating equation \eqref{eq:RCAL-IPW-theta} for $\hat{\theta}_{\mytext{RCw}}$ is equivalent to
setting to $0$ the leading term in \eqref{eq:expand}
with $(\breve{S}_{a, \mytext{RCw}}, \breve{W}_{\mytext{RCw}})$ replaced by $(\hat{S}_{a, \mytext{RCw}}, \hat{W}_{\mytext{RCw}})$:
\begin{align}
0 = \frac{1}{n}\sum_{i=1}^n \frac{A_i}{\hat{\pi}_{1i, \mytext{RCAL}}}B_{1i}(\theta, \hat{S}_{1, \mytext{RCw}}, \hat{W}_{\mytext{RCw}}) - \frac{1}{n}\sum_{i=1}^n \frac{1 - A_i}{1 - \hat{\pi}_{0i, \mytext{RCAL}}}B_{0i}(\theta, \hat{S}_{0, \mytext{RCw}}, \hat{W}_{\mytext{RCw}}), \label{eq:IPW-theta}
\end{align}
where $\hat{S}_{a, \mytext{RCw}} = \{\hat{S}_{ak, \mytext{RCw}}: k = 0, 1, \ldots, K\}$ ($\hat{S}_{a0, \mytext{RCw}} \equiv 1$) and $\hat{W}_{\mytext{RCw}} = \{\hat{W}_{ak, \mytext{RCw}}: k = 1, \ldots, K, a = 0, 1\}$.
Hence $\hat{\theta}_{\mytext{RCw}}$ is equivalently a solution to \eqref{eq:IPW-theta}.

By incorporating an OR-based augmentation term into \eqref{eq:IPW-theta},
we propose a new augmented regularized calibrated estimator for $\breve{\theta}_{\mytext{wBP}}$. For $a = 0, 1$,
consider an OR model, denoted as $g_a(x; \zeta_a)$, for the ``mean outcome'' $g_a^*(x) = E\{B_{ai}(\breve\theta_{\mytext{RCw}},\breve{S}_{a, \mytext{RCw}},  \breve{W}_{\mytext{RCw}}) | X_i=x)\}$, for example:
\begin{align}
g_a(X; \zeta_a) = \zeta_a^\T f(X), \label{eq:combine-OR-model}
\end{align}
where $\breve\theta_{\mytext{RCw}}$ is the limit value defined as the solution of \eqref{eq:weighted-BP-pop} with
$W_{ak}$ and $q_{ak}$ replaced by $\breve W_{ak,\mytext{RCw}}$ and $\breve q_{ak,\mytext{RCw}}$, the limit values
of $\hat W_{ak,\mytext{RCw}}$ and $\hat q_{ak,\mytext{RCw}}$.
Then our augmented regularized calibrated estimator, denoted as $\hat{\theta}_{\mytext{RCa}}$, for $\breve{\theta}_{\mytext{wBP}}$ is the solution of the following equation
\begin{align}
0 = & \frac{1}{n}\sum_{i=1}^n \left\{\frac{A_i}{\hat{\pi}_{1i, \mytext{RCAL}}}B_{1i}(\theta, \hat{S}_{1, \mytext{RCw}}, \hat{W}_{\mytext{RCw}}) - \left(\frac{A_i}{\hat{\pi}_{1i, \mytext{RCAL}}} - 1\right)\hat{g}_{1i, \mytext{RCAL}}\right\} \nonumber \\
& - \frac{1}{n}\sum_{i=1}^n \left\{\frac{1 - A_i}{1 - \hat{\pi}_{0i, \mytext{RCAL}}}B_{0i}(\theta, \hat{S}_{0, \mytext{RCw}}, \hat{W}_{\mytext{RCw}})  - \left(\frac{1 - A_i}{1 - \hat{\pi}_{0i, \mytext{RCAL}}} - 1\right)\hat{g}_{0i, \mytext{RCAL}}\right\} \nonumber \\
= & \sum_{k=1}^K \left\{  \frac{ \hat W_{1k,\mytext{RCw}} \hat W_{0k,\mytext{RCw}} }
 { \hat W_{1k,\mytext{RCw}} \me^\theta + \hat W_{0k,\mytext{RCw}} } ( \hat q_{1k,\mytext{RCw}} - \hat q_{0k,\mytext{RCw}} \me^\theta ) \right\} \nonumber \\
& - \frac{1}{n}\sum_{i=1}^n \left\{\left(\frac{A_i}{\hat{\pi}_{1i, \mytext{RCAL}}} - 1\right)\hat{g}_{1i, \mytext{RCAL}} - \left(\frac{1 - A_i}{1 - \hat{\pi}_{0i, \mytext{RCAL}}} - 1\right)\hat{g}_{0i, \mytext{RCAL}}\right\} ,
 \label{eq:RCAL-AIPW-theta}
\end{align}
where $\hat{g}_{ai, \mytext{RCAL}} = g_a(X_i; \hat{\zeta}_{a, \mytext{RCAL}})$ and $\hat{\zeta}_{a, \mytext{RCAL}}$ is a minimizer of the penalized loss function $\ell(\zeta_a) + \lambda \lVert \zeta_{a, 1:p} \rVert_1$ with
\begin{align*}
& \ell(\zeta_1) = \frac{1}{2n}\sum_{i=1}^n A_i\frac{1 - \hat{\pi}_{1i, \mytext{RCAL}}}{\hat{\pi}_{1i, \mytext{RCAL}}}\{B_{1i}(\hat{\theta}_{\mytext{RCw}}, \hat{S}_{1, \mytext{RCw}}, \hat{W}_{\mytext{RCw}}) - f^\T(X)\zeta_1\}^2, \\
& \ell(\zeta_0) = \frac{1}{2n}\sum_{i=1}^n (1 - A_i)\frac{\hat{\pi}_{0i, \mytext{RCAL}}}{1 - \hat{\pi}_{0i, \mytext{RCAL}}}\{B_{0i}(\hat{\theta}_{\mytext{RCw}}, \hat{S}_{0, \mytext{RCw}}, \hat{W}_{\mytext{RCw}}) - f^\T(X)\zeta_0\}^2.
\end{align*}
The second equality in \eqref{eq:RCAL-AIPW-theta} follows from the equivalence between \eqref{eq:RCAL-IPW-theta} and \eqref{eq:IPW-theta}.
By similar reasoning as in Ghosh and Tan (2022), the following result can be obtained.
Let $\hat\delta_i (\theta)$ be the $i$th summand in \eqref{eq:RCAL-AIPW-theta}, i.e.,
\begin{align*}
\hat\delta_i(\theta) = & \left\{\frac{A_i}{\hat{\pi}_{1i, \mytext{RCAL}}}B_{1i}(\theta, \hat{S}_{1, \mytext{RCw}}, \hat{W}_{\mytext{RCw}}) - \left(\frac{A_i}{\hat{\pi}_{1i, \mytext{RCAL}}} - 1\right)\hat{g}_{1i, \mytext{RCAL}}\right\} \nonumber \\
& - \left\{\frac{1 - A_i}{1 - \hat{\pi}_{0i, \mytext{RCAL}}}B_{0i}(\theta, \hat{S}_{0, \mytext{RCw}}, \hat{W}_{\mytext{RCw}})  - \left(\frac{1 - A_i}{1 - \hat{\pi}_{0i, \mytext{RCAL}}} - 1\right)\hat{g}_{0i, \mytext{RCAL}}\right\},
\end{align*}
and let $\breve \delta_i(\theta)$ be the limit version of $\hat\delta_i(\theta)$, with $(\hat{S}_{a, \mytext{RCw}}, \hat{W}_{\mytext{RCw}})$ and
$(\hat\gamma_{a, \mytext{RCAL}}, \hat{\rho}_{a, 1:k, \mytext{RCAL}}, \hat{\eta}_{a, 1:k, \mytext{RCAL}}, \allowbreak \hat{\zeta}_{a, \mytext{RCAL}})$ replaced by their limit values.
Note that $B_{ai}(\theta, \hat{S}_{a, \mytext{RCw}},\hat{W}_{\mytext{RCw}})$ depends on $\hat\gamma_{a,\mytext{RCAL}}$ through $\hat{\pi}_{ai,\mytext{RCAL}}$
and on $(\hat{\rho}_{a, 1:k, \mytext{RCAL}}, \hat{\eta}_{a, 1:k, \mytext{RCAL}})$ through $\hat U^{(a)}_{ki, \mytext{RCAL}}$,
in addition to depending on $(\hat{S}_{a, \mytext{RCw}},\hat{W}_{\mytext{RCw}})$.

\begin{pro}\label{pro:AIPW}
Under suitable regularity and sparsity conditions in the high-dimensional setting as $n, \dim(f) \to\infty$ and $K$ is fixed,
$\hat\theta_{\mytext{RCa}}$ converges in probability to the same limit value $\breve\theta_{\mytext{RCw}}$ as $\hat\theta_{\mytext{RCw}}$ and
satisfies the asymptotic expansion
\begin{align}
 \hat\theta_{\mytext{RCa}} - \breve \theta_{\mytext{RCw}} =
  H_{\mytext{RCw}} (\theta)^{-1} \cdot \frac{1}{n} \sum_{i=1}^n \breve\delta_i(\theta) \Big|_{\breve\theta_{\mytext{RCw}}} + o_p(n^{-1/2}) , \label{eq:RCAL-theta-expan}
\end{align}
where $H_{\mytext{RCw}} (\theta)$ is the limit version of $\hat H_{\mytext{RCw}} (\theta)$ below  (see the proof for details).
Moreover, the asymptotic variance of $\hat\theta_{\mytext{RCa}}$ can be consistently estimated by
\begin{align}
\hat V_{\mytext{r}} ( \hat\theta_{\mytext{RCa}})=
\hat H_{\mytext{RCw}} (\theta)^{-1} \hat G_{\mytext{RCa}} (\theta )
\hat H_{\mytext{RCw}} (\theta)^{-1}  \Big|_{\hat\theta_{\mytext{RCa}}},  \label{eq:RCAL-theta-Vr}
\end{align}
where $\hat G_{\mytext{RCa}} (\theta ) = n^{-2} \sum_{i=1}^n \hat\delta_i(\theta)^2 $ and
\begin{align*}
\hat H_{\mytext{RCw}} (\theta) = \sum_{k=1}^K \frac{\hat W_{1k,\mytext{RCw}} \hat W_{0k,\mytext{RCw}} \me^\theta}
 { ( \hat W_{1k,\mytext{RCw}} \me^\theta + \hat W_{0k,\mytext{RCw}} )^2 }
 ( \hat W_{1k,\mytext{RCw}} \hat q_{1k,\mytext{RCw}}+ \hat W_{0k,\mytext{RCw}}\hat q_{0k,\mytext{RCw}} ).
\end{align*}
\end{pro}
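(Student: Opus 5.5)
The plan is to treat $\hat\theta_{\mytext{RCa}}$ as a Z-estimator solving $0=n^{-1}\sum_{i}\hat\delta_i(\theta)$. Write $\hat\Psi(\theta)=n^{-1}\sum_i\hat\delta_i(\theta)$ and $\hat\Psi_{\mytext{RCw}}(\theta)$ for the left side of \eqref{eq:RCAL-IPW-theta}. By the second equality in \eqref{eq:RCAL-AIPW-theta},
\begin{align*}
\hat\Psi(\theta)=\hat\Psi_{\mytext{RCw}}(\theta)-\hat D ,
\end{align*}
where the augmentation term $\hat D$ does not depend on $\theta$. Hence $\partial\hat\Psi/\partial\theta = \partial \hat\Psi_{\mytext{RCw}}/\partial\theta$. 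A direct calculation, as for \eqref{eq:weighted-BP}, shows that this derivative equals $-\hat H_{\mytext{RCw}}(\theta)$. This explains why the same $\hat H_{\mytext{RCw}}$ appears in \eqref{eq:RCAL-theta-Vr}.

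\textbf{Consistency.} First, $\hat S_{ak,\mytext{RCw}}$ and $\hat W_{ak,\mytext{RCw}}$ converge to their limits. This uses the $L_1$ rate bounds for $\hat\gamma_{a,\mytext{RCAL}}$ from Tan (2020) together with the fact that $(\hat\rho,\hat\eta)$ solve low-dimensional equations \eqref{eq:CAL-rho}--\eqref{eq:CAL-eta}. Second, $\hat D\to0$ in probability. The reason is that the term $(A_i/\hat\pi_{1i}-1)\hat g_{1i}$ is a linear combination of $(A_i/\hat\pi_{1i}-1)f(X_i)$. The KKT conditions of the penalized calibration loss make $n^{-1}\sum_i(A_i/\hat\pi_{1i}-1)f_j(X_i)$ bounded by $\lambda$ in sup norm, and $\|\hat\zeta_1\|_1$ is bounded. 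Third, the limiting equation \eqref{eq:weighted-BP-pop} with the breve quantities has a unique root $\breve\theta_{\mytext{RCw}}$, with monotonicity in $\theta$ as in the weighted BP case. Standard Z-estimation arguments then give $\hat\theta_{\mytext{RCa}}\to\breve\theta_{\mytext{RCw}}$.

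\textbf{Expansion.} A Taylor step gives $\hat\theta_{\mytext{RCa}}-\breve\theta_{\mytext{RCw}}=\hat H_{\mytext{RCw}}^{-1}\hat\Psi(\breve\theta_{\mytext{RCw}})+o_p(n^{-1/2})$, so it remains to show
\begin{align*}
\hat\Psi(\breve\theta_{\mytext{RCw}})=n^{-1}\sum_i\breve\delta_i(\breve\theta_{\mytext{RCw}})+o_p(n^{-1/2}).
\end{align*}
I would decompose the difference into three pieces.

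1. The effect of replacing $(\hat S_{a,\mytext{RCw}},\hat W_{\mytext{RCw}})$ by their limits inside $B_{ai}$. By \eqref{eq:expand} and the equivalence with \eqref{eq:IPW-theta}, the plug-in IPW functional already equals its linearization to $o_p(n^{-1/2})$. So this piece reduces to showing that the mean derivative of the linearized functional with respect to $(S,W)$ is zero at the limit. That holds because the linearization is exactly the first-order (influence-function) expansion.

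2. The effect of $(\hat\rho,\hat\eta)$ entering through $\hat U^{(a)}_{ki}$. This is controlled by the calibration equations \eqref{eq:CAL-rho}--\eqref{eq:CAL-eta}, exactly as in Proposition~\ref{pro:weighted-KM-Vr}.

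3. The effect of $(\hat\gamma_{a,\mytext{RCAL}},\hat\zeta_{a,\mytext{RCAL}})$. This is the doubly robust high-dimensional step, modeled on Ghosh and Tan (2022). The $\zeta$-direction derivative vanishes because of the calibration equation for $\gamma$. The $\gamma$-direction derivative is, to first order, the weighted normal equation $n^{-1}\sum_i A_i\frac{1-\hat\pi_{1i}}{\hat\pi_{1i}}(B_{1i}-f^\T\zeta_1)f(X_i)$, which is $O(\lambda)$ in sup norm by the KKT conditions of $\ell(\zeta_1)$. The remainders are products such as $\lambda\|\hat\gamma-\bar\gamma\|_1$ and $\|\hat\gamma-\bar\gamma\|_2\|\hat\zeta-\bar\zeta\|_2$. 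These are $O_p(s\log p/n)=o_p(n^{-1/2})$ under the sparsity condition $s\log p=o(n^{1/2})$.

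\textbf{Main obstacle.} The hardest point is that $\hat\zeta_a$ is fitted with the pilot value $\hat\theta_{\mytext{RCw}}$ and the estimated $(\hat S,\hat W,\hat\rho,\hat\eta)$ inside the response $B_{ai}$, rather than with the limits. I would handle this by noting that $B_{ai}$ is linear in a finite number, $O(K)$, of bounded scalar functionals. The response perturbation is therefore a fixed-dimensional $O_p(n^{-1/2})$ shift, and the Lasso rate analysis of Tan (2020) extends with an extra term of that order. Crossing this with the $\gamma$ calibration gives an $o_p(n^{-1/2})$ contribution.

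\textbf{Variance estimator.} Consistency of \eqref{eq:RCAL-theta-Vr} follows from $\hat H_{\mytext{RCw}}(\hat\theta_{\mytext{RCa}})\to H_{\mytext{RCw}}(\breve\theta_{\mytext{RCw}})$ and from $n^{-1}\sum_i(\hat\delta_i-\breve\delta_i)^2\to0$, using the same rate bounds. Note also that $n^{-1}\sum_i\hat\delta_i(\hat\theta_{\mytext{RCa}})=0$, so no centering term is needed.
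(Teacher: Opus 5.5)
Your proposal is correct and follows the paper's route. The paper's proof also rests on the exact identity that $n^{-1}\sum_i\hat\delta_i(\theta)$ is the weighted-only estimating function minus a $\theta$-free augmentation term, which makes the Jacobian $H_{\mytext{RCw}}$; it then defers the nuisance-parameter analysis to ``similar reasoning as in Ghosh and Tan (2022)'', and your three-piece decomposition spells that step out. One small correction to your ``main obstacle'' paragraph: in high dimensions $\hat\theta_{\mytext{RCw}}$ and $(\hat S_{\mytext{RCw}},\hat W_{\mytext{RCw}})$ need not be $\sqrt{n}$-consistent, because a first-order $\hat\gamma_{\mytext{RCAL}}-\bar\gamma$ term survives (hence the visible bias of $\hat\theta_{\mytext{RCw}}$ at $p=200$), so the response shift is not $O_p(n^{-1/2})$. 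This does not break your argument: the shift enters only through $\lambda\|\hat\zeta-\bar\zeta\|_1$ and squared remainders, which remain $o_p(n^{-1/2})$ under your sparsity condition.
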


\vspace{.05in}
Proposition \ref{pro:AIPW} is similar to
Proposition~\ref{pro:AIPW-CAL-lin} about $\hat S_{1k,\mytext{CAL,lin}}$ with linear OR model \eqref{eq:OR-model-lin}
and the discussion about $\hat\theta_{\mytext{CAL,lin}}$ with linear OR models \eqref{eq:OR-model-lin} and \eqref{eq:OR-model-W-lin}.
Nonparametrically, the asymptotic expansion \eqref{eq:RCAL-theta-expan} is valid, and
the variance estimator from \eqref{eq:RCAL-theta-Vr}
is consistent for the asymptotic variance of $\hat\theta_{\mytext{RCa}}$.
Moreover, $\hat\theta_{\mytext{RCa}}$ can be shown to be doubly robust for $\breve\theta_{\mytext{wBP}}$,
in being consistent (i.e., $\breve\theta_{\mytext{RCw}}=\breve\theta_{\mytext{wBP}}$)
if either PS and CCP models \eqref{eq:PS-model} and \eqref{eq:const-pi} or CSP and OR models \eqref{eq:const-m} and \eqref{eq:combine-OR-model}
are correctly specified, including both treatment-1 and treatment-0 versions.
Therefore, in either of the two cases,
an asymptotic $(1 - c)$-confidence interval for $\breve{\theta}_{\mytext{wBP}}$ is
$\hat{\theta}_{\mytext{RCa}} \pm z_{c/2}\sqrt{\hat V_{\mytext{r}} ( \hat\theta_{\mytext{RCa}})}$.

Compared with Propositions~\ref{pro:AIPW-CAL} and \ref{pro:CAL-theta},
PS model \eqref{eq:PS-model} is allowed to be misspecified in Proposition \ref{pro:AIPW}.
This difference is due to the choice that (nonlinear) logistic OR models \eqref{eq:OR-model} and \eqref{eq:OR-model-W} are used
in the augmentation terms for $\hat S_{1k,\mytext{CAL}}$ and $\hat\theta_{\mytext{CAL}}$,
whereas linear OR model \eqref{eq:combine-OR-model} is used for $\hat\theta_{\mytext{RCa}}$ here.
If OR model \eqref{eq:combine-OR-model} is replaced by a (nonlinear) generalized linear model
and regularized calibrated estimation of coefficient vector $\zeta_a$ is defined as in Tan (2020), then
similar properties as in Proposition \ref{pro:AIPW} can be obtained in general only when model \eqref{eq:PS-model} is correctly specified.
This would be more aligned with Propositions~\ref{pro:AIPW-CAL} and \ref{pro:CAL-theta}.

For completeness, we connect the regularized with non-regularized estimators. Consider the low-dimensional setting with Lasso tuning parameters set to 0,
so that $\hat\gamma_{a,\mytext{RCAL}} = \hat\gamma_{a,\mytext{CAL}}$ and $\hat{\pi}_{1i,\mytext{RCAL}}=\hat{\pi}_{1i,\mytext{CAL}}$.
Then the augmented estimator $\hat\theta_{\mytext{RCa}}$ reduces to the weighted-only estimator $\hat\theta_{\mytext{RCw}}$, because the augmentation terms
in \eqref{eq:AIPW} vanish directly by the definition of $(\hat\gamma_{1,\mytext{CAL}},\hat\gamma_{0,\mytext{CAL}})$, i.e.,
\begin{align*}
 0 = \frac{1}{n}\sum_{i=1}^n \left\{\left(\frac{A_i}{\hat{\pi}_{1i,\mytext{RCAL}}} - 1\right)\hat{g}_{1i, \mytext{RCAL}}
 - \left(\frac{1 - A_i}{1 - \hat{\pi}_{0i, \mytext{RCAL}}} - 1\right)\hat{g}_{0i, \mytext{RCAL}}\right\} .
\end{align*}
In estimating equation \eqref{eq:IPW-RCAL} for $\hat\theta_{\mytext{RCw}}$, note that
$\hat{S}_{ak, \mytext{RCw}} = \hat{S}_{ak, \mytext{wKM}} (\hat \gamma_{a,\mytext{CAL}}) = \hat{S}_{ak, \mytext{CAL,lin}} $ and
$\hat{W}_{ak, \mytext{RCw}}  = \hat{W}_{ak} (\hat \gamma_{a,\mytext{CAL}}) = \hat{W}_{ak, \mytext{CAL,lin}}$ from the discussion about $\hat\theta_{\mytext{CAL,lin}}$ in Section~\ref{sec:CAL-hazard}.
Then $\hat\theta_{\mytext{RCw}}$ coincides with $\hat\theta_{\mytext{CAL,lin}}$. From these observations, the point estimator
$\hat\theta_{\mytext{RCa}}$ reduces to $\hat\theta_{\mytext{CAL,lin}}$ in the low-dimensional setting.
Furthermore, it can be shown by direct calculation (details omitted) that
the variance estimator \eqref{eq:RCAL-theta-Vr} for $\hat\theta_{\mytext{RCa}}$
is also identical to the variance estimator \eqref{eq:CAL-theta-Vr} for $\hat\theta_{\mytext{CAL,lin}}$.

\section{Simulation study} \label{sec:simulation}
We conduct a simulation study to evaluate the performance of the following estimators for survival probabilities and
hazard-ratio parameter $\breve{\theta}_{\mytext{wBP}}$ in both low-dimensional and high-dimensional settings.
In the low-dimensional setting, where $p = 10$ and $n = 400$ or $1000$,  we compare $\hat{S}_{ak, \mytext{CAL}}$ and $\hat{S}_{ak, \mytext{CAL,lin}}$ against $\hat{S}_{ak, \mytext{KM}}$ and $\hat{S}_{ak, \mytext{wKM}}$ for survival probability estimation, and compare $\hat{\theta}_{\mytext{CAL}}$ and $\hat{\theta}_{\mytext{CAL,lin}}$
and, with Lasso penalty set to 0, $\hat{\theta}_{\mytext{RCw}}$ and $\hat{\theta}_{\mytext{RCa}}$ against $\hat{\theta}_{\mytext{BP}}$ and $\hat{\theta}_{\mytext{wBP}}$ for $\breve{\theta}_{\mytext{wBP}}$ estimation.
In the high-dimensional setting, where $p = 200$ and $n = 400$ or $1000$, we assess the performance of $\hat{S}_{ak, \mytext{RCAL}}$ and $\hat{S}_{ak, \mytext{RCAL,lin}}$ in comparison to $\hat{S}_{ak, \mytext{KM}}$ and $\hat{S}_{ak, \mytext{wKM}}$ for survival probability estimation, and compare $\hat{\theta}_{\mytext{RCw}}$ and $\hat{\theta}_{\mytext{RCa}}$ with $\hat{\theta}_{\mytext{BP}}$ and $\hat{\theta}_{\mytext{wBP}}$ for $\breve{\theta}_{\mytext{wBP}}$ estimation.
In this setting, $\hat{S}_{1k, \text{wKM}}$ and $\hat{\theta}_{\text{wBP}}$ are computed with $\hat{\gamma}$ obtained by
regularized maximum likelihood (RML).
Due to computational constraints, we do not compute $\hat{\theta}_{\mytext{RCAL}}$ or $\hat{\theta}_{\mytext{RCAL,lin}}$.
All estimates of coefficient vectors $\gamma, \alpha_{ak}, \beta_{ak}, \zeta_a$ are computed using the R package \texttt{RCAL} (Tan and Sun 2020).
The Lasso tuning parameters are selected via 5-fold cross-validation in the high-dimensional setting.

In the simulation, the treatment variable $A$ is generated from a Bernoulli distribution with probability 0.5.
The first 10 covariates of $X$ follow a truncated multivariate normal distribution in each treatment group: $X^{1:10} | A = 0 \sim N(\mu_0, \Sigma_0)$ and $X^{1:10} | A = 1 \sim N(\mu_1, \Sigma_1)$, where $\mu_0$ and $\mu_1$ are both $10\times 1$ vectors with all components being $-0.25/10$ and $0.25/10$ respectively. These covariates are truncated element-wise on the interval  $(-2.5 / \sqrt{10}, 2.5 / \sqrt{10})$.
The remaining covariates are independently generated from a standard normal distribution.
For the covariances matrices $\Sigma_0$ and $\Sigma_1$, we consider two cases: (C1) $\Sigma_0 = \Sigma_1 = (0.5^{|i - j|})_{i, j = 1, \ldots, 10}$; (C2) $\Sigma_0 =  4/3(0.5^{|i - j|})_{i, j = 1, \ldots, 10}$ and $\Sigma_1 = 2 / 3(0.5^{|i - j|})_{i, j = 1, \ldots, 10}$.
We show in Supplement Section \ref{sec:addtional-simulation} that PS model \eqref{eq:PS-model} with $f(X) = X$ is correctly specified under (C1) and misspecified under (C2).  Furthermore, we show that the limit value of $\hat{\gamma}_{1, \mytext{RCAL}}$ is nonzero only for the first 10 covariates.
Given $X$, a continuous event time $\tilde{U}$ is generated as Weibull with shape and scale parameters 2 and $\exp(0.5X_1 + \cdots + 0.5X_{10})$ or respectively 1 and $\exp(0.5X_1 + \cdots + 0.5X_{10})$ in treatment 0 and treatment 1.
A continuous censoring time $\tilde{C}$ is generated as 4 times Beta(2, 2) in treatment 0, and uniform(0, 4) in treatment 1.
The observed data $(Y, \Delta)$ are then obtained, where $\Delta = 1\{\tilde{U}\leq \tilde{C}\}$ and $Y$ is integer-valued, defined by discretizing $\tilde{Y} = \min(\tilde{U}, \tilde{C})$ in intervals of length .01 (Tan 2022).

Table \ref{tb:s} summarizes the estimates of $S_{1k}$ at $u_k = 60, 90, 120$, and Table \ref{tb:theta} summarizes the estimates of $\breve{\theta}_{\mytext{wBP}}$, based on 2000 repeated simulations with $n = 1000$ and $p = 10$ or $p = 200$ under (C1).
Results for the estimation of $S_{1k}$ and  $\breve{\theta}_{\mytext{wBP}}$ with $n = 400$ under (C1), as well as for $n = 400$ or $1000$ under (C2), and for the estimation of  $S_{0k}$, are presented in the Supplement, and similar conclusions can be drawn as discussed below.
A caveat under (C2) with misspecified model \eqref{eq:PS-model} is
that the bias of each estimator is defined against its limit value (or target value), not the true value (Supplement Tables \ref{tb:target-S} and \ref{tb:target-theta}).


From Table \ref{tb:s}, we have the following findings.
(1) $\hat{S}_{1k, \mytext{KM}}$ leads to considerable biases and under-coverage.
(2) Under low dimension $(p = 10)$,
$\hat{S}_{1k, \mytext{CAL}}$ and $\hat{S}_{1k, \mytext{CAL,lin}}$ methods perform similarly to each other
and slightly improve over $\hat{S}_{1k, \mytext{wKM}}$ method
in achieving shorter confidence intervals and coverage proportions closer to the nominal.
(3) Under high dimension $(p = 200)$,  all methods exhibit under-coverage, possibly due to insufficient sample size.
Nevertheless, $\hat{S}_{1k, \mytext{RCAL}}$ and $\hat{S}_{1k, \mytext{RCAL,lin}}$ noticeably improve over
$\hat{S}_{1k, \mytext{wKM}}$
in achieving smaller biases, shorter confidence intervals, and coverage proportions closer to the nominal. \vspace{-.1in}

\begin{table}[!htbp]
\caption{Summary for estimation of $S_{1k}$ at $u_k = 60, 90, 120$ with $n = 1000$ under (C1).} \vspace{-.15in} \label{tb:s}
\begin{center}
\small
\renewcommand{\arraystretch}{0.8}
\resizebox{0.9\textwidth}{!}{\begin{tabular}{lccccccccccccccccc}
\hline
& \multicolumn{5}{c}{$u_k = 60$} && \multicolumn{5}{c}{$u_k = 90$} && \multicolumn{5}{c}{$u_k = 120$}   \\
\cline{2-6}\cline{8-12}\cline{14-18}
& Bias & $\sqrt{\text{Var}}$ & $\sqrt{\text{EVar}}$ & Cov90(L90) & Cov95(L95) & & Bias & $\sqrt{\text{Var}}$ & $\sqrt{\text{EVar}}$ & Cov90(L90) & Cov95(L95) & & Bias & $\sqrt{\text{Var}}$ & $\sqrt{\text{EVar}}$ & Cov90(L90) & Cov95(L95) \\
\hline
& \multicolumn{17}{c}{$p = 10$} \\
$\hat{S}_{1k, \mytext{KM}}$ &   0.030  &  0.024  &  0.023  &  0.632 ( 0.076 )  &  0.732 ( 0.091 ) &  &   0.031  &  0.024  &  0.024  &  0.647 ( 0.078 )  &  0.753 ( 0.093 ) &  &   0.029  &  0.023  &  0.023  &  0.655 ( 0.077 )  &  0.773 ( 0.092 )\\
$\hat{S}_{1k, \mytext{wKM}}$ &   0.000  &  0.023  &  0.024  &  0.905 ( 0.078 )  &  0.959 ( 0.093 ) &  &   0.000  &  0.022  &  0.024  &  0.914 ( 0.077 )  &  0.956 ( 0.092 ) &  &  -0.001  &  0.022  &  0.023  &  0.915 ( 0.075 )  &  0.958 ( 0.089) \\
$\hat{S}_{1k, \mytext{CAL}}$ &   0.000  &  0.023  &  0.022  &  0.884 ( 0.074 )  &  0.944 ( 0.088 ) &  &   0.000  &  0.022  &  0.022  &  0.901 ( 0.073 )  &  0.946 ( 0.087 ) &  &  -0.001  &  0.021  &  0.022  &  0.903 ( 0.071 )  &  0.949 ( 0.085 )\\
$\hat{S}_{1k, \mytext{CAL,lin}}$ &   0.000  &  0.023  &  0.022  &  0.885 ( 0.074 )  &  0.947 ( 0.088 ) &  &   0.000  &  0.022  &  0.022  &  0.898 ( 0.073 )  &  0.944 ( 0.087 ) &  &  -0.001  &  0.022  &  0.022  &  0.904 ( 0.071 )  &  0.949 ( 0.085 )\\

& \multicolumn{17}{c}{$p = 200$} \\
$\hat{S}_{1k, \mytext{KM}}$ &   0.030  &  0.023  &  0.023  &  0.637 ( 0.076 )  &  0.739 ( 0.091 ) &  &  0.030  &  0.024  &  0.024  &  0.644 ( 0.078 )  &  0.744 ( 0.093 ) &  &  0.029  &  0.023  &  0.023  &  0.662 ( 0.077 )  &  0.774 ( 0.092 )\\
$\hat{S}_{1k, \mytext{wKM}}$ &   0.024  &  0.023  &  0.023  &  0.716 ( 0.077 )  &  0.824 ( 0.091 ) &  &  0.024  &  0.023  &  0.024  &  0.727 ( 0.078 )  &  0.824 ( 0.093 ) &  &  0.023  &  0.023  &  0.023  &  0.756 ( 0.077 )  &  0.844 ( 0.091 )\\
$\hat{S}_{1k, \mytext{RCAL}}$ &   0.011  &  0.023  &  0.022  &  0.846 ( 0.073 )  &  0.911 ( 0.086 ) &  &  0.009  &  0.023  &  0.022  &  0.860 ( 0.074 )  &  0.927 ( 0.088 ) &  &  0.008  &  0.022  &  0.022  &  0.871 ( 0.073 )  &  0.938 ( 0.086 )\\
$\hat{S}_{1k, \mytext{RCAL,lin}}$ &   0.011  &  0.023  &  0.022  &  0.847 ( 0.073 )  &  0.912 ( 0.086 ) &  &  0.009  &  0.023  &  0.022  &  0.863 ( 0.074 )  &  0.926 ( 0.088 ) &  &  0.008  &  0.022  &  0.022  &  0.871 ( 0.073 )  &  0.938 ( 0.087 )\\

\hline
\end{tabular}}
\end{center}
\setlength{\baselineskip}{0.2\baselineskip}
\vspace{-.15in}\noindent{\tiny
\textbf{Note}: Bias is the Monte Carlo bias of the point estimates against the true (data-generating) values. Var are the Monte Carlo variance of the point estimates. EVar is the mean of the variance estimates. Cov90(L90) and Cov95(L95) denote coverage proportion and average length of the 90\% and 95\% confidence intervals.}
\vspace{-.3in}
\end{table}

\begin{table}[!htbp]
  \caption{Summary for estimation of $\breve{\theta}_{\mytext{wBP}}$ with $n = 1000$ under (C1).}
  \vspace{-.15in}
  \label{tb:theta}
  \begin{center}
  \small
  \renewcommand{\arraystretch}{0.8}
  \resizebox{0.9\textwidth}{!}{
  \begin{tabular}{lccccccccccc}
  \hline
  & \multicolumn{6}{c}{$p = 10$} &&  \multicolumn{4}{c}{$p = 200$} \\
  \cline{2-7}\cline{9-12}
  & $\hat{\theta}_{\mytext{BP}}$
  & $\hat{\theta}_{\mytext{wBP}}$
  & $\hat{\theta}_{\mytext{CAL}}$
  & $\hat{\theta}_{\mytext{CAL,lin}}$
  & $\hat{\theta}_{\mytext{RCw}}$
  & $\hat{\theta}_{\mytext{RCa}}$ & 
  & $\hat{\theta}_{\mytext{BP}}$
  & $\hat{\theta}_{\mytext{wBP}}$
  & $\hat{\theta}_{\mytext{RCw}}$
  & $\hat{\theta}_{\mytext{RCa}}$ \\
  \hline
  Bias
  & -0.191 & -0.001 &  0.000 & -0.001 & -0.001 & -0.001 & 
  & -0.189 & -0.152 & -0.158 & -0.042 \\
  $\sqrt{\text{Var}}$
  & 0.074 & 0.060 & 0.060 & 0.060 & 0.060 & 0.060 & 
  & 0.073 & 0.069 & 0.067 & 0.061 \\
  $\sqrt{\text{EVar}}$
  & 0.074 & 0.075 & 0.059 & 0.060 & 0.075 & 0.060 & 
  & 0.074 & 0.074 & 0.074 & 0.059 \\
  Cov90(L90)
  & 0.172 (0.242) & 0.961 (0.246) & 0.895 (0.195) & 0.898 (0.197) & 0.957 (0.246) & 0.898 (0.197) & 
  & 0.166 (0.242) & 0.324 (0.243) & 0.281 (0.243) & 0.814 (0.195) \\
  Cov95(L95)
  & 0.258 (0.288) & 0.985 (0.294) & 0.945 (0.232) & 0.949 (0.235) & 0.985 (0.294) & 0.949 (0.235) & 
  & 0.263 (0.288) & 0.458 (0.289) & 0.418 (0.289) & 0.886 (0.233) \\
  \hline
  \end{tabular}}
  \end{center}
\setlength{\baselineskip}{0.2\baselineskip}
\vspace{-.15in}\noindent{\tiny
\textbf{Note}: Bias is the Monte Carlo bias relative to the true value of $\breve{\theta}_{\text{wBP}}$ in \eqref{eq:weighted-BP-pop}. Var is the Monte Carlo variance, EVar is the mean estimated variance. Cov90(L90) and Cov95(L95) denote coverage proportion and average length of the 90\% and 95\% confidence intervals.}
  \vspace{-.15in}
  \end{table}

From Table \ref{tb:theta}, we have the following findings.
(1) $\hat{\theta}_{\mytext{BP}}$ leads to considerable biases and under-coverage.
(2) Under low dimension,
$\hat{\theta}_{\mytext{CAL}}$, $\hat{\theta}_{\mytext{CAL,lin}}$ and $\hat{\theta}_{\mytext{RCa}}$ methods
perform noticeably better than $\hat{\theta}_{\mytext{wBP}}$ and $\hat{\theta}_{\mytext{RCw}}$ 
in producing shorter confidence intervals, variance estimates closer to the Monte Carlo variances, and coverage proportions closer to the nominal.
Note that $\hat{\theta}_{\mytext{CAL,lin}}$ and $\hat{\theta}_{\mytext{RCa}}$ give identical results (including point and variance estimates)
as explained in Section \ref{sec:high-dimension}.
(3) Under high dimension, all methods exhibit under-coverage, possibly due to insufficient sample size.
The method $\hat{\theta}_{\mytext{RCa}}$ performs the best, with smallest bias and, on average,
coverage proportions closest to the target ones and shortest confidence intervals. \vspace{-.1in}

\section{Empirical application}\label{sec:empirical}
We illustrate the proposed methods by studying the effectiveness of adjunctive psychotropic treatments for patients with schizophrenia, using a dataset derived from the U.S. national Medicaid Analytic eXtract files covering January 1, 2001 to December 31, 2010. This dataset was previously analyzed by Stroup et al. (2019). The original cohort includes four treatment groups corresponding to initiating treatment with an antidepressant, a benzodiazepine, a mood stabilizer, or another antipsychotic.
To focus on binary treatment comparisons, we define initiating treatment with an antidepressant as treatment 0 and initiating treatment with a benzodiazepine as treatment 1.
We consider three main outcomes analyzed in Stroup et al. (2019), namely
hospitalization for a mental disorder (primary), emergency department (ED) visits for a mental disorder, and all-cause mortality. The total sample size is 43,058, and the covariate vector X includes 101 baseline variables after converting categorical variables to dummy variables.

We consider two settings for PS and OR model specifications: a low-dimensional setting and a high-dimensional setting. In the low-dimensional setting, working models include main effects from $X$ only. In the high-dimensional setting,  working models
include  additionally 2275 two-way interactions from $X$ excluding those either with nonzero values less than 344 (i.e., 0.8\% of the sample size 43,058)
or perfectly collinear with other terms.

In the low-dimensional setting, we apply $\hat{S}_{ak, \mytext{KM}}$, $\hat{S}_{ak, \mytext{wKM}}$ and $\hat{S}_{ak, \mytext{CAL,lin}}$ to estimate survival probabilities. For estimating $\breve{\theta}_{\mytext{wBP}}$, we apply $\hat{\theta}_{\mytext{BP}}$, $\hat{\theta}_{\mytext{wBP}}$ and $\hat{\theta}_{\mytext{CAL,lin}}$.
We do not apply calibrated estimators based on logistic OR models \eqref{eq:OR-model} and \eqref{eq:OR-model-W} (i.e., $\hat{S}_{ak, \mytext{CAL}}$ and  $\hat{\theta}_{\mytext{CAL}}$) because the corresponding performance tends to be similar to that of the estimators based on linear OR models. In addition, we do not apply $\hat{\theta}_{\mytext{RCa}}$ because it coincides with $\hat{\theta}_{\mytext{CAL,lin}}$ in this setting, and we exclude $\hat{\theta}_{\mytext{RCw}}$ due to its inferior performance observed in the simulation study.

In the high-dimensional setting, we apply  $\hat{S}_{ak, \mytext{wKM}}$ and $\hat{S}_{ak, \mytext{RCAL,lin}}$ to estimate survival probabilities. For estimating  $\breve{\theta}_{\mytext{wBP}}$, we apply $\hat{\theta}_{\mytext{wBP}}$ and $\hat{\theta}_{\mytext{RCa}}$.
As in the simulation study, $\hat{S}_{1k, \text{wKM}}$ and $\hat{\theta}_{\text{wBP}}$ are computed with $\hat{\gamma}$ obtained by RML.
For similar reasons as in the low-dimensional setting, we do not apply $\hat{S}_{ak, \mytext{RCAL}}$ or $\hat{\theta}_{\mytext{RCw}}$. Due to computational constraints as in the simulation study, we exclude $\hat{\theta}_{\mytext{RCAL}}$ and $\hat{\theta}_{\mytext{RCAL,lin}}$ in the high-dimensional setting.

Survival probabilities are evaluated at $\{30\times k: k =1, 2, \ldots, 12\}$ for Hospitalization and ED visits outcomes. For mortality outcome, exact multiples of 30 do not correspond to observed times. Therefore, we estimate survival probabilities at the nearest available times: 30, 64, 91, 121, 151, 180, 210, 242, 274, 300, 330, and 361 days. 
We summarize the results from two aspects: comparison between low dimension and high dimension, and comparison between existing weighted methods and calibrated methods. A discussion on the effects of covariate adjustment is provided in Supplement Section~\ref{sec:addtional-empirical}, and the key finding from there is as follows.
For the Hospitalization and Mortality outcomes, weighting primarily affects survival probability estimation associated with treatment 1, leading to substantial differences between $\hat{\theta}_{\mytext{wBP}}$ and $\hat{\theta}_{\mytext{BP}}$, whereas for the ED visits outcome, weighting affects survival probability estimation for both treatments, and the effects partially cancel out for estimation of the hazard-ratio parameter. 

\begin{table}[!htbp]
\caption{Summary for estimation of  $\breve{\theta}_{\mytext{BP}}$} \label{tb:empirical-theta} \vspace{-.15in}
\begin{center}
\small
\renewcommand{\arraystretch}{0.7}
\resizebox{0.8\textwidth}{!}{\begin{tabular}{lccccccccccccc}
\hline
  \multirow{3}{*}{Outcome} &  \multicolumn{6}{c}{Point Est} & &  \multicolumn{6}{c}{SE} \\
   \cline{2-7}\cline{9-14}
  & \multicolumn{3}{c}{Low-Dim} & & \multicolumn{2}{c}{High-Dim} & & \multicolumn{3}{c}{Low} & &  \multicolumn{2}{c}{High} \\
  \cline{2-4}\cline{6-7}\cline{9-11}\cline{13-14}
  & $\hat{\theta}_{\mytext{BP}}$ & $\hat{\theta}_{\mytext{wBP}}$ & $\hat{\theta}_{\mytext{CAL,lin}}$ & & $\hat{\theta}_{\mytext{wBP}}$ & $\hat{\theta}_{\mytext{RCa}}$ &
  & $\hat{\theta}_{\mytext{BP}}$ & $\hat{\theta}_{\mytext{wBP}}$ & $\hat{\theta}_{\mytext{CAL,lin}}$ & & $\hat{\theta}_{\mytext{wBP}}$ & $\hat{\theta}_{\mytext{RCa}}$  \\
  \hline
Hospitalization & 0.194 & 0.262 & 0.238& & 0.254 & 0.218 & & 0.025  & 0.030 & 0.027 & & 0.029 & 0.026 \\
ED Visits & 0.192 & 0.219 & 0.192 & & 0.218 & 0.190 & & 0.022 & 0.027 & 0.025  & & 0.026 & 0.023 \\
Mortality & 0.495 & 0.195 & 0.208 & & 0.253 & 0.322 & & 0.089 & 0.109 & 0.101  & & 0.102  & 0.095 \\
\hline
\end{tabular}}
\end{center}
\setlength{\baselineskip}{0.2\baselineskip}
\vspace{-.15in}\noindent{\tiny
\textbf{Note}: Point Est is point estimate, and SE is standard error. Low-Dim means that only main-effects are included in working models, and High-Dim means that both main-effects and interactions are included in working models.
}
\vspace{-.15in}
\end{table}

\begin{figure}[!htbp]
\centering
\includegraphics[scale=0.52]{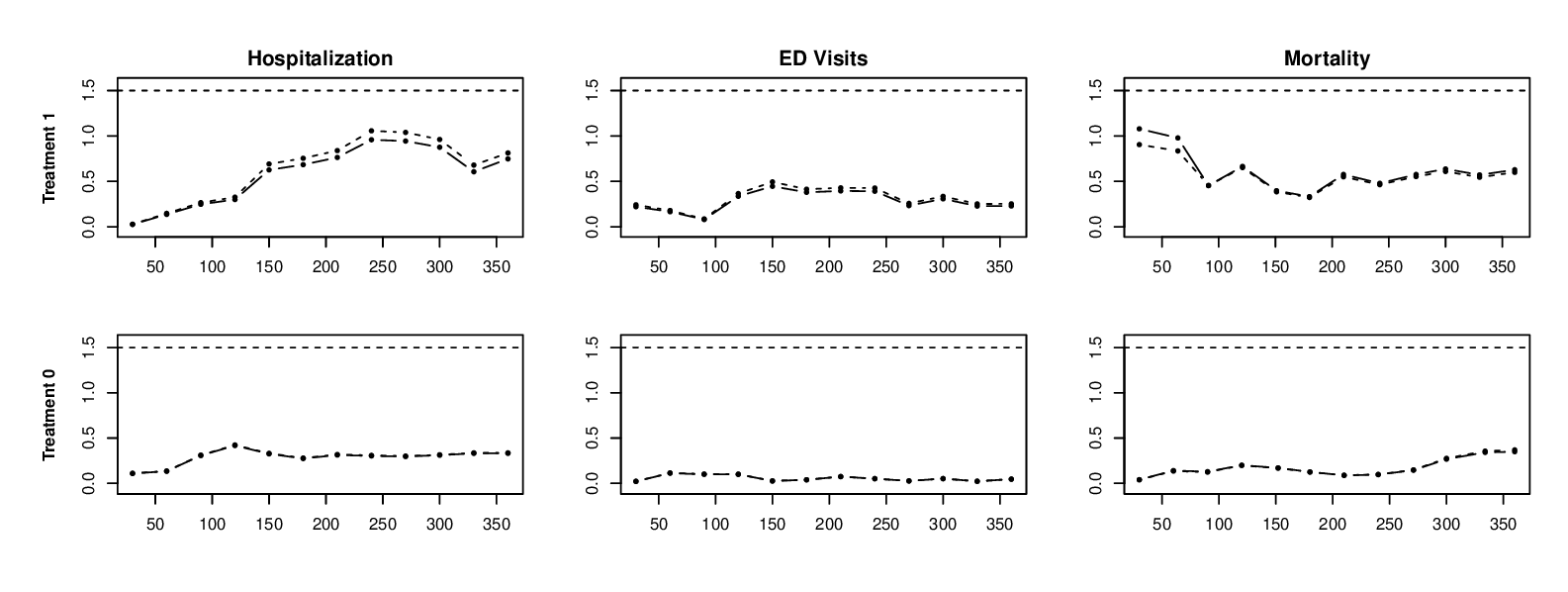} \vspace{-.48in}
\caption{Standardized absolute difference between $\hat{S}_{ak, \mytext{CAL,lin}}$ and $\hat{S}_{ak, \mytext{RCAL,lin}}$. Solid line is standardized by $\text{SE}(\hat{S}_{ak, \mytext{CAL,lin}})$ and dashed line is standardized by $\text{SE}(\hat{S}_{ak, \mytext{RCAL,lin}})$. A horizontal line is placed at 1.5.} \label{fig:calsvprob} \vspace{-.1in}
\end{figure}

\textbf{Comparison between low dimension and high dimension.}\; Figure \ref{fig:calsvprob} presents the standardized absolute difference between $\hat{S}_{ak, \mytext{CAL,lin}}$ and  $\hat{S}_{ak, \mytext{RCAL,lin}}$. Across outcomes and treatment groups, the two estimators yield similar point estimates
when evaluated against SEs.
A consistent pattern is observed for the estimation of $\breve{\theta}_{\mytext{wBP}}$ from Table \ref{tb:empirical-theta}, where $\hat{\theta}_{\mytext{CAL,lin}}$ and  $\hat{\theta}_{\mytext{RCa}}$ also produce comparable point estimates relative to SEs.
These findings suggest that a PS model including only main effects may be ``nearly correct'', since otherwise the point estimates would likely differ. This conclusion is further supported by the covariate balance results in Supplement Table \ref{tb:covbal}, which shows that extending the PS model to include interaction terms does not materially improve the balance of the interactions, indicating that the interaction terms are already approximately balanced once main effects are controlled for.

In spite of similar point estimates, the high-dimensional methods exhibit notable efficiency gains over low-dimensional ones.
Except at time points 30 and 64 for treatment 1 in the Mortality outcome,
$\hat{S}_{ak, \mytext{RCAL,lin}}$ generally has SEs no larger than those of $\hat{S}_{ak, \mytext{CAL,lin}}$.
\footnote{\label{ft:empirical-result}
The exceptions can be explained by the fact that
the SEs are extremely small at time points 30 and 64. Specifically, as reported in Supplement Table \ref{tb:mort-tr1},
for treatment 1 in the Mortality outcome, the SEs of $\hat{S}_{ak, \mytext{RAIPW}}$ and $\hat{S}_{ak, \mytext{CAL,lin}}$ are $4\times 10^{-4}$ and $3\times 10^{-4}$ respectively at time 30, and $5\times 10^{-4}$ and $4\times 10^{-4}$ at time 64. These slight absolute differences
lead to the reversal in the standardized comparison.}
This can be seen from Figure \ref{fig:calsvprob}:
the dashed lines, where the absolute difference is scaled by $\text{SE}(\hat{S}_{ak, \mytext{RCAL,lin}})$, are consistently no lower than the solid lines, where the absolute difference is scaled by $\text{SE}(\hat{S}_{ak, \mytext{CAL,lin}})$.
From Table \ref{tb:empirical-theta}, $\hat{\theta}_{\mytext{RCa}}$ also consistently has smaller SEs than $\hat{\theta}_{\mytext{CAL,lin}}$.
For example, for the ED visits outcome, the relative efficiency of $\hat{\theta}_{\mytext{RCa}}$ over $\hat{\theta}_{\mytext{CAL,lin}}$
is $(.025/.023)^2 = 1.18$,
The efficiency gain is due to regularization estimation using more flexible working models, as observed in an empirical application
with only administrative censoring in Tan (2020).

Finally, similar conclusions are obtained when comparing between low- and high-dimensional applications of $\hat{S}_{ak, \mytext{wKM}}$
for survival probability estimation from Supplement Figure \ref{fig:mlesvprob}, and when comparing those of
$\hat{\theta}_{\mytext{wBP}}$ for the estimation of $\breve{\theta}_{\mytext{wBP}}$ from Table \ref{tb:empirical-theta}.

\textbf{Comparison between existing weighted methods and calibrated methods.}\;  Figure \ref{fig:mlecalsvprob-low} presents the standardized absolute difference between $\hat{S}_{ak, \mytext{wKM}}$ and $\hat{S}_{ak, \mytext{CAL,lin}}$ in the low dimensional setting. For treatment 0 across all three outcomes, the two estimators yield very similar point estimates relative to SEs. Although the differences for treatment 1 are relatively larger, they remain within two SEs.
A consistent pattern is observed for the estimation of  $\breve{\theta}_{\mytext{wBP}}$ from Table \ref{tb:empirical-theta}, where $\hat{\theta}_{\mytext{wBP}}$ and $\hat{\theta}_{\mytext{CAL,lin}}$ produce similar point estimates.

Beyond similar point estimates, $\hat{S}_{ak, \mytext{CAL,lin}}$ generally has SEs no larger than those of $\hat{S}_{ak, \mytext{wKM}}$.
This can be seen from Figure \ref{fig:mlecalsvprob-low}: the dashed lines, where
the absolute differences are scaled by $\text{SE}(\hat{S}_{ak, \mytext{CAL,lin}})$, are consistently no lower than the solid lines, where the differences are scaled by $\text{SE}(\hat{S}_{ak, \mytext{wKM}})$.
Analogously, for the estimation of $\breve{\theta}_{\mytext{wBP}}$ in Table \ref{tb:empirical-theta},
$\hat{\theta}_{\mytext{CAL,lin}}$ has smaller SEs than $\hat{\theta}_{\mytext{wBP}}$.
These efficiency gains can be attributed to two factors.  First, inverse probability weights constructed via calibrated estimation tend to be less variable than those obtained from ML estimation (Tan 2020). Second,
augmented IPW estimation via OR model \eqref{eq:OR-model-lin} in $\hat{S}_{ak, \mytext{CAL,lin}}$ often achieves smaller variance than
that via the implied, constant OR model in $\hat{S}_{ak, \mytext{wKM}}$ (footnote \ref{ft:advtange-cal-over-wkm}).

Finally, similar conclusions hold in the high dimensional setting. As shown in Supplement Figure \ref{fig:mlecalsvprob-high}, $\hat{S}_{ak, \mytext{wKM}}$ and $\hat{S}_{ak, \mytext{RCAL,lin}}$ yield comparable survival probability estimates relative to SEs. From Table \ref{tb:empirical-theta},
$\hat{\theta}_{\mytext{wBP}}$ and $\hat{\theta}_{\mytext{RCa}}$ produce similar point estimates for $\breve{\theta}_{\mytext{wBP}}$ relative to SEs,
but the calibrated estimator $\hat{\theta}_{\mytext{RCa}}$ achieve smaller SEs than $\hat{\theta}_{\mytext{wBP}}$, for example, by a factor of $(.026 / .023)^2 = 1.28$ for the ED visits outcome.

\begin{figure}[!htbp]
\centering
\includegraphics[scale=0.52]{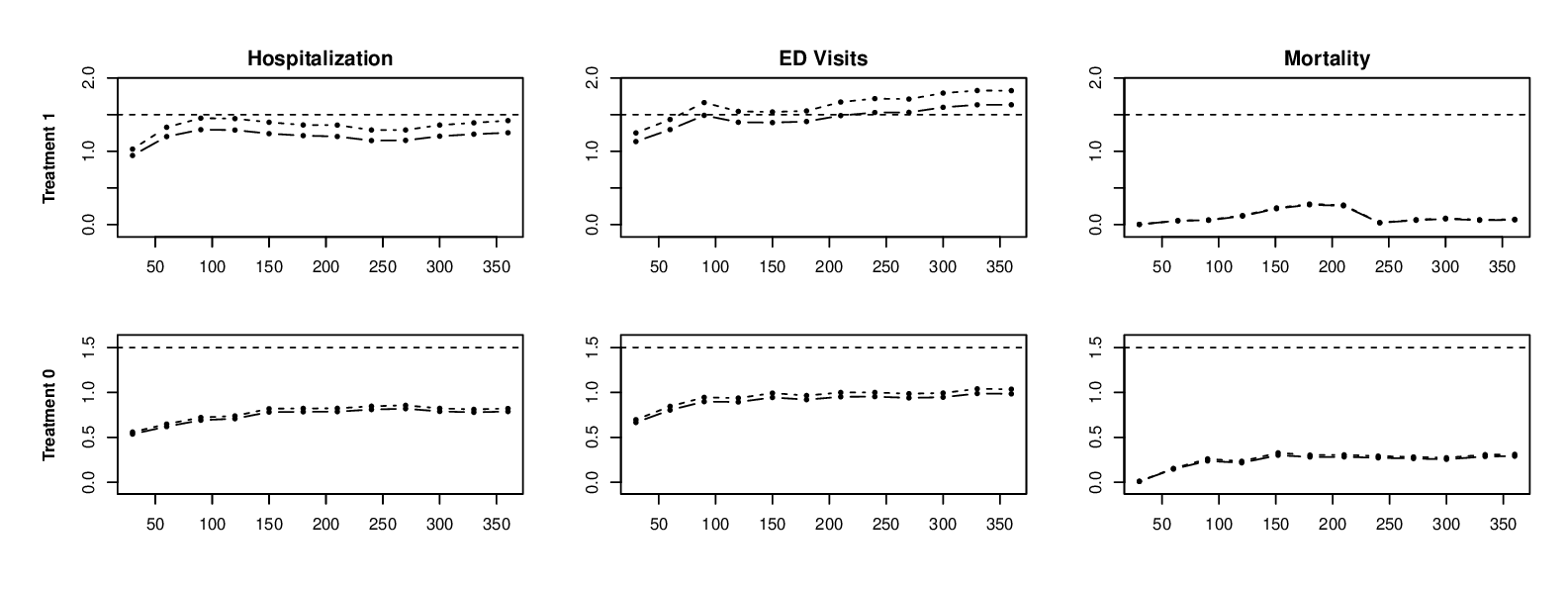} \vspace{-.48in}
\caption{Standardized absolute difference between $\hat{S}_{ak, \mytext{wKM}} $ and $\hat{S}_{ak, \mytext{CAL,lin}}$. Solid line is standardized by $\text{SE}(\hat{S}_{ak, \mytext{wKM}})$ and dashed line is standardized by $\text{SE}(\hat{S}_{ak, \mytext{CAL,lin})}$. A horizontal line is placed at 1.5.} \label{fig:mlecalsvprob-low} \vspace{-.1in}
\end{figure}

\textbf{Summary.}\;  There are three main findings from the empirical results. 
First, as discussed in Supplement Section~\ref{sec:addtional-empirical} and summarized earlier, covariate adjustment has a meaningful impact on the estimation of survival probabilities and hazard-ratio parameter $\breve{\theta}_{\mytext{wBP}}$, but this impact is outcome and treatment specific.
Second, expanding from low-dimensional to high-dimensional working models does not materially change point estimates, suggesting that a main-effect PS model is adequate in this application, a conclusion further supported by covariate balance diagnostics. At the same time, regularized estimation consistently achieves efficiency gains, yielding smaller SEs. Third, existing methods and calibrated methods produce similar point estimates across outcomes and low- and high-dimensional settings, but calibrated methods uniformly exhibit improved efficiency, both for survival probability estimation and $\breve{\theta}_{\mytext{wBP}}$ estimation.

\vspace{.3in}
\centerline{\bf\Large References}
\begin{description}\addtolength{\itemsep}{-.15in}

\item Bai, X., Tsiatis, A., and O'Brien, S. (2013) Doubly-robust Estimators of Treatment-specific Survival Distributions in Observational Studies with Stratified Sampling, {\it Biometrics}, 69, 830–839.

\item Breslow, N. (1974) Covariance analysis of censored survival data, {\it Biometrics}, 30, 89–100.

\item Chan, K., Yam, S., and Zhang, Z. (2016). Globally efficient non-parametric inference of average treatment effects by empirical balancing calibration weighting, {\it Journal of the Royal Statistical Society Series B}, 78, 673–700.

\item Cole, S. and Hern\'{a}n, M. (2004) Adjusted survival curves with inverse probability weights, {\it Computer Methods and Programs in Biomedicine}, 75, 45–49.

\item Cox, D. (1972) Regression models and life tables (with discussion), {\it Journal of the Royal Statistical Society Series B}, 34, 187–220.

\item Ghosh, S. and Tan, Z. (2022) Doubly robust semiparametric inference using regularized calibrated estimation with high-dimensional data, {\it Bernoulli}, 28, 1675–1703.

\item Hern\'{a}n, M. and Robins, J. (2020) {\it Causal Inference: What If}, Boca Raton: Chapman \& Hall/CRC.

\item Hubbard, A., van der L., and Robins, J. (2000) Nonparametric locally efficient estimation of the treatment specific survival distributions with right censored data and covariates in observational studies,  In {\it Statistical Models in Epidemiology, the Environment, and Clinical Trials} (pp. 135–177), New York: Springer.

\item Joffe, R., MacQueen, G., Marriott, M., and Young, L. (2004) A prospective, longitudinal study of percentage of time spent ill in patients with bipolar I or bipolar II disorders, {\it International Journal of Bipolar Disorders}, 6, 62–66.

\item Kaplan, E. and Meier, P. (1958) Nonparametric estimation from incomplete observations, {\it Journal of the American Statistical Association}, 53, 457–481.

\item Peto, R. (1972) Contribution to the discussion of Cox (1972): regression models and life tables, {\it Journal of the Royal Statistical Society Series B},  34, 205–207.

\item Robins, J. and Rotnitzky, A. (1992) Recovery information and adjustment for dependent censoring using surrogate markers, In {\it AIDS epidemiology: methodological issues} (, 297–331), Boston: Springer.

\item Rosenbaum, P. and Rubin, D. (1983) The Central Role of the Propensity Score in Observational Studies for Causal Effects, {\it Biometrika}, 70, 41–55.

\item Stroup, T., Gerhard, T., Crystal, S., Huang, C., and Tan, Z. (2019) Comparative Effectiveness of Adjunctive Psychotropic Medications in Patients With Schizophrenia, {\it JAMA psychiatry}, 76, 508–515.

\item Tan, Z. (2006) A Distributional Approach for Causal Inference Using Propensity Scores, {\it Journal of the American Statistical Association}, 101, 1607–1618.

\item Tan, Z. (2020) Model-assisted inference for treatment effects using regularized calibrated estimation with high-dimensional data, {\it Annals of Statistics}, 48, 811–837.

\item Tan, Z. and Sun, B. (2020) {\it RCAL: Regularized calibrated estimation}, R package version 2.0.

\item Tan, Z. (2022) Analysis of odds, probability, and hazard ratios: From 2 by 2 tables to two-sample survival data, {\it Journal of Statistical Planning and Inference}, 221, 248–265.

\item Tan, Z. (2023) Consistent and robust inference in hazard probability and odds models with discrete-time survival data, {\it Lifetime Data Analysis}, 29, 555–584.

\item Therneau, T. (2020) {\it A package for survival analysis in R}, R package version 3.2-7.

\item Therneau, T. and Grambsch, P. (2000) {\it Modeling survival data: extending the Cox model}, New York: Springer.

\item Tsiatis, A. (2006) {\it Semiparametric Theory and Missing Data}, New York: Springer.

\item Vermeulen, K. and Vansteelandt, S. (2015) Bias-reduced doubly robust estimation, {\it Journal of the American Statistical Association}, 110, 1024–1036.

\item Xie, J. and Liu, C. (2005) Adjusted Kaplan–Meier estimator and log-rank test with inverse probability of treatment weighting for survival data, {\it Statistics in Medicine}, 24, 3089–3110.

\item Yuan, M. and Lin, Y. (2006) Model selection and estimation in regression with grouped variables, {\it Journal of the Royal Statistical Society Series B}, 68, 49–67.
\end{description}

\clearpage

\setcounter{page}{1}

\setcounter{section}{0}
\setcounter{equation}{0}

\setcounter{table}{0}
\setcounter{figure}{0}

\setcounter{pro}{0}
\renewcommand{\thepro}{S\arabic{pro}}

\setcounter{lem}{0}
\renewcommand{\thelem}{S\arabic{lem}}

\setcounter{thm}{0}
\renewcommand{\thethm}{S\arabic{thm}}

\setcounter{ass}{0}
\renewcommand{\theass}{S\arabic{ass}}

\renewcommand{\thesection}{\Roman{section}}
\renewcommand{\theequation}{S\arabic{equation}}

\renewcommand\thetable{S\arabic{table}}
\renewcommand\thefigure{S\arabic{figure}}

\begin{center}
{\Large Supplementary Material for}\\
{\Large ``Re-examining and calibrating weighted survival analysis for causal inference''}

\vspace{.1in} 
Wenfu Xu, Yi Zhang, Tobias Gerhard, Zhiqiang Tan

\end{center}

\section{Augmented IPW estimator for $k = 2$} \label{sec:aipw-k2}
We provide the explicit expressions for the augmented IPW estimator in the special case of $k=2$, as mentioned in Section~\ref{sec:AIPW} of the main paper.
The augmented IPW estimator \eqref{eq:AIPW} with the expression \eqref{eq:phi-ps} becomes
\begin{align}
\hat S_{12}  (\hat\gamma, \hat\rho_{1,1:2}, \hat\eta_{1,1:2})
=  \frac{1}{n} \sum_{i=1}^n & \Bigg\{ \frac{A_i }{\hat\pi_i } \frac{R_{1i} R_{2i} }{\hat\pi_{11i} \hat\pi_{12i} } 1\{ U^{(1)}_i > u_2\}
 - \frac{A_i }{\hat\pi_i } \frac{R_{1 i} }{\hat\pi_{11 i} }  \left( \frac{R_{2 i} }{\hat\pi_{12 i} }-1 \right)\hat m_{12i}  1\{ U^{(1)}_i > u_1 \} \nonumber \\
 & \quad - \frac{A_i }{\hat\pi_i } \left( \frac{R_{1 i} }{\hat\pi_{11 i} } -1 \right)\hat m_{12i}\hat m_{11i}
 - \left( \frac{A_i }{\hat\pi_i } -1 \right)\hat m_{12i}\hat m_{11i} \Bigg\} . \label{eq:AIPW-2period-a}
\end{align}
The augmented IPW estimator with the expression \eqref{eq:phi-or} becomes
\begin{align}
\hat S_{12}  (\hat\gamma, \hat\rho_{1,1:2}, \hat\eta_{1,1:2})
=  \frac{1}{n} \sum_{i=1}^n & \Bigg\{ \frac{A_i }{\hat\pi_i } \frac{R_{1i} R_{2i} }{\hat\pi_{11i} \hat\pi_{12i} } 1\{ U^{(1)}_i > u_1 \}  (1\{ U^{(1)}_i > u_2\} - \hat m_{12i}) \nonumber \\
 & \quad + \frac{A_i }{\hat\pi_i } \frac{R_{1 i} }{\hat\pi_{11 i} }\hat m_{12i} ( 1\{ U^{(1)}_i > u_1 \} -\hat m_{11i}) + \hat m_{12i}\hat m_{11i} \Bigg\}. \label{eq:AIPW-2period-b}
\end{align}
The two expressions can be easily verified to be equivalent to each other.

\section{Comparison of weighted and unweighted KM estimation}  \label{sec:comparison-KM}

We compare weighted and unweighted KM estimation in their consistency properties.
Formally, the unweighted KM estimator of $S_{1k}$ (defined by setting $\hat w_{1i}\equiv 1$) is doubly robust, but with respect to the following sets of models:
either $\pi^*(X)$ is constant in $X$ (hence $\hat\pi_i$ is constant in $i$) and model \eqref{eq:const-pi} holds, or model \eqref{eq:const-m} holds,
in addition to the unadjusted version of Assumptions~\eqref{eq:no-confounding}--\eqref{eq:non-informative}:
\begin{align}
&  A \perp U^{(1)}  \label{eq:no-confounding-b} \\
& U^{(1)} \perp C^{(1)} | A=1 , \label{eq:non-informative-b}
\end{align}
We make two simple but important observations (see Supplement Section \ref{sec:technical-detail} for proofs). \vspace{-.05in}
\begin{itemize}\addtolength{\itemsep}{-.1in}
\item[(i)] If CCP model \eqref{eq:const-pi} is correctly specified (i.e., $\pi^*_{1j}(X) = \rho^*_{1j}$ for some constant $\rho^*_{1j}$,
$j=1,\ldots,k$), then Assumption \eqref{eq:non-informative} implies the unadjusted version \eqref{eq:non-informative-b},
provided that the behavior of $C^{(1)}$ is restricted up to time $u_k$
(i.e., $C^{(1)}$ is replaced by $1\{C^{(1)} \ge u_j\}$ for any $j=1,\ldots,k$).

\item[(ii)] If CSP model \eqref{eq:const-m} is correctly specified (i.e., $m^*_{1j}(X) = \eta^*_{1j}$ for some constant $\eta^*_{1j}$,
$j=1,\ldots,k$), then Assumption \eqref{eq:non-informative} implies the unadjusted version \eqref{eq:non-informative-b},
and Assumption~\eqref{eq:no-confounding} implies the unadjusted version \eqref{eq:no-confounding-b},
provided that the behavior of $U^{(1)}$ is restricted up to time $u_{k+1}$
(i.e., $U^{(1)}$ is replaced by $1\{U^{(1)} > u_j\}$ for any $j=1,\ldots,k$).
\end{itemize} \vspace{-.05in}
From these observations, we obtain the following comparison provided that Assumptions~\eqref{eq:no-confounding}--\eqref{eq:non-informative} hold.
First, if CSP model \eqref{eq:const-m} is correctly specified, then both the weighted and unweighted KM estimators
are consistent for $S_{1k}$.
However, if PS model $\pi(X;\gamma)$ and CCP model \eqref{eq:const-pi} are correctly specified but CSP model \eqref{eq:const-m} may be misspecified,
then the weighted KM estimator is consistent while the unweighted KM estimator is in general inconsistent.
Hence compared with the unweighted KM estimator, the weighted KM estimator is advantageous
when propensity score model $\pi(X;\gamma)$ and CCP model \eqref{eq:const-pi} are correctly specified
but CSP model \eqref{eq:const-m} may be misspecified.

\section{Calibrated estimation with augmented IPW estimation}  \label{sec:original-CAL}

We discuss application of calibrated estimation to the original augmented IPW estimator
$\hat S_{1k} (\hat\gamma, \hat\rho_{1,1:k}, \allowbreak \hat\eta_{1,1:k})$, and compare with calibrated estimation to the new augmented IPW estimator
$\hat S_{1k} (\hat\gamma, \hat\rho_{1,1:k}, \allowbreak \hat\eta_{1,1:k}, \hat\alpha_{1k})$.

When applied to the original augmented IPW estimator $\hat S_{1k} (\hat\gamma, \hat\rho_{1,1:k}, \hat\eta_{1,1:k})$, the calibration equations (Tan 2020; Ghosh \& Tan 2022) are
\begin{align}
& E \left\{\frac{\partial }{\partial\gamma} \hat S_{1k} (\gamma, \rho_{1,1:k}, \eta_{1,1:k }) \right\}
 = 0, \label{eq:original-CAL-a} \\
& E \left\{\frac{\partial }{\partial \rho_{1,1:k}} \hat S_{1k} (\gamma, \rho_{1,1:k}, \eta_{1,1:k }) \right\}
 = 0, \label{eq:original-CAL-b} \\
& E \left\{\frac{\partial }{\partial \eta_{1,1:k}} \hat S_{1k} (\gamma, \rho_{1,1:k}, \eta_{1,1:k }) \right\}
 = 0. \label{eq:original-CAL-c}
\end{align}
For simplicity, consider the case of $k=2$ and $\hat S_{12} (\hat\gamma, \hat\rho_{1,1:2}, \hat\eta_{1,1:2})$ in \eqref{eq:AIPW-2period-a} and \eqref{eq:AIPW-2period-b}.
For logistic PS model \eqref{eq:PS-model} and CCP and CSP models \eqref{eq:const-pi} and \eqref{eq:const-m}, equation \eqref{eq:original-CAL-a} can be calculated as
\begin{align*}
0 = - E \Bigg[ \frac{1}{n} \sum_{i=1}^n & \Bigg\{ A_i \frac{1-\pi(X_i;\gamma) }{\pi(X_i;\gamma) } \frac{R_{1i} R_{2i} }{\rho_{11}\rho_{12} } 1\{ U^{(1)}_i > u_1 \}  (1\{ U^{(1)}_i > u_2\} - \eta_{12}) f(X_i) \\
 & \quad + A_i \frac{1-\pi(X_i;\gamma) }{\pi(X_i;\gamma) }
  \frac{R_{1 i} }{\rho_{11} }\eta_{12} ( 1\{ U^{(1)}_i > u_1 \} -\eta_{11}) f(X_i) \Bigg\} \Bigg] ,
\end{align*}
equation \eqref{eq:original-CAL-b} with $\rho_{1,1:2}=(\rho_{11},\rho_{12})$ can be calculated as
\begin{align}   \label{eq:original-CAL-b-2period}
\begin{split}
0 = - E \Bigg[ \frac{1}{n} \sum_{i=1}^n & \Bigg\{  \frac{A_i}{\pi(X_i;\gamma) } \frac{R_{1i} R_{2i} }{\rho_{11}^2 \rho_{12} } 1\{ U^{(1)}_i > u_1 \}  (1\{ U^{(1)}_i > u_2\} - \eta_{12}) \\
 & \quad + \frac{A_i }{\pi(X_i;\gamma) }
  \frac{R_{1 i}}{\rho_{11}^2} \eta_{12} ( 1\{ U^{(1)}_i > u_1 \} -\eta_{11}) \Bigg\} \Bigg] ,\\
0 = - E \Bigg[ \frac{1}{n} \sum_{i=1}^n & \Bigg\{  \frac{A_i}{\pi(X_i;\gamma) } \frac{R_{1i} R_{2i} }{\rho_{11}\rho_{12}^2 } 1\{ U^{(1)}_i > u_1 \}  (1\{ U^{(1)}_i > u_2\} - \eta_{12})  \Bigg\} \Bigg],
\end{split}
\end{align}
and equation \eqref{eq:original-CAL-c} with $\eta_{1,1:2}=(\eta_{11},\eta_{12})$ can be calculated as
\begin{align*}
 0= -E \Bigg[ \frac{1}{n} \sum_{i=1}^n & \Bigg\{ \frac{A_i }{\pi(X_i;\gamma) } \left( \frac{R_{1 i} }{\rho_{11} } -1 \right) \eta_{12}
 + \left( \frac{A_i }{\pi(X_i;\gamma) } -1 \right) \eta_{12} \Bigg\} \Bigg], \\
 0= -E \Bigg[ \frac{1}{n} \sum_{i=1}^n & \Bigg\{ \frac{A_i }{\pi(X_i;\gamma) } \frac{R_{1 i}}{\rho_{11}}  \left( \frac{R_{2 i} }{\rho_{12} }-1 \right) 1\{ U^{(1)}_i > u_1 \} \\
 & \quad + \frac{A_i }{\pi(X_i;\gamma) } \left( \frac{R_{1 i} }{\rho_{11} } -1 \right) \eta_{11}
 + \left( \frac{A_i }{\pi(X_i;\gamma) } -1 \right) \eta_{11} \Bigg\} \Bigg] .
\end{align*}
There are $\dim(f)+4$ scalar unknowns, $(\gamma, \rho_{11},\rho_{12}, \eta_{11},\eta_{12})$, and
also $\dim(f)+4$ scalar equations. However, it seems difficult to manipulate this system of equations and obtain tractable solutions.

When applied to the new augmented IPW estimator $\hat S_{1k} (\hat\gamma, \hat\rho_{1,1:k}, \hat\eta_{1,1:k}, \hat\alpha_{1k})$, the calibration equations (Tan 2020; Ghosh \& Tan 2022) are
\begin{align}
& E \left\{\frac{\partial }{\partial\gamma} \hat S_{1k} (\gamma, \rho_{1,1:k}, \eta_{1,1:k }, \alpha_{1k}) \right\}
 = 0, \label{eq:new-CAL-a} \\
& E \left\{\frac{\partial }{\partial \rho_{1,1:k}} \hat S_{1k} (\gamma, \rho_{1,1:k}, \eta_{1,1:k }, \alpha_{1k}) \right\}
 = 0, \label{eq:new-CAL-b} \\
& E \left\{\frac{\partial }{\partial \eta_{1,1:k}} \hat S_{1k} (\gamma, \rho_{1,1:k}, \eta_{1,1:k }, \alpha_{1k}) \right\}
 = 0, \label{eq:new-CAL-c} \\
& E \left\{\frac{\partial }{\partial \alpha_{1k}} \hat S_{1k} (\gamma, \rho_{1,1:k}, \eta_{1,1:k }, \alpha_{1k}) \right\}
 = 0. \label{eq:new-CAL-d}
\end{align}
In the case of $k=2$, the new augmented IPW estimator is
\begin{align*}
& \quad \hat S_{12}  (\hat\gamma, \hat\rho_{1,1:2}, \hat\eta_{1,1:2}, \hat\alpha_{12}) \\
& =  \frac{1}{n} \sum_{i=1}^n \Bigg\{ \frac{A_i }{\hat\pi_i } \frac{R_{1i} R_{2i} }{\hat\pi_{11i} \hat\pi_{12i} } 1\{ U^{(1)}_i > u_2\}
 - \frac{A_i }{\hat\pi_i } \frac{R_{1 i} }{\hat\pi_{11 i} }  \left( \frac{R_{2 i} }{\hat\pi_{12 i} }-1 \right)\hat m_{12i}  1\{ U^{(1)}_i > u_1 \} \\
 & \quad - \frac{A_i }{\hat\pi_i } \left( \frac{R_{1 i} }{\hat\pi_{11 i} } -1 \right)\hat m_{12i}\hat m_{11i}
 - \left( \frac{A_i }{\hat\pi_i } -1 \right) \hat\mu_{12i} \Bigg\} .
\end{align*}
For logistic PS model \eqref{eq:PS-model} and CCP and CSP models \eqref{eq:const-pi} and \eqref{eq:const-m}, equation \eqref{eq:new-CAL-a} can be calculated as
\begin{align}\label{eq:new-CAL-a-2period}
\begin{split}
0 = - E \Bigg[ \frac{1}{n} \sum_{i=1}^n & \Bigg\{ A_i \frac{1-\pi(X_i;\gamma) }{\pi(X_i;\gamma) } \frac{R_{1i} R_{2i} }{\rho_{11}\rho_{12} } 1\{ U^{(1)}_i > u_1 \}  (1\{ U^{(1)}_i > u_2\} - \eta_{12}) f(X_i) \\
 & \quad + A_i \frac{1-\pi(X_i;\gamma) }{\pi(X_i;\gamma) }
  \frac{R_{1 i} }{\rho_{11} }\eta_{12} ( 1\{ U^{(1)}_i > u_1 \} -\eta_{11}) f(X_i) \\
 & \quad + A_i \frac{1-\pi(X_i;\gamma) }{\pi(X_i;\gamma) }
  (\eta_{12}\eta_{11} -\mu_{12}(X_i;\alpha_{12})) f(X_i) \Bigg\} \Bigg] ,
\end{split}
\end{align}
equation \eqref{eq:new-CAL-b} with $\rho_{1,1:2}=(\rho_{11},\rho_{12})$ can be calculated as \eqref{eq:original-CAL-b-2period},
equation \eqref{eq:new-CAL-c} with $\eta_{1,1:2}=(\eta_{11},\eta_{12})$ can be calculated as
\begin{align}\label{eq:new-CAL-c-2period}
\begin{split}
 0= -E \Bigg[ \frac{1}{n} \sum_{i=1}^n & \Bigg\{ \frac{A_i }{\pi(X_i;\gamma) } \left( \frac{R_{1 i} }{\rho_{11} } -1 \right) \eta_{12}
 \Bigg\} \Bigg], \\
 0= -E \Bigg[ \frac{1}{n} \sum_{i=1}^n & \Bigg\{ \frac{A_i }{\pi(X_i;\gamma) } \frac{R_{1 i}}{\rho_{11}}  \left( \frac{R_{2 i} }{\rho_{12} }-1 \right) 1\{ U^{(1)}_i > u_1 \} \\
 & \quad + \frac{A_i }{\pi(X_i;\gamma) } \left( \frac{R_{1 i} }{\rho_{11} } -1 \right) \eta_{11}
 \Bigg\} \Bigg] .
\end{split}
\end{align}
For logistic OR model \eqref{eq:OR-model}, equation \eqref{eq:new-CAL-d} can be calculated as
\begin{align}\label{eq:new-CAL-d-2period}
0 = -E \Bigg[ \frac{1}{n} \sum_{i=1}^n \Bigg\{ \left( \frac{A_i }{\pi(X_i;\gamma) } -1 \right) \frac{\partial\mu_{12}(X_i;\alpha_{12})}{\partial\alpha_{12}} \Bigg\} \Bigg] .
\end{align}
For linear OR model \eqref{eq:OR-model-lin}, equation \eqref{eq:new-CAL-d} can be calculated as
\begin{align}\label{eq:new-CAL-d-2period-linear}
0 = -E \Bigg[ \frac{1}{n} \sum_{i=1}^n \Bigg\{ \left( \frac{A_i }{\pi(X_i;\gamma) } -1 \right) f(X_i) \Bigg\} \Bigg] .
\end{align}
Equations \eqref{eq:new-CAL-a-2period}, \eqref{eq:original-CAL-b-2period}, \eqref{eq:new-CAL-c-2period}, \eqref{eq:new-CAL-d-2period-linear}
can be solved in a tractable manner via \eqref{eq:CAL-rho-pop-b}--\eqref{eq:CAL-alpha-pop} as in the proof of Proposition~\ref{pro:AIPW-CAL}.
For logistic OR model \eqref{eq:OR-model}, equation \eqref{eq:new-CAL-d-2period} is not directly solved, but
is still satisfied by the limit values of the calibrated estimators when PS model is correctly specified as assumed in Proposition~\ref{pro:AIPW-CAL}.

\section{Technical details} \label{sec:technical-detail}

\textbf{Derivation of variance estimator \eqref{eq:weighted-KM-Vr}.}\;
Suppose that $\hat\gamma$ were data-independent. By definition \eqref{eq:weighted-KM} for $\hat S_{1k,\mytext{wKM}}$,
\begin{align*}
\log \hat S_{1k,\mytext{wKM}} = \sum_{j=1}^k  \log ( 1- \hat q_{1j} ).
\end{align*}
By the delta method, it suffices to show that a model-robust variance estimator for $\log \hat S_{1k,\mytext{wKM}}$ is
$\frac{1}{n^2} \sum_{i=1}^n \hat\varphi_{1ki,\mytext{wKM}}^2$. The population (or limit) version of $\hat q_{1j}$ is
\begin{align*}
 \breve q_{1j} = \frac{E( \sum_{i \in J_j } A_i \hat w_{1i})  }{ E( \sum_{i\in I_j } A_i \hat w_{1i} ) } ,
\end{align*}
and hence the population (or limit) version of $\log \hat S_{1k,\mytext{wKM}}$ is
\begin{align*}
\log \breve S_{1k,\mytext{wKM}} = \sum_{j=1}^k  \log ( 1- \breve q_{1j} ) .
\end{align*}
Consider the Taylor expansion
\begin{align*}
& \quad \log \hat S_{1k,\mytext{wKM}}  - \log \breve S_{1k,\mytext{wKM}}
=  \sum_{j=1}^k  \left\{ \log ( 1- \hat q_{1j} ) - \log ( 1- \breve q_{1j} ) \right\}\\
& =  \sum_{j=1}^k  \frac{-1}{ 1- \breve q_{1j} } ( \hat q_{1j} - \breve q_{1j} ) + o_p(n^{-1/2}) ,
\end{align*}
where by standard manipulation,
\begin{align*}
 & \quad \hat q_{1j} - \breve q_{1j} =  \frac{\sum_i 1\{i\in I_j \}( 1\{i \in J_j \} - \breve q_{1j} ) A_i \hat w_{1i} }{ \sum_i 1\{i\in I_j \} A_i \hat w_{1i} } \\
 & = \frac{\sum_i ( 1\{i \in J_j \} - \breve q_{1j} ) A_i \hat w_{1i} }{ E ( \sum_i 1\{i\in I_j \} A_i \hat w_{1i} ) } + o_p(n^{-1/2}) .
\end{align*}
Combining the preceding two displays yields
\begin{align*}
& \quad \log \hat S_{1k,\mytext{wKM}}  - \log \breve S_{1k,\mytext{wKM}}
=  \sum_{j=1}^k  \left\{ \log ( 1- \hat q_{1j} ) - \log ( 1- \breve q_{1j} ) \right\}\\
& = \sum_{j=1}^k  \frac{-1}{ E ( \sum_{i\in I_j\setminus J_j} A_i \hat w_{1i} )  }
\sum_{i=1}^n 1\{i\in I_j \}( 1\{i \in J_j \} - \breve q_{1j} ) A_i \hat w_{1i}
+ o_p(n^{-1/2}) \\
& = \frac{1}{n} \sum_{i=1}^n \breve\varphi_{1ki,\mytext{wKM}} + o_p(n^{-1/2}) ,
\end{align*}
where
\begin{align*}
  \breve\varphi_{1ki,\mytext{wKM}} = n \sum_{j=1}^k  \frac{-A_i \hat w_{1i} }
  { E ( \sum_{i\in I_j\setminus J_j} A_i \hat w_{1i} )  }  1\{i\in I_j \}( 1\{i \in J_j \} - \breve q_{1j} ).
\end{align*}
With the approximation of $\breve\varphi_{1ki,\mytext{wKM}}$ by $\hat\varphi_{1ki,\mytext{wKM}}$, we see that
a consistent variance estimator for $ \log \hat S_{1k,\mytext{wKM}}$ is $\frac{1}{n^2} \sum_{i=1}^n \hat\varphi_{1ki,\mytext{wKM}}^2$.

The function \texttt{survfit()} in R package \texttt{survival}
also enables model-robust variance estimation for survival probabilities after fitting proportional hazards models,
but the formula appears to in general differ from that derived from a Taylor expansion as discussed in Tan (2023), Supplement Section I.1.

\vspace{.1in}
\textbf{Proof of Proposition~\ref{pro:weighted-KM}.}\;
First, we show that the augmented IPW estimator $\hat S_{1k} (\hat\gamma, \hat\rho_{1,1:k,\mytext{CAL}}, \allowbreak \hat\eta_{1,1:k,\mytext{CAL}})$ coincides with the ICE estimator $\hat S_{1k} (\hat\eta_{1,1:k,\mytext{CAL}})$. By definitions (\ref{eq:const-pi}) and (\ref{eq:const-m}), the fitted values $\hat\pi_{1ji,\mytext{CAL}} = \hat\rho_{1j,\mytext{CAL}}$
and $\hat m_{1ji,\mytext{CAL}} = \hat\eta_{1j,\mytext{CAL}}$ are constant in $i$.
Then the identity $\hat S_{1k} (\hat\gamma, \hat\rho_{1,1:k,\mytext{CAL}}, \hat\eta_{1,1:k,\mytext{CAL}})
= \hat S_{1k} (\hat\eta_{1,1:k,\mytext{CAL}}) $ follows by applying the equations \eqref{eq:CAL-eta}
to the expressions \eqref{eq:AIPW} and \eqref{eq:phi-or}.

Second, we show that the ICE estimator $\hat S_{1k} (\hat\eta_{1,1:k,\mytext{CAL}})$ coincides with the weighted Kaplan–Meier estimator $\hat S_{1k,\mytext{wKM}} $. Solving equation \eqref{eq:CAL-eta} for individual $0 \le \ell\le k-1$ yields
\begin{align*}
  \hat \eta_{1,\ell+1,\mytext{CAL}} = \frac{\sum_{i=1}^n  \frac{A_i }{\hat\pi_i } \overline R_{\ell+1, i}
 1\{ U^{(1)}_i > u_{\ell+1} \} }
{\sum_{i=1}^n  \frac{A_i }{\hat\pi_i } \overline R_{\ell+1, i} 1\{ U^{(1)}_i > u_\ell \} } ,
\end{align*}
which can be expressed in the notation of Section~\ref{sec:existing} as
\begin{align}
 \hat \eta_{1,\ell+1,\mytext{CAL}} = \frac{\sum_{i\in I_{\ell+1} \setminus J_{\ell+1} } A_i \hat w_{1i} }
{ \sum_{i\in I_{\ell+1} } A_i \hat w_{1i} } = 1- \hat q_{1,\ell+1}.  \label{eq:prf-eta-expression}
\end{align}
Then by definition \eqref{eq:ICE},
\begin{align*}
  \hat S_{1k} (\hat\eta_{1,1:k,\mytext{CAL}})
  & = \prod_{\ell=0}^{k-1}  \hat \eta_{1,\ell+1,\mytext{CAL}} =\prod_{\ell=0}^{k-1}   \frac{\sum_{i\in I_{\ell+1} \setminus J_{\ell+1} } A_i \hat w_{1i} }
{ \sum_{i\in I_{\ell+1} } A_i \hat w_{1i} }  = \hat S_{1k,\mytext{wKM}} .
\end{align*}

Third, we show that the augmented IPW estimator $\hat S_{1k} (\hat\gamma, \hat\rho_{1,1:k,\mytext{CAL}}, \hat\eta_{1,1:k,\mytext{CAL}})$ coincides with the IPW estimator $\hat S_{1k} (\hat\gamma, \hat\rho_{1,1:k,\mytext{CAL}} )$ under the normalization condition. 
The identity $\hat S_{1k} (\hat\gamma, \hat\rho_{1,1:k,\mytext{CAL}}, \allowbreak \hat\eta_{1,1:k,\mytext{CAL}})
= \hat S_{1k} (\hat\gamma, \hat\rho_{1,1:k,\mytext{CAL}} ) $ follow by applying the equations \eqref{eq:CAL-rho}
to the expressions \eqref{eq:AIPW} and \eqref{eq:phi-ps} and using $\sum_{i=1}^n (\frac{A_i}{\hat\pi_i}-1)=0$
by the assumption $\sum_{i=1}^n A_i \hat\pi_i^{-1}=n$. This completes the proof of Proposition~\ref{pro:weighted-KM}.

For completeness, we also verify directly that the weighted Kaplan–Meier estimator $\hat S_{1k,\mytext{wKM}}$ coincides with the IPW estimator $\hat S_{1k} (\hat\gamma, \hat\rho_{1,1:k,\mytext{CAL}} )$ under the normalization condition. 
Solving equation \eqref{eq:CAL-rho} for individual $0 \le \ell\le k-1$ yields
\begin{align*}
 \hat \rho_{1,\ell+1,\mytext{CAL}} = \frac{\sum_{i=1}^n  \frac{A_i }{\hat\pi_i } \overline R_{\ell i}
 R_{\ell+1, i} 1\{ U^{(1)}_i > u_\ell \} }
{\sum_{i=1}^n  \frac{A_i }{\hat\pi_i } \overline R_{\ell i} 1\{ U^{(1)}_i > u_\ell \} } ,
\end{align*}
which can be expressed in the notation of Section~\ref{sec:existing} ($\hat w_{1i} = \hat \pi_i^{-1}$) as
\begin{align}
 \hat \rho_{1,\ell+1,\mytext{CAL}} = \frac{\sum_{i\in I_{\ell+1}} A_i \hat w_{1i} }
{ \sum_{i\in I_\ell \setminus J_\ell} A_i \hat w_{1i} } ,\label{eq:prf-rho-expression}
\end{align}
due to the following relationship
\begin{align} \label{eq:prf-IJ-relation}
\begin{split}
 \{ i \in I_\ell \setminus J_\ell, A_i=1\}
& =\{ Y_i \ge u_\ell, (Y_i,\Delta_i)\not=(u_\ell,1), A_i=1 \} \\
& = \{ Y_i \ge u_\ell, U^{(1)}_i > u_\ell , A_i=1 \} = \{ R_{\ell i}=1, U^{(1)}_i > u_\ell , A_i=1 \} .
\end{split}
\end{align}
Then by definition \eqref{eq:IPW},
\begin{align*}
  \hat S_{1k} (\hat\gamma, \hat\rho_{1,1:k,\mytext{CAL}} )
  &= \frac{\frac{1}{n} \sum_{i=1}^n  \frac{A_i }{\hat\pi_i } \overline R_{ki} 1\{ U^{(1)}_i > u_k\} }
  { \prod_{\ell=0}^{k-1} \hat\rho_{1,\ell+1,\mytext{CAL}} } \\
  & = \frac{\frac{1}{n} \sum_{i \in I_k \setminus J_k} A_i \hat w_{1i}}
  { \prod_{\ell=0}^{k-1}  \frac{\sum_{i\in I_{\ell+1}} A_i \hat w_{1i} }
{ \sum_{i\in I_\ell \setminus J_\ell} A_i \hat w_{1i} } }  \\
& = \frac{\frac{1}{n} } {  \sum_{i\in I_0 \setminus J_0} A_i \hat w_{1i} }
\prod_{\ell=0}^{k-1} \frac{\sum_{i \in I_{\ell+1} \setminus J_{\ell+1}} A_i \hat w_{1i}} {\sum_{i\in I_{\ell+1}} A_i \hat w_{1i} } = \hat S_{1k,\mytext{wKM}} ,
\end{align*}
Note that $I_0 = \{1\le i\le n\}$ and $J_0=\emptyset$ because $P(U^{(a)} >u_0)=1$ for $a=0,1$ in our setup (Section~\ref{sec:setup}).
The last step above follows because
$\sum_{i \in I_0 \setminus J_0} A_i \hat w_{1i} = \sum_{i=1}^n A_i \hat w_{1i} = n$ by assumption.

\vspace{.1in}
\textbf{Proof of properties (i) and (ii) in Section~\ref{sec:comparison-KM}.}\;
(i) Suppose that model \eqref{eq:const-pi} is correctly specified. Then for any $j=1,\ldots,k$,
\begin{align*}
 P( C^{(1)} \ge u_j | U^{(1)}, X, A=1)
 & = P( C^{(1)} \ge u_j | X, A=1) \\
 & = P( C^{(1)} \ge u_j | A=1),
\end{align*}
where the first equality follows from Assumption \eqref{eq:non-informative}, and the second follows from model \eqref{eq:const-pi}.
The above equation holds for any $X$, and hence implies that  for any $j=1,\ldots,k$,
\begin{align*}
 &  P( C^{(1)} \ge u_j | U^{(1)},  A=1)
 =P( C^{(1)} \ge u_j | A=1).
\end{align*}
This is Assumption \eqref{eq:non-informative-b} provided that the behavior of $C^{(1)}$ is restricted up to time $u_k$.

(ii) Suppose that model \eqref{eq:const-m} is correctly specified. Then for any $j=1,\ldots,k$,
\begin{align*}
 P( U^{(1)} > u_j | C^{(1)}, X, A=1)
 & = P( U^{(1)} > u_j | X, A=1) \\
 & = P( U^{(1)} > u_j | A=1),
\end{align*}
where the first equality follows from Assumption \eqref{eq:non-informative}, and the second follows from model \eqref{eq:const-m}.
The above equation holds for any $X$, and hence implies that  for any $j=1,\ldots,k$,
\begin{align*}
 &  P( U^{(1)} > u_j | C^{(1)},  A=1)
 =P( U^{(1)} > u_j | A=1).
\end{align*}
This is Assumption \eqref{eq:non-informative-b} provided that the behavior of $U^{(1)}$ is restricted up to time $u_{k+1}$.
Moreover, for any $j=1,\ldots,k$,
\begin{align*}
 P( U^{(1)} > u_j | X)
 &=  P( U^{(1)} > u_j | X, A=1) \\
 & = P( U^{(1)} > u_j | A=1),
\end{align*}
where the first equality follows from Assumption \eqref{eq:no-confounding}, and the second follows from model \eqref{eq:const-m} as indicated earlier.
The above equation holds for any $X$, and hence implies that  for any $j=1,\ldots,k$,
\begin{align*}
 &  P( U^{(1)} > u_j )
 =P( U^{(1)} > u_j | A=1).
\end{align*}
This is Assumption \eqref{eq:no-confounding-b} provided that the behavior of $U^{(1)}$ is restricted up to time $u_{k+1}$.

\vspace{.1in}
\textbf{Proof of Proposition~\ref{pro:weighted-KM-Vr}.}\;
The consistency of $\hat V_{\mytext{r}} ( \hat S_{1k,\mytext{wKM}} )$ follows from the derivation as a model-robust variance estimator.
To show the consistency of $\hat V_{\mytext{r}}( \hat S_{1k} (\hat\gamma, \hat\rho_{1,1:k,\mytext{CAL}}, \hat\eta_{1,1:k,\mytext{CAL}}) )$,
let $( \breve\rho_{1,1:k,\mytext{CAL}}, \breve\eta_{1,1:k,\mytext{CAL}})$ be the limit values of
$( \hat\rho_{1,1:k,\mytext{CAL}}, \hat\eta_{1,1:k,\mytext{CAL}})$, which are determined as the solutions to the population version
of the equations \eqref{eq:CAL-rho} and \eqref{eq:CAL-eta}:
\begin{align}
& 0= E \left[\frac{1}{n} \sum_{i=1}^n  \frac{A_i }{\hat\pi_i } \overline R_{\ell i}
\left( \frac{R_{\ell+1, i} }{\rho_{1,\ell+1} }-1 \right) 1\{ U^{(1)}_i > u_\ell \} \right], \quad \ell=0,\ldots,k-1,  \label{eq:CAL-rho-pop} \\
& 0 = E \left[ \frac{1}{n} \sum_{i=1}^n  \frac{A_i }{\hat\pi_i } \overline R_{\ell+1, i}
1\{ U^{(1)}_i > u_\ell \} ( 1\{U^{(1)}_i > u_{\ell+1}-  \eta_{1,\ell+1} ) \right], \quad \ell=0,\ldots,k-1,  \label{eq:CAL-eta-pop}
\end{align}
By a Taylor expansion, the estimator $ \hat S_{1k} (\hat\gamma, \hat\rho_{1,1:k,\mytext{CAL}}, \hat\eta_{1,1:k,\mytext{CAL}}) $
can be shown to satisfy the following asymptotic expansion:
\begin{align} \label{eq:AIPW-expan-fix-gamma}
\begin{split}
& \quad \hat S_{1k} (\hat\gamma, \hat\rho_{1,1:k,\mytext{CAL}}, \hat\eta_{1,1:k,\mytext{CAL}})
= \hat S_{1k} (\hat\gamma, \breve\rho_{1,1:k,\mytext{CAL}}, \breve\eta_{1,1:k,\mytext{CAL}}) \\
& \quad + E \left\{\frac{\partial }{\partial \rho_{1,1:k}} \hat S_{1k} (\hat\gamma, \rho_{1,1:k}, \eta_{1,1:k }) \right\}^\T
 \Big|_{\breve\rho_{1,1:k,\mytext{CAL}}, \breve\eta_{1,1:k,\mytext{CAL}}}  ( \hat\rho_{1,1:k,\mytext{CAL}} - \breve\rho_{1,1:k,\mytext{CAL}}) \\
& \quad + E \left\{\frac{\partial }{\partial \eta_{1,1:k}} \hat S_{1k} (\hat\gamma, \rho_{1,1:k}, \eta_{1,1:k }) \right\}^\T
 \Big|_{\breve\rho_{1,1:k,\mytext{CAL}}, \breve\eta_{1,1:k,\mytext{CAL}}}   ( \hat\eta_{1,1:k,\mytext{CAL}} - \breve\eta_{1,1:k,\mytext{CAL}}) + o_p(n^{-1/2}),
\end{split}
\end{align}
where $\hat S_{1k} (\hat\gamma, \rho_{1,1:k}, \eta_{1,1:k })$ is defined as
$S_{1k} (\hat\gamma, \hat\rho_{1,1:k}, \hat\eta_{1,1:k})$
with $(\hat\rho_{1,1:k}, \hat\eta_{1,1:k})$ replaced by $(\rho_{1,1:k},$ $ \eta_{1,1:k })$.
Because $\pi_{1j}(X_i; \rho_{1j} ) = \rho_{1j}$ and $m_{1j}(X_i; \eta_{1j} ) = \eta_{1j}$ are constant in $i$,
it is easily verified that
\begin{align*}
& 0 = E \left\{\frac{\partial }{\partial \rho_{1,1:k}} \hat S_{1k} (\hat\gamma, \rho_{1,1:k}, \eta_{1,1:k }) \right\}
 \Big|_{\breve\rho_{1,1:k,\mytext{CAL}}, \breve\eta_{1,1:k,\mytext{CAL}}} ,\\
& 0 = E \left\{\frac{\partial }{\partial \eta_{1,1:k}} \hat S_{1k} (\hat\gamma, \rho_{1,1:k}, \eta_{1,1:k }) \right\}
 \Big|_{\breve\rho_{1,1:k,\mytext{CAL}}, \breve\eta_{1,1:k,\mytext{CAL}}} ,
\end{align*}
using equations \eqref{eq:CAL-rho-pop}, \eqref{eq:CAL-eta-pop} and $E( \sum_{i=1}^n A_i \hat\pi_i^{-1}) =n$.
The derivatives involved are, for $j = 1, \ldots, k$,
\begin{align*}
& \quad \frac{\partial }{\partial \rho_{1j}} \hat S_{1k} (\hat\gamma, \rho_{1,1:k}, \eta_{1,1:k }) \\
& =  -\sum_{\ell = j -1}^{k - 1}\left\{\frac{\prod_{j^\prime = \ell + 2}^k \eta_{1j^\prime}}{\rho_{1j}\prod_{j^\prime = 1}^{\ell + 1} \rho_{1j^\prime} }\times \frac{1}{n}\sum_{i=1}^n \frac{A_i }{\hat\pi_i } \overline R_{\ell+1, i}1\{ U^{(1)}_i > u_\ell \} ( 1\{U^{(1)}_i > u_{\ell+1}-  \eta_{1,\ell+1} ) \right\}, \\
& \quad \frac{\partial }{\partial \eta_{1j}} \hat S_{1k} (\hat\gamma, \rho_{1,1:k}, \eta_{1,1:k }) \\
& = -\sum_{\ell = 0}^{j - 1}\left\{\frac{\prod_{j^\prime = \ell + 1, j^\prime\neq j}^k \eta_{1j^\prime}}{\prod_{j^\prime = 1}^\ell \rho_{1j^\prime} }\times \frac{1}{n}\sum_{i=1}^n  \frac{A_i }{\hat\pi_i } \overline R_{\ell i}\left( \frac{R_{\ell+1, i} }{\rho_{1,\ell+1} }-1 \right) 1\{ U^{(1)}_i > u_\ell \}\right\} \\
& \quad - \prod_{j^\prime = 1, j^\prime\neq j}^k \eta_{1j^\prime}\times \frac{1}{n}\sum_{i=1}^n \left(\frac{A_i}{\hat\pi_i} - 1\right),
\end{align*}
where a product over an empty index set (e.g., $\prod_{j^\prime = k+1}^k \eta_{1j^\prime}$) is taken to be 1. These expressions are obtained by straightforward differentiation of
$\hat{S}_{1k}(\hat\gamma, \hat{\rho}_{1, 1:k}, \hat{\eta}_{1, 1:k})$ in Equation \eqref{eq:AIPW}.
See Section~\ref{sec:original-CAL} for related expressions in the case of $k=2$.
Therefore,
the simple asymptotic expansion holds:
\begin{align*}
& \quad \hat S_{1k} (\hat\gamma, \hat\rho_{1,1:k,\mytext{CAL}}, \hat\eta_{1,1:k,\mytext{CAL}})
= \hat S_{1k} (\hat\gamma, \breve\rho_{1,1:k,\mytext{CAL}}, \breve\eta_{1,1:k,\mytext{CAL}}) + o_p(n^{-1/2}),
\end{align*}
which implies the consistency of the variance estimator $\hat V_{\mytext{r}}( \hat S_{1k} (\hat\gamma, \hat\rho_{1,1:k,\mytext{CAL}}, \hat\eta_{1,1:k,\mytext{CAL}}) )$
under standard regularity conditions.

Next, we show that
if $\sum_{i=1}^n A_i \hat\pi_i^{-1}=n$, then
 $\hat V_{\mytext{r}}( \hat S_{1k} (\hat\gamma, \hat\rho_{1,1:k,\mytext{CAL}}, \hat\eta_{1,1:k,\mytext{CAL}}) )$ algebraically coincides with
$\hat V_{\mytext{r}} ( \hat S_{1k,\mytext{wKM}} )$.
Note that
$\hat S_{1k} (\hat\gamma, \hat\rho_{1,1:k,\mytext{CAL}}, \hat\eta_{1,1:k,\mytext{CAL}}) = \hat{\overline \eta}_{1k,\mytext{CAL}}$.
By comparing the expressions in \eqref{eq:weighted-KM-Vr} and \eqref{eq:AIPW-var}, it suffices to show
\begin{align*}
\varphi_{1ki} (\hat\gamma, \hat\rho_{1,1:k,\mytext{CAL}}, \hat\eta_{1,1:k,\mytext{CAL}})  - \hat{\overline \eta}_{1k,\mytext{CAL}}
 =  \hat S_{1k,\mytext{wKM}}\, \hat\varphi_{1ki,\mytext{wKM}},
\end{align*}
or to show that for $0 \le \ell \le k-1$,
\begin{align*}
& \quad \frac{A_i }{\hat\pi_i } \frac{\overline R_{\ell+1, i} }{\hat{\overline \rho}_{1,\ell+1,\mytext{CAL}} } \hat{\underline \eta}_{1,\ell+2,\mytext{CAL}}
1\{ U^{(1)}_i > u_\ell \} ( 1\{U^{(1)}_i > u_{\ell+1} \}- \hat \eta_{1,\ell+1,\mytext{CAL}} )\\
& = n \hat S_{1k,\mytext{wKM}}  \frac{ - A_i \hat w_{1i} } { \sum_{\tau \in I_{\ell+1} \setminus J_{\ell+1} } A_\tau \hat w_{1\tau} }
 1\{i\in I_{\ell+1} \} ( 1\{i\in J_{\ell+1} \} - \hat q_{1,\ell+1}) .
\end{align*}
In fact, the relationship \eqref{eq:prf-IJ-relation} and the expression \eqref{eq:prf-eta-expression} indicate
\begin{align*}
&  A_i \overline R_{\ell+1, i} 1\{U^{(1)}_i > u_{\ell+1} \}  =  A_i  1\{i\in I_{\ell+1}\setminus J_{\ell+1} \}  =  A_i  1\{i\in I_{\ell+1} \}
(1- 1\{ i\in J_{\ell+1} \}) , \\
&  A_i  \overline R_{\ell+1, i} 1\{U^{(1)}_i > u_\ell \}  = A_i  1\{i\in I_{\ell+1} \}, \\
&  \hat \eta_{1,\ell+1,\mytext{CAL}}  = 1- \hat q_{1,\ell+1},
\end{align*}
and hence
\begin{align} \label{eq:prf1-weighted-KM-Vr}
\begin{split}
& \quad \frac{A_i }{\hat\pi_i } 1\{ U^{(1)}_i > u_\ell \} ( 1\{U^{(1)}_i > u_{\ell+1} \}- \hat \eta_{1,\ell+1,\mytext{CAL}} ) \\
& =  - A_i \hat w_{1i} 1\{i\in I_{\ell+1} \} ( 1\{i\in J_{\ell+1} \} - \hat q_{1,\ell+1}) .
\end{split}
\end{align}
Moreover, direct calculation using the expressions \eqref{eq:prf-rho-expression} and \eqref{eq:prf-eta-expression} yields
\begin{align}\label{eq:prf2-weighted-KM-Vr}
\begin{split}
& \quad \frac{1}{n} \frac{1}{\hat{\overline \rho}_{1,\ell+1,\mytext{CAL}} } \hat{\underline \eta}_{1,\ell+2,\mytext{CAL}}
 = \frac{1}{n} \frac{1}{\prod_{j=0}^\ell \frac{\sum_{i\in I_{j+1}} A_i \hat w_{1i} }{\sum_{i\in I_j\setminus J_j } A_i \hat w_{1i} } }
 \hat{\underline \eta}_{1,\ell+2,\mytext{CAL}}\\
& = \left( \prod_{j=0}^\ell \frac{\sum_{i\in I_{j+1} \setminus J_{j+1} } A_i \hat w_{1i} } {\sum_{i\in I_{j+1}} A_i \hat w_{1i} } \right)
 \frac{1}{\sum_{i\in I_{\ell+1} \setminus J_{\ell+1} } A_i \hat w_{1i}  }
 \hat{\underline \eta}_{1,\ell+2,\mytext{CAL}}\\
& = \hat{\overline \eta}_{1,\ell+1,\mytext{CAL}}
 \frac{1}{\sum_{i\in I_{\ell+1} \setminus J_{\ell+1} } A_i \hat w_{1i}  }
 \hat{\underline \eta}_{1,\ell+2,\mytext{CAL}}\\
& = \hat S_{1k,\mytext{wKM}}   \frac{1} { \sum_{\tau \in I_{\ell+1} \setminus J_{\ell+1} } A_\tau \hat w_{1\tau} }
\end{split}
\end{align}
where the second equality uses the assumption $\sum_{i=1}^n A_i \hat\pi_i^{-1}=n$.
Combining the two displays \eqref{eq:prf1-weighted-KM-Vr} and \eqref{eq:prf2-weighted-KM-Vr} gives the desired result.

\vspace{.1in}
\textbf{Proof of Proposition~\ref{pro:weighted-KM-Vb}.}\;
To account for the sampling variation of $\hat\gamma$, an asymptotic expansion for
$\hat S_{1k} (\hat\gamma, \hat\rho_{1,1:k,\mytext{CAL}}, \hat\eta_{1,1:k,\mytext{CAL}})$
can be obtained with an additional term from \eqref{eq:AIPW-expan-fix-gamma}:
\begin{align}\label{eq:AIPW-expan-random-gamma}
\begin{split}
& \quad \hat S_{1k} (\hat\gamma, \hat\rho_{1,1:k,\mytext{CAL}}, \hat\eta_{1,1:k,\mytext{CAL}})
= \hat S_{1k} (\breve\gamma, \breve\rho_{1,1:k,\mytext{CAL}}, \breve\eta_{1,1:k,\mytext{CAL}}) \\
& \quad + E \left\{\frac{\partial }{\partial\gamma} \hat S_{1k} (\gamma, \rho_{1,1:k}, \eta_{1,1:k }) \right\}^\T
 \Big|_{\breve\gamma,\breve\rho_{1,1:k,\mytext{CAL}}, \breve\eta_{1,1:k,\mytext{CAL}}}  ( \hat\gamma-\breve\gamma) \\
& \quad + E \left\{\frac{\partial }{\partial \rho_{1,1:k}} \hat S_{1k} (\gamma, \rho_{1,1:k}, \eta_{1,1:k }) \right\}^\T
 \Big|_{\breve\gamma,\breve\rho_{1,1:k,\mytext{CAL}}, \breve\eta_{1,1:k,\mytext{CAL}}}  ( \hat\rho_{1,1:k,\mytext{CAL}} - \breve\rho_{1,1:k,\mytext{CAL}}) \\
& \quad + E \left\{\frac{\partial }{\partial \eta_{1,1:k}} \hat S_{1k} (\gamma, \rho_{1,1:k}, \eta_{1,1:k }) \right\}^\T
 \Big|_{\breve\gamma,\breve\rho_{1,1:k,\mytext{CAL}}, \breve\eta_{1,1:k,\mytext{CAL}}}   ( \hat\eta_{1,1:k,\mytext{CAL}} - \breve\eta_{1,1:k,\mytext{CAL}}) + o_p(n^{-1/2}),
\end{split}
\end{align}
where $(\breve\gamma, \breve\rho_{1,1:k,\mytext{CAL}}, \breve\eta_{1,1:k,\mytext{CAL}})$ are the limit values of
$(\hat\gamma, \hat\rho_{1,1:k,\mytext{CAL}}, \hat\eta_{1,1:k,\mytext{CAL}})$.
Then \eqref{eq:CAL-rho-pop} and \eqref{eq:CAL-eta-pop} are satisfied with $\hat\pi_i$ replaced by $\breve\pi_i = \pi(X_i;\breve\gamma)$.
Similarly as in the proof of Proposition~\ref{pro:weighted-KM-Vr}, it can be shown that
\begin{align*}
& 0 = E \left\{\frac{\partial }{\partial \rho_{1,1:k}} \hat S_{1k} (\gamma, \rho_{1,1:k}, \eta_{1,1:k }) \right\}
 \Big|_{\breve\gamma,\breve\rho_{1,1:k,\mytext{CAL}}, \breve\eta_{1,1:k,\mytext{CAL}}} ,\\
& 0 = E \left\{\frac{\partial }{\partial \eta_{1,1:k}} \hat S_{1k} (\gamma, \rho_{1,1:k}, \eta_{1,1:k }) \right\}
 \Big|_{\breve\gamma,\breve\rho_{1,1:k,\mytext{CAL}}, \breve\eta_{1,1:k,\mytext{CAL}}} ,
\end{align*}
using equations \eqref{eq:CAL-rho-pop}, \eqref{eq:CAL-eta-pop} and $E( A \pi(X;\breve\gamma)^{-1}) =1$.
Moreover, if model \eqref{eq:const-m} is correctly specified, then it can be shown using \eqref{eq:phi-or} that
\begin{align*}
 &\quad E \left\{\frac{\partial }{\partial\gamma} \hat S_{1k} (\gamma, \rho_{1,1:k}, \eta_{1,1:k }) \right\}
 \Big|_{\breve\gamma,\breve\rho_{1,1:k,\mytext{CAL}}, \breve\eta_{1,1:k,\mytext{CAL}}} \\
 & = E \Big[ \frac{1}{n} \sum_{i=1}^n  \sum_{\ell=0}^{k-1} \frac{\partial\pi(X_i;\gamma)}{\partial\gamma}
 \frac{-A_i}{\pi(X_i;\gamma)^2 } \frac{\overline R_{\ell+1,i} }{\overline \rho_{1,\ell+1} } \underline \eta_{1,\ell+2} \\
 & \qquad \times 1\{ U^{(1)}_i > u_\ell \} ( 1\{U^{(1)}_i > u_{\ell+1}\}- \eta_{1,\ell+1} ) \Big]
 \Big|_{\breve\gamma,\breve\rho_{1,1:k,\mytext{CAL}}, \breve\eta_{1,1:k,\mytext{CAL}}} \\
 & = 0.
\end{align*}
By combining the preceding two displays, if model \eqref{eq:const-m} is correctly specified, then the asymptotic expansion \eqref{eq:AIPW-expan-random-gamma} reduces to
\begin{align*}
& \quad \hat S_{1k} (\hat\gamma, \hat\rho_{1,1:k,\mytext{CAL}}, \hat\eta_{1,1:k,\mytext{CAL}})
= \hat S_{1k} (\breve\gamma, \breve\rho_{1,1:k,\mytext{CAL}}, \breve\eta_{1,1:k,\mytext{CAL}}) + o_p(n^{-1/2}).
\end{align*}

Next, if model \eqref{eq:const-m} is correctly specified, then from the preceding expansion, the asymptotic variance of $\hat S_{1k} (\hat\gamma, \hat\rho_{1,1:k,\mytext{CAL}}, \hat\eta_{1,1:k,\mytext{CAL}})$
is determined as
\begin{align*}
& \quad E \left[ \left\{ \hat S_{1k} (\breve\gamma, \breve\rho_{1,1:k,\mytext{CAL}}, \breve\eta_{1,1:k,\mytext{CAL}}) - S_{1k} \right\}^2 \right] \\
& = \frac{1}{n^2} E \left[ \sum_{i=1}^n
\left\{ \sum_{\ell=0}^{k-1} \frac{A_i }{\breve\pi_i } \frac{\overline R_{\ell+1, i} }{\breve{\overline \rho}_{1,\ell+1,\mytext{CAL}} } \breve{\underline \eta}_{1,\ell+2,\mytext{CAL}}
1\{ U^{(1)}_i > u_\ell \} ( 1\{U^{(1)}_i > u_{\ell+1}\}- \breve \eta_{1,\ell+1,\mytext{CAL}} ) \right\}^2 \right] \\
& = \frac{1}{n^2} E \left[ \sum_{i=1}^n
 \sum_{\ell=0}^{k-1} \frac{A_i }{ ( \breve\pi_i)^2 } \frac{\overline R_{\ell+1, i} }{ ( \breve{\overline \rho}_{1,\ell+1,\mytext{CAL}} )^2 } (\breve{\underline \eta}_{1,\ell+2,\mytext{CAL}})^2
1\{ U^{(1)}_i > u_\ell \} ( 1\{U^{(1)}_i > u_{\ell+1}\}- \breve \eta_{1,\ell+1,\mytext{CAL}} )^2 \right] \\
& = \frac{1}{n^2} E \left[ \sum_{i=1}^n
 \sum_{\ell=0}^{k-1} \frac{A_i }{ ( \breve\pi_i)^2 } \frac{\overline R_{\ell+1, i} }{ ( \breve{\overline \rho}_{1,\ell+1,\mytext{CAL}} )^2 } (\breve{\underline \eta}_{1,\ell+2,\mytext{CAL}})^2
 \breve \eta_{1,\ell+1,\mytext{CAL}} (1-\breve \eta_{1,\ell+1,\mytext{CAL}})  \right] .
\end{align*}
With model \eqref{eq:const-m} correctly specified,
the first equality above follows because $S_{1k} = \breve{\overline\rho}_{1k,\mytext{CAL}}$ and the individual observations are independent across $i$,
the second equality follows because the individual terms
\begin{align*}
\frac{A_i }{\breve\pi_i } \frac{\overline R_{\ell+1, i} }{\breve{\overline \rho}_{1,\ell+1,\mytext{CAL}} } \breve{\underline \eta}_{1,\ell+2,\mytext{CAL}}
1\{ U^{(1)}_i > u_\ell \} ( 1\{U^{(1)}_i > u_{\ell+1}\}- \breve \eta_{1,\ell+1,\mytext{CAL}} )
\end{align*}
are uncorrelated across $\ell$, and the third equality follows because
\begin{align*}
& \quad E \{ ( 1\{U^{(1)}_i > u_{\ell+1}\}- \breve \eta_{1,\ell+1,\mytext{CAL}} )^2 | R_{\ell+1,i}=1, X_i, A_i=1) \\
& =   \breve \eta_{1,\ell+1,\mytext{CAL}}  (1-\breve \eta_{1,\ell+1,\mytext{CAL}}).
\end{align*}
Then the asymptotic variance of $\hat S_{1k} (\hat\gamma, \hat\rho_{1,1:k,\mytext{CAL}}, \hat\eta_{1,1:k,\mytext{CAL}})$
can be consistently estimated by
\begin{align*}
\frac{1}{n^2}  \sum_{i=1}^n
 \sum_{\ell=0}^{k-1} \frac{A_i }{ ( \hat\pi_i)^2 } \frac{\overline R_{\ell+1, i} }{ ( \hat{\overline \rho}_{1,\ell+1,\mytext{CAL}} )^2 } (\hat{\underline \eta}_{1,\ell+2,\mytext{CAL}})^2
 \hat \eta_{1,\ell+1,\mytext{CAL}} (1-\hat \eta_{1,\ell+1,\mytext{CAL}}) ,
\end{align*}
which we show is algebraically identical to $\hat V_{\mytext{b}} ( \hat S_{1k,\mytext{wKM}} )$.
By comparing the preceding display and the expression in \eqref{eq:weighted-KM-Vb}, it suffices to show that for $0\le \ell\le k-1$,
\begin{align*}
 & \quad \frac{1}{n^2}  \sum_{i=1}^n
  \frac{A_i }{ ( \hat\pi_i)^2 } \frac{\overline R_{\ell+1, i} }{ ( \hat{\overline \rho}_{1,\ell+1,\mytext{CAL}} )^2 } (\hat{\underline \eta}_{1,\ell+2,\mytext{CAL}})^2
 \hat \eta_{1,\ell+1,\mytext{CAL}} (1-\hat \eta_{1,\ell+1,\mytext{CAL}}) \\
 & = \hat S_{1k,\mytext{wKM}} ^2
\frac{\sum_{i \in I_{\ell+1} } A_i \hat w_{1i}^2 }{ (\sum_{i\in I_{\ell+1} } A_i \hat w_{1i})^2 } \frac{\hat q_{1,\ell+1} }{1- \hat q_{1,\ell+1}} .
\end{align*}
This can be verified by combining the following equations:
\begin{align*}
& \sum_{i=1}^n
  \frac{A_i }{ ( \hat\pi_i)^2 }\overline R_{\ell+1, i} = \sum_{i \in I_{\ell+1} } A_i \hat w_{1i}^2 ,\\
& \quad \frac{1}{n} \frac{1}{\hat{\overline \rho}_{1,\ell+1,\mytext{CAL}} } \hat{\underline \eta}_{1,\ell+2,\mytext{CAL}}
 = \hat S_{1k,\mytext{wKM}}   \frac{1} { \sum_{\tau \in I_{\ell+1} \setminus J_{\ell+1} } A_\tau \hat w_{1\tau} },\\
&  \hat \eta_{1,\ell+1,\mytext{CAL}} (1-\hat \eta_{1,\ell+1,\mytext{CAL}}) =
 \frac{(\sum_{i\in I_{\ell+1}\setminus J_{\ell+1} } A_i \hat w_{1i})^2}{ (\sum_{i\in I_{\ell+1} } A_i \hat w_{1i})^2 } \frac{\hat q_{1,\ell+1} }{1- \hat q_{1,\ell+1}} ,
\end{align*}
where the first equality follows by definition, the second equality follows from \eqref{eq:prf2-weighted-KM-Vr},
and the third equality follows from the expression \eqref{eq:prf-eta-expression}.

\vspace{.1in}
\textbf{Proof of Proposition~\ref{pro:AIPW-CAL}.}\;
To account for the sampling variations of
$\hat\xi_{\mytext{CAL}} =(\hat\gamma_{1,\mytext{CAL}}, $ $\hat\rho_{1,1:k,\mytext{CAL}}, \allowbreak \hat\eta_{1,1:k,\mytext{CAL}}, \hat\alpha_{1k,\mytext{CAL}})$
about the limit values
$\breve\xi_{\mytext{CAL}}=(\breve\gamma_{1,\mytext{CAL}}, \breve\rho_{1,1:k,\mytext{CAL}}, \breve\eta_{1,1:k,\mytext{CAL}}, \breve\alpha_{1k,\mytext{CAL}})$,
an asymptotic expansion for
$\hat S_{1k,\mytext{CAL}} = \hat S_{1k} (\hat\xi_{\mytext{CAL}} )$
can be obtained as
\begin{align}\label{eq:AIPW-expan-full}
\begin{split}
& \quad \hat S_{1k} (\hat\xi_{\mytext{CAL}})
= \hat S_{1k} (\breve\xi_{\mytext{CAL}}) \\
& \quad + E \left\{\frac{\partial }{\partial\gamma} \hat S_{1k} (\xi) \right\}^\T
 \Big|_{\breve\xi_{\mytext{CAL}}}  (\hat\gamma_{1,\mytext{CAL}} - \breve\gamma_{1,\mytext{CAL}} ) \\
& \quad + E \left\{\frac{\partial }{\partial \rho_{1,1:k}} \hat S_{1k} (\xi) \right\}^\T
 \Big|_{\breve\xi_{\mytext{CAL}}}  ( \hat\rho_{1,1:k,\mytext{CAL}} - \breve\rho_{1,1:k,\mytext{CAL}}) \\
& \quad + E \left\{\frac{\partial }{\partial \eta_{1,1:k}} \hat S_{1k} (\xi) \right\}^\T
 \Big|_{\breve\xi_{\mytext{CAL}}}   ( \hat\eta_{1,1:k,\mytext{CAL}} - \breve\eta_{1,1:k,\mytext{CAL}}) \\
& \quad + E \left\{\frac{\partial }{\partial\alpha_{1k}} \hat S_{1k} (\xi) \right\}^\T
 \Big|_{\breve\xi_{\mytext{CAL}}}  (\hat\alpha_{1k,\mytext{CAL}} - \breve\alpha_{1k,\mytext{CAL}} ) + o_p(n^{-1/2}),
\end{split}
\end{align}
where $\xi= (\gamma, \rho_{1,1:k}, \eta_{1,1:k}, \alpha_{1k})$.
The limit values
$\breve\xi_{\mytext{CAL}}$
are jointly determined as the solutions to the population version
of the equations \eqref{eq:CAL-rho}, \eqref{eq:CAL-eta}, \eqref{eq:CAL-gamma}, and \eqref{eq:CAL-alpha} for $\hat\xi_{\mytext{CAL}}$:
\begin{align}
& 0= E \left[\frac{1}{n} \sum_{i=1}^n  \frac{A_i }{\pi(X_i;\gamma) } \overline R_{\ell i}
\left( \frac{R_{\ell+1, i} }{\rho_{1,\ell+1} }-1 \right) 1\{ U^{(1)}_i > u_\ell \} \right], \quad \ell=0,\ldots,k-1,  \label{eq:CAL-rho-pop-b} \\
& 0 = E \left[ \frac{1}{n} \sum_{i=1}^n  \frac{A_i }{\pi(X_i;\gamma) } \overline R_{\ell+1, i}
1\{ U^{(1)}_i > u_\ell \} ( 1\{U^{(1)}_i > u_{\ell+1}-  \eta_{1,\ell+1} ) \right], \quad \ell=0,\ldots,k-1,  \label{eq:CAL-eta-pop-b} \\
& 0 = E \left[ \frac{1}{n} \sum_{i=1}^n \left\{ \frac{A_i}{\pi(X_i;\gamma)} -1 \right\} f(X_i)  \right],  \label{eq:CAL-gamma-pop} \\
& 0 = E \left[ \frac{1}{n} \sum_{i=1}^n A_i \frac{1-\pi(X_i;\gamma)} {\pi(X_i;\gamma) }
  \left\{ \hat U^{(1)}_{ki} (\rho_{1,1:k},\eta_{1,1:k}) - \mu_{1k} (X_i;\alpha_{1k}) \right\} f(X_i) \right],  \label{eq:CAL-alpha-pop}
\end{align}
where $\hat U^{(1)}_{ki} (\rho_{1,1:k},\eta_{1,1:k})$ is defined as $\hat U^{(1)}_{ki} (\hat\rho_{1,1:k},\hat\eta_{1,1:k})$
with $(\hat\rho_{1,1:k},\hat\eta_{1,1:k})$ replaced by $(\rho_{1,1:k},$ $\eta_{1,1:k})$.
First, similarly as in the proof of Proposition~\ref{pro:weighted-KM-Vr}, it can be shown that
\begin{align*}
& 0 = E \left\{\frac{\partial }{\partial \rho_{1,1:k}} \hat S_{1k} (\xi) \right\}
 \Big|_{\breve\xi_{\mytext{CAL}}} ,\\
& 0 = E \left\{\frac{\partial }{\partial \eta_{1,1:k}} \hat S_{1k} (\xi) \right\}
 \Big|_{\breve\xi_{\mytext{CAL}}} ,
\end{align*}
using equations \eqref{eq:CAL-rho-pop-b}, \eqref{eq:CAL-eta-pop-b} and $E( A \pi(X;\breve\gamma)^{-1}) =1$ from \eqref{eq:CAL-gamma-pop} with
$f$ including an intercept.
Second, it can be shown using the expression of $\varphi_{1ki}(\xi)$ and \eqref{eq:CAL-alpha-pop} that
\begin{align*}
 &\quad E \left\{\frac{\partial }{\partial\gamma} \hat S_{1k} (\xi) \right\}
 \Big|_{\breve\xi_{\mytext{CAL}}} \\
 & = E \Big[ \frac{1}{n} \sum_{i=1}^n  \frac{\partial\pi(X_i;\gamma)}{\partial\gamma}
 \frac{-A_i}{\pi(X_i;\gamma)^2} \left\{ \hat U^{(1)}_{ki} (\rho_{1,1:k},\eta_{1,1:k}) - \mu_{1k}(X_i;\alpha_{1k}) \right\} \Big]
 \Big|_{\breve\xi_{\mytext{CAL}}} \\
 & = 0 ,
\end{align*}
where $\frac{\partial\pi(X_i;\gamma)}{\partial\gamma}
 \frac{1}{\pi(X_i;\gamma)^2} =\frac{1-\pi(X_i;\gamma)} {\pi(X_i;\gamma) } f(X_i)$ for logistic regression \eqref{eq:PS-model}.
Second, if model \eqref{eq:PS-model} is correctly specified, then it can be shown using the expression of $\varphi_{1ki}(\xi)$ that
\begin{align} \label{eq:derv-alpha}
\begin{split}
 &\quad E \left\{\frac{\partial }{\partial\alpha_{1k}} \hat S_{1k} (\xi) \right\}
 \Big|_{\breve\xi_{\mytext{CAL}}} \\
 & = E \Big[ \frac{-1}{n} \sum_{i=1}^n
 \left\{ \frac{A_i}{\pi(X_i;\gamma) } -1 \right\}  \frac{\partial\mu_{1k}(X_i;\alpha_{1k})}{\partial\alpha_{1k}} \Big]
 \Big|_{\breve\xi_{\mytext{CAL}}} \\
 & = 0.
\end{split}
\end{align}
By combining the preceding three displays, if model \eqref{eq:PS-model} is correctly specified, then the asymptotic expansion \eqref{eq:AIPW-expan-full} reduces to \eqref{eq:CAL-expan}:
\begin{align*}
& \quad \hat S_{1k} (\hat\xi_{\mytext{CAL}})
= \hat S_{1k} (\breve\xi_{\mytext{CAL}}) + o_p(n^{-1/2}).
\end{align*}
From this asymptotic expansion, the consistency of the variance estimator \eqref{eq:AIPW-CAL-Vr} can be verified by standard arguments.

\vspace{.1in}
\textbf{Proof of Proposition~\ref{pro:AIPW-CAL-lin}.}\;
The result can be shown similarly as Proposition~\ref{pro:AIPW-CAL}, except for the following modification.
Equation \eqref{eq:derv-alpha} can be obtained using \eqref{eq:CAL-gamma-pop} but without requiring model \eqref{eq:PS-model} to be correctly specified:
\begin{align*}
 &\quad E \left\{\frac{\partial }{\partial\alpha_{1k}} \hat S_{1k} (\xi) \right\}
 \Big|_{\breve\xi_{\mytext{CAL}}} \\
 & = E \Big[ \frac{-1}{n} \sum_{i=1}^n
 \left\{ \frac{A_i}{\pi(X_i;\gamma) } -1 \right\}  \frac{\partial\mu_{1k}(X_i;\alpha_{1k})}{\partial\alpha_{1k}} \Big]
 \Big|_{\breve\xi_{\mytext{CAL}}} \\
 & = E \Big[ \frac{-1}{n} \sum_{i=1}^n
 \left\{ \frac{A_i}{\pi(X_i;\gamma) } -1 \right\} f(X_i) \Big]
 \Big|_{\breve\xi_{\mytext{CAL}}} \\
 & = 0,
\end{align*}
because $\frac{\partial\mu_{1k}(X_i;\alpha_{1k})}{\partial\alpha_{1k}} = f(X_i)$ for linear regression \eqref{eq:OR-model-lin}.

\vspace{.1in}
\textbf{Proof of equation \eqref{eq:AIPW-reduction}.}\;
By the definitions of $\hat{S}_{1k, \mytext{CAL, lin}}$ and $\hat{S}_{1k}(\hat{\gamma}_{1, \mytext{CAL}}, \hat{\rho}_{1, 1:k, \mytext{CAL}}, \hat{\eta}_{1, 1:k, \mytext{CAL}})$ in \eqref{eq:AIPW-new} and \eqref{eq:AIPW}, the only difference between these two estimators lies in their final terms, namely,
\begin{align*}
\frac{1}{n}\sum_{i=1}^n (\frac{A_i}{\hat{\pi}_i} - 1)\hat{\mu}_{1ki}\quad \text{and}\quad \frac{1}{n}\sum_{i=1}^n \{ (\frac{A_i}{\hat{\pi}_i} - 1)\prod_{j=1}^k\hat{m}_{1ji} \}.
\end{align*}
Under linear OR model \eqref{eq:OR-model-lin} and constant CSP model \eqref{eq:const-m}, these two terms can be rewritten as
\begin{align*}
\frac{1}{n}\hat{\alpha}^\T_{1k, \mytext{CAL}}\sum_{i=1}^n (\frac{A_i}{\hat{\pi}_i} - 1)f(X) \quad \text{and}\quad \frac{1}{n}\prod_{j=1}^k\hat{\eta}_{1j, \mytext{CAL}}\sum_{i=1}^n (\frac{A_i}{\hat{\pi}_i} - 1).
\end{align*}
By the definition of $\hat{\gamma}_{1, \mytext{CAL}}$ in \eqref{eq:CAL-gamma} and noting that $f(X)$ includes a constant term, both expressions equals 0. Therefore,
$\hat{S}_{1k, \mytext{CAL, lin}} = \hat{S}_{1k}(\hat{\gamma}_{1, \mytext{CAL}}, \hat{\rho}_{1, 1:k, \mytext{CAL}}, \hat{\eta}_{1, 1:k, \mytext{CAL}})$.

The equality between $\hat{S}_{1k}(\hat{\gamma}_{1, \mytext{CAL}}, \hat{\rho}_{1, 1:k, \mytext{CAL}}, \hat{\eta}_{1, 1:k, \mytext{CAL}})$ and $\hat{S}_{\mytext{wKM}}(\hat{\gamma}_{1, \mytext{CAL}})$ follows directly from Proposition~\ref{pro:weighted-KM}.
This completes the proof.

\vspace{.1in}
\textbf{Proof of Proposition~\ref{pro:CAL-theta}.}\;
Suppose that model \eqref{eq:PS-model} is correctly specified, with the true value $\gamma^*$. From Proposition~\ref{pro:AIPW-CAL},
the calibrated estimator $\hat S_{ak,\mytext{CAL}}$ satisfies the asymptotic expansion
\begin{align}
 \hat S_{ak,\mytext{CAL}} & = \hat S_{ak} (\breve\gamma_{a,\mytext{CAL}}, \breve\rho_{a,1:k,\mytext{CAL}}, \breve\eta_{a,1:k,\mytext{CAL}}, \breve\alpha_{ak,\mytext{CAL}})  + o_p (n^{-1/2}) \nonumber \\
 & =  \frac{1}{n} \sum_{i=1}^n \underbrace{ \varphi_{aki} (\breve\gamma_{a,\mytext{CAL}}, \breve\rho_{a,1:k,\mytext{CAL}}, \breve\eta_{a,1:k,\mytext{CAL}}, \breve\alpha_{ak,\mytext{CAL}}) }_{\breve\varphi_{aki}} + o_p (n^{-1/2}) . \label{eq:CAL-theta-expan-S}
\end{align}
The limit value of $\hat S_{ak,\mytext{CAL}} $, denoted as $\breve S_{ak,\mytext{CAL}} $, coincides with $S_{ak}$
if models \eqref{eq:PS-model} and \eqref{eq:const-pi} are correctly specified
but may differ from $S_{ak}$ otherwise.
From Tan (2020), the calibrated estimator $\hat W_{ak,\mytext{CAL}}$ is consistent for
$\breve W_{ak,\mytext{CAL}} = W_{ak}$ (because model \eqref{eq:PS-model} is correctly specified) and satisfies the asymptotic expansion
\begin{align}
 \hat W_{ak,\mytext{CAL}} &= \hat W_{ak} ( \breve\gamma_{a,\mytext{CAL}}, \breve\beta_{ak,\mytext{CAL}}) + o_p(n^{-1/2}) \nonumber \\
 & = \frac{1}{n} \sum_{i=1}^n \underbrace{ \psi_{aki} ( \breve\gamma_{a,\mytext{CAL}}, \breve\beta_{ak,\mytext{CAL}}) }_{ \breve\psi_{aki}} + o_p(n^{-1/2}),
 \label{eq:CAL-theta-expan-W}
\end{align}
where $( \breve\gamma_{a,\mytext{CAL}}, \breve\beta_{ak,\mytext{CAL}})$ are the limit values of
$( \hat\gamma_{a,\mytext{CAL}}, \hat\beta_{ak,\mytext{CAL}})$, and $\breve\gamma_{a,\mytext{CAL}}=\gamma^*$.

By standard arguments in theory of M-estimation, it can be shown that $\hat\theta_{\mytext{CAL}}$
converges in probability to a limit value $\breve\theta_{\mytext{CAL}}$, which is determined as a solution to
\begin{align}
 0 = \sum_{k=1}^K \frac{ W_{1k} W_{0k} }
 { W_{1k} \me^\theta + W_{0k} } (\breve q_{1k,\mytext{CAL}} - \breve q_{0k,\mytext{CAL}} \me^\theta )  ,  \label{eq:CAL-theta-expan-prf0}
\end{align}
where $\breve q_{ak,\mytext{CAL}} = (\breve S_{a,k-1,\mytext{CAL}} - \breve S_{ak,\mytext{CAL}} )/\breve S_{a,k-1,\mytext{CAL}} $.
The limit value $\breve\theta_{\mytext{CAL}} $ coincides with $\breve\theta_{\mytext{wBP}}$
if models \eqref{eq:PS-model} and \eqref{eq:const-pi} are correctly specified and hence $\breve S_{ak,\mytext{CAL}} = S_{ak}$
but may differ from $\breve\theta_{\mytext{wBP}}$ otherwise.
Moreover, $\hat\theta_{\mytext{CAL}}$ satisfies the asymptotic expansion
\begin{align}
 & \quad \hat\theta_{\mytext{CAL}} - \breve\theta_{\mytext{CAL}} \nonumber \\
 & = H_{\mytext{CAL}} (\theta)^{-1}\; \sum_{k=1}^K \left\{  \frac{ \hat W_{1k,\mytext{CAL}} \hat W_{0k,\mytext{CAL}} }
 { \hat W_{1k,\mytext{CAL}} \me^\theta + \hat W_{0k,\mytext{CAL}} } ( \hat q_{1k,\mytext{CAL}} - \hat q_{0k,\mytext{CAL}} \me^\theta ) \right\}
 \Big|_{\breve\theta_{\mytext{CAL}}} + o_p(n^{-1/2}), \label{eq:CAL-theta-expan-prf1}
\end{align}
where
\begin{align*}
 H_{\mytext{CAL}} (\theta) &=  (-1)\frac{\dif}{\dif \theta} \sum_{k=1}^K \frac{ W_{1k} W_{0k} }
 { W_{1k} \me^\theta + W_{0k} } (\breve q_{1k,\mytext{CAL}} - \breve q_{0k,\mytext{CAL}} \me^\theta ) \\
 &= \sum_{k=1}^K \frac{W_{1k} W_{0k} \me^\theta}
 { ( W_{1k} \me^\theta + W_{0k} )^2 }
 ( W_{1k} \breve q_{1k,\mytext{CAL}}+ W_{0k} \breve q_{0k,\mytext{CAL}} ).
\end{align*}
We show that
\begin{align}
 & \quad \sum_{k=1}^K \frac{ \hat W_{1k,\mytext{CAL}} \hat W_{0k,\mytext{CAL}} }
 { \hat W_{1k,\mytext{CAL}} \me^\theta + \hat W_{0k,\mytext{CAL}} } ( \hat q_{1k,\mytext{CAL}} - \hat q_{0k,\mytext{CAL}} \me^\theta )
 \Big|_{\breve\theta_{\mytext{CAL}}} \nonumber \\
 & =  \frac{1}{n} \sum_{i=1}^n  \sum_{k=1}^K \breve\varphi_{ki} (\breve\theta_{\mytext{CAL}}) + o_p(n^{-1/2}),\label{eq:CAL-theta-expan-prf2}
\end{align}
where $\breve\varphi_{ki} (\theta)$ is defined as $\hat\varphi_{ki} (\theta)$ prior to Proposition~\ref{pro:CAL-theta},
with $\hat W_{ak,\mytext{CAL}}$ replaced by $W_{ak}$, $\hat q_{ak,\mytext{CAL}}$ replaced by $\breve q_{ak,\mytext{CAL}}$,
$\hat\varphi_{aki}$ replaced by $\breve\varphi_{aki}$,
and $\hat\psi_{aki}$ replaced by $\breve\psi_{aki}$, i.e.,
\begin{align*}
 & \breve \varphi_{ki} (\theta) =
  \frac{ W_{0k} W_{1k}} { W_{1k} \me^\theta + W_{0k} }
  \left(\frac{ -\breve\varphi_{1ki} + (1-\breve q_{1k,\mytext{CAL}}) \breve\varphi_{1,k-1,i} }{ \breve S_{1,k-1,\mytext{CAL}} } \right) \\
  & \quad - \frac{ W_{0k} W_{1k} \me^\theta  } { W_{1k} \me^\theta + W_{0k} }
  \left( \frac{ - \breve\varphi_{0ki} + (1-\breve q_{0k,\mytext{CAL}}) \breve\varphi_{0,k-1,i} }{ \breve S_{0,k-1,\mytext{CAL}} } \right)  \\
  & \quad + \frac{\breve q_{1k,\mytext{CAL}} -  \breve q_{0k,\mytext{CAL}} \me^\theta }{(W_{1k} \me^\theta + W_{0k})^2 }
 \left\{ W_{0k}^2 (\breve\psi_{1ki} - W_{1k} ) +
 \me^\theta W_{1k}^2 (\breve\psi_{0ki} - W_{0k} ) \right\} .
\end{align*}
Combining \eqref{eq:CAL-theta-expan-prf1} and \eqref{eq:CAL-theta-expan-prf2} then yields
\begin{align*}
  \hat\theta_{\mytext{CAL}} - \breve\theta_{\mytext{CAL}}
  = H_{\mytext{CAL}} (\theta)^{-1}\;  \sum_{k=1}^K \breve\varphi_{ki} (\breve\theta_{\mytext{CAL}})  \Big|_{\breve\theta_{\mytext{CAL}}} + o_p(n^{-1/2}),
\end{align*}
from which the consistency of the variance estimator \eqref{eq:CAL-theta-Vr} for the asymptotic variance of $\hat\theta_{\mytext{CAL}}$ can be deduced by standard arguments.

To show \eqref{eq:CAL-theta-expan-prf2}, consider the decomposition
\begin{align*}
 & \quad \sum_{k=1}^K  \frac{ \hat W_{1k,\mytext{CAL}} \hat W_{0k,\mytext{CAL}} }
 { \hat W_{1k,\mytext{CAL}} \me^\theta + \hat W_{0k,\mytext{CAL}} } ( \hat q_{1k,\mytext{CAL}} - \hat q_{0k,\mytext{CAL}} \me^\theta )  \\
 & =\sum_{k=1}^K  \Bigg[   \frac{ \hat W_{1k,\mytext{CAL}} \hat W_{0k,\mytext{CAL}} }
 { \hat W_{1k,\mytext{CAL}} \me^\theta + \hat W_{0k,\mytext{CAL}} } (\hat q_{1k,\mytext{CAL}} - \breve q_{1k,\mytext{CAL}} ) \\
 & \qquad\qquad -  \frac{ \hat W_{1k,\mytext{CAL}} \hat W_{0k,\mytext{CAL}} \me^\theta }
 { \hat W_{1k,\mytext{CAL}} \me^\theta + \hat W_{0k,\mytext{CAL}} } (\hat q_{0k,\mytext{CAL}} - \breve q_{0k,\mytext{CAL}} ) \\
 & \qquad\qquad + (\breve q_{1k,\mytext{CAL}} -  \breve q_{0k,\mytext{CAL}} \me^\theta )
  \left\{ \frac{ \hat W_{1k,\mytext{CAL}} \hat W_{0k,\mytext{CAL}} }
 { \hat W_{1k,\mytext{CAL}} \me^\theta + \hat W_{0k,\mytext{CAL}} }  -
 \frac{ W_{1k} W_{0k} }
 { W_{1k} \me^\theta + W_{0k} } \right\} \Bigg],
\end{align*}
using the fact that $\breve\theta_{\mytext{CAL}}$ satisfies \eqref{eq:CAL-theta-expan-prf0}.
Here and subsequently, $\theta$ is evaluated at $\breve\theta_{\mytext{CAL}}$.
First, by standard manipulation and asymptotic expansion \eqref{eq:CAL-theta-expan-S} for $\hat S_{ak,\mytext{CAL}}$,
\begin{align*}
& \quad \sum_{k=1}^K  \frac{ \hat W_{1k,\mytext{CAL}} \hat W_{0k,\mytext{CAL}} }
 { \hat W_{1k,\mytext{CAL}} \me^\theta + \hat W_{0k,\mytext{CAL}} } ( \hat q_{1k,\mytext{CAL}} - \breve q_{1k,\mytext{CAL}} )\\
 &= \sum_{k=1}^K  \frac{ W_{0k} W_{1k}} { W_{1k} \me^\theta + W_{0k} }
 \left( \frac{-\hat S_{1k,\mytext{CAL}} + (1- \breve q_{1k,\mytext{CAL}}) \hat S_{1,k-1,\mytext{CAL}} }{\hat S_{1,k-1,\mytext{CAL}} } \right)  + o_p(n^{-1/2}) \\
& = \frac{1}{n}\sum_{i=1}^n \sum_{k=1}^K   \frac{ W_{0k} W_{1k}} { W_{1k} \me^\theta + W_{0k} }
 \left( \frac{ -\breve\varphi_{1ki} + (1-\breve q_{1k,\mytext{CAL}}) \breve\varphi_{1,k-1,i} }{ \breve S_{1,k-1,\mytext{CAL}} } \right)
 + o_p(n^{-1/2}),
\end{align*}
and similarly
\begin{align*}
& \quad \sum_{k=1}^K  \frac{ \hat W_{0k,\mytext{CAL}} \hat W_{0k,\mytext{CAL}} \me^\theta }
 { \hat W_{0k,\mytext{CAL}} \me^\theta + \hat W_{0k,\mytext{CAL}} } ( \hat q_{0k,\mytext{CAL}} - \breve q_{0k,\mytext{CAL}} )\\
& = \frac{1}{n}\sum_{i=1}^n \sum_{k=1}^K   \frac{ W_{0k} W_{0k} \me^\theta} { W_{0k} \me^\theta + W_{0k} }
 \left( \frac{- \breve\varphi_{0ki} + (1-\breve q_{0k,\mytext{CAL}}) \breve\varphi_{0,k-1,i} }{ \breve S_{0,k-1,\mytext{CAL}} } \right)
 + o_p(n^{-1/2}) .
\end{align*}
Second, by a Taylor expansion and asymptotic expansion \eqref{eq:CAL-theta-expan-W} for $\hat W_{ak,\mytext{CAL}}$,
\begin{align*}
& \quad \sum_{k=1}^K
 (\breve q_{1k,\mytext{CAL}} -  \breve q_{0k,\mytext{CAL}} \me^\theta )
  \left\{ \frac{ \hat W_{1k,\mytext{CAL}} \hat W_{0k,\mytext{CAL}} }
 { \hat W_{1k,\mytext{CAL}} \me^\theta + \hat W_{0k,\mytext{CAL}} }  -
 \frac{ W_{1k} W_{0k} }
 { W_{1k} \me^\theta + W_{0k} } \right\}\\
 & =\sum_{k=1}^K
 \frac{\breve q_{1k,\mytext{CAL}} -  \breve q_{0k,\mytext{CAL}} \me^\theta} { ( W_{1k}\me^\theta + W_{0k} )^2}
 \left\{ W_{0k}^2 (\hat W_{1k,\mytext{CAL}}- W_{1k}) + \me^\theta W_{1k}^2 (\hat W_{0k,\mytext{CAL}}- W_{0k})  \right\}
 + o_p(n^{-1/2}) \\
 & = \frac{1}{n}\sum_{i=1}^n \sum_{k=1}^K
\frac{\breve q_{1k,\mytext{CAL}} -  \breve q_{0k,\mytext{CAL}} \me^\theta} { ( W_{1k}\me^\theta + W_{0k} )^2}
 \left\{ W_{0k}^2 (\breve \psi_{1ki} - W_{1k}) + \me^\theta W_{1k}^2 (\breve \psi_{0ki}- W_{0k})  \right\}
 + o_p(n^{-1/2}).
\end{align*}
Combining the preceding four displays gives the expansion \eqref{eq:CAL-theta-expan-prf2}.

\vspace{.1in}
\textbf{Proof of asymptotic expansion \eqref{eq:expand}.}\;
The asymptotic expansion \eqref{eq:expand} is similar to \eqref{eq:CAL-theta-expan-prf2} in the proof of Proposition~\ref{pro:CAL-theta},
where $\theta$ is set to $\hat\theta_{\mytext{CAL}}$  (satisfying \eqref{eq:CAL-theta-expan-prf0}).
In fact, for any $\theta$, following similar steps as in the proof of \eqref{eq:CAL-theta-expan-prf2} gives
\begin{align*}
& \quad \sum_{k=1}^K \Big[\frac{\hat{W}_{1k, \mytext{IPW}}\hat{W}_{0k, \mytext{IPW}}}{\hat{W}_{1k, \mytext{IPW}}\me^{\theta} + \hat{W}_{0k, \mytext{IPW}}}(\hat{q}_{1k, \mytext{IPW}} - \hat{q}_{0k, \mytext{IPW}}\me^{\theta})
 - \frac{\breve{W}_{1k, \mytext{IPW}}\breve{W}_{0k, \mytext{IPW}}}{\breve{W}_{1k, \mytext{IPW}}\me^{\theta} + \breve{W}_{0k, \mytext{IPW}}}(\breve{q}_{1k, \mytext{IPW}} - \breve{q}_{0k, \mytext{IPW}}\me^{\theta}) \Big] \\
& = \sum_{k=1}^K \Big[ \frac{\breve{W}_{1k, \mytext{IPW}}\breve{W}_{0k, \mytext{IPW}}}{\breve{W}_{1k, \mytext{IPW}}\me^{\theta} + \breve{W}_{0k, \mytext{IPW}}}\left(\frac{- \hat{S}_{1k, \mytext{IPW}} + (1 - \breve{q}_{1k, \mytext{IPW}})\hat{S}_{1, k-1, \mytext{IPW}} }{\breve{S}_{1, k-1, \mytext{IPW}}}\right) \\
& \quad - \frac{\breve{W}_{1k, \mytext{IPW}}\breve{W}_{0k, \mytext{IPW}} \me^{\theta}}{\breve{W}_{1k, \mytext{IPW}}\me^{\theta} + \breve{W}_{0k, \mytext{IPW}}}\left(\frac{- \hat{S}_{0k, \mytext{IPW}} + (1 - \breve{q}_{0k, \mytext{IPW}})\hat{S}_{0, k-1, \mytext{IPW}}}{\breve{S}_{0, k-1, \mytext{IPW}}}\right) \nonumber \\
& \quad + \frac{\breve{q}_{1k, \mytext{IPW}} - \breve{q}_{0k, \mytext{IPW}}\me^{\theta}}{(\breve{W}_{1k, \mytext{IPW}}\me^{\theta} + \breve{W}_{0k, \mytext{IPW}})^2}
\{ \breve{W}_{0k, \mytext{IPW}}^2 (\hat{W}_{1k, \mytext{IPW}} -\breve{W}_{1k, \mytext{IPW}} ) + \me^{\theta} \breve{W}_{1k, \mytext{IPW}}^2
 (\hat{W}_{0k, \mytext{IPW}} -\breve{W}_{0k, \mytext{IPW}} ) \} \Big] \\
& \quad + o_P(n^{-1/2}).
\end{align*}
By direct calculation,
\begin{align*}
& \quad \sum_{k=1}^K \frac{\breve{W}_{1k, \mytext{IPW}}\breve{W}_{0k, \mytext{IPW}}}{\breve{W}_{1k, \mytext{IPW}}\me^{\theta} + \breve{W}_{0k, \mytext{IPW}}}(\breve{q}_{1k, \mytext{IPW}} - \breve{q}_{0k, \mytext{IPW}}\me^{\theta}) \\
& = \sum_{k=1}^K \Big[ \frac{\breve{q}_{1k, \mytext{IPW}} - \breve{q}_{0k, \mytext{IPW}}\me^{\theta}}{(\breve{W}_{1k, \mytext{IPW}}\me^{\theta} + \breve{W}_{0k, \mytext{IPW}})^2}
\{ \breve{W}_{0k, \mytext{IPW}}^2 \breve{W}_{1k, \mytext{IPW}} + \me^{\theta} \breve{W}_{1k, \mytext{IPW}}^2 \breve{W}_{0k, \mytext{IPW}} \} \Big] .
\end{align*}
Combining the preceding two displays gives
\begin{align*}
& \quad \sum_{k=1}^K \frac{\hat{W}_{1k, \mytext{IPW}}\hat{W}_{0k, \mytext{IPW}}}{\hat{W}_{1k, \mytext{IPW}}\me^{\theta} + \hat{W}_{0k, \mytext{IPW}}}(\hat{q}_{1k, \mytext{IPW}} - \hat{q}_{0k, \mytext{IPW}}\me^{\theta}) \\
& = \sum_{k=1}^K \Big[ \frac{\breve{W}_{1k, \mytext{IPW}}\breve{W}_{0k, \mytext{IPW}}}{\breve{W}_{1k, \mytext{IPW}}\me^{\theta} + \breve{W}_{0k, \mytext{IPW}}}\frac{(1 - \breve{q}_{1k, \mytext{IPW}})\hat{S}_{1, k-1, \mytext{IPW}} - \hat{S}_{1k, \mytext{IPW}} }{\breve{S}_{1, k-1, \mytext{IPW}}} \\
& \quad - \frac{\breve{W}_{1k, \mytext{IPW}}\breve{W}_{0k, \mytext{IPW}} \me^{\theta}}{\breve{W}_{1k, \mytext{IPW}}\me^{\theta} + \breve{W}_{0k, \mytext{IPW}}}\frac{(1 - \breve{q}_{0k, \mytext{IPW}})\hat{S}_{0, k-1, \mytext{IPW}} - \hat{S}_{0k, \mytext{IPW}}}{\breve{S}_{0, k-1, \mytext{IPW}}} \nonumber \\
& \quad + \frac{\breve{q}_{1k, \mytext{IPW}} - \breve{q}_{0k, \mytext{IPW}}\me^{\theta}}{(\breve{W}_{1k, \mytext{IPW}}\me^{\theta} + \breve{W}_{0k, \mytext{IPW}})^2}
\{ \breve{W}_{0k, \mytext{IPW}}^2 \hat{W}_{1k, \mytext{IPW}}  + \me^{\theta} \breve{W}_{1k, \mytext{IPW}}^2 \hat{W}_{0k, \mytext{IPW}} \} \Big] \\
& \quad + o_P(n^{-1/2}).
\end{align*}
Substituting into the above equation
the definitions of $\hat S_{ak,\mytext{IPW}}$ and $\hat W_{ak,\mytext{IPW}}$ as in \eqref{eq:IPW-RCAL} for $a=1$ yields
the asymptotic expansion \eqref{eq:expand}.

\vspace{.1in}
\textbf{Proof of Proposition~\ref{pro:AIPW}.}\;
For conciseness, we provide the reasoning leading to the asymptotic expansion \eqref{eq:RCAL-theta-expan} only.
To make explicit the dependency on $(\hat\rho_{a,1:k},\hat\eta_{a,1:k})$ through $\hat U^{(a)}_{ki}$, we rewrite
$B_{ai}(\theta, \hat{S}_{1,\mytext{IPW}}, \hat{W}_{\mytext{IPW}})=
B_{ai}(\theta, \hat{S}_{1,\mytext{IPW}}, \hat{W}_{\mytext{IPW}}, \hat\rho_{a,1:k},\hat\eta_{a,1:k})$ and
\begin{align*}
\hat\delta_i(\theta) = & \left\{\frac{A_i}{\hat{\pi}_{1i}}B_{1i}(\theta, \hat{S}_{1,\mytext{IPW}}, \hat{W}_{\mytext{IPW}}, \hat\rho_{1,1:k},\hat\eta_{1,1:k}) - \left(\frac{A_i}{\hat{\pi}_{1i}} - 1\right)\hat{g}_{1i}\right\}  \\
& - \left\{\frac{1 - A_i}{1 - \hat{\pi}_{0i}}B_{0i}(\theta,\hat{S}_{0,\mytext{IPW}}, \hat{W}_{\mytext{IPW}}, \hat\rho_{0,1:k},\hat\eta_{0,1:k})  - \left(\frac{1 - A_i}{1 - \hat{\pi}_{0i}} - 1\right)\hat{g}_{0i}\right\}.
\end{align*}
Then as noted in the main paper,
\begin{align*}
\frac{1}{n} \sum_{i=1}^n \hat\delta_i(\theta)
 = & \sum_{k=1}^K \left\{  \frac{ \hat W_{1k,\mytext{IPW}} \hat W_{0k,\mytext{IPW}} }
 { \hat W_{1k,\mytext{IPW}} \me^\theta + \hat W_{0k,\mytext{IPW}} } ( \hat q_{1k,\mytext{IPW}} - \hat q_{0k,\mytext{IPW}} \me^\theta ) \right\}  \\
& - \frac{1}{n}\sum_{i=1}^n \left\{\left(\frac{A_i}{\hat{\pi}_{1i}} - 1\right)\hat{g}_{1i} - \left(\frac{1 - A_i}{1 - \hat{\pi}_{0i}} - 1\right)\hat{g}_{0i}\right\} .
\end{align*}
The limit version of $\hat\delta_i(\theta)$ is
\begin{align*}
\breve\delta_i(\theta) = & \left\{\frac{A_i}{\breve{\pi}_{1i}}B_{1i}(\theta, \breve{S}_{1,\mytext{IPW}}, \breve{W}_{\mytext{IPW}}, \breve\rho_{1,1:k},\breve\eta_{1,1:k}) - \left(\frac{A_i}{\breve{\pi}_{1i}} - 1\right)\breve{g}_{1i}\right\}  \\
& - \left\{\frac{1 - A_i}{1 - \breve{\pi}_{0i}}B_{0i}(\theta,\breve{S}_{0,\mytext{IPW}}, \breve{W}_{\mytext{IPW}}, \breve\rho_{0,1:k},\breve\eta_{0,1:k})  - \left(\frac{1 - A_i}{1 - \breve{\pi}_{0i}} - 1\right)\breve{g}_{0i}\right\}.
\end{align*}
Note that $B_{ai}(\theta, \breve{S}_{a,\mytext{IPW}}, \breve{W}_{\mytext{IPW}}, \breve\rho_{a,1:k},\breve\eta_{a,1:k})$
differs from $B_{ai}(\theta, \breve{S}_{a,\mytext{IPW}}, \breve{W}_{\mytext{IPW}})$ in the main paper, which corresponds to
$B_{ai}(\theta, \breve{S}_{a,\mytext{IPW}}, \breve{W}_{\mytext{IPW}}, \hat\rho_{a,1:k},\hat\eta_{a,1:k})$. Then
\begin{align}
\frac{1}{n} \sum_{i=1}^n \breve\delta_i(\theta)
 = & \sum_{k=1}^K \left\{  \frac{ \breve W_{1k,\mytext{IPW}} \breve W_{0k,\mytext{IPW}} }
 { \breve W_{1k,\mytext{IPW}} \me^\theta + \breve W_{0k,\mytext{IPW}} } ( \breve q_{1k,\mytext{IPW}} - \breve q_{0k,\mytext{IPW}} \me^\theta ) \right\} \nonumber \\
& - \frac{1}{n}\sum_{i=1}^n \left\{\left(\frac{A_i}{\breve{\pi}_{1i}} - 1\right)\breve{g}_{1i} - \left(\frac{1 - A_i}{1 - \breve{\pi}_{0i}} - 1\right)\breve{g}_{0i}\right\} , \label{eq:breve-delta}
\end{align}
where $\breve \pi_{ai} = \pi(X_i;\breve\gamma_a)$, $\breve g_{ai} = g_a (X_i;\breve\zeta_a)$.

By similar reasoning as in Ghosh and Tan(2022) as well as that underlying Proposition~\ref{pro:CAL-theta}, it can be shown that
$\hat\theta_{\mytext{AIPW}}$ satisfies the asymptotic expansion
\begin{align*}
 \hat\theta_{\mytext{AIPW}} - \breve \theta_{\mytext{IPW}} =
  H (\theta)^{-1} \cdot \frac{1}{n} \sum_{i=1}^n \breve\delta_i(\theta) \Big|_{\breve\theta_{\mytext{IPW}}} + o_p(n^{-1/2}) ,
\end{align*}
where
$ H(\theta) = (-1) E \left\{ \frac{\dif}{\dif \theta}  \frac{1}{n} \sum_{i=1}^n \breve\delta_i(\theta)  \right\}$.
Then the asymptotic expansion \eqref{eq:RCAL-theta-expan} follows because differentiating the expression~\eqref{eq:breve-delta} gives
$ H(\theta) = H_{\mytext{IPW}}(\theta)$, where
\begin{align*}
 H_{\mytext{IPW}} (\theta) = \sum_{k=1}^K \frac{\breve W_{1k,\mytext{IPW}} \breve W_{0k,\mytext{IPW}} \me^\theta}
 { ( \breve W_{1k,\mytext{IPW}} \me^\theta + \breve W_{0k,\mytext{IPW}} )^2 }
 ( \breve W_{1k,\mytext{IPW}} \breve q_{1k,\mytext{IPW}}+ \breve W_{0k,\mytext{IPW}}\breve q_{0k,\mytext{IPW}} ).
\end{align*}

\section{Additional results for simulation study} \label{sec:addtional-simulation}

\subsection{Results for simulation setup} \label{sec:simulation-setup}
We first give the formula for computing the true values of the survival probabilities $S_{ak}$ and $\breve{\theta}_{\mytext{wBP}}$, and then show that the true PS model is sparse, and  model \eqref{eq:PS-model} is correctly specified under (C1) and misspecified under (C2) with $f(X) =(1, X^\T)^\T$.

Applying the law of total expectation and no unmeasured confounding assumption, we have
\begin{align*}
S_{ak} & = P(\tilde{U}^{(a)} > u_k) = E\{P(\tilde{U}^{(a)} > u_k | X)\} = E\{P(\tilde{U} > u_k | A = a , X)\}  \\
& = P(A = 1)E_{X^{1:10} | A = 1}P(\tilde{U} > u_k | A = a , X^{1:10}) + P(A = 0)E_{X^{1:10} | A = 0}P(\tilde{U} > u_k | A = a , X^{1:10}),
\end{align*}
where $E_{X^{1:10} | A = a}$ denotes the conditional expectation of $X^{1:10} | A = a$.  We employ a Monte Carlo approach to obtain $S_{ak}$. Specifically, we generate independent samples of $X^{1:10} | A = a$ from the truncated multivariate normal distribution using the R package \texttt{tmvtnorm}. For each sampled $X^{1:10} = x$, we compute the probability $P(\tilde{U} > u_k | A = a , X^{1:10})$ based on the Weibull distribution of $\tilde{U} | A = a, X^{1:10}$. 
After generating samples of size $n = 10^6$ from $X^{1:10} | A=a$ for each $a$, we estimate the expectation $E_{X^{1:10} | A = a}P(\tilde{U} > u_k | A = a , X^{1:10})$ by taking the sample average of these probabilities and then we obtain an estimate of $P(\tilde{U}^{(a)} > u_k)$. 
To improve accuracy, we repeat the entire procedure 20 times and use the average of these 20 estimates as our final approximation of $P(\tilde{U}^{(a)} > u_k)$.

The parameter $\breve{\theta}_{\mytext{wBP}}$ is defined as a solution to \eqref{eq:weighted-BP-pop}. To compute $\breve{\theta}_{\mytext{wBP}}$, we
need to obtain values of $S_{ak}$ and $W_{ak}$ for $k = 1, \ldots, K$.
Under the non-informative censoring assumption and the independence of $\tilde{C}$ from $X$, we have
\begin{align*}
W_{ak} & = E\{ P( \tilde{Y} \ge u_k | A=a,X )\} = E[ P \{ \min(\tilde{U}, \tilde{C}) \geq u_k | A = a, X\}] \\
& = E\{ P(\tilde{U} \geq u_k | A = a, X) \times P(\tilde{C} \geq u_k | A = a, X) \} \\
& = E\{ P(\tilde{U} \geq u_k | A = a, X)\} \times P(\tilde{C} \geq u_k | A = a) = S_{ak}\times P(\tilde{C} \geq u_k | A = a),
\end{align*}
which can be used to compute $W_{ak}$ once $S_{ak}$ is computed.  We still employ a Monte Carlo approach to obtain $\breve{\theta}_{\mytext{wBP}}$. Specifically, we generate a dataset of $(Y, \Delta, A, X)$ with sample size $n = 2\times 10^6$. 
For each $k = 1, \ldots, K$, we compute an estimate of $S_{ak}$ as described above, then compute $W_{ak}$ using the relationship above, and finally solve equation \eqref{eq:weighted-BP-pop} using the R package \texttt{trust} to obtain an estimate of $\breve{\theta}_{\mytext{wBP}}$.  To improve accuracy, we repeat the entire procedure 20 times and use the average of these 20 estimates as our final approximation of $\breve{\theta}_{\mytext{wBP}}$. The value of $K$ is determined as described in Section \ref{sec:addtional-result}.

Under the data generating setting, the propensity score is
\begin{align*}
\log \frac{\pi^*(X)}{1 - \pi^*(X)} = - \frac{1}{2}X^{{1:10}^T}(\Sigma_1 - \Sigma_0)X^{1:10} + (\mu_1^T\Sigma_1 - \mu_0^T\Sigma_0)^T X^{1:10}.
\end{align*}
Then it is easily seen that model \eqref{eq:PS-model} with $f(X) = (1, X)^\T$ is correctly specified under (C1) and misspecified under (C2).
Next, we show that $\breve{\gamma}_{1, \mytext{CAL}}$ are sparse, with only the first 11 components nonzero.
In fact, consider a coefficient vector $\breve\gamma$ with the first 11 components defined as a solution to \eqref{eq:limit-gamma} below for $j=0$ with $X^0\equiv 1$ and for $j=1,\ldots,10$,
and the remaining components set to 0. Then it suffices to verify that
\begin{align}
E\left\{\left(\frac{A}{\pi(X; \breve{\gamma})} - 1\right)X^j\right\} = 0  \label{eq:limit-gamma}
\end{align}
for $j = 11, \ldots, p$. By the definition of $\breve\gamma$, \eqref{eq:limit-gamma} holds for $j = 0, 1, \ldots, 10$. For $j\geq 11$,
\begin{align*}
E\left\{\left(\frac{A}{\pi(X; \breve{\gamma} )} - 1\right)X^j \right\} & =
E\left\{\left(\frac{A}{\pi(X; \breve{\gamma} )} - 1\right) (X^j-E(X^j)) \right\} \\
& = E\left\{\left(\frac{\pi^*(X)}{\pi(X; \breve{\gamma} )} - 1\right) (X^j-E(X^j)) \right\} \\
& = E\left\{ \frac{\pi^*(X)}{\pi(X; \breve{\gamma} )} - 1 \right\} E(X^j-E(X^j)) = 0,
\end{align*}
where the first equality follows from \eqref{eq:limit-gamma} for $j=0$ (i.e., $X^0\equiv 1$),
the second equality follows from the law of iterated expectations by conditioning on $X$, and
the third equality follows from the independence of $X^j$ and $X^{1:10}$ for $j \ge 11$.

\subsection{Additional simulation results} \label{sec:addtional-result}
The number of time periods $K$ is determined by the maximum observed time $Y$ in the relevant treatment groups, provided that the observation is uncensored and at least one event has occurred. For estimating survival probabilities, $K$ is determined as the maximum of $Y$  in the treatment-1 or treatment-0 group, provided that the observation is uncensored and at least one event has occurred. For estimating the hazard ratio parameter \( \breve{\theta}_{\mytext{wBP}} \), $K$ is determined as the maximum of $Y$ satisfying this criterion in both treatment groups.

Tables \ref{tb:target-S} and \ref{tb:target-theta} present the true and limit (or target) values of estimators for survival probabilities at $u_k = 60, 90, 120$ and the hazard-ratio parameter $\breve{\theta}_{\mytext{wBP}}$.
The limit values are taken to be the Monte Carlo means of the point estimates from 20 repeated simulations each of size $10^6$ 
for the survival probabilities and $2\times 10^6$ for the hazard-ratio parameter.
From Table \ref{tb:target-S}, we observe that under (C1), the limit values of $\hat{S}_{1k, \mytext{KM}}$ differ significantly from the true values,  whereas the limit values of all other estimators closely approximate the true values. Under (C2), the limit values of all estimators deviate from the true values.
Table \ref{tb:target-theta} shows that under (C1), the limit values of all estimators are similar, except for  $\hat{\theta}_{\mytext{BP}}$. Under (C2), the limit values of $\hat{\theta}_{\mytext{CAL, lin}}$,  $\hat{\theta}_{\mytext{RCw}}$ and $\hat{\theta}_{\mytext{RCa}}$ remain the same due to the relationship
explained in Section \ref{sec:high-dimension}, and these values differ from those of  $\hat{\theta}_{\mytext{BP}}$ and  $\hat{\theta}_{\mytext{wBP}}$.

Tables \ref{tb:s1-n1000-c2} and \ref{tb:s1-n400} summarize the simulation results for estimating $S_{1k}$ with $n = 1000$ under (C2) and $n = 400$ under both (C1) and (C2), respectively.
Tables \ref{tb:s0-n400} and \ref{tb:s0-n1000} summarize the simulation results for estimating $S_{0k}$ with $n = 400$ and $n = 1000$, respectively. Tables \ref{tb:theta-n1000-c2} and \ref{tb:theta-n400} report the simulation results for the estimating hazard-ratio parameter $\breve{\theta}_{\mytext{wBP}}$. The conclusions drawn from these tables agree with the findings discussed in the main paper.

\begin{table} 
\caption{True and limit values of  estimators for survival probabilities at $u_k = 60, 90, 120$.} \label{tb:target-S} \vspace{-.15in}
\begin{center}
 \small
  \renewcommand{\arraystretch}{0.8}
\resizebox{0.9\textwidth}{!}{\begin{tabular}{lccccccc}
\hline
& \multicolumn{3}{c}{(C1)} & & \multicolumn{3}{c}{(C2)}  \\
\cline{2-4} \cline{6-8}
& $u_k = 60$ & $u_k = 90$ & $u_k = 120$ & & $u_k = 60$ & $u_k = 90$ & $u_k = 120$  \\
\hline

& \multicolumn{7}{c}{$p = 10, a = 1$} \\
$P(U^{(1)} > u_k)$  & 0.525 ( 0.000 )&  0.405 ( 0.000 )&  0.322 ( 0.000 )&  & 0.531 ( 0.000 )&  0.411 ( 0.000 )&  0.326 ( 0.000 ) \\
$\hat{S}_{1k, \mytext{KM}}$ &  0.555 ( 0.001 )&  0.437 ( 0.001 )&  0.352 ( 0.001 ) & &  0.564 ( 0.001 )&  0.442 ( 0.001 )&  0.353 ( 0.001 )\\
$\hat{S}_{1k, \mytext{wKM}}$ &  0.525 ( 0.001 )&  0.406 ( 0.001 )&  0.322 ( 0.001 ) & &  0.540 ( 0.001 )&  0.416 ( 0.001 )&  0.328 ( 0.001 )\\
$\hat{S}_{1k, \mytext{CAL}}$ &  0.525 ( 0.001 )&  0.406 ( 0.001 )&  0.322 ( 0.001 ) & &  0.534 ( 0.001 )&  0.414 ( 0.001 )&  0.329 ( 0.001 )\\
$\hat{S}_{1k, \mytext{CAL,lin}}$ &  0.525 ( 0.001 )&  0.406 ( 0.001 )&  0.322 ( 0.001 ) & &  0.534 ( 0.001 )&  0.411 ( 0.001 )&  0.323 ( 0.001 )\\

& \multicolumn{7}{c}{$p = 10, a = 0$} \\
$P(U^{(0)} > u_k)$ & 0.608 ( 0.000 )&  0.441 ( 0.000 )&  0.324 ( 0.000 )&  & 0.620 ( 0.000 )&  0.450 ( 0.000 )&  0.329 ( 0.000 ) \\
$\hat{S}_{0k, \mytext{KM}}$ &  0.567 ( 0.001 )&  0.397 ( 0.001 )&  0.283 ( 0.001 ) & &  0.570 ( 0.001 )&  0.407 ( 0.001 )&  0.296 ( 0.001 )\\
$\hat{S}_{0k, \mytext{wKM}}$ &  0.608 ( 0.001 )&  0.441 ( 0.001 )&  0.324 ( 0.001 ) & &  0.617 ( 0.000 )&  0.457 ( 0.001 )&  0.344 ( 0.001 )\\
$\hat{S}_{0k, \mytext{CAL}}$ &  0.608 ( 0.001 )&  0.441 ( 0.001 )&  0.324 ( 0.001 ) & &  0.618 ( 0.001 )&  0.448 ( 0.001 )&  0.328 ( 0.001 )\\
$\hat{S}_{0k, \mytext{CAL,lin}}$ &  0.608 ( 0.001 )&  0.441 ( 0.001 )&  0.324 ( 0.001 ) & &  0.610 ( 0.000 )&  0.449 ( 0.001 )&  0.336 ( 0.001 )\\

& \multicolumn{7}{c}{$p=200, a = 1$} \\
$P(U^{(1)} > u_k)$ & 0.525 ( 0.000 )&  0.405 ( 0.000 )&  0.322 ( 0.000 )&  & 0.531 ( 0.000 )&  0.411 ( 0.000 )&  0.326 ( 0.000 ) \\
$\hat{S}_{1k, \mytext{KM}}$ &  0.554 ( 0.001 )&  0.436 ( 0.001 )&  0.351 ( 0.001 ) & &  0.564 ( 0.001 )&  0.442 ( 0.001 )&  0.353 ( 0.001 )\\
$\hat{S}_{1k, \mytext{wKM}}$ &  0.525 ( 0.001 )&  0.406 ( 0.001 )&  0.322 ( 0.001 ) & &  0.539 ( 0.001 )&  0.416 ( 0.001 )&  0.328 ( 0.001 )\\
$\hat{S}_{1k, \mytext{RCAL}}$ &  0.525 ( 0.001 )&  0.406 ( 0.001 )&  0.322 ( 0.001 ) & &  0.533 ( 0.001 )&  0.414 ( 0.001 )&  0.329 ( 0.001 )\\
$\hat{S}_{1k, \mytext{RCAL,lin}}$ &  0.525 ( 0.001 )&  0.406 ( 0.001 )&  0.322 ( 0.001 ) & &  0.533 ( 0.001 )&  0.410 ( 0.001 )&  0.322 ( 0.001 )\\

& \multicolumn{7}{c}{$p=200, a = 0$} \\
$P(U^{(0)} > u_k)$ & 0.608 ( 0.000 )&  0.441 ( 0.000 )&  0.324 ( 0.000 )&  & 0.620 ( 0.000 )&  0.450 ( 0.000 )&  0.329 ( 0.000 ) \\
$\hat{S}_{1k, \mytext{KM}}$ &  0.567 ( 0.001 )&  0.397 ( 0.001 )&  0.284 ( 0.000 ) & &  0.570 ( 0.001 )&  0.406 ( 0.001 )&  0.296 ( 0.001 )\\
$\hat{S}_{1k, \mytext{wKM}}$ &  0.608 ( 0.001 )&  0.441 ( 0.001 )&  0.324 ( 0.000 ) & &  0.617 ( 0.000 )&  0.457 ( 0.000 )&  0.344 ( 0.000 )\\
$\hat{S}_{1k, \mytext{CAL}}$ &  0.608 ( 0.001 )&  0.441 ( 0.001 )&  0.324 ( 0.000 ) & &  0.618 ( 0.000 )&  0.448 ( 0.000 )&  0.328 ( 0.000 )\\
$\hat{S}_{1k, \mytext{CAL,lin}}$ &  0.608 ( 0.001 )&  0.441 ( 0.001 )&  0.324 ( 0.000 ) & &  0.610 ( 0.000 )&  0.449 ( 0.000 )&  0.336 ( 0.000 )\\

\hline
\end{tabular}}
\end{center}
\setlength{\baselineskip}{0.2\baselineskip}
\vspace{-.15in}\noindent{\tiny
\textbf{Note}: Limit values are taken to be the Monte Carlo means of the point estimates from 20 repeated simulations each of size $10^6$. Numbers inside the parentheses are standard deviations of the point estimates from the 20 repeated simulations.}
\vspace{-.2in}
\end{table}

\begin{table} [H]
\caption{True and limit values of estimators for $\breve{\theta}_{\mytext{wBP}}$.} \label{tb:target-theta} \vspace{-.15in}
\begin{center}
 \small
  \renewcommand{\arraystretch}{0.8}
\resizebox{0.9\textwidth}{!}{\begin{tabular}{lccccccc}
\hline
Case & $\breve{\theta}_{\mytext{wBP}}$ & $\hat{\theta}_{\mytext{BP}}$ & $\hat{\theta}_{\mytext{wBP}}$ & $\hat{\theta}_{\mytext{CAL}}$ & $\hat{\theta}_{\mytext{CAL,lin}}$ & $\hat{\theta}_{\mytext{RCw}}$ & $\hat{\theta}_{\mytext{RCa}}$ \\
\hline
& \multicolumn{7}{c}{$p=10$} \\
(C1) & 0.068 ( 0.000 ) & -0.122 ( 0.003 ) &  0.066 ( 0.002 ) &   0.066 ( 0.002 ) &  0.066 ( 0.002 ) &  0.066 ( 0.002 ) &  0.066 ( 0.002 ) \\
(C2) & 0.065 ( 0.000 ) & -0.099 ( 0.002 ) &  0.094 ( 0.002 ) &   0.053 ( 0.002 ) &  0.091 ( 0.002 ) &  0.091 ( 0.002 ) &  0.091 ( 0.002 ) \\
& \multicolumn{7}{c}{$p=200$} \\
(C1) & 0.068 ( 0.000 ) & -0.121 ( 0.003 ) &  0.067 ( 0.002 ) & -- & -- & 0.067 ( 0.002 ) &  0.067 ( 0.002 )\\
(C2) & 0.065 ( 0.000 ) & -0.098 ( 0.002 ) &  0.094 ( 0.002 ) & -- & --  & 0.091 ( 0.002 ) &  0.091 ( 0.002 ) \\
\hline
\end{tabular}}
\end{center}
\setlength{\baselineskip}{0.2\baselineskip}
\vspace{-.15in}\noindent{\tiny
\textbf{Note}: Limit values are taken to be the Monte Carlo means of the point estimates from 20 repeated simulations each of size $2\times 10^6$. Numbers inside the parentheses are standard deviations of the point estimates from the 20 repeated simulations.
}
\vspace{-.2in}
\end{table}

\begin{table} [H]
\caption{Summary for estimation of $S_{1k}$ at $u_k = 60, 90, 120$ with $n = 1000$ under (C2).} \vspace{-.15in} \label{tb:s1-n1000-c2}
\begin{center}
 \small
  \renewcommand{\arraystretch}{0.8}
\resizebox{0.9\textwidth}{!}{\begin{tabular}{lccccccccccccccccc}
\hline
& \multicolumn{5}{c}{$u_k = 60$} && \multicolumn{5}{c}{$u_k = 90$} && \multicolumn{5}{c}{$u_k = 120$}   \\
\cline{2-6}\cline{8-12}\cline{14-18}
& Bias & $\sqrt{\text{Var}}$ & $\sqrt{\text{EVar}}$ & Cov90(L90) & Cov95(L95) & & Bias & $\sqrt{\text{Var}}$ & $\sqrt{\text{EVar}}$ & Cov90(L90) & Cov95(L95) & & Bias & $\sqrt{\text{Var}}$ & $\sqrt{\text{EVar}}$ & Cov90(L90) & Cov95(L95) \\
\hline
& \multicolumn{17}{c}{$p = 10$} \\
$\hat{S}_{1k, \mytext{KM}}$ &  -0.001  &  0.023  &  0.023  &  0.895 ( 0.076 )  &  0.946 ( 0.091 ) &  &  -0.002  &  0.024  &  0.024  &  0.899 ( 0.078 )  &  0.949 ( 0.093 ) &  &  -0.003  &  0.023  &  0.023  &  0.895 ( 0.077 )  &  0.942 ( 0.092 )\\
$\hat{S}_{1k, \mytext{wKM}}$ &  -0.001  &  0.023  &  0.024  &  0.902 ( 0.078 )  &  0.955 ( 0.093 ) &  &  -0.002  &  0.023  &  0.024  &  0.897 ( 0.078 )  &  0.952 ( 0.093 ) &  &  -0.003  &  0.022  &  0.023  &  0.903 ( 0.075 )  &  0.951 ( 0.090 )\\
$\hat{S}_{1k, \mytext{CAL}}$ &  -0.001  &  0.023  &  0.023  &  0.901 ( 0.075 )  &  0.952 ( 0.090 ) &  &  -0.001  &  0.022  &  0.023  &  0.895 ( 0.075 )  &  0.951 ( 0.089 ) &  &  -0.001  &  0.021  &  0.022  &  0.901 ( 0.072 )  &  0.951 ( 0.086 )\\
$\hat{S}_{1k, \mytext{CAL,lin}}$ &  -0.001  &  0.023  &  0.023  &  0.889 ( 0.075 )  &  0.946 ( 0.090 ) &  &  -0.002  &  0.023  &  0.023  &  0.881 ( 0.075 )  &  0.942 ( 0.089 ) &  &  -0.002  &  0.022  &  0.022  &  0.895 ( 0.073 )  &  0.944 ( 0.087 )\\

& \multicolumn{17}{c}{$p = 200$} \\
$\hat{S}_{1k, \mytext{KM}}$ &  -0.001  &  0.023  &  0.023  &  0.902 ( 0.076 )  &  0.947 ( 0.091 ) &  &  -0.001  &  0.024  &  0.024  &  0.899 ( 0.078 )  &  0.949 ( 0.093 ) &  &  -0.002  &  0.024  &  0.023  &  0.900 ( 0.077 )  &  0.945 ( 0.092 )\\
$\hat{S}_{1k, \mytext{wKM}}$ &   0.019  &  0.023  &  0.023  &  0.803 ( 0.076 )  &  0.873 ( 0.091 ) &  &   0.019  &  0.024  &  0.024  &  0.798 ( 0.078 )  &  0.879 ( 0.093 ) &  &   0.018  &  0.023  &  0.023  &  0.819 ( 0.077 )  &  0.886 ( 0.092 )\\
$\hat{S}_{1k, \mytext{RCAL}}$ &   0.012  &  0.023  &  0.022  &  0.844 ( 0.074 )  &  0.908 ( 0.088 ) &  &   0.009  &  0.023  &  0.023  &  0.875 ( 0.075 )  &  0.930 ( 0.089 ) &  &   0.006  &  0.022  &  0.022  &  0.889 ( 0.074 )  &  0.947 ( 0.088 )\\
$\hat{S}_{1k, \mytext{RCAL,lin}}$ &   0.012  &  0.024  &  0.022  &  0.839 ( 0.074 )  &  0.903 ( 0.088 ) &  &   0.011  &  0.023  &  0.023  &  0.854 ( 0.075 )  &  0.923 ( 0.089 ) &  &   0.010  &  0.023  &  0.022  &  0.869 ( 0.074 )  &  0.926 ( 0.088 )\\

\hline
\end{tabular}}
\end{center}
\setlength{\baselineskip}{0.2\baselineskip}
\vspace{-.15in}\noindent{\tiny
\textbf{Note}: Bias is the Monte Carlo bias of the point estimates against the limit values. Var are the Monte Carlo variance of the point estimates. EVar is the mean of the variance estimates. Cov90(L90) and Cov95(L95) denote coverage proportion and average length of the 90\% and 95\% confidence intervals.}
\vspace{-.2in}
\end{table}

\begin{table} [H]
\caption{Summary for estimation of $S_{1k}$ at $u_k = 60, 90, 120$ with $n = 400$.} \vspace{-.15in} \label{tb:s1-n400}
\begin{center}
 \small
  \renewcommand{\arraystretch}{0.8}
\resizebox{0.9\textwidth}{!}{\begin{tabular}{lccccccccccccccccc}
\hline
& \multicolumn{5}{c}{$u_k = 60$} && \multicolumn{5}{c}{$u_k = 90$} && \multicolumn{5}{c}{$u_k = 120$}   \\
\cline{2-6}\cline{8-12}\cline{14-18}
& Bias & $\sqrt{\text{Var}}$ & $\sqrt{\text{EVar}}$ & Cov90(L90) & Cov95(L95) & & Bias & $\sqrt{\text{Var}}$ & $\sqrt{\text{EVar}}$ & Cov90(L90) & Cov95(L95) & & Bias & $\sqrt{\text{Var}}$ & $\sqrt{\text{EVar}}$ & Cov90(L90) & Cov95(L95) \\
\hline
& \multicolumn{17}{c}{(C1), $p = 10$} \\
$\hat{S}_{1k, \mytext{KM}}$ &  0.027  &  0.036  &  0.037  &  0.817 ( 0.120 )  &  0.896 ( 0.143 ) &  &   0.027  &  0.036  &  0.037  &  0.824 ( 0.123 )  &  0.903 ( 0.147 ) &  &   0.025  &  0.036  &  0.037  &  0.834 ( 0.122 )  &  0.897 ( 0.145 )\\
$\hat{S}_{1k, \mytext{wKM}}$ &  -0.003  &  0.035  &  0.038  &  0.922 ( 0.124 )  &  0.961 ( 0.148 ) &  &  -0.004  &  0.035  &  0.037  &  0.917 ( 0.123 )  &  0.958 ( 0.147 ) &  &  -0.005  &  0.034  &  0.036  &  0.908 ( 0.119 )  &  0.955 ( 0.142 )\\
$\hat{S}_{1k, \mytext{CAL}}$ & -0.003  &  0.035  &  0.035  &  0.902 ( 0.116 )  &  0.950 ( 0.138 ) &  &  -0.004  &  0.035  &  0.035  &  0.899 ( 0.115 )  &  0.947 ( 0.137 ) &  &  -0.005  &  0.034  &  0.034  &  0.886 ( 0.111 )  &  0.943 ( 0.133 )\\
$\hat{S}_{1k, \mytext{CAL,lin}}$ &  -0.003  &  0.035  &  0.035  &  0.898 ( 0.116 )  &  0.947 ( 0.138 ) &  &  -0.004  &  0.035  &  0.035  &  0.896 ( 0.115 )  &  0.946 ( 0.138 ) &  &  -0.005  &  0.034  &  0.034  &  0.891 ( 0.112 )  &  0.944 ( 0.133 )\\

& \multicolumn{17}{c}{(C2), $p = 10$} \\
$\hat{S}_{1k, \mytext{KM}}$ &  -0.005  &  0.037  &  0.037  &  0.899 ( 0.120 )  &  0.947 ( 0.143 ) &  &  -0.006  &  0.037  &  0.038  &  0.898 ( 0.123 )  &  0.951 ( 0.147 ) &  &  -0.006  &  0.037  &  0.037  &  0.900 ( 0.122 )  &  0.946 ( 0.145 )\\
$\hat{S}_{1k, \mytext{wKM}}$ &  -0.005  &  0.037  &  0.038  &  0.907 ( 0.124 )  &  0.957 ( 0.147 ) &  &  -0.006  &  0.037  &  0.038  &  0.903 ( 0.124 )  &  0.951 ( 0.147 ) &  &  -0.005  &  0.035  &  0.036  &  0.904 ( 0.120 )  &  0.950 ( 0.143 )\\
$\hat{S}_{1k, \mytext{CAL}}$ &  -0.005  &  0.035  &  0.036  &  0.910 ( 0.119 )  &  0.955 ( 0.142 ) &  &  -0.003  &  0.035  &  0.036  &  0.908 ( 0.118 )  &  0.948 ( 0.141 ) &  &  -0.002  &  0.034  &  0.035  &  0.902 ( 0.114 )  &  0.954 ( 0.136 )\\
$\hat{S}_{1k, \mytext{CAL,lin}}$ &  -0.005  &  0.037  &  0.036  &  0.898 ( 0.120 )  &  0.948 ( 0.143 ) &  &  -0.006  &  0.037  &  0.036  &  0.888 ( 0.119 )  &  0.938 ( 0.142 ) &  &  -0.005  &  0.035  &  0.035  &  0.887 ( 0.115 )  &  0.939 ( 0.137 )\\

& \multicolumn{17}{c}{(C1), $p = 200$} \\
$\hat{S}_{1k, \mytext{KM}}$ & 0.026  &  0.037  &  0.037  &  0.814 ( 0.120 )  &  0.881 ( 0.143 ) &  &  0.026  &  0.038  &  0.037  &  0.822 ( 0.123 )  &  0.903 ( 0.146 ) &  &  0.024  &  0.037  &  0.037  &  0.829 ( 0.121 )  &  0.908 ( 0.145 )\\
$\hat{S}_{1k, \mytext{wKM}}$ &   0.024  &  0.037  &  0.037  &  0.834 ( 0.121 )  &  0.894 ( 0.144 ) &  &  0.023  &  0.037  &  0.037  &  0.835 ( 0.123 )  &  0.909 ( 0.147 ) &  &  0.022  &  0.037  &  0.037  &  0.846 ( 0.121 )  &  0.916 ( 0.145 )\\
$\hat{S}_{1k, \mytext{RCAL}}$ &   0.016  &  0.037  &  0.035  &  0.852 ( 0.115 )  &  0.906 ( 0.137 ) &  &  0.014  &  0.037  &  0.035  &  0.857 ( 0.116 )  &  0.924 ( 0.139 ) &  &  0.013  &  0.036  &  0.035  &  0.874 ( 0.115 )  &  0.927 ( 0.137 )\\
$\hat{S}_{1k, \mytext{RCAL,lin}}$ &   0.015  &  0.037  &  0.035  &  0.857 ( 0.114 )  &  0.907 ( 0.136 ) &  &  0.014  &  0.037  &  0.035  &  0.861 ( 0.116 )  &  0.926 ( 0.139 ) &  &  0.013  &  0.036  &  0.035  &  0.875 ( 0.115 )  &  0.928 ( 0.137 )\\

& \multicolumn{17}{c}{(C2), $p = 200$} \\
$\hat{S}_{1k, \mytext{KM}}$ &  -0.002  &  0.036  &  0.037  &  0.904 ( 0.120 )  &  0.953 ( 0.143 ) &  &  -0.004  &  0.038  &  0.038  &  0.900 ( 0.123 )  &  0.953 ( 0.147 ) &  &  -0.006  &  0.036  &  0.037  &  0.895 ( 0.122 )  &  0.952 ( 0.145 )\\
$\hat{S}_{1k, \mytext{wKM}}$ &   0.020  &  0.036  &  0.037  &  0.853 ( 0.121 )  &  0.916 ( 0.144 ) &  &   0.020  &  0.037  &  0.038  &  0.854 ( 0.124 )  &  0.921 ( 0.147 ) &  &   0.017  &  0.036  &  0.037  &  0.879 ( 0.122 )  &  0.933 ( 0.145 )\\
$\hat{S}_{1k, \mytext{RCAL}}$ &   0.020  &  0.036  &  0.035  &  0.828 ( 0.116 )  &  0.906 ( 0.138 ) &  &   0.016  &  0.037  &  0.036  &  0.861 ( 0.119 )  &  0.922 ( 0.141 ) &  &   0.010  &  0.036  &  0.036  &  0.892 ( 0.117 )  &  0.945 ( 0.140 )\\
$\hat{S}_{1k, \mytext{RCAL,lin}}$ &   0.020  &  0.036  &  0.035  &  0.829 ( 0.116 )  &  0.904 ( 0.138 ) &  &   0.019  &  0.037  &  0.036  &  0.842 ( 0.118 )  &  0.913 ( 0.141 ) &  &   0.016  &  0.036  &  0.036  &  0.874 ( 0.117 )  &  0.927 ( 0.139 )\\

\hline
\end{tabular}}
\end{center}
\setlength{\baselineskip}{0.2\baselineskip}
\vspace{-.15in}\noindent{\tiny
\textbf{Note}: Bias is the Monte Carlo bias of the point estimates against the true values under (C1), or against the limit values under (C2). Var are the Monte Carlo variance of the point estimates. EVar is the mean of the variance estimates. Cov90(L90) and Cov95(L95) denote coverage proportion and average length of the 90\% and 95\% confidence intervals.}
\vspace{-.2in}
\end{table}

\begin{table} [H]
\caption{Summary for estimation of $S_{0k}$ with $n = 400$.} \vspace{-.15in} \label{tb:s0-n400}
\begin{center}
 \small
  \renewcommand{\arraystretch}{0.8}
\resizebox{0.9\textwidth}{!}{\begin{tabular}{lccccccccccccccccc}
\hline
& \multicolumn{5}{c}{$u_k = 60$} && \multicolumn{5}{c}{$u_k = 90$} && \multicolumn{5}{c}{$u_k = 120$}   \\
\cline{2-6}\cline{8-12}\cline{14-18}
& Bias & $\sqrt{\text{Var}}$ & $\sqrt{\text{EVar}}$ & Cov90(L90) & Cov95(L95) & & Bias & $\sqrt{\text{Var}}$ & $\sqrt{\text{EVar}}$ & Cov90(L90) & Cov95(L95) & & Bias & $\sqrt{\text{Var}}$ & $\sqrt{\text{EVar}}$ & Cov90(L90) & Cov95(L95) \\
\hline
& \multicolumn{17}{c}{(C1), $p = 10$} \\
$\hat{S}_{0k, \mytext{KM}}$ &  -0.042  &  0.036  &  0.036  &  0.673 ( 0.117 )  &  0.782 ( 0.139 ) &  &  -0.047  &  0.036  &  0.036  &  0.638 ( 0.117 )  &  0.733 ( 0.140 ) &  &  -0.045  &  0.035  &  0.034  &  0.610 ( 0.111 )  &  0.709 ( 0.132 )\\
$\hat{S}_{0k, \mytext{wKM}}$ &  -0.001  &  0.032  &  0.035  &  0.926 ( 0.116 )  &  0.970 ( 0.138 ) &  &  -0.003  &  0.034  &  0.038  &  0.917 ( 0.124 )  &  0.967 ( 0.147 ) &  &  -0.005  &  0.035  &  0.037  &  0.908 ( 0.122 )  &  0.949 ( 0.146 )\\
$\hat{S}_{0k, \mytext{CAL}}$ &  -0.001  &  0.032  &  0.031  &  0.896 ( 0.102 )  &  0.944 ( 0.122 ) &  &  -0.004  &  0.034  &  0.033  &  0.875 ( 0.107 )  &  0.934 ( 0.128 ) &  &  -0.005  &  0.034  &  0.032  &  0.866 ( 0.105 )  &  0.922 ( 0.126 )\\
$\hat{S}_{0k, \mytext{CAL,lin}}$ &   -0.001  &  0.033  &  0.032  &  0.892 ( 0.104 )  &  0.946 ( 0.124 ) &  &  -0.003  &  0.034  &  0.033  &  0.875 ( 0.109 )  &  0.934 ( 0.130 ) &  &  -0.005  &  0.034  &  0.033  &  0.874 ( 0.107 )  &  0.924 ( 0.128 )\\

& \multicolumn{17}{c}{(C2), $p = 10$} \\
$\hat{S}_{0k, \mytext{KM}}$ &  -0.002  &  0.035  &  0.035  &  0.905 ( 0.117 )  &  0.952 ( 0.139 ) &  &  -0.003  &  0.035  &  0.036  &  0.908 ( 0.118 )  &  0.954 ( 0.140 ) &  &  -0.004  &  0.033  &  0.034  &  0.898 ( 0.112 )  &  0.947 ( 0.133 )\\
$\hat{S}_{0k, \mytext{wKM}}$ &  -0.002  &  0.030  &  0.035  &  0.941 ( 0.116 )  &  0.980 ( 0.138 ) &  &  -0.003  &  0.032  &  0.038  &  0.942 ( 0.125 )  &  0.975 ( 0.149 ) &  &  -0.005  &  0.033  &  0.038  &  0.943 ( 0.125 )  &  0.973 ( 0.149 )\\
$\hat{S}_{0k, \mytext{CAL}}$ &  -0.001  &  0.032  &  0.031  &  0.881 ( 0.100 )  &  0.936 ( 0.119 ) &  &  -0.004  &  0.034  &  0.032  &  0.879 ( 0.105 ) &  0.930 ( 0.125 ) &  &  -0.007  &  0.034  &  0.032  &  0.865 ( 0.104 )  &  0.925 ( 0.124 )\\
$\hat{S}_{0k, \mytext{CAL,lin}}$ &  -0.002  &  0.031  &  0.031  &  0.899 ( 0.101 )  &  0.951 ( 0.121 ) &  &  -0.003  &  0.032  &  0.032  &  0.899 ( 0.106 )  &  0.945 ( 0.126 ) &  &  -0.005  &  0.032  &  0.032  &  0.892 ( 0.105 )  &  0.949 ( 0.125 )\\

& \multicolumn{17}{c}{(C1), $p = 200$} \\
$\hat{S}_{0k, \mytext{KM}}$ &  -0.043  &  0.036  &  0.036  &  0.675 ( 0.117 )  &  0.780 ( 0.139 ) &  &  -0.046  &  0.037  &  0.036  &  0.633 ( 0.117 )  &  0.733 ( 0.140 ) &  &  -0.043  &  0.034  &  0.034  &  0.628 ( 0.111 )  &  0.724 ( 0.132 )\\
$\hat{S}_{0k, \mytext{wKM}}$ &  -0.040  &  0.035  &  0.036  &  0.712 ( 0.117 )  &  0.811 ( 0.139 ) &  &  -0.042  &  0.036  &  0.036  &  0.672 ( 0.118 )  &  0.772 ( 0.141 ) &  &  -0.040  &  0.034  &  0.034  &  0.663 ( 0.112 )  &  0.757 ( 0.133 )\\
$\hat{S}_{0k, \mytext{RCAL}}$ &  -0.016  &  0.033  &  0.031  &  0.842 ( 0.102 )  &  0.900 ( 0.121 ) &  &  -0.019  &  0.035  &  0.031  &  0.792 ( 0.102 )  &  0.870 ( 0.122 ) &  &  -0.021  &  0.034  &  0.030  &  0.763 ( 0.097 )  &  0.838 ( 0.116 )\\
$\hat{S}_{0k, \mytext{RCAL,lin}}$ &  -0.016  &  0.033  &  0.031  &  0.845 ( 0.103 )  &  0.906 ( 0.123 ) &  &  -0.019  &  0.035  &  0.032  &  0.797 ( 0.104 )  &  0.873 ( 0.124 ) &  &  -0.020  &  0.034  &  0.030  &  0.771 ( 0.099 )  &  0.850 ( 0.118 )\\

& \multicolumn{17}{c}{(C2), $p = 200$} \\
$\hat{S}_{0k, \mytext{KM}}$ &  -0.002  &  0.036  &  0.035  &  0.900 ( 0.117 )  &  0.953 ( 0.139 ) &  &  -0.003  &  0.035  &  0.036  &  0.899 ( 0.118 )  &  0.952 ( 0.140 ) &  &  -0.004  &  0.034  &  0.034  &  0.887 ( 0.112 )  &  0.941 ( 0.133 )\\
$\hat{S}_{0k, \mytext{wKM}}$ &  -0.045  &  0.035  &  0.035  &  0.646 ( 0.117 )  &  0.770 ( 0.139 ) &  &  -0.049  &  0.035  &  0.036  &  0.600 ( 0.118 )  &  0.723 ( 0.141 ) &  &  -0.048  &  0.034  &  0.034  &  0.585 ( 0.113 )  &  0.696 ( 0.135 )\\
$\hat{S}_{0k, \mytext{RCAL}}$ &  -0.018  &  0.033  &  0.031  &  0.819 ( 0.100 )  &  0.887 ( 0.120 ) &  &  -0.018  &  0.034  &  0.031  &  0.810 ( 0.101 )  &  0.886 ( 0.121 ) &  &  -0.018  &  0.034  &  0.030  &  0.787 ( 0.097 )  &  0.864 ( 0.116 )\\
$\hat{S}_{0k, \mytext{RCAL,lin}}$ &  -0.015  &  0.032  &  0.031  &  0.854 ( 0.102 )  &  0.915 ( 0.122 ) &  &  -0.018  &  0.033  &  0.031  &  0.827 ( 0.103 )  &  0.895 ( 0.123 ) &  &  -0.020  &  0.034  &  0.030  &  0.781 ( 0.099 )  &  0.859 ( 0.118 )\\

\hline
\end{tabular}}
\end{center}
\setlength{\baselineskip}{0.2\baselineskip}
\vspace{-.15in}\noindent{\tiny
\textbf{Note}: Bias is the Monte Carlo bias of the point estimates against the true values under (C1), or against the limit values under (C2). Var are the Monte Carlo variance of the point estimates. EVar is the mean of the variance estimates. Cov90(L90) and Cov95(L95) denote coverage proportion and average length of the 90\% and 95\% confidence intervals.}
\vspace{-.2in}
\end{table}

\begin{table} [H]
\caption{Summary for estimation of $S_{0k}$ with $n = 1000$.} \vspace{-.15in} \label{tb:s0-n1000}
\begin{center}
 \small
  \renewcommand{\arraystretch}{0.8}
\resizebox{0.9\textwidth}{!}{\begin{tabular}{lccccccccccccccccc}
\hline
& \multicolumn{5}{c}{$u_k = 60$} && \multicolumn{5}{c}{$u_k = 90$} && \multicolumn{5}{c}{$u_k = 120$}   \\
\cline{2-6}\cline{8-12}\cline{14-18}
& Bias & $\sqrt{\text{Var}}$ & $\sqrt{\text{EVar}}$ & Cov90(L90) & Cov95(L95) & & Bias & $\sqrt{\text{Var}}$ & $\sqrt{\text{EVar}}$ & Cov90(L90) & Cov95(L95) & & Bias & $\sqrt{\text{Var}}$ & $\sqrt{\text{EVar}}$ & Cov90(L90) & Cov95(L95) \\
\hline
& \multicolumn{17}{c}{(C1), $p = 10$} \\
$\hat{S}_{0k, \mytext{KM}}$ &   -0.041  &  0.023  &  0.022  &  0.414 ( 0.074 )  &  0.549 ( 0.088 ) &  &  -0.045  &  0.023  &  0.023  &  0.375 ( 0.074 )  &  0.490 ( 0.089 ) &  &  -0.042  &  0.021  &  0.021  &  0.382 ( 0.070 )  &  0.494 ( 0.084 )\\
$\hat{S}_{0k, \mytext{wKM}}$ &   0.000  &  0.020  &  0.022  &  0.931 ( 0.073 )  &  0.972 ( 0.087 ) &  &  -0.001  &  0.021  &  0.024  &  0.932 ( 0.078 )  &  0.972 ( 0.092 ) &  &  -0.001  &  0.021  &  0.023  &  0.926 ( 0.077 )  &  0.969 ( 0.091 )\\
$\hat{S}_{0k, \mytext{CAL}}$ &   0.000  &  0.020  &  0.020  &  0.896 ( 0.065 )  &  0.951 ( 0.078 ) &  &  -0.001  &  0.021  &  0.021  &  0.897 ( 0.068 )  &  0.947 ( 0.081 ) &  &  -0.001  &  0.021  &  0.020  &  0.892 ( 0.067 )  &  0.943 ( 0.080 )\\
$\hat{S}_{0k, \mytext{CAL,lin}}$ &   0.000  &  0.020  &  0.020  &  0.897 ( 0.066 )  &  0.950 ( 0.079 ) &  &  -0.001  &  0.021  &  0.021  &  0.897 ( 0.069 )  &  0.948 ( 0.082 ) &  &  -0.001  &  0.021  &  0.021  &  0.893 ( 0.068 )  &  0.942 ( 0.081 )\\

& \multicolumn{17}{c}{(C2), $p = 10$} \\
$\hat{S}_{0k, \mytext{KM}}$ &   0.000  &  0.023  &  0.022  &  0.885 ( 0.074 )  &  0.944 ( 0.088 ) &  &   0.000  &  0.023  &  0.023  &  0.892 ( 0.075 )  &  0.944 ( 0.089 ) &  &  -0.001  &  0.022  &  0.022  &  0.894 ( 0.071 )  &  0.945 ( 0.085 )\\
$\hat{S}_{0k, \mytext{wKM}}$ &   0.000  &  0.020  &  0.022  &  0.940 ( 0.072 )  &  0.974 ( 0.086 ) &  &   0.000  &  0.021  &  0.024  &  0.939 ( 0.078 )  &  0.971 ( 0.093 ) &  &  -0.001  &  0.021  &  0.024  &  0.935 ( 0.078 )  &  0.972 ( 0.093 )\\
$\hat{S}_{0k, \mytext{CAL}}$ &   0.000  &  0.021  &  0.019  &  0.874 ( 0.064 )  &  0.938 ( 0.076 ) &  &  -0.001  &  0.022  &  0.020  &  0.871 ( 0.067 )  &  0.929 ( 0.080 ) &  &  -0.002  &  0.021  &  0.020  &  0.875 ( 0.066 )  &  0.931 ( 0.079 )\\
$\hat{S}_{0k, \mytext{CAL,lin}}$ &   0.000  &  0.020  &  0.020  &  0.893 ( 0.064 )  &  0.951 ( 0.077 ) &  &   0.000  &  0.021  &  0.020  &  0.892 ( 0.067 )  &  0.947 ( 0.080 ) &  &  -0.001  &  0.021  &  0.020  &  0.894 ( 0.067 )  &  0.944 ( 0.080 )\\

& \multicolumn{17}{c}{(C1), $p = 200$} \\
$\hat{S}_{0k, \mytext{KM}}$ &   -0.041  &  0.023  &  0.022  &  0.437 ( 0.074 )  &  0.556 ( 0.088 ) &  &  -0.043  &  0.023  &  0.023  &  0.405 ( 0.074 )  &  0.528 ( 0.089 ) &  &  -0.041  &  0.021  &  0.021  &  0.402 ( 0.070 )  &  0.519 ( 0.084 )\\
$\hat{S}_{0k, \mytext{wKM}}$ &  -0.032  &  0.022  &  0.022  &  0.587 ( 0.074 )  &  0.693 ( 0.088 ) &  &  -0.035  &  0.022  &  0.023  &  0.545 ( 0.075 )  &  0.671 ( 0.089 ) &  &  -0.033  &  0.021  &  0.022  &  0.542 ( 0.072 )  &  0.666 ( 0.085 )\\
$\hat{S}_{0k, \mytext{RCAL}}$ &  -0.007  &  0.021  &  0.020  &  0.857 ( 0.065 )  &  0.927 ( 0.077 ) &  &  -0.008  &  0.021  &  0.020  &  0.845 ( 0.066 )  &  0.909 ( 0.078 ) &  &  -0.009  &  0.021  &  0.019  &  0.818 ( 0.063 )  &  0.890 ( 0.075 )\\
$\hat{S}_{0k, \mytext{RCAL,lin}}$ &  -0.008  &  0.021  &  0.020  &  0.853 ( 0.066 )  &  0.918 ( 0.079 ) &  &  -0.009  &  0.022  &  0.020  &  0.837 ( 0.067 )  &  0.903 ( 0.080 ) &  &  -0.011  &  0.021  &  0.020  &  0.812 ( 0.064 )  &  0.887 ( 0.077 )\\

& \multicolumn{17}{c}{(C2), $p = 200$} \\
$\hat{S}_{0k, \mytext{KM}}$ &   0.000  &  0.023  &  0.022  &  0.898 ( 0.074 )  &  0.944 ( 0.088 ) &  &   0.001  &  0.023  &  0.023  &  0.896 ( 0.075 )  &  0.944 ( 0.089 ) &  &   0.000  &  0.021  &  0.022  &  0.901 ( 0.071 )  &  0.947 ( 0.085 )\\
$\hat{S}_{0k, \mytext{wKM}}$ &  -0.037  &  0.021  &  0.022  &  0.513 ( 0.074 )  &  0.632 ( 0.088 ) &  &  -0.039  &  0.022  &  0.023  &  0.466 ( 0.075 )  &  0.596 ( 0.090 ) &  &  -0.038  &  0.021  &  0.022  &  0.471 ( 0.073 )  &  0.601 ( 0.086 )\\
$\hat{S}_{0k, \mytext{RCAL}}$ &  -0.009  &  0.021  &  0.019  &  0.848 ( 0.064 )  &  0.913 ( 0.076 ) &  &  -0.007  &  0.022  &  0.020  &  0.848 ( 0.065 )  &  0.918 ( 0.077 ) &  &  -0.006  &  0.021  &  0.019  &  0.844 ( 0.063 )  &  0.906 ( 0.075 )\\
$\hat{S}_{0k, \mytext{RCAL,lin}}$ &  -0.007  &  0.020  &  0.020  &  0.876 ( 0.065 )  &  0.934 ( 0.077 ) &  &  -0.007  &  0.021  &  0.020  &  0.860 ( 0.066 )  &  0.923 ( 0.078 ) &  &  -0.009  &  0.021  &  0.019  &  0.843 ( 0.064 )  &  0.897 ( 0.076 )\\

\hline
\end{tabular}}
\end{center}
\setlength{\baselineskip}{0.2\baselineskip}
\vspace{-.15in}\noindent{\tiny
\textbf{Note}: Bias is the Monte Carlo bias of the point estimates against the true values under (C1), or against the limit values under (C2). Var are the Monte Carlo variance of the point estimates. EVar is the mean of the variance estimates. Cov90(L90) and Cov95(L95) denote coverage proportion and average length of the 90\% and 95\% confidence intervals.}
\vspace{-.2in}
\end{table}

\begin{table}[H]
\caption{Summary for estimation of $\breve{\theta}_{\mytext{wBP}}$ with $n = 1000$ under (C2).} \vspace{.15in} \label{tb:theta-n1000-c2}
  \vspace{-.3in}
  \begin{center}
   \small
  \renewcommand{\arraystretch}{0.8}
  \resizebox{0.9\textwidth}{!}{
  \begin{tabular}{lccccccccccc}
  \hline
  & \multicolumn{6}{c}{$p = 10$} &&  \multicolumn{4}{c}{$p = 200$} \\
  \cline{2-7}\cline{9-12}
  & $\hat{\theta}_{\mytext{BP}}$
  & $\hat{\theta}_{\mytext{wBP}}$
  & $\hat{\theta}_{\mytext{CAL}}$
  & $\hat{\theta}_{\mytext{CAL,lin}}$
  & $\hat{\theta}_{\mytext{RCw}}$
  & $\hat{\theta}_{\mytext{RCa}}$ & 
  & $\hat{\theta}_{\mytext{BP}}$
  & $\hat{\theta}_{\mytext{wBP}}$
  & $\hat{\theta}_{\mytext{RCw}}$
  & $\hat{\theta}_{\mytext{RCa}}$ \\
  \hline
  Bias
  &  0.001 & 0.001 &  -0.007 & 0.001 & 0.001 & 0.001 & 
  & 0.004 & -0.149 & -0.153 & -0.039 \\
  $\sqrt{\text{Var}}$
  & 0.074 & 0.062 & 0.060 & 0.061 & 0.061 & 0.061 & 
  & 0.074 & 0.069 & 0.068 & 0.062 \\
  $\sqrt{\text{EVar}}$
  & 0.074 & 0.076 & 0.060 & 0.060 & 0.075 & 0.060 & 
  & 0.074 & 0.074 & 0.074 & 0.060 \\
  Cov90(L90)
  & 0.895 (0.242) & 0.957 (0.248) & 0.896 (0.199) & 0.894 (0.198) & 0.958 (0.246) & 0.894 (0.198) & 
  & 0.900 (0.242) & 0.350 (0.243) & 0.328 (0.242) & 0.815 (0.197) \\
  Cov95(L95)
  & 0.949 (0.288) & 0.981 (0.296) & 0.950 (0.237) & 0.946 (0.236) & 0.983 (0.293) & 0.946 (0.236) & 
  & 0.946 (0.288) & 0.480 (0.289) & 0.456 (0.289) & 0.894 (0.234) \\
  \hline
  \end{tabular}}
  \end{center}
\setlength{\baselineskip}{0.2\baselineskip}
\vspace{-.15in}\noindent{\tiny
\textbf{Note}:  Bias is the Monte Carlo bias of the point estimates against the limit values. Var is the Monte Carlo variance of the point estimates. EVar is the mean of the variance estimates. Cov90(L90) and Cov95(L95) denote coverage proportion and average length of the 90\% and 95\% confidence intervals.}
  \vspace{-.2in}
  \end{table}

\begin{table} [H]
\caption{Summary for estimation of $\breve{\theta}_{\mytext{wBP}}$ with $n = 400$.}  \vspace{-.15in} \label{tb:theta-n400}
\begin{center}
 \small
  \renewcommand{\arraystretch}{0.8}
\resizebox{0.9\textwidth}{!}{\begin{tabular}{lccccccccccc}
\hline
& Bias & $\sqrt{\text{Var}}$ & $\sqrt{\text{EVar}}$ & Cov90(L90) & Cov95(L95) & & Bias & $\sqrt{\text{Var}}$ & $\sqrt{\text{EVar}}$ & Cov90(L90) & Cov95(L95) \\
\hline
& \multicolumn{11}{c}{$p = 10$} \\
& \multicolumn{5}{c}{(C1)} && \multicolumn{5}{c}{(C2)} \\
\cline{2-6}\cline{8-12}
$\hat{\theta}_{\mytext{BP}}$ &  -0.191  &  0.117  &  0.116  &  0.499 ( 0.383 )  &  0.624 ( 0.456 ) &  & 0.005  &  0.112  &  0.116  &  0.913 ( 0.383 )  &  0.965 ( 0.456 )\\
$\hat{\theta}_{\mytext{wBP}}$ &  -0.002  &  0.096  &  0.120  &  0.964 ( 0.393 )  &  0.985 ( 0.469 ) &  &   0.004  &  0.093  &  0.121  &  0.966 ( 0.397 )  &  0.989 ( 0.473 )\\
$\hat{\theta}_{\mytext{CAL}}$ &   0.001  &  0.095  &  0.092  &  0.899 ( 0.304 )  &  0.944 ( 0.362 ) & &  -0.015  &  0.090  &  0.095  &  0.914 ( 0.312 )  &  0.959 ( 0.372 )\\
$\hat{\theta}_{\mytext{CAL,lin}}$  &  -0.002  &  0.096  &  0.094  &  0.897 ( 0.311 )  &  0.952 ( 0.370 ) & &   0.003  &  0.092  &  0.095  &  0.913 ( 0.314 )  &  0.959 ( 0.374 )\\
$\hat{\theta}_{\mytext{RCw}}$ &  -0.002  &  0.096  &  0.120  &  0.963 ( 0.393 )  &  0.984 ( 0.469 ) & &   0.003  &  0.092  &  0.120  &  0.970 ( 0.394 )  &  0.989 ( 0.469 ) \\
$\hat{\theta}_{\mytext{RCa}}$ &  -0.002  &  0.096  &  0.094  &  0.897 ( 0.311 )  &  0.952 ( 0.370 ) & &   0.003  &  0.092  &  0.095  &  0.913 ( 0.314 )  &  0.959 ( 0.374 ) \\
& \multicolumn{11}{c}{$p = 200$} \\
& \multicolumn{5}{c}{(C1)} && \multicolumn{5}{c}{(C2)} \\
\cline{2-6}\cline{8-12}
$\hat{\theta}_{\mytext{BP}}$ &  -0.185  &  0.118  &  0.116  &  0.512 ( 0.383 )  &  0.643 ( 0.456 ) & &   0.000  &  0.115  &  0.116  &  0.905 ( 0.383 )  &  0.953 ( 0.456 ) \\
$\hat{\theta}_{\mytext{wBP}}$ &  -0.170  &  0.114  &  0.117  &  0.558 ( 0.384 )  &  0.701 ( 0.457 ) & &  -0.176  &  0.113  &  0.117  &  0.549 ( 0.384 )  &  0.672 ( 0.458 ) \\
$\hat{\theta}_{\mytext{RCw}}$  &  -0.180  &  0.117  &  0.116  &  0.526 ( 0.383 )  &  0.659 ( 0.456 ) & &  -0.185  &  0.114  &  0.116  &  0.511 ( 0.383 )  &  0.637 ( 0.456 ) \\
$\hat{\theta}_{\mytext{RCa}}$ &  -0.075  &  0.102  &  0.094  &  0.779 ( 0.310 )  &  0.864 ( 0.370 ) & &  -0.084  &  0.103  &  0.096  &  0.750 ( 0.316 )  &  0.840 ( 0.376 ) \\
\hline
\end{tabular}}
\end{center}
\setlength{\baselineskip}{0.2\baselineskip}
\vspace{-.15in}\noindent{\tiny
\textbf{Note}:   Bias is the Monte Carlo bias of the point estimates against the true value under (C1), or against the limit values under (C2). Var is the Monte Carlo variance of the point estimates. EVar is the mean of the variance estimates. Cov90(L90) and Cov95(L95) denote coverage proportion and average length of the 90\% and 95\% confidence intervals.}
\vspace{-.2in}
\end{table}

\section{Additional results for empirical application} \label{sec:addtional-empirical}

\subsection{Effects of covariate adjustment}  \label{sec:coveffect}
Figure \ref{fig:xeffect4svprob} presents the standardized absolute difference between $\hat{S}_{ak, \mytext{KM}}$ and $\hat{S}_{ak, \mytext{wKM}}$ under low-dimensional setting. For two estimators $\hat{\mu}_1$ and $\hat{\mu}_2$, the standardized absolute difference is defined as $|\hat{\mu}_1 - \hat{\mu}_2| / \text{SE}(\hat{\mu}_1)$ or $|\hat{\mu}_1 - \hat{\mu}_2| / \text{SE}(\hat{\mu}_2)$, being standardized by 
$\text{SE}(\hat{\mu}_1)$ or $\text{SE}(\hat{\mu}_2)$ respectively, where $\text{SE}(\cdot)$ denotes the standard error of an estimator.
For convenience, a standardized absolute difference greater than $1.5$ is considered non-negligible.

For the Hospitalization and Mortality outcomes, $\hat{S}_{1k, \mytext{wKM}}$ differs noticeably from $\hat{S}_{1k, \mytext{KM}}$ relative to their SEs, while $\hat{S}_{0k, \mytext{wKM}}$ is similar to $\hat{S}_{0k, \mytext{KM}}$ relative to their SEs, indicating that covariate adjustment plays a non-negligible role for treatment 1 but is less influential for treatment 0. This asymmetry helps explain the corresponding $\breve{\theta}_{\mytext{wBP}}$ estimates: the larger effect of weighting for treatment 1 leads to substantial differences between $\hat{\theta}_{\mytext{wBP}}$ and $\hat{\theta}_{\mytext{BP}}$ for these two outcomes, as shown in Table \ref{tb:empirical-theta}. For example, for the Mortality outcome, the absolute difference equals $0.495 - 0.195 = 0.300$, which is approximately $3.36$ times $\text{SE}(\hat{\theta}_{\mytext{BP}})$ and $2.75$ times $\text{SE}(\hat{\theta}_{\mytext{wBP}})$.

In contrast, for the ED visits outcome, $\hat{S}_{ak, \mytext{KM}}$ and $\hat{S}_{ak, \mytext{wKM}}$ differ noticeably relative to their SEs for both $a=1$ and $a=0$. Because the differences are in the same direction for both treatment groups, their effects partially cancel out when aggregated into the estimation of $\breve{\theta}_{\mytext{wBP}}$, which explains why $\hat{\theta}_{\mytext{wBP}}$ remains similar to $\hat{\theta}_{\mytext{BP}}$ for this outcome, as seen in Table \ref{tb:empirical-theta}. The absolute difference equals $0.219 - 0.192 = 0.028$, which is approximately $1.20$ times $\text{SE}(\hat{\theta}_{\mytext{BP}})$ and $0.98$ times $\text{SE}(\hat{\theta}_{\mytext{wBP}})$.

\begin{figure}[H]
\centering
\includegraphics[scale=0.52]{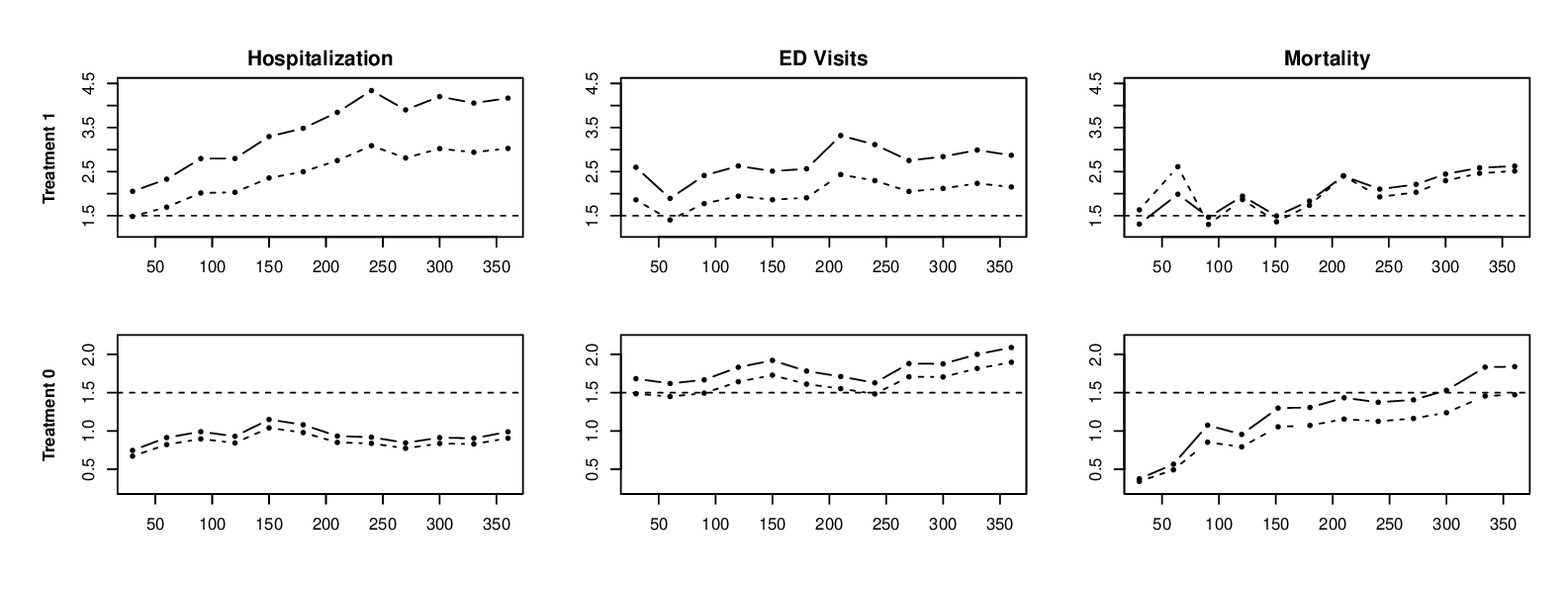} \vspace{-.48in}
\caption{Standardized difference between $\hat{S}_{ak, \mytext{KM}}$ and $\hat{S}_{ak, \mytext{wKM}}$ under low-dimensional setting. Solid line is standardized by $\text{SE}(\hat{S}_{ak, \mytext{KM}})$ and dashed line is standardized by $\text{SE}(\hat{S}_{ak, \mytext{wKM}})$. A horizontal line is placed at 1.5.} \label{fig:xeffect4svprob}
\end{figure}

\subsection{Additional tables and figures} \label{sec:additional-tables}
Table \ref{tb:covbal} shows the covariate balance results. The covariate balance results show that weighting based on low-dimensional PS models is effective. Both ML and CAL weighting substantially improve balance relative to the unweighted sample, with CAL achieving superior balance for main effects. Differences between CAL and ML mainly reflect their different objectives, explicit balancing versus likelihood-based model fitting. Moreover, extending PS model to include high-dimensional interaction terms does not yield additional gains in covariate balance for the interaction terms, suggesting that a main-effect only PS model may be correctly specified, because interaction terms appear to be approximately balanced once main effects are controlled for.

Figure \ref{fig:mlesvprob} presents the standardized absolute difference between $\hat{S}_{ak, \mytext{wKM}}$ in the low-dimensional setting and $\hat{S}_{ak, \mytext{wKM}}$ in the high-dimensional setting.
Figure \ref{fig:mlecalsvprob-high} presents the standardized absolute difference between $\hat{S}_{ak, \mytext{wKM}}$ (high-dimensional setting) and $\hat{S}_{ak, \mytext{RCAL,lin}}$.
Tables \ref{tb:hosp-tr1}--\ref{tb:mort-tr0} show numerical estimates and standard errors for the estimation of survival probabilities.
The conclusions drawn from these figures and tables agree with the findings discussed in the main paper.

\begin{table}[H]
\caption{Summary of covariate balance} \label{tb:covbal} \vspace{-.15in}
\begin{center}
 \small
  \renewcommand{\arraystretch}{0.8}
\begin{tabular}{lccc}
\hline
 & main & interaction & \# ImbanCov \\
\hline
Raw & (2.53e-4, 6.05e-1) & (5.68e-6, 5.51e-1) & (9, 148)\\
ML & (4.05e-5, 3.46e-2) & (1.49e-5, 7.55e-2) & (0, 0)\\
CAL & (2.86e-13, 3.49e-10) & (5.05e-6, 7.03e-2) & (0, 0) \\
RML & (3.46e-4, 4.25e-2) & (4.75e-6, 5.80e-2) & (0, 0)\\
RCAL & (5.14e-4, 6.88e-2) & (2.99e-5, 7.13e-2) & (0, 0) \\
\hline
\end{tabular}
\end{center}
\setlength{\baselineskip}{0.2\baselineskip}
\vspace{-.15in}\noindent{\tiny
\textbf{Note}: Each entry reports the range of relative differences across all main effects or interaction terms. 
For each covariate, the relative difference is the difference in (weighted) means between treatment groups divided by the pooled weighted standard deviation (see Section~\ref{sec:rel-dif} for a detailed definition).
Raw corresponds to unweighted differences. ML, CAL, RML, and RCAL correspond to differences weighted by propensity scores estimated using ML, CAL, RML, and RCAL methods, respectively. \# ImbanCov represents number of imbalance covariates of main effects and interaction terms, where we say it is imbalance if the difference exceeds 0.1.
}
\vspace{-.2in}
\end{table}

\begin{figure}
\centering
\includegraphics[scale=.52]{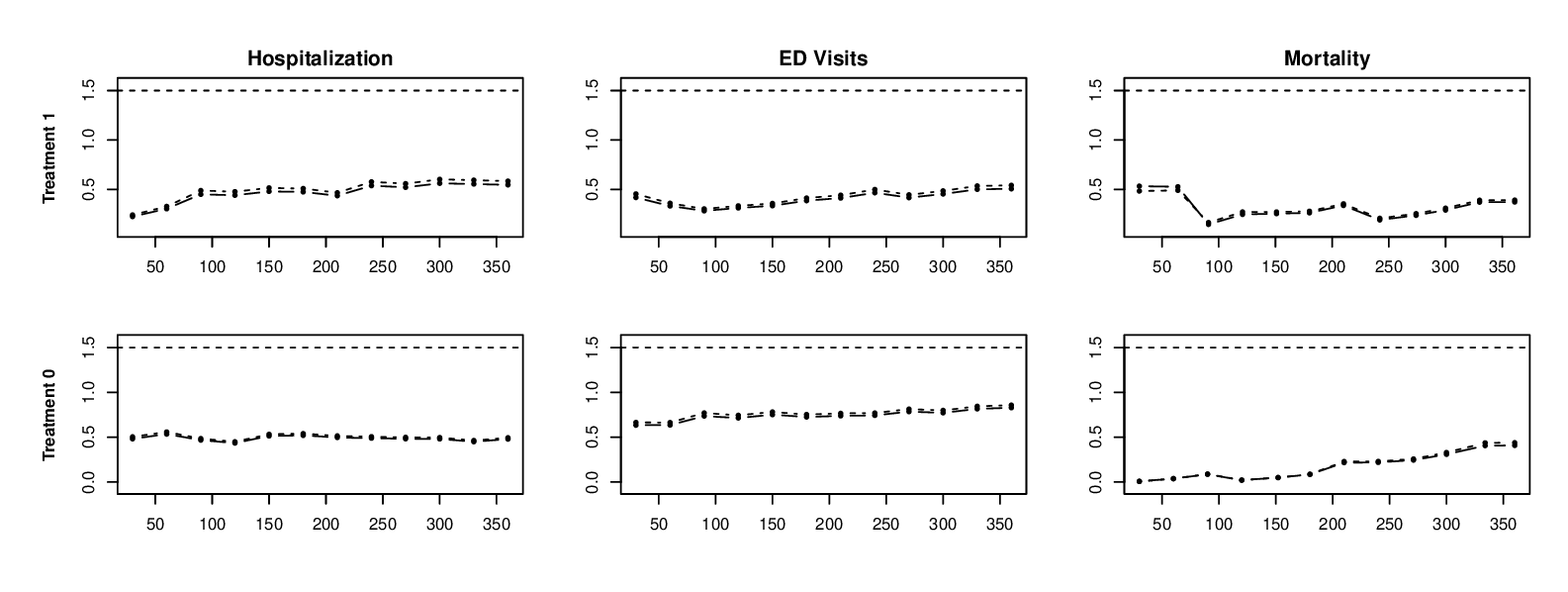} \vspace{-.48in}
\caption{Standardized absolute difference between $\hat{S}_{ak, \mytext{wKM}}$ in the low-dimensional setting and $\hat{S}_{ak, \mytext{wKM}}$ in the high-dimensional setting. Solid line is standardized by $\text{SE}(\hat{S}_{ak, \mytext{wKM}})$ under low-dimensional setting and dashed line is standardized by $\text{SE}(\hat{S}_{ak, \mytext{wKM}})$ under high-dimensional setting. A horizontal line is placed at 1.5.} \label{fig:mlesvprob}
\end{figure}

\begin{figure}
\centering
\includegraphics[scale=.52]{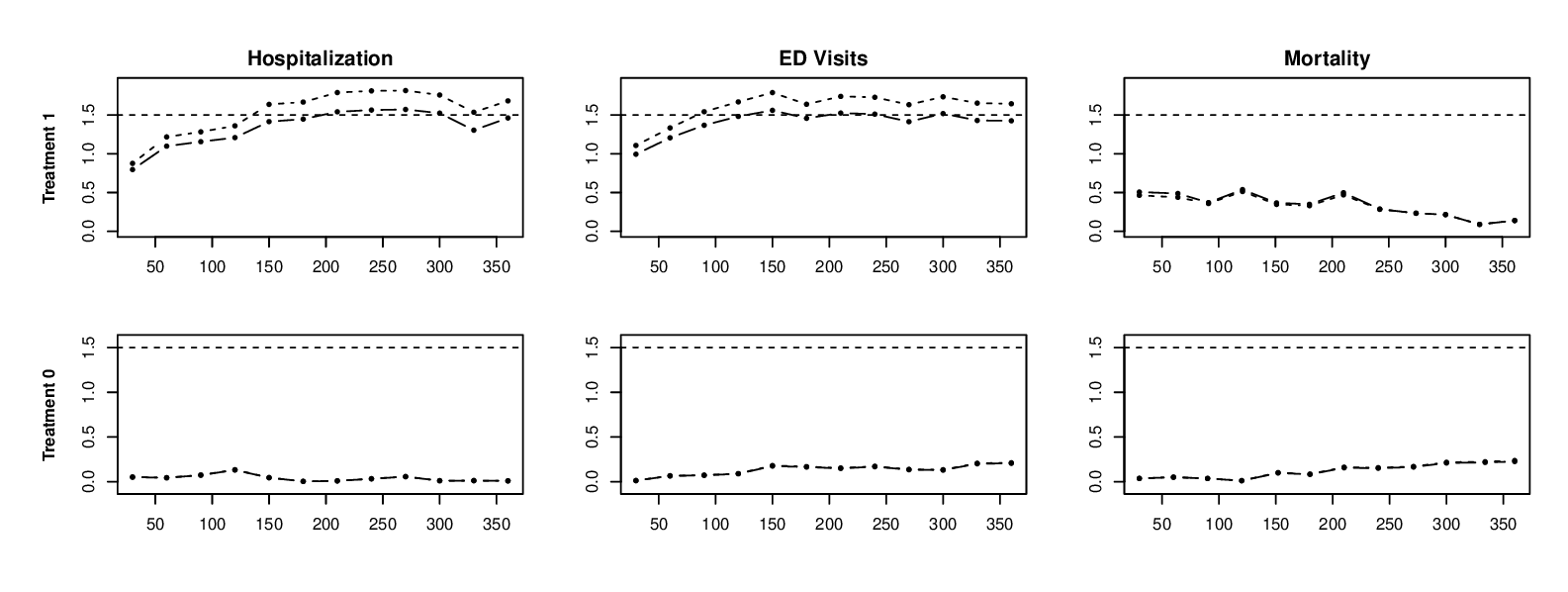} \vspace{-.48in}
\caption{Standardized absolute difference between $\hat{S}_{ak, \mytext{wKM}}$ (high-dimensional setting) and $\hat{S}_{ak, \mytext{RCAL,lin}}$. Solid line is standardized by $\text{SE}(\hat{S}_{ak, \mytext{wKM}})$ and dashed line is standardized by $\text{SE}(\hat{S}_{ak, \mytext{RCAL,lin}})$. A horizontal line is placed at 1.5.} \label{fig:mlecalsvprob-high}
\end{figure}

\begin{table} [H]
\caption{Summary for estimation of  $S_{1k}$ for Hospitalization outcome} \label{tb:hosp-tr1} \vspace{-.15in}
\begin{center}
 \small
  \renewcommand{\arraystretch}{0.8}
\resizebox{0.9\textwidth}{!}{\begin{tabular}{lccccccccccccc}
\hline
  \multirow{3}{*}{$u_k$} &  \multicolumn{6}{c}{Est} & &  \multicolumn{6}{c}{SE} \\
   \cline{2-7}\cline{9-14}
  & \multicolumn{3}{c}{Low} & & \multicolumn{2}{c}{High} & & \multicolumn{3}{c}{Low} & &  \multicolumn{2}{c}{High} \\
  \cline{2-4}\cline{6-7}\cline{9-11}\cline{13-14}
  & $\hat{S}_{1k, \mytext{KM}}$ & $\hat{S}_{1k, \mytext{wKM}}$ & $\hat{S}_{1k, \mytext{CAL,lin}}$ & & $\hat{S}_{1k, \mytext{wKM}}$ & $\hat{S}_{1k, \mytext{RCAL,lin}}$ &
  & $\hat{S}_{1k, \mytext{KM}}$ & $\hat{S}_{1k, \mytext{wKM}}$ & $\hat{S}_{1k, \mytext{CAL,lin}}$ & & $\hat{S}_{1k, \mytext{wKM}}$ & $\hat{S}_{1k, \mytext{RCAL,lin}}$ \\
  \hline
30  & 0.9476  & 0.9434  & 0.9461  & & 0.9441  & 0.9462  & & 0.0020  & 0.0028  & 0.0026  & & 0.0026  & 0.0024  \\
60  & 0.9188  & 0.9130  & 0.9171  & & 0.9140  & 0.9175  & & 0.0025  & 0.0035  & 0.0031  & & 0.0032  & 0.0029  \\
90  & 0.8981  & 0.8902  & 0.8953  & & 0.8920  & 0.8961  & & 0.0028  & 0.0039  & 0.0035  & & 0.0036  & 0.0032  \\
120  & 0.8796  & 0.8712  & 0.8765  & & 0.8730  & 0.8777  & & 0.0030  & 0.0042  & 0.0037  & & 0.0038  & 0.0034  \\
150  & 0.8655  & 0.8551  & 0.8606  & & 0.8572  & 0.8631  & & 0.0032  & 0.0044  & 0.0039  & & 0.0041  & 0.0036  \\
180  & 0.8527  & 0.8412  & 0.8467  & & 0.8433  & 0.8496  & & 0.0033  & 0.0046  & 0.0041  & & 0.0043  & 0.0037  \\
210  & 0.8403  & 0.8271  & 0.8329  & & 0.8292  & 0.8361  & & 0.0034  & 0.0048  & 0.0042  & & 0.0045  & 0.0039  \\
240  & 0.8288  & 0.8135  & 0.8192  & & 0.8162  & 0.8234  & & 0.0035  & 0.0050  & 0.0044  & & 0.0046  & 0.0040  \\
270  & 0.8178  & 0.8036  & 0.8094  & & 0.8062  & 0.8136  & & 0.0036  & 0.0050  & 0.0045  & & 0.0047  & 0.0041  \\
300  & 0.8077  & 0.7921  & 0.7983  & & 0.7950  & 0.8023  & & 0.0037  & 0.0052  & 0.0046  & & 0.0048  & 0.0042  \\
330  & 0.7974  & 0.7820  & 0.7885  & & 0.7849  & 0.7913  & & 0.0038  & 0.0053  & 0.0047  & & 0.0049  & 0.0042  \\
360  & 0.7884  & 0.7723  & 0.7790  & & 0.7752  & 0.7825  & & 0.0039  & 0.0053  & 0.0047  & & 0.0050  & 0.0043  \\
\hline
\end{tabular}}
\end{center}
\setlength{\baselineskip}{0.2\baselineskip}
\vspace{-.15in}\noindent{\tiny
\textbf{Note}: Est means point estimates, and SE means standard deviations. Low means that only main-effects are included in working models, and High means that both main-effects and interactions are included in working models.
}
\vspace{-.2in}
\end{table}

\begin{table} [H]
\caption{Summary for estimation of  $S_{0k}$ for Hospitalization outcome} \label{tb:hosp-tr0} \vspace{-.15in}
\begin{center}
 \small
  \renewcommand{\arraystretch}{0.8}
\resizebox{0.9\textwidth}{!}{\begin{tabular}{lccccccccccccc}
\hline
  \multirow{3}{*}{$u_k$} &  \multicolumn{6}{c}{Est} & &  \multicolumn{6}{c}{SE} \\
   \cline{2-7}\cline{9-14}
  & \multicolumn{3}{c}{Low} & & \multicolumn{2}{c}{High} & & \multicolumn{3}{c}{Low} & &  \multicolumn{2}{c}{High} \\
  \cline{2-4}\cline{6-7}\cline{9-11}\cline{13-14}
  & $\hat{S}_{0k, \mytext{KM}}$ & $\hat{S}_{0k, \mytext{wKM}}$ & $\hat{S}_{0k, \mytext{CAL,lin}}$ & & $\hat{S}_{0k, \mytext{wKM}}$ & $\hat{S}_{0k, \mytext{RCAL,lin}}$ &
  & $\hat{S}_{0k, \mytext{KM}}$ & $\hat{S}_{0k, \mytext{wKM}}$ & $\hat{S}_{0k, \mytext{CAL,lin}}$ & & $\hat{S}_{0k, \mytext{wKM}}$ & $\hat{S}_{0k, \mytext{RCAL,lin}}$ \\
  \hline
30  & 0.9610  & 0.9602  & 0.9609  & & 0.9608  & 0.9607  & & 0.0011  & 0.0012  & 0.0012  & & 0.0012  & 0.0012  \\
60  & 0.9380  & 0.9367  & 0.9377  & & 0.9375  & 0.9375  & & 0.0014  & 0.0015  & 0.0015  & & 0.0015  & 0.0014  \\
90  & 0.9192  & 0.9177  & 0.9189  & & 0.9185  & 0.9184  & & 0.0016  & 0.0017  & 0.0017  & & 0.0017  & 0.0016  \\
120  & 0.9037  & 0.9021  & 0.9035  & & 0.9030  & 0.9027  & & 0.0017  & 0.0019  & 0.0018  & & 0.0018  & 0.0018  \\
150  & 0.8892  & 0.8871  & 0.8887  & & 0.8881  & 0.8881  & & 0.0018  & 0.0020  & 0.0019  & & 0.0019  & 0.0019  \\
180  & 0.8771  & 0.8751  & 0.8767  & & 0.8762  & 0.8762  & & 0.0019  & 0.0021  & 0.0020  & & 0.0020  & 0.0020  \\
210  & 0.8667  & 0.8648  & 0.8665  & & 0.8659  & 0.8659  & & 0.0020  & 0.0022  & 0.0021  & & 0.0021  & 0.0020  \\
240  & 0.8564  & 0.8545  & 0.8563  & & 0.8556  & 0.8557  & & 0.0020  & 0.0022  & 0.0021  & & 0.0022  & 0.0021  \\
270  & 0.8471  & 0.8453  & 0.8472  & & 0.8464  & 0.8465  & & 0.0021  & 0.0023  & 0.0022  & & 0.0022  & 0.0022  \\
300  & 0.8373  & 0.8353  & 0.8371  & & 0.8364  & 0.8364  & & 0.0022  & 0.0024  & 0.0023  & & 0.0023  & 0.0022  \\
330  & 0.8286  & 0.8266  & 0.8285  & & 0.8277  & 0.8277  & & 0.0022  & 0.0024  & 0.0023  & & 0.0023  & 0.0023  \\
360  & 0.8210  & 0.8187  & 0.8207  & & 0.8199  & 0.8199  & & 0.0023  & 0.0025  & 0.0024  & & 0.0024  & 0.0023  \\
\hline
\end{tabular}}
\end{center}
\setlength{\baselineskip}{0.2\baselineskip}
\vspace{-.15in}\noindent{\tiny
\textbf{Note}: Est means point estimates, and SE means standard deviations. Low means that only main-effects are included in working models, and High means that both main-effects and interactions are included in working models.
}
\vspace{-.2in}
\end{table}

\begin{table} [H]
\caption{Summary for estimation of  $S_{1k}$ for ED visits outcome} \label{tb:ed-tr1} \vspace{-.15in}
\begin{center}
 \small
  \renewcommand{\arraystretch}{0.8}
\resizebox{0.9\textwidth}{!}{\begin{tabular}{lccccccccccccc}
\hline
  \multirow{3}{*}{$u_k$} &  \multicolumn{6}{c}{Est} & &  \multicolumn{6}{c}{SE} \\
   \cline{2-7}\cline{9-14}
  & \multicolumn{3}{c}{Low} & & \multicolumn{2}{c}{High} & & \multicolumn{3}{c}{Low} & &  \multicolumn{2}{c}{High} \\
  \cline{2-4}\cline{6-7}\cline{9-11}\cline{13-14}
  & $\hat{S}_{1k, \mytext{KM}}$ & $\hat{S}_{1k, \mytext{wKM}}$ & $\hat{S}_{1k, \mytext{CAL,lin}}$ & & $\hat{S}_{1k, \mytext{wKM}}$ & $\hat{S}_{1k, \mytext{RCAL,lin}}$ &
  & $\hat{S}_{1k, \mytext{KM}}$ & $\hat{S}_{1k, \mytext{wKM}}$ & $\hat{S}_{1k, \mytext{CAL,lin}}$ & & $\hat{S}_{1k, \mytext{wKM}}$ & $\hat{S}_{1k, \mytext{RCAL,lin}}$ \\
  \hline
30  & 0.9201  & 0.9137  & 0.9176  & & 0.9151  & 0.9183  & & 0.0025  & 0.0035  & 0.0031  & & 0.0032  & 0.0029  \\
60  & 0.8888  & 0.8833  & 0.8884  & & 0.8846  & 0.8890  & & 0.0029  & 0.0039  & 0.0035  & & 0.0036  & 0.0033  \\
90  & 0.8666  & 0.8590  & 0.8654  & & 0.8602  & 0.8657  & & 0.0031  & 0.0043  & 0.0038  & & 0.0040  & 0.0035  \\
120  & 0.8461  & 0.8374  & 0.8437  & & 0.8388  & 0.8450  & & 0.0033  & 0.0045  & 0.0041  & & 0.0042  & 0.0037  \\
150  & 0.8295  & 0.8207  & 0.8273  & & 0.8223  & 0.8292  & & 0.0035  & 0.0047  & 0.0043  & & 0.0044  & 0.0038  \\
180  & 0.8136  & 0.8043  & 0.8112  & & 0.8062  & 0.8129  & & 0.0036  & 0.0049  & 0.0044  & & 0.0046  & 0.0041  \\
210  & 0.7993  & 0.7869  & 0.7944  & & 0.7890  & 0.7962  & & 0.0037  & 0.0051  & 0.0045  & & 0.0048  & 0.0042  \\
240  & 0.7851  & 0.7732  & 0.7811  & & 0.7756  & 0.7829  & & 0.0038  & 0.0052  & 0.0046  & & 0.0049  & 0.0043  \\
270  & 0.7738  & 0.7630  & 0.7711  & & 0.7652  & 0.7722  & & 0.0039  & 0.0053  & 0.0047  & & 0.0049  & 0.0043  \\
300  & 0.7621  & 0.7507  & 0.7593  & & 0.7531  & 0.7608  & & 0.0040  & 0.0054  & 0.0048  & & 0.0050  & 0.0044  \\
330  & 0.7522  & 0.7400  & 0.7490  & & 0.7428  & 0.7501  & & 0.0041  & 0.0055  & 0.0049  & & 0.0051  & 0.0044  \\
360  & 0.7413  & 0.7294  & 0.7385  & & 0.7322  & 0.7396  & & 0.0042  & 0.0055  & 0.0050  & & 0.0052  & 0.0045  \\
\hline
\end{tabular}}
\end{center}
\setlength{\baselineskip}{0.2\baselineskip}
\vspace{-.15in}\noindent{\tiny
\textbf{Note}: Est means point estimates, and SE means standard deviations. Low means that only main-effects are included in working models, and High means that both main-effects and interactions are included in working models.
}
\vspace{-.2in}
\end{table}

\begin{table} [H]
\caption{Summary for estimation of  $S_{0k}$ for ED visits outcome} \label{tb:ed-tr0} \vspace{-.15in}
\begin{center}
 \small
  \renewcommand{\arraystretch}{0.8}
\resizebox{0.9\textwidth}{!}{\begin{tabular}{lccccccccccccc}
\hline
  \multirow{3}{*}{$u_k$} &  \multicolumn{6}{c}{Est} & &  \multicolumn{6}{c}{SE} \\
   \cline{2-7}\cline{9-14}
  & \multicolumn{3}{c}{Low} & & \multicolumn{2}{c}{High} & & \multicolumn{3}{c}{Low} & &  \multicolumn{2}{c}{High} \\
  \cline{2-4}\cline{6-7}\cline{9-11}\cline{13-14}
  & $\hat{S}_{0k, \mytext{KM}}$ & $\hat{S}_{0k, \mytext{wKM}}$ & $\hat{S}_{0k, \mytext{CAL,lin}}$ & & $\hat{S}_{0k, \mytext{wKM}}$ & $\hat{S}_{0k, \mytext{RCAL,lin}}$ &
  & $\hat{S}_{0k, \mytext{KM}}$ & $\hat{S}_{0k, \mytext{wKM}}$ & $\hat{S}_{0k, \mytext{CAL,lin}}$ & & $\hat{S}_{0k, \mytext{wKM}}$ & $\hat{S}_{0k, \mytext{RCAL,lin}}$ \\
  \hline
30  & 0.9475  & 0.9454  & 0.9464  & & 0.9463  & 0.9463  & & 0.0013  & 0.0014  & 0.0014  & & 0.0014  & 0.0013  \\
60  & 0.9187  & 0.9161  & 0.9175  & & 0.9173  & 0.9174  & & 0.0016  & 0.0017  & 0.0017  & & 0.0017  & 0.0016  \\
90  & 0.8967  & 0.8938  & 0.8955  & & 0.8952  & 0.8954  & & 0.0017  & 0.0019  & 0.0018  & & 0.0019  & 0.0018  \\
120  & 0.8775  & 0.8741  & 0.8759  & & 0.8756  & 0.8757  & & 0.0019  & 0.0021  & 0.0020  & & 0.0020  & 0.0020  \\
150  & 0.8606  & 0.8568  & 0.8589  & & 0.8585  & 0.8588  & & 0.0020  & 0.0022  & 0.0021  & & 0.0021  & 0.0021  \\
180  & 0.8448  & 0.8411  & 0.8432  & & 0.8428  & 0.8432  & & 0.0021  & 0.0023  & 0.0022  & & 0.0022  & 0.0022  \\
210  & 0.8318  & 0.8281  & 0.8303  & & 0.8298  & 0.8302  & & 0.0022  & 0.0024  & 0.0023  & & 0.0023  & 0.0022  \\
240  & 0.8193  & 0.8157  & 0.8180  & & 0.8175  & 0.8179  & & 0.0022  & 0.0025  & 0.0023  & & 0.0024  & 0.0023  \\
270  & 0.8085  & 0.8042  & 0.8066  & & 0.8062  & 0.8065  & & 0.0023  & 0.0025  & 0.0024  & & 0.0024  & 0.0024  \\
300  & 0.7976  & 0.7932  & 0.7957  & & 0.7952  & 0.7955  & & 0.0023  & 0.0026  & 0.0025  & & 0.0025  & 0.0024  \\
330  & 0.7873  & 0.7825  & 0.7851  & & 0.7847  & 0.7852  & & 0.0024  & 0.0026  & 0.0025  & & 0.0026  & 0.0025  \\
360  & 0.7784  & 0.7733  & 0.7760  & & 0.7756  & 0.7761  & & 0.0024  & 0.0027  & 0.0026  & & 0.0026  & 0.0025  \\
\hline
\end{tabular}}
\end{center}
\setlength{\baselineskip}{0.2\baselineskip}
\vspace{-.15in}\noindent{\tiny
\textbf{Note}: Est means point estimates, and SE means standard deviations. Low means that only main-effects are included in working models, and High means that both main-effects and interactions are included in working models.
}
\vspace{-.2in}
\end{table}

\begin{table} [H]
\caption{Summary for estimation of  $S_{1k}$ for Mortality outcome} \label{tb:mort-tr1} \vspace{-.15in}
\begin{center}
 \small
  \renewcommand{\arraystretch}{0.8}
\resizebox{0.9\textwidth}{!}{\begin{tabular}{lccccccccccccc}
\hline
  \multirow{3}{*}{$u_k$} &  \multicolumn{6}{c}{Est} & &  \multicolumn{6}{c}{SE} \\
   \cline{2-7}\cline{9-14}
  & \multicolumn{3}{c}{Low} & & \multicolumn{2}{c}{High} & & \multicolumn{3}{c}{Low} & &  \multicolumn{2}{c}{High} \\
  \cline{2-4}\cline{6-7}\cline{9-11}\cline{13-14}
  & $\hat{S}_{1k, \mytext{KM}}$ & $\hat{S}_{1k, \mytext{wKM}}$ & $\hat{S}_{1k, \mytext{CAL,lin}}$ & & $\hat{S}_{1k, \mytext{wKM}}$ & $\hat{S}_{1k, \mytext{RCAL,lin}}$ &
  & $\hat{S}_{1k, \mytext{KM}}$ & $\hat{S}_{1k, \mytext{wKM}}$ & $\hat{S}_{1k, \mytext{CAL,lin}}$ & & $\hat{S}_{1k, \mytext{wKM}}$ & $\hat{S}_{1k, \mytext{RCAL,lin}}$ \\
  \hline
30  & 0.9978  & 0.9984  & 0.9984  & & 0.9982  & 0.9980  & & 4e-04  & 3e-04  & 3e-04  & & 4e-04  & 4e-04  \\
64  & 0.9963  & 0.9974  & 0.9974  & & 0.9972  & 0.9970  & & 6e-04  & 4e-04  & 4e-04  & & 5e-04  & 5e-04  \\
91  & 0.9952  & 0.9961  & 0.9961  & & 0.9960  & 0.9958  & & 6e-04  & 7e-04  & 7e-04  & & 6e-04  & 7e-04  \\
121  & 0.9937  & 0.9951  & 0.9950  & & 0.9949  & 0.9946  & & 7e-04  & 8e-04  & 7e-04  & & 7e-04  & 7e-04  \\
151  & 0.9923  & 0.9935  & 0.9933  & & 0.9933  & 0.9930  & & 8e-04  & 9e-04  & 8e-04  & & 8e-04  & 9e-04  \\
180  & 0.9908  & 0.9924  & 0.9922  & & 0.9922  & 0.9919  & & 9e-04  & 9e-04  & 9e-04  & & 9e-04  & 9e-04  \\
210  & 0.9889  & 0.9912  & 0.9909  & & 0.9909  & 0.9904  & & 0.0010  & 0.0010  & 9e-04  & & 9e-04  & 0.0010  \\
242  & 0.9874  & 0.9896  & 0.9896  & & 0.9894  & 0.9891  & & 0.0010  & 0.0011  & 0.0010  & & 0.0010  & 0.0011  \\
274  & 0.9863  & 0.9887  & 0.9888  & & 0.9884  & 0.9882  & & 0.0011  & 0.0012  & 0.0011  & & 0.0011  & 0.0011  \\
300  & 0.9850  & 0.9878  & 0.9879  & & 0.9875  & 0.9872  & & 0.0011  & 0.0012  & 0.0011  & & 0.0011  & 0.0011  \\
330  & 0.9833  & 0.9864  & 0.9865  & & 0.9859  & 0.9858  & & 0.0012  & 0.0013  & 0.0011  & & 0.0012  & 0.0012  \\
361  & 0.9820  & 0.9852  & 0.9853  & & 0.9848  & 0.9846  & & 0.0013  & 0.0013  & 0.0012  & & 0.0012  & 0.0012  \\
\hline
\end{tabular}}
\end{center}
\setlength{\baselineskip}{0.2\baselineskip}
\vspace{-.15in}\noindent{\tiny
\textbf{Note}: Est means point estimates, and SE means standard deviations. Low means that only main-effects are included in working models, and High means that both main-effects and interactions are included in working models.
}
\vspace{-.2in}
\end{table}

\begin{table} [H]
\caption{Summary for estimation of  $S_{0k}$ for Mortality  outcome} \label{tb:mort-tr0} \vspace{-.15in}
\begin{center}
 \small
  \renewcommand{\arraystretch}{0.8}
\resizebox{0.9\textwidth}{!}{\begin{tabular}{lccccccccccccc}
\hline
  \multirow{3}{*}{$u_k$} &  \multicolumn{6}{c}{Est} & &  \multicolumn{6}{c}{SE} \\
   \cline{2-7}\cline{9-14}
  & \multicolumn{3}{c}{Low} & & \multicolumn{2}{c}{High} & & \multicolumn{3}{c}{Low} & &  \multicolumn{2}{c}{High} \\
  \cline{2-4}\cline{6-7}\cline{9-11}\cline{13-14}
  & $\hat{S}_{0k, \mytext{KM}}$ & $\hat{S}_{0k, \mytext{wKM}}$ & $\hat{S}_{0k, \mytext{CAL,lin}}$ & & $\hat{S}_{0k, \mytext{wKM}}$ & $\hat{S}_{0k, \mytext{RCAL,lin}}$ &
  & $\hat{S}_{0k, \mytext{KM}}$ & $\hat{S}_{0k, \mytext{wKM}}$ & $\hat{S}_{0k, \mytext{CAL,lin}}$ & & $\hat{S}_{0k, \mytext{wKM}}$ & $\hat{S}_{0k, \mytext{RCAL,lin}}$ \\
  \hline
30  & 0.9989  & 0.9988  & 0.9988  & & 0.9988  & 0.9988  & & 2e-04  & 2e-04  & 2e-04  & & 2e-04  & 2e-04  \\
60  & 0.9978  & 0.9976  & 0.9977  & & 0.9976  & 0.9977  & & 3e-04  & 3e-04  & 3e-04  & & 3e-04  & 3e-04  \\
90  & 0.9970  & 0.9967  & 0.9968  & & 0.9967  & 0.9967  & & 3e-04  & 4e-04  & 4e-04  & & 4e-04  & 4e-04  \\
120  & 0.9960  & 0.9957  & 0.9958  & & 0.9957  & 0.9957  & & 4e-04  & 4e-04  & 4e-04  & & 4e-04  & 4e-04  \\
152  & 0.9951  & 0.9945  & 0.9947  & & 0.9946  & 0.9946  & & 4e-04  & 5e-04  & 5e-04  & & 5e-04  & 5e-04  \\
180  & 0.9941  & 0.9935  & 0.9937  & & 0.9936  & 0.9936  & & 4e-04  & 5e-04  & 5e-04  & & 5e-04  & 5e-04  \\
210  & 0.9931  & 0.9924  & 0.9926  & & 0.9926  & 0.9927  & & 5e-04  & 6e-04  & 5e-04  & & 6e-04  & 5e-04  \\
240  & 0.9923  & 0.9916  & 0.9917  & & 0.9917  & 0.9918  & & 5e-04  & 6e-04  & 6e-04  & & 6e-04  & 6e-04  \\
271  & 0.9914  & 0.9907  & 0.9908  & & 0.9908  & 0.9909  & & 5e-04  & 6e-04  & 6e-04  & & 6e-04  & 6e-04  \\
300  & 0.9906  & 0.9898  & 0.9900  & & 0.9900  & 0.9901  & & 6e-04  & 7e-04  & 6e-04  & & 6e-04  & 6e-04  \\
334  & 0.9898  & 0.9888  & 0.9890  & & 0.9891  & 0.9892  & & 6e-04  & 7e-04  & 7e-04  & & 7e-04  & 7e-04  \\
360  & 0.9891  & 0.9880  & 0.9882  & & 0.9883  & 0.9885  & & 6e-04  & 8e-04  & 7e-04  & & 7e-04  & 7e-04  \\
\hline
\end{tabular}}
\end{center}
\setlength{\baselineskip}{0.2\baselineskip}
\vspace{-.15in}\noindent{\tiny
\textbf{Note}: Est means point estimates, and SE means standard deviations. Low means that only main-effects are included in working models, and High means that both main-effects and interactions are included in working models.
}
\vspace{-.2in}
\end{table}

\section{Definition of relative difference} \label{sec:rel-dif}
For each covariate $j$, the weighted mean in treatment group $a \in \{0,1\}$ is
\begin{align*}
\bar{X}^{(1)}_{j} = \frac{\sum_{i=1}^n A_i \hat{\omega}_{1i}X_{ij}}{\sum_{i=1}^n A_i\hat{\omega}_{1i}}, \quad \bar{X}^{(0)}_{j} = \frac{\sum_{i=1}^n (1 - A_i)\hat{\omega}_{0i}X_{ij}}{\sum_{i=1}^n (1 - A_i)\hat{\omega}_{0i}},
\end{align*}
where the weights are defined as $\hat{\omega}_{1i} = \pi(X_i; \hat{\gamma})^{-1}$ and  $\hat{\omega}_{0i} = (1 - \pi(X_i; \hat{\gamma}))^{-1}$, and $\hat{\omega}_{1i} = \hat{\omega}_{0i} = 1$ for unweighted differences.

The difference for covariate $j$ is $D_j = | \bar{X}^{(1)}_{j}   - \bar{X}^{(0)}_{j} |$ and the pooled weighted standard deviation is
\begin{align*}
S_j = \sqrt{ \frac{1}{2}\left(\frac{\sum_{i=1}^n A_i \hat{\omega}_{1i}(X_{ij} - \bar{X}^{(1)}_{j})^2}{\sum_{i=1}^n A_i \hat{\omega}_{1i} - \frac{\sum_{i=1}^n A_i \hat{\omega}_{1i}^2}{\sum_{i=1}^n A_i\hat{\omega}_{1i}}} + \frac{\sum_{i=1}^n (1 - A_i)\hat{\omega}_{0i}(X_{ij} - \bar{X}^{(0)}_{j})^2}{\sum_{i=1}^n (1 - A_i) \hat{\omega}_{0i} - \frac{\sum_{i=1}^n (1 - A_i) \hat{\omega}_{0i}^2}{\sum_{i=1}^n (1 - A_i)\hat{\omega}_{0i}}}\right)}.
\end{align*}
The relative difference is then $R_j = D_j / S_j$ and the range of relative differences across all covariates is $[ \min_j R_j, \max_j R_j ]$.

\vspace{.3in}
\centerline{\bf\Large References}
\begin{description}\addtolength{\itemsep}{-.15in}

\item Ghosh, S. and Tan, Z. (2022) Doubly robust semiparametric inference using regularized calibrated estimation with high-dimensional data, {\it Bernoulli}, 28, 1675–1703.

\item Tan, Z. (2020) Model-assisted inference for treatment effects using regularized calibrated estimation with high-dimensional data, {\it Annals of Statistics}, 48, 811–837.

\item Tan, Z. (2023) Consistent and robust inference in hazard probability and odds models with discrete-time survival data, {\it Lifetime Data Analysis}, 29, 555–584.

\end{description}

\end{document}